\PassOptionsToPackage{prologue,dvipsnames}{xcolor}

\documentclass[11pt]{article}

\usepackage[margin=1in]{geometry}
\usepackage{setspace}

\usepackage[dvipsnames,usenames]{color}
\usepackage[colorlinks=true,urlcolor=Blue,citecolor=Green,linkcolor=BrickRed]{hyperref}
\usepackage[usenames,dvipsnames]{xcolor}

\usepackage[utf8]{inputenc}
\usepackage[T1]{fontenc}

\usepackage{amsmath}
\usepackage{amsthm}
\usepackage{thmtools}
\usepackage[linesnumbered, ruled, vlined]{algorithm2e}

\usepackage[capitalise, noabbrev]{cleveref}

\usepackage{enumitem,comment}
\usepackage{url}
\usepackage{standalone}
\usepackage{mathtools}
\usepackage{graphicx,subfigure}
\usepackage{array}
\usepackage{xspace}
\usepackage{graphicx}
\usepackage{float}
\usepackage{amsfonts}
\usepackage{bm}
\usepackage{tikz}
\usepackage{todonotes}
\usepackage{xcolor,soul}

\usepackage{booktabs}
\usepackage{multirow}
\usepackage{tablefootnote}
\usepackage{threeparttable}

\usepackage{centernot}
\usepackage{wasysym}
\usepackage{pifont}

\graphicspath{{Figures/}}

\newcommand{\cross}{
  \centernot{\mathrel{-}\joinrel\mathrel{-}}
}

\newcommand{\goodcross}{\smash{\scalebox{0.6}{$\stackrel{\text{good}}{\cross}$}}}

\newcommand{\badcross}{\smash{\scalebox{0.6}{$\stackrel{\text{bad}}{\cross}$}}}

\allowdisplaybreaks

\makeatletter
\newcommand*{\bdiv}{
  \nonscript\mskip-\medmuskip\mkern5mu
  \mathbin{\operator@font div}\penalty900\mkern5mu
  \nonscript\mskip-\medmuskip
}
\makeatother

\newcommand{\renewtheorem}[1]{
  \expandafter\let\csname #1\endcsname\relax
  \expandafter\let\csname c@#1\endcsname\relax
  \expandafter\let\csname end#1\endcsname\relax
  \newtheorem{#1}
}

\definecolor{light-gray}{gray}{0.85}
\definecolor{light-yellow}{rgb}{1.0, 0.96, 0.0}
\newcommand{\hlgray}[1]{{\sethlcolor{light-gray}\hl{#1}}}

\def\eps{\varepsilon}

\def\Gf{G\setminus \{f\}}

\def\u{u}
\def\I{I}

\def\Psf{P'}
\def\Hsf{H'}
\def\Htf{\hat{H}}
\def\Q{Q}
\def\P{P}
\def\Pa{P_1}
\def\Pb{P_2}

\def\Opoly{\tilde O(\mathsf{poly}(\frac 1\eps))}
\def\Oone{\tilde O(1)}

\def\LabelstoP{\mathcal{L}^{\mathtt{s\rightarrow P}}}
\def\LabelPtoP{\mathcal{L}^{\mathtt{P\rightarrow P\setminus f}}}
\def\Labelstoa{\mathcal{L}^\mathtt{s \rightarrow P'}}
\def\LabelatoP{\mathcal{L}^{\mathtt{P'\rightarrow P\setminus f}}}
\def\LabelatfoP{\mathcal{L}^\mathtt{P'\xrightarrow{f} P}}
\def\LabelstfoP{\mathcal{L}^\mathtt{s\xrightarrow{f} P}}
\def\LabelstoP{\mathcal{L}^\mathtt{s\xrightarrow{} P \setminus\{f\}}}

\def\LabelPitoPi{\mathcal{L}^{\mathtt{P_1\rightarrow P_1}}}
\def\LabelPitoPz{\mathcal{L}^{\mathtt{P_1\rightarrow P_2\rightarrow P}}}
\def\LabelPztoPi{\mathcal{L}^{\mathtt{P_2\rightarrow P_1\rightarrow P}}}
\def\LabelPztoPz{\mathcal{L}^{\mathtt{P_2\rightarrow P_2}}}

\def\Labelonlysuffix{\mathcal{L}^{\mathtt{a\rightsquigarrow P}}}

\def\Pz{P}
\def\PP{\hat P}

\newcommand{\ceil}[1]{\left\lceil{#1}\right\rceil}

\def\len{\mathsf{len}}
\def\dist{\mathsf{dist}}

\newcommand{\first}[3]{\mathsf{first_{#1}}(#2,#3)}
\newcommand{\last}[3]{\mathsf{last_{#1}}(#2,#3)}
\newcommand{\dfirst}[4]{\mathsf{first^{#4}_{#1}}(#2,#3)}
\newcommand{\ddfirst}[5]{\mathsf{#5\text{-}first^{#4}_{#1}}(#2,#3)}
\newcommand{\dlast}[4]{\mathsf{last^{#4}_{#1}}(#2,#3)}

\newtheorem{theorem}{Theorem}[section]
\newtheorem{claim}[theorem]{Claim}
\newtheorem{invariant}[theorem]{Invariant}

\newtheorem{lemma}[theorem]{Lemma}
\newtheorem{definition}[theorem]{Definition}
\newtheorem{observation}[theorem]{Observation}

\crefname{claim}{Claim}{Claims}
\crefname{invariant}{Invariant}{Invariant}
\crefname{observation}{Observation}{Observations}

\usepackage{thmtools}
\usepackage{authblk}

 \date{}
\begin{document}
\title{\~{O}ptimal Fault-Tolerant Labeling for Reachability and Approximate Distances in Directed Planar Graphs\thanks{This research was partially funded by Israel Science Foundation grant 810/21.\\
The second author was supported by the European Research Council (ERC) under the European Union’s Horizon 2020 research and innovation programme (Grant Agreement No. 803118, UncertainENV).}}

\author[1,2]{Itai Boneh}
\author[3]{Shiri Chechik}
\author[1,2]{Shay Golan}
\author[1]{Shay Mozes}
\author[2]{Oren Weimann}

\affil[1]{Reichman University, Israel}
\affil[2]{University of Haifa, Israel}
\affil[2]{Tel-Aviv University,  Israel}

\maketitle

\begin{abstract}
We present a labeling scheme that assigns labels of size $\tilde O(1)$ to the vertices of a directed weighted planar graph $G$, such that for any fixed $\varepsilon>0$ from the labels of any three vertices $s$, $t$ and $f$ one can determine in $\tilde O(1)$ time a $(1+\varepsilon)$-approximation of the $s$-to-$t$ distance in the graph $G\setminus\{f\}$.
For approximate distance queries, prior to our work, no efficient solution existed, not even in the centralized oracle setting.
Even for the easier case of reachability, $\tilde O(1)$ queries were known only with a centralized oracle of size $\tilde O(n)$ [SODA 21].

\end{abstract}

\maketitle

\thispagestyle{empty}
\newpage
\tableofcontents
\thispagestyle{empty}

\newpage

\setcounter{page}{1}

\section{Introduction}

In network optimization, computing distances is essential for applications such as routing in transportation, communication, and logistics. Real-world networks often face temporary inaccessibility of nodes or edges due to maintenance, damage, or congestion, necessitating algorithms that can efficiently report distances in dynamic conditions.
A \emph{distance oracle} is a data structure that can report the distance between any two vertices of a given graph, with the objective of achieving an efficient tradeoff between time and space, ideally approaching constant query time and linear space.
Expanding on this, \emph{distance labeling schemes} assign compact labels to vertices, allowing queries based solely on these labels, which is especially useful in distributed systems where local information alone determines distances or reachability.
Research on labeling schemes spans various graph properties, including adjacency \cite{Kannan,alstrup2015optimal,petersen2015near,alstrup2015adjacency,AlonN17,bonichon2007short}, distances \cite{GPPR04,Bar-NatanCGMW22,Thorup04,AbrahamCG12,GavoilleKKPP01}, connectivity \cite{KatzKKP04,HsuL09,Korman10}, and Steiner trees \cite{Peleg05}. See~\cite{rotbart2016new} for a survey.
Similarly, \emph{reachability oracles} and \emph{reachability labeling schemes} aim to answer reachability queries rather than distances, also benefiting distributed applications.
The resilience of real-world networks to failures has led to research on robust data structures capable of accommodating disruptions.

In this paper, we focus on labeling schemes for approximate distances and reachability in directed edge-weighted planar graphs in the presence of a single vertex failure, also referred to as a {\em fault-tolerant} approximate distance (reachability) labeling scheme. The objective is to assign a compact label to each vertex, such that given the labels of any three vertices $s, t,f$, one can efficiently approximate the distance between $s$ and $t$ in the graph $\Gf$, or determine if $t$ is reachable from $s$ in the graph $\Gf$. Fault-tolerant labeling schemes (also called forbidden-set labeling schemes) have been extensively studied for various problems, such as connectivity, distances, and routing, and across multiple graph families (see, e.g., \cite{Courcelle09,CourcelleT07,FeigenbaumKMS07,TwiggPhD,AbrahamCG12,Bar-NatanCGMW22,AbrahamCGP16,rotbart2016new}).
Let us first review the most relevant related work. We focus on the directed case, as it is the focus of this paper and is generally more challenging than the undirected case. In many instances, techniques developed for undirected graphs do not extend to directed graphs.

\medskip
\noindent
{\bf Exact distance oracles and labeling for directed planar graphs.}
The problem of exact distance oracles for directed planar graphs has been extensively studied over the past few decades~\cite{ArikatiCCDSZ96,Djidjev96,ChenX00,FR06,Klein05,Wulff-Nilsen10,Nussbaum11,Cabello12,MozesS2012,Cohen-AddadDW17,ourSODA2018,Charalampopoulos19,CharalampopoulosGLMPWW23}. Notably, recent advances have led to very strong solutions~\cite{CharalampopoulosGLMPWW23} that achieve almost optimal $n^{1+o(1)}$ space and near-optimal $\tilde O(1)$ query time.
This result is particularly interesting because it reveals a significant gap between oracles and labeling schemes. Specifically, it was shown in~\cite{GPPR04} that exact distance labeling for planar graphs requires polynomial-sized labels of size $\Omega(\sqrt{n})$, regardless of the query time (and there is a known tight upper bound of $O(\sqrt n)$ \cite{GPPR04,GawrychowskiU23}).

\medskip
\noindent
{\bf Exact distance oracles and labeling for directed planar graphs with failures.}
Exact distance oracles for directed planar graphs in the presence of failures have bounds that are not as favorable compared to those without failures.
~\cite{Baswana} introduced a single-source fault-tolerant distance oracle with near-optimal $\tilde O(n)$ space and $\tilde O(1)$ query time. They further extended their construction to handle the all-pairs variant of the problem, resulting in increased space of $\tilde O(n^{1.5})$ and query time of $\tilde O(\sqrt{n})$.
Subsequently, \cite{faultyOracle} presented an improved fault-tolerant distance oracle for the all-pairs version in planar graphs. Their oracle accommodates multiple failures; however, it features a polynomial tradeoff between the oracle's size and query time that may be less advantageous compared to previous constructions.
Recently, an exact fault-tolerant distance labeling scheme for planar graphs with label size $\tilde{O}(n^{2/3})$ (accommodating a single failure) was presented in~\cite{Bar-NatanCGMW22}.

Importantly, in the context of the all-pairs version, the fault-tolerant oracles mentioned above have considerably worse bounds compared to the best-known distance oracles without faults for planar graphs.

\medskip
\noindent
{\bf Approximate distance labeling for directed planar graphs.}
Since exact distance labels require polynomial-sized labels~\cite{GPPR04}, researchers have pursued more compact labels that yield {\em approximate} distances. \cite{GavoilleKKPP01} studied such approximate labels across general graphs and various graph families. Specifically, for planar graphs, they introduced $O(n^{1/3} \log n)$-bit labels that provide a 3-approximation of distances. In the same year, \cite{GuptaKR01} developed even smaller 3-approximate labels requiring only $O(\log^2 n)$ bits, while Thorup presented $(1+\varepsilon)$-approximate labels of size $O(\log n / \varepsilon)$ for any fixed $\varepsilon > 0$~\cite{Thorup04}.

\medskip
\noindent
{\bf Reachability oracles and labeling for directed planar graphs.}
The reachability question was also very well studied in both general and planar graphs.
\cite{Henzinger2017} provided conditional lower bounds for combinatorial constructions of reachability oracles, showing that no non-trivial combinatorial reachability oracle constructions exist for general directed graphs.\footnote{The term combinatorial is often referred to algorithms that do not utilize fast matrix multiplications.}
Specifically, they proved that it is impossible to design a reachability oracle that simultaneously achieves $O(n^{3-\varepsilon})$ preprocessing time and $O(n^{2-\varepsilon})$ query time, for any $\varepsilon > 0$.

Since non-trivial reachability oracles are not attainable for general graphs, efforts have been directed towards developing improved reachability oracles for specific graph families. Notably, graphs possessing separators of size $s(n)$ admit a straightforward reachability oracle of size $\tilde O(n\cdot s(n))$ and query time $\tilde O(s(n))$.
Consequently, planar graphs (and more extensive graph classes such as H-minor free graphs) admit oracles of size $\tilde O(n^{1.5})$ and query time $\tilde O(\sqrt{n})$.
In a groundbreaking result, Thorup \cite{Thorup04} introduced a near-optimal reachability oracle for directed planar graphs with $\tilde O(n)$ space and $\tilde O(1)$ query time. This result can also be adapted to a labeling scheme with label size $\tilde O(1)$.
Subsequently, \cite{HolmRT15} further improved this construction to a truly optimal oracle with $O(n)$ space and $O(1)$ query time.

\medskip
\noindent
{\bf Fault-tolerant reachability oracles for directed planar graphs.}
Fault-tolerant reachability oracles have been studied extensively in general graphs, see e.g. \cite{brand2019sensitivity,GeorgiadisIP17,choudhary2016DualFaultTolerant,BaswanaCR18,king2002fully,GeorgiadisGIPU17}. In planar graphs, one can leverage the more powerful fault-tolerant {\em distance} oracles mentioned above \cite{Baswana,faultyOracle,Bar-NatanCGMW22}. However,
  in the all-pairs version, these
  oracles have considerably worse bounds when compared to the best known distance oracles without faults for planar graphs.
In a groundbreaking development, \cite{Reachability} in SODA 2021 introduced a nearly optimal fault-tolerant reachability oracle for directed planar graphs. Their innovative approach finally achieved near-optimal $\tilde O(n)$ size, $\tilde O(n)$ construction time, and $\tilde O(1)$ query time. It is not known how to turn the oracle of \cite{Reachability} into a fault-tolerant reachability  labeling scheme or into a fault-tolerant approximate distance oracle.

\paragraph{Fault-tolerant approximate distance labeling for undirected planar graphs.}
For undirected planar graphs \cite{AbrahamCG12} presented labels of size $\tilde{O}(1)$ that for any fixed $\varepsilon>0$, from the labels of vertices $s,t,$ and the labels of a set $F$ of failed vertices, can report in $\tilde O(|F|^2)$ time a $(1+\varepsilon)$-approximation of the shortest $s$-to-$t$ path in the graph $G\setminus F$.
One would hope to generalize this result to the directed case, even just settling for the seemingly easier task of reachability, and even for a single fault.
Unfortunately, it seems this result crucially relies on the graph being undirected.

\medskip
\noindent
{\bf Remaining research questions.} Previously, there was no efficient labeling scheme even just for reachability in directed planar graphs, only an oracle.
For reachability in planar graphs, the best previously known fault-tolerant labeling scheme was the one for fault-tolerant exact distances by \cite{Bar-NatanCGMW22}, in which the label size is $\tilde{O}(n^{2/3})$.
Is this the best possible?
Ideally, the goal would be to devise a labeling scheme in which the sum of the label sizes is roughly equal to the size of the state-of-the-art oracle. However, achieving this goal is not always possible.
For instance, as mentioned above, in \cite{CharalampopoulosGLMPWW23}, an almost optimal exact distance oracle is given for directed planar graphs (without faults) of size $O(n^{1+o(1)})$ and query time $\tilde{O}(1)$. On the other hand, it was shown in \cite{GPPR04} that exact distance labels (even without faults) for planar graphs necessitate polynomial-sized labels regardless of the query time.
Given this discrepancy between oracles and labeling schemes for distances in planar graphs, a natural question arises: does the same discrepancy exist for fault-tolerant reachability or for approximate distances?
In other words, is it possible to design a labeling scheme for directed planar graphs with label size $\tilde{O}(1)$ and query time $\tilde{O}(1)$? Or does a similar gap exist between fault-tolerant reachability oracles and labeling schemes, as in the case of exact distances?

Furthermore, in the case of approximate distances, there is not even an oracle with near-optimal bounds capable of handling a single failure in planar directed graphs.
If one wants an approximate distance oracle for directed planar graphs, the best option up to our work is to use an exact fault-tolerant oracle, which is far from the optimal bounds we aim for both in terms of space and query time.

A natural question is whether it is possible to devise an efficient approximate fault-tolerant distance oracle for directed planar graphs with near-optimal bounds of $\tilde{O}(n)$ size and $\tilde{O}(1)$ query time?
If the answer to this question is positive, a further question would be whether it is also possible to obtain an approximate fault-tolerant distance labeling scheme with near-optimal $\tilde{O}(1)$ label size. A positive answer to this would also resolve the open question for the simpler case of reachability.

\medskip
\noindent
{\bf Our results.} We answer the above two questions in the affirmative by providing a near optimal fault tolerant approximate distance labeling scheme and reachability in directed planar graphs with $\tilde{O}(1)$ label size and $\tilde{O}(1)$ query time, see \cref{thm:dist}.

\section{Technical Overview}
In this section, we first discuss the main challenges in extending the non-faulty reachability labels of Thorup~\cite{Thorup04} to handle faults.
Then, we introduce a high level overview of these labels.
Finally, we discuss the challenges in extending our single-fault reachability labels to approximate distance labels, and how we overcome them.

\subsection{Challenges in extending \cite{Thorup04}}
Thorup's non-faulty reachability labeling~\cite{Thorup04}, stores for each vertex $s$ and each relevant path separator $P$, the first vertex on $P$ that is reachable from $s$ in $G$ and the last vertex of $P$ that can reach $s$ in $G$, denoted as $\first{G}{s}{P}$ and $\last{G}{s}{P}$, respectively.
To determine if vertex $s$ can reach vertex $t$ by a path that intersects the path separator $P$, one can simply check if $\first{G}{s}{P}$ precedes $\last{G}{t}{P}$ on $P$ (denoted as $\first{G}{s}{P}\le_P \last{G}{t}{P}$).
We call the general idea of reducing $s$-to-$t$ reachability to finding the first/last reachable vertices on some path $P$ separating $s$ and $t$ as the \textit{`Find the First'} approach.

One of the main challenges we face with applying the `Find the First' approach is the occurrence of failures anywhere along the relevant path separator $P$.
A faulty vertex $f$ on $P$ requires us to store additional information, including the closest vertex to $f$ that appears after $f$ on $P$ and is reachable from the starting vertex $s$ via a path internally disjoint from $P$.
Considering that failures can happen at any vertex $P$, this means we would need to store all vertices that are reachable from $s$ on $P$ via a path internally disjoint from $P$.
This requirement renders the approach impractical.\footnote{The same difficulty arises when trying to adapt the fault-tolerant labeling scheme for undirected planar graphs of \cite{AbrahamCG12}.}

To address this issue, we need to adopt a different approach and develop new techniques specifically tailored for accommodating failures in the directed case.

It is worth mentioning that in both our reachability labeling scheme and the reachability oracle of \cite{Reachability}, the most challenging scenario arises already when all three vertices $s$, $t$, and $f$ are situated on the same path separator $P$.
In \cite{Reachability}, this situation was addressed by employing a complex data structure that extends dominator trees and previous data structures designed for handling strong-connectivity in general (non-planar) graphs under failures \cite{GeorgiadisIP17}.
However, these data structures do not seem to be distributable into a labeling scheme (for example, they rely on an orthogonal range data structure and on a binary search step which do not seem suitable for a labeling scheme).
We tackle this case without relying on dominator trees or similar sophisticated techniques.
This conceptually simpler solution is amenable to extension into an approximate distance labeling.
We believe that it should be possible to extend our labeling scheme to multiple failures and to other graph families.

\subsection{High level overview of our reachability labels}
In this section we provide a high-level overview of the techniques and ideas used in order to obtain the reachability.
Specifically, we show how to break down the reachability task to a series of 'Find the First' style sub-tasks.
Then, the same conceptual partition into sub-tasks can be applied to approximate distances labeling, with some modification to each sub-task that takes path lengths into account.
We apply a fully recursive decomposition of the graph $G$ using shortest path separators~\cite{Thorup04}, which induces a hierarchical decomposition of the graph (with $O(\log n)$ levels), where every subgraph is partitioned in the next level into two subgraphs, separated by $O(1)$ shortest paths.
For simplicity, we will assume here that each separator is composed of a single shortest path.

Consider first the task of reachability labeling \emph{without} faults~\cite{Thorup04}.
In this case, there exists a separator $P$ that separates $s$ and $t$.
Let $a=\first{G}{s}{P}$ be the first vertex on $P$, reachable from $s$ in $G$ and let $b=\last{G}{t}{P}$ be the last vertex on $P$ that can reach $t$ in $G$.
It is straightforward that $s$ can reach $t$ in $G$ if and only if $a$ appears on $P$ earlier than $b$ (which we denote by $a\le_P b$). See \cref{fig:middlepoint}.
Thus, a labeling scheme for this simple problem is that every vertex $v$ stores $\first{G}{v}{P}$ and $\last{G}{v}{P}$ for every separator $P$ above it in the recursive decomposition.

\begin{figure}[htb]
  \begin{center}
 \includegraphics[scale=0.5]{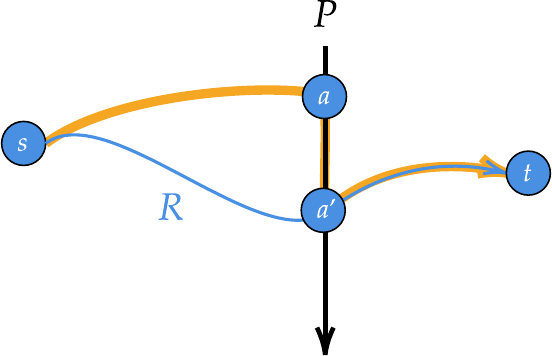}
  \caption{If $R$ is a (blue) path from $s$ to $t$ that crosses $P$ (at $a'$), then there exists an (orange) path from $s$ to $t$ that goes through $a=\first{}{s}{P}$, then from $a$ to $a'$ along $P$ and finally from $a'$ to $t$ along $R$).\label{fig:middlepoint}}
   \end{center}
\end{figure}
\paragraph{The `Find the First' framework.}
To handle faults, we repeatedly employ the high-level approach of the non-faulty labels.
If a path $R$ in some graph $H$ (in particular $H=\Gf$) from $s$ to $t$ visits a separator $P$, then there is a path from $s$ to $t$ that visits $a= \first{H}{s}{P}$.
Therefore, if we know that a path from $s$ to $t$ visits $P$, we can reduce $s$-to-$t$ reachability to $a$-to-$t$ reachability, and to finding $a$.

We consider two cases regarding the faulty vertex $f$: it is either on $P$ or it is not on $P$.

\begin{figure}[htb]
\begin{center}
\includegraphics[width=0.35\textwidth]{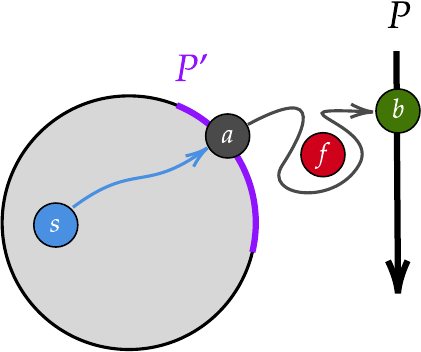}
\caption{An lustration of the case $f\notin P$, a path from $s$ to $b=\first{\Gf}{s}{P}$.
The blue subpath is from $s$ to $a\in P'$, and the gray subpath is from $a$ to $b \in P$.
\label{fig:vtfoPB}
}
\end{center}
\end{figure}

\paragraph{If \boldmath$f\notin P$ (see \cref{fig:vtfoPB}),} then the task is similar to the non-faulty case, except that now we are interested in $\first{\Gf}{s}{P}$ instead of $\first{G}{s}{P}$ which depends on $f$.
We therefore need to introduce a labeling scheme that, given the labels of $s$ and $f$ can compute $b=\first{\Gf}{s}{P}$.

The `Find the First' logic allows us to break the problem of finding $b$ itself into other sub-problems of the form {\em find the first} reachable vertex on other separators;
In addition to the separator $P$ that separates $s$ from $t$, we will use the separator $P'$ that separates $s$ from $f$.
Let $R$ be a path from $s$ to $b$ in $\Gf$.
It is easy to handle the case where $R$ does not cross $P'$.
Otherwise, $R$ crosses $P'$.
Notice that we can assume without loss generality that $R$ visits $a=\first{G_{P'}}{s}{P'}$, where $G_{P'}$ is the side of the separator $P'$ that contains $s$ and not $f$.
It is clear that $\first{\Gf}{s}{P}=\first{\Gf}{a}{P}$.
The label of $s$ stores $a$, which allows us to reduce the problem to a specialized labeling scheme with the following settings:
We are given two paths $P'$ and $P$ and two vertices $a\in P'$ and $f\notin P'\cup P$ and we wish to find $\first{\Gf}{a}{P}$. We denote this labeling scheme by $\LabelatfoP$. Designing $\LabelatfoP$ turns out to be the main technical challenge in the case $f\notin P$, which we solve as follows.

For a vertex $v\in P'$ let $u$ be the last vertex in $P'$ such that $\first{G}{v}{P}=\first{G}{u}{P}$.
Let $C_u$ be a path in $G$ from $u$ to $\first{G}{u}{P}$, we call such a path {\em canonical}.
Moreover, we set $C_v=C_u$.
Notice that when considering a path from $a\in P'$ to $\first{G}{a}{P}$, we can always consider a path that goes from $a$ to the beginning of $C_a$ on $P'$ and then continues on $C_a$.
A helpful property of the paths $C_v$ for $v\in P'$ is that they are disjoint, i.e. if $C_x\ne C_y$ then $C_x\cap C_y=\emptyset$.
Therefore, each vertex $f$ can store the (at most one) $u$ such that $f \in C_u$ it lies on.
If $f\notin C_a$, then $\first{\Gf}{a}{P}=\first{G}{a}{P}$.
Otherwise $f\in C_a$, let $R$ be a shortest path from $a$ to $\first{\Gf}{a}{P}$ in $\Gf$, let $C_1$ and $C_2$ be the prefix and suffix of $C_a$ up to $f$.
We distinguish between three cases:
If $R\cap C_a=\emptyset$, the label of $a$ stores $\first{G\setminus C_a}{a}{P}$.
If $R$ intersects $C_1$, then $\first{\Gf}{a}{P}=\first{\Gf}{u}{P}$ (where $u\in P'$ is the first vertex of $C_a$), and $f$ stores $\first{\Gf}{u}{P}$.
Otherwise, $R$ intersects $C_2$ and $\first{\Gf}{a}{P}=\first{G}{a}{P}$. We recognize this case by storing in the label of $a$ the last $f$ on $C_a$ such that $\first{\Gf}{a}{P}=\first{G}{a}{P}$.

\begin{figure}[h]
\begin{center}
\includegraphics[width=0.35\textwidth]{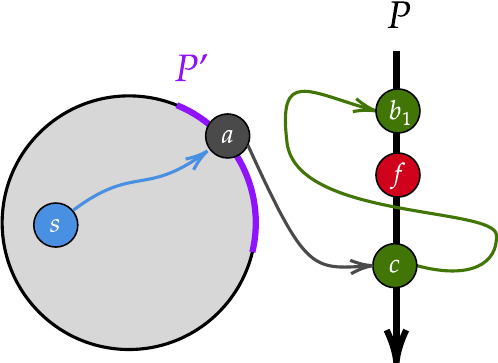}
\caption{An lustration of the case $f\in P$.
\label{fig:vtoPfB}
}
\end{center}
\end{figure}

\paragraph{If \boldmath$f\in P$ (see \cref{fig:vtoPfB}),} it is no longer true that $s$ can reach $t$ in $\Gf$ if and only if $\first{\Gf}{s}{P}\le_P \last{\Gf}{t}{P}$.
Instead, let $P_1$ and $P_2$ be the prefix and suffix of $P$ before and after $f$ (excluding $f$) respectively. Now it holds that $s$ can reach $t$ in $\Gf$ if either $\first{\Gf}{s}{P_1}\le_P \last{\Gf}{t}{P_1}$ or $\first{\Gf}{s}{P_2}\le_P \last{\Gf}{t}{P_2}$.
Therefore, our goal is to retrieve $\first{\Gf}{s}{P_1}$ and $\first{\Gf}{s}{P_2}$ from the labels of $s$ and $f$.
We focus here on labels for finding $b_1=\first{\Gf}{s}{P_1}$, the labels for $\first{\Gf}{s}{P_2}$ are similar.

Let $R$ be a path in $\Gf$ from $s$ to $\first{\Gf}{s}{P_1}$.
For the sake of simplicity we assume that $R$ visits some vertex $a'$ of $P'$, and that the first vertex on $P$ that is on $R$ after $a'$ is some vertex $c'$ in $P_2$.
We denote as $G_P$ the side of the separator $P$ that contains $s$.
We break the task of computing $\first{\Gf}{s}{P_1}$ into three sub-tasks.
\begin{enumerate}
    \item Finding $a=\first{G_{P'}}{s}{P'}$ (where again, $P'$ separates $s$ and $f$, and $G_{P'}$ is the side of $P'$ that contains $s$ and not $f$).
    \item Finding $c = \first{G_P\setminus\{f\}}{a}{P_2}$.\footnote{This is an inaccurate simplification for the sake of reducing clutter.
    In reality, the sub-problem is to find a first vertex using paths that are \emph{internally disjoint} from $P$.}
    \item Finding $\first{\Gf}{c}{P_1}$.
\end{enumerate}

We show that $\first{\Gf}{s}{P_1} = \first{\Gf}{c}{P_1}$, which immediately implies the usefulness of these three tasks.
Let $a'$ be the first vertex on $R$ that is on $P'$, and let  $c'$ be the first vertex on $R$ after $a'$ that is on $P$ (specifically on $P_2$, due to our assumption).
Notice that $R[s,a']$ is a path in $G_{P'}$ and $R[a',c']$ is a path in $G_P\setminus \{f\}$.
Since $a'$ is reachable from $s$ in $G_{P'}$, the vertex $a$ precedes $a'$ on $P'$.
We can therefore assume without loss of generality that $R$ visits $a$, since $s$ can reach $a$ and $a$ can reach $a'$ via $P'$, which implies that $a$ can reach $\first{\Gf}{s}{P_1}$.
By the same reasoning, we get that $\first{\Gf}{c}{P_1} = \first{\Gf}{a}{P_1} = \first{\Gf}{s}{P_1}$.

The most challenging out of the above three sub-problems is the third one.
In order to tackle it, we further partition it into three sub-problems.

\begin{enumerate}
    \item Given two vertices $b\in P_2$ and $f$, find $\first{G\setminus P_1}{b}{P_2}$
    \item Given two vertices $b \in P_2$ and $f$, find the first vertex reachable on $P_1$ from $b$ via a path that is internally disjoint from $P_1$.
    \item Given two vertices $b\in P_1$ and $f$, find $\first{\Gf}{b}{P_1}$.
\end{enumerate}

Again, it follows from `Find the First' arguments that the above three problems, when put together, allow us to find $\first{\Gf}{c}{P_1}$ for a vertex $c \in P_2$.

\subsection{Extending the reachability labels to approximate-distance labels}
We proceed to describe the labels for $(1+\eps)$-distance approximation.
Our $(1+\eps)$-approximate  distance  labeling builds upon the same high-level approach of the reachability labeling.
In order to adapt our reachability labeling for $(1+\eps)$-distance approximation, we first need to modify some of the notations and ideas used in the reachability settings.

 \begin{figure}[htb]
  \begin{center}
 \includegraphics[scale=0.7]{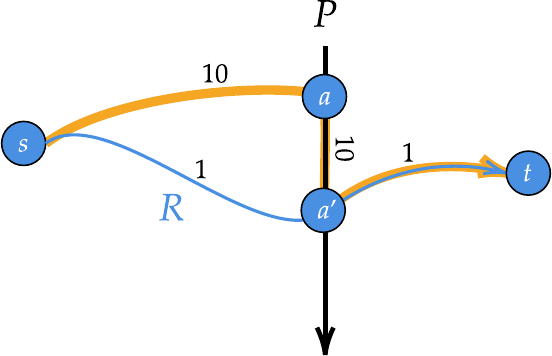}
  \caption{A naive attempt to use the reachability strategy for approximate distances may find a path which is much longer than the shortest path, hence fails to approximate the distance.\label{fig:nonsense}}
   \end{center}
\end{figure}

In order to motivate our modified definitions, let us consider a naive application of the `Find the First' approach for approximate distances: i.e., given $s$, $t$, $f$, and a separator $P$ that separates $s$ from $t$, find $a = \first{\Gf}{s}{P}$ and reduce the task of approximating $\dist_{\Gf}(s,t)$ to the task of approximating $\dist_{\Gf}(s,a)$ and $\dist_{\Gf}(a,t)$ (see \cref{fig:nonsense}).
This approach fails due to two reasons:
First, while $s$ can reach $t$ via $a$, it may be the case that the subpath from $s$ to $a$ is very expensive, while reaching a later vertex on $P$ is significantly cheaper.
Second, even if reaching $a$ is cheap, the subpath $P[a,a']$ might be significantly more expensive than the distance from $s$ to $t$ ($a'$ is the actual vertex on $P$ used by a shortest $s$-to-$t$ path).
Therefore, we have no guarantee on the distance from $a$ to $t$.
Compactly, the two problems can be put down as follows:
\begin{equation*}
   \text{(1)} \quad     \dist_{\Gf}(s,a) \not \le (1+\eps)\dist_{\Gf}(s,a')
 \quad  \qquad \text{(2)} \quad  \dist_{\Gf}(a,t) \not \le (1+\eps)\dist_{\Gf}(a',t)
\end{equation*}

We adjust our approach to treat these issues as follows.
Using the framework of Thorup~\cite{Thorup04} for (non-faulty) approximate distance labeling, we can assume that the separators of the recursive decomposition have the following structure:
For some number $r$ (think of $r$ as an estimation of the distance from $s$ to $t$ in $\Gf$), the length of each separator in $G$ is $O(r)$.
In particular, we can partition each separator path into $O(\frac{1}{\eps})$ subpaths, each of length $O(\eps r)$.
Let $R$ be a shortest $s$ to $t$ path in $\Gf$.
Now, when applying `Find the First'  strategy, we would like to find the first vertex on the specific subpath $P$ that is visited, rather than on the entire separator.
This resolves the second problem above.
Since the cost of $P$ is now bounded by $O(\eps r)$, we have that $\dist(\first{\Gf}{s}{P},t) \le \dist(a',t) + O(\eps r)$.
Since $\dist_{\Gf}(s,t) \approx r$, an additive factor of $O(\eps r)$ to the answer is acceptable.

To handle the first problem, we refine the definition of $\first{G}{s}{P}$ to take the length of the path into account.
Specifically, we define $\dfirst{G}{s}{P}{\alpha}$ to be the first vertex reachable on $P$ from $s$ \textit{with a path of length at most $\alpha$}.
By guessing $\alpha$ such that $\dist_{\Gf}(s,a') \le \alpha \le \dist_{\Gf}(s,a') + O(\eps r)$, we have that $\hat a=\dfirst{\Gf}{s}{P}{\alpha}$ is still earlier than $a'$ on $P$, and $\dist_{\Gf}(s,\hat a) \le \dist_{\Gf}(s,a') + O(\eps r)$.

It follows from the above discussion that $\dfirst{\Gf}{s}{P}{\alpha}$ with an appropriate value of $\alpha$ could be used to approximate the $s$ to $t$ distance similar to the way $\first{\Gf}{s}{P}$ is used in reachability.
Notice that each application of the `Find the First' approach in the approximate distance settings may introduce an additional additive factor of $O(\eps r)$ to the final answer.
When finding an approximate shortest path, we only apply this approach a constant number of times, so the additive $O(\eps r)$ factors do not aggregate too much and sum up to $O(\eps r)$ over all `Find the First' applications.

Sadly, we do not know how to design a labeling scheme for finding $\dfirst{\Gf}{s}{P}{\alpha}$.
We therefore introduce a further relaxation by defining the $\ddfirst{\Gf}{s}{P}{\alpha}{\delta}$ property.
We say that a vertex $a^*$ is a $\ddfirst{\Gf}{s}{P}{\alpha}{\delta}$ if $a^* \le_P \dfirst{\Gf}{s}{P}{\alpha}\le_P a'$ and $\dist(s,a^*) \le \alpha + \delta$.
Notice that for $\delta = O(\eps r)$, the vertex $a^*$ functions as a 'legitimate' middle point between $s$ and $t$.
That is, we have $\dist_{\Gf}(s,a^*) \le \alpha + O(\eps r) \le \dist_{\Gf}(s,a') +O(\eps r)$ and $\dist_{\Gf}(a^*, t) \le \len(P[a^*,a']) + \dist_{\Gf}(a',t) \le \dist_{\Gf}(a',t) + O(\eps r)$.
We can therefore define our sub-problems as the tasks of finding $\ddfirst{\Gf}{s}{P}{\alpha}{\delta}$.

\paragraph{Adjusting reachability sub-routines to approximate distance sub-routines.}
The high level approach of `Find the First' allows us to break the approximate-distance task into sub-problems in the same way as in reachability.
However, adapting the labeling schemes for these problems from reachability to approximate distances poses significant technical difficulties, and requires new involved machinery.
To demonstrate these difficulties, we focus on one of the sub-problems defined for the reachability labeling.

Consider the following sub-problem, which we call the {\em easy problem}.
We are given a path $P$ in $G$ such that all the edges touching $P$ emanate or enter $P$ from or to the left, and we are interested in assigning labels to the vertices of $P$ such that given the labels of two vertices $f <_P b$, one can find $\first{\Gf}{b}{P_1}$ such that $P_1$ is the prefix of $P$ preceding $f$.
Due to the context in which this sub-problem is used, we can also assume that $P_1$ has no outgoing edges to $G\setminus P_{1}$, and $P_2$ (the suffix of $P$ following $f$) does not have incoming edges from $G\setminus P_2$.
Essentially, this means that we are only interested in paths from $P_2$ to $P_1$ that are internally disjoint from $P$, apart from a prefix that is a subpath of $P$ (see \cref{fig:G1pathB}).

\begin{figure}[htb]
\begin{center}
\includegraphics[width=0.5\textwidth]{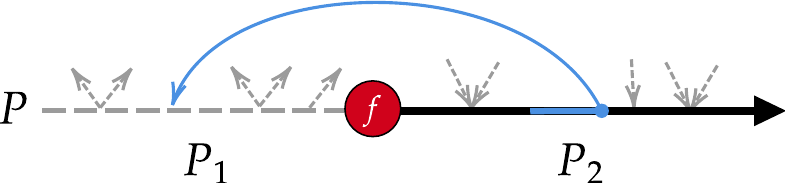}
\caption{An illustration of $G_1$.
We are interested in $P_2$ to $P_1$ paths  (like the blue path) that may start with some edges of $P_2$ and then continue with a subpath which is internally disjoint from $P$.
Formally, this is achieved by removing all in-going edges to $P_2$ and all out-going edges from $P_1$ (the removed edges are displayed in gray in the figure).
\label{fig:G1pathB}
}
\end{center}
\end{figure}

Indeed, it is easy to solve the easy problem;
Given some faulty vertex $f$, we define for every $v\in P_2$ the vertex $v_f=\first{\Gf}{v}{P_1}$.
Notice that $v_f$ may be undefined: it is possible that $v$ cannot reach any vertex on $P_1$.
Let $D_v$ be a path from $v$ to $v_f$ in $\Gf$.
We show that for every two vertices $x,y$ on $P_2$ that can reach $P_1$, it holds that $x_f = y_f$.
This follows from the following two arguments:
First, if $x <_P y$ we must also have $x_f \le_P y_f$ since $x$ can reach $y$.
Then, if $x_f$ is strictly before $y_f$, the paths $D_x$ and $D_y$ intersect (see \cref{fig:nested}).
In particular, an intersection means that $y$ can reach $x_f <_P y_f$, a contradiction.
We use this observation to define the label of $f$.
Specifically, the label of $f$ will store the last vertex $\ell$ on $P_2$ that can reach $P_1$, and $\ell_f$.
It follows from our observation that given the index of $b$ on $P_2$, and the label of $f$, we can set $\first{\Gf}{b}{P_2}$ to be $\ell_f$ if $b \le_P \ell$ or null otherwise.

In the approximate-distance version of the easy problem, the settings are exactly the same but we are also given a number $\alpha$ and an approximation parameter $\eps$.
We are interested in assigning labels to the vertices of $P$ such that given the labels of two vertices $f <_P b$, we can output a vertex that is an $\ddfirst{\Gf}{b}{P_1}{\alpha}{\eps \alpha}$.
In this case, the length of $P$ is zero.

Let us try to apply a similar logic to the approximate-distances variant.
For a faulty vertex $f$, and for every $v\in P_2$ we now define $v_f = \dfirst{\Gf}{v}{P_1}{\alpha}$.
Because $\len(P) = 0$, it is still true that for every two vertices on $P_2$ such that $x<_P y$, we have $x_f \le_P y_f$.
However, it is no longer true that $x_f = y_f$.
Now, if $x_f<_P y_f$, the intersection of $D_x$ and $D_y$ implies a path of length $\mathbf{2}\alpha$ from $y$ to $x_f$.
In particular, we have that  the first vertex $p_f\in P_1$ that can be reached from some vertex $p\in P_2$ within budget $\alpha$, is an $\ddfirst{\Gf}{v}{P_1}{\alpha}{\alpha}$ for every $v\in P_2$ that can reach $P_1$ within budget $\alpha$.

Therefore, for $\eps = 1$ the approximation variant of the problem can be solved using an almost identical labeling scheme to the reachability variant.
The label of vertex $f$ will store the last vertex $\ell$ on $P_2$ that can reach a vertex on $P_1$ via a path of length at most $\alpha$, and also stores $p_f$.

\paragraph{The `Good Cross -- Bad Cross' Framework.}
At first glance, it may seem as if the above logic completely fails when $\eps < 1$.
The length of the path implied by the intersection $D_x$ and $D_y$ may actually be $2\alpha$, and if so $x_f$ may be not $\ddfirst{\Gf}{y}{P_1}{\alpha}{\eps \alpha}$.
We overcome this issue by introducing the framework of \textit{good cross and bad cross} which we use to solve multiple problems, in each of these problems this idea is used in a different way.
This is one of the main technical contributions of the paper, and we strongly believe it would find future applications.

Let $x$ and $y$ be the first and last vertices on $P_2$ that can reach $P_1$ with budget $\alpha$.
Consider the `lucky' situation in which $y$ can reach $x_f$ with budget $(1+\eps) \alpha$.
In this case, the construction we describe above would work.
Namely, for every $v\in P[x,y]$, we have  $x_f$ is $\ddfirst{\Gf}{v}{P_1}{\alpha}{\eps \alpha}$.
We call this kind of situation a {\em good cross}.
A {\em bad cross} is the complementary case in which $\dist_{\Gf}(y,x_f) > (1+\eps) \alpha$.
In this case, in particular the path $D_{x,y}$ from $y$ to $x_f$ that uses a prefix of $D_y$ and a suffix of $D_x$ is too costly.
We exploit the fact that the remainders of $D_x$ and $D_y$ that do not participate in $D_{x,y}$ form a cheap path in $\Gf$ from $x$ to $y_f$.

\begin{figure}[htb]
\begin{center}
\includegraphics[width=0.6\textwidth]{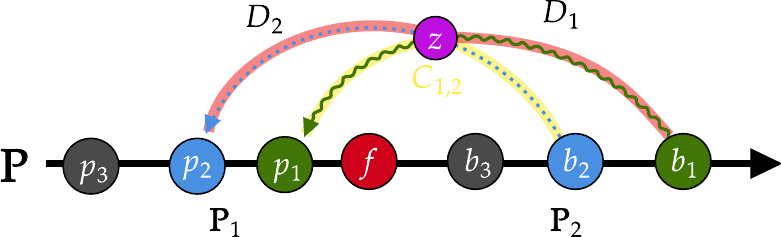}
\end{center}
\end{figure}

Let us show how the concept is used to obtain labels of size $O(\frac{1}{\eps})$.
Let $b_1$ be the last vertex in $P_2$ that can reach $P_1$ with budget $\alpha$, and let $p_1 = \dfirst{\Gf}{b_1}{P_1}{\alpha}$.
Now, let $b_2$ be the last vertex on $P_2(f,b_1]$ that has a `bad cross' with $b_1$, i.e. the last vertex earlier than $b_1$ such that $\dist_{\Gf}(b_1,p_2) > (1+\eps) \alpha$ for $p_2 = \dfirst{\Gf}{b_2}{P_1}{\alpha}$.
Denote the vertex following $b_2$ on $P$ as $b'_1$ and $\dfirst{\Gf}{b'_1}{P_1}{\alpha}= v_1$.
By the definition of $b_2$, the vertices $b_1$ and $b'_1$ good cross each other, which means that $\dist_{\Gf}(b_1,v_1) \le (1+\eps) \alpha$.
Due to $\len(P) = 0$ we also have that $v_1$ is an $\ddfirst{\Gf}{b}{P_1}{\alpha}{\eps \alpha}$ for every $b \in P_2(b_2,b_1]$.

We now discuss the implication of the bad cross between $b_1$ and $b_2$ .
Consider shortest paths $D_{1}$ and $D_2$ from $b_1$ to $p_1$ and from $b_2$ to $p_2$, respectively.
Due to planarity, the paths must intersect.
Let $z\in D_1 \cap D_2$, and denote the path $D_{1,2} = D_1[b_1,z]\cdot D_2[z,p_2]$.
Due to the bad cross, $\len(D_{1,2}) > (1+\eps)\alpha$.
Together, $D_1$ and $D_2$ cost at most $2\alpha$.
Hence, the path $C_{1,2} = D_2[b_2,z]\cdot D_1[z,p_1]$ must have length of at most $2\alpha-(1+\eps)\alpha=(1-\eps)\alpha$.

As it is, $C_{1,2}$ does not seem like a `useful' path: it allows $b_2$ to reach $p_1$ which can only be worse (i.e. later on $P_1$) than $p_2$.
It is helpful to think of the length of $C_{1,2}$ as a `budget' for bad crosses.
We will describe an iterative procedure in which every time a bad cross occurs, we get a path that resembles $C_{1,2}$ whose length is smaller by $\eps \alpha$.

\begin{figure}[htb]
\begin{center}
\includegraphics[width=0.6\textwidth]{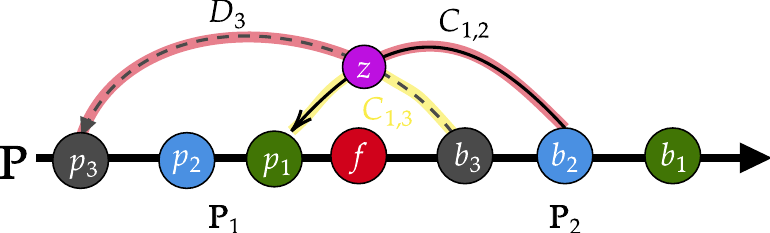}
\end{center}
\end{figure}

Let us proceed by choosing $b_3$ to be the last vertex on $P(f,b_2]$ that has a bad cross with $b_2$.
By the same reasoning, the vertex $v_2 = \dfirst{\Gf}{b'_2}{P_1}{\alpha}$, with $b'_2$ being the successor of $b_3$ on $P$, is an $\ddfirst{\Gf}{b}{P_1}{\alpha}{\eps \alpha}$ for every $b\in P(b_3,b_2]$.
Consequently, a shortest path $D_3$ from $b_3$ to $p_3=\dfirst{\Gf}{b_3}{P_1}{\alpha}$ must intersects $C_{1,2}$  at some vertex $z$.
This yields the path $D_{2,3} = C_{1,2}[b_2,z]\cdot D_3[z,p_3]$ with $\len(D_{2,3}) > (1+\eps) \alpha$ (due to the bad cross between $b_2$ and $b_3$).

Now, the sum of the lengths of $D_3$ and $C_{1,2}$ is bounded by $(2-\eps)\alpha$.
Therefore, we have that  $C_{1,3} = D_3[b_3,z]\cdot C_{1,2}[z,p_1]$ has length at most $(2-\eps)\alpha-(1+\eps)\alpha =  (1-2\eps)\alpha$.

Clearly, this process must terminate after $O(\frac{1}{\eps})$ steps.
The sequences of $b_i$'s and $v_i$'s found in the process are stored in the label of $f$, which is sufficient to classify every $b$ to the subpath of $P$ such that $b\in P(b_i,b_{i-1}]$ and report $v_i$ as an answer for $b$.
Thus, a labeling scheme with labels of size $O(\frac{1}{\eps})$ is achieved for the approximate version of the easy problem.

The labeling scheme we described above is the simplest application of the `good cross-bad cross' technique.
As it turns out, this idea proves useful in generalizing many of the reachability subroutines to the approximate distance settings.
Intuitively, the reachability variants of each subroutine depend on arguments of the form: \textit{"There is a path from $b_1$ to $p_1$ and a path from $b_2$ to $p_2$. These paths intersect, so there is a path from $b_1$ to $p_2$"}.
One would like to have the following similar argument for approximate distances: \textit{"There is a path of length $\alpha$ from $b_1$ to $p_1$ and a path of length $\alpha$ from $b_2$ to $p_2$.
These paths intersect, so there is a path of length $(1+\eps)\alpha$ from $b_1$ to $p_2$"}.
Unfortunately, the latter statement is not true in general.
The `good cross-bad cross' technique exploits the fact that if the latter statement is not true (which we can consider as a `bad event'), then \textit{some} path is cheap.
The cheap path on its own is not necessarily useful, e.g. the path from $b_2$ to $p_1$ is not an approximate shortest path.
However, the length of the cheap path can be used as a decreasing potential that cannot become negative, and therefore bounds the number of `bad' events.

The solution for the easy problem can be described as a selection procedure for 'useful' vertices that terminates quickly due to 'good cross-bad cross' arguments.
The challenge in applying the 'good cross- bad cross' approach to other sub-problems is in defining the selection mechanism.
Specifically, the paths in the easy problem have a convenient structure in the sense that every two paths must intersect.
This is not the case in other sub-problems.
For example, in other sub-problems, the cheap paths created as a result of a bad cross in each step of the algorithm seem entirely unrelated to the sub-problem at hand.
One needs to creatively find an intricate way this relation can be made.

\section{Preliminaries}

\medskip
\noindent
{\bf The decomposition tree.}
In~\cite[Lemma 2.2]{Thorup04}, Thorup proved that we can assume the graph $G$ has an undirected spanning tree $T$ (i.e., $T$ is an unrooted spanning tree in the undirected graph obtained from $G$ by ignoring the directions of edge) such that each path in $T$ is the concatenation of $O(1)$ directed paths in $G$.

This way, we can describe the process of decomposing $G$ into pieces in the undirected version of $G$. After describing the decomposition, we will replace each undirected path of $T$ defined in the process by its $O(1)$ corresponding directed paths in $G$. We therefore proceed to describe the decomposition treating $G$ as an undirected graph with a rooted spanning tree $T$.

A balanced fundamental cycle separator~\cite[Lemma 5.3.2]{planarbook} (cf. \cite{LTsep} and \cite[Lemma 2.3]{Thorup04}) is a simple cycle $C$ in $G$ whose vertices are those of a single path of the (unrooted) spanning tree $T$, such that the removal of the vertices of $C$ and their incident edges separates $G$ into two roughly equal sized subgraphs. The balance of the separator can be defined with respect to a weight function on the vertices of $G$ rather than just the number of vertices.

The recursive decomposition tree $\mathcal T$ of $G$ is defined as follows.
Each node of $\mathcal T$ corresponds to a subgraph of $G$ (called a {\em piece}). The root piece of $\mathcal T$ is the entire graph $G$.
The boundary $\partial G$ of $G$ is defined to be the empty set.
We define the children of a piece $H$ in $\mathcal T$ recursively. Let $\partial H$ be the boundary of $H$, and let $C$ be a fundamental cycle separator which balances the number of non-boundary vertices of $H$.
Let $Q$ be the set of maximal subpaths of $C$ that are internally disjoint from $\partial H$.
To reduce clutter we sometimes refer to a vertex $v \in Q$, by which we mean that $v$ belongs to some path in $Q$.
The vertices of $H$ that are enclosed by $C$ (including the vertices of $C$) belong to one child $H_1$ of $H$. The vertices of subpaths in $Q$ and the vertices of $H$ not enclosed by $C$ belong to the other child $H_2$.
Note that the vertices of $Q$ are the only vertices of $H$ that belong to both children.
The endpoints of $Q$ that belong to $\partial H$ are called the {\em apices} of $H$. The importance of apices arises from the fact that apices are the only vertices that belong to more than two pieces at the same depth of $\mathcal T$.\footnote{
The term apex was previously used in exactly the same context in \cite{AbrahamCG12}.}
We call the paths in $Q$ without the apices of $H$ the {\em separator} of $H$. We do not include the apices in the separator paths to guarantee that the separator is vertex disjoint from $\partial H$.
The boundary $\partial H_i$ of a child $H_i$ of $H$ consists of the separator of $H$ and of the subpaths of $\partial H$ induced by the vertices of $H_i$.

Since $H_2$ might not contain all the vertices of $C$, the subgraph induced on the spanning tree $T$ by $H_2$ may become disconnected.
To overcome this slight technical issue we embed in $H_2$, if necessary, an artificial root connected by artificial edges to the rootmost apex in each of the resulting components of the tree. The embedding remains planar since all these apices were on the fundamental cycle separator of the parent piece $H$. We treat the artificial root and edges as part of the spanning tree of $H_2$. To guarantee that the addition of the artificial root and edges does not affect the reachability of non-artificial vertices of $H_2$, we direct the artificial edges into the artificial root.

The leaves of $\mathcal T$ (called {\em atomic} pieces) correspond to pieces with $O(1)$ non-boundary vertices. The depth of $\mathcal T$ is $O(\log n)$. For convenience, we consider all $O(1)$ vertices of an atomic (leaf) piece that are not already boundary vertices as the separator of the piece.
It follows that the boundary $\partial H$ of any piece $H$ consists of  $\tilde O(1)$ vertex disjoint paths (the subpaths induced by the vertices of $H$ on the separators of the ancestor pieces of $H$), and each of the paths in $\partial H$ lies on a single face of $H$.
Also, since in each node of the decomposition tree only $\tilde O(1)$ apices are created, there are $\tilde O(1)$ apices along any root-to-leaf path in $\mathcal T$.

Having defined the decomposition tree $\mathcal T$ we can go back to treating $G$ as a directed graph. As we explained above, each path we had discussed in the undirected version of $G$ is the union of $O(1)$ directed paths in $G$. From now on when we refer to the separator paths of a piece $H$ (resp., paths of $\partial H$), we mean the set of directed paths comprising the undirected separator paths of $H$ (resp., the set of directed paths comprising the paths of $\partial H$).

To be able to control the size of the labels in our construction we need to be aware of the number of pieces of $\mathcal T$ to which a vertex belongs.
The only vertices of a piece $H$ that belong to both its children are the vertices of the separator path of $H$ and the $\tilde O(1)$ apices of $H$.
The fact that separator paths are disjoint from the boundary imply that every vertex belongs to the separator of at most one piece in $\mathcal T$. Hence, if a vertex is not an apex, it appears in $O(\log n)$ pieces of $\mathcal T$.
Apices, on the other hand require special attention because they may belong to many (e.g., polynomially many) pieces of $\mathcal T$; High degree vertices may be apices in many pieces, and we will need a special mechanism for dealing with such vertices.
Dealing with apices (like dealing with holes in other works on planar graphs) introduces technical complications that are not pertinent to understanding the main ideas of our work.\footnote{A reader who is not interested in those details can safely skip the parts dealing with apices and just act under the assumption that each vertex appears in 2 atomic pieces (leaves) of $\mathcal T$, and that the following definition of ancestor pieces of a vertex $v$ just degenerates to the set of $O(\log n)$ ancestors of the 2 atomic pieces containing $v$.}

\begin{figure}[h]
\begin{center}
\includegraphics[scale=0.18]{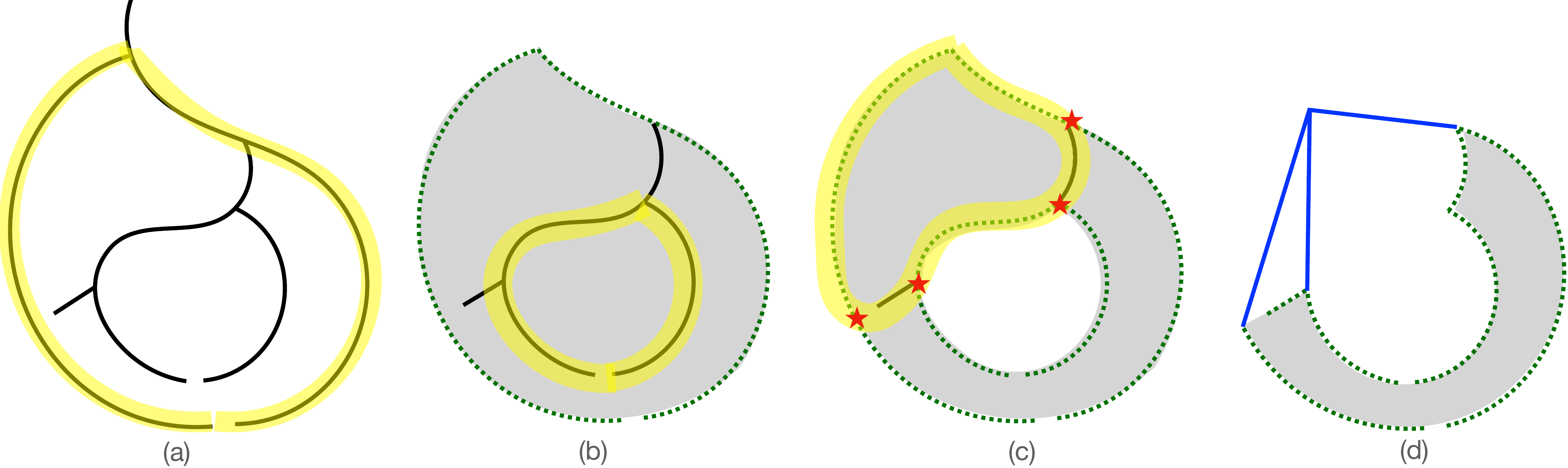}
\caption{Illustration of the recursive decomposition process.
(a) a portion of the spanning tree $T$ of $G$ is shown in black. A balanced fundamental cycle separator of $G$ is highlighted in yellow.
(b) The piece $H$ containing the vertices of $G$ enclosed by the fundamental cycle of $G$ is indicated in shaded grey. The boundary $\partial H$ (dashed) consists of the separator of $G$. Portions of the spanning tree $T$ induced by the vertices of $H$ are shown, along with a fundamental cycle separator of $H$, highlighted in yellow.
(c) The piece $H_{2}$ containing the vertices of $H$ not strictly enclosed by the fundamental cycle separator of $H$ is indicated in shaded grey. The boundary $\partial H_{2}$ (dashed) consists of $\partial H$ and of the separator of $H$. Portions of the spanning tree $T$ induced by the vertices of $H_{2}$ are shown, along with a fundamental cycle separator of $H_2$, highlighted in yellow. There are two  maximal subpaths $Q$ of the fundamental cycle separator of $H_{2}$ that do not belong to $\partial H_{2}$ (the two parts of the highlighted yellow cycle that are not dashed). The endpoints of $Q$ that belong to $\partial H_{2}$ are the apices of $H_{2}$ (the four red stars).
(d) The piece $H_{22}$ containing the vertices of $H_{2}$ not strictly enclosed by the fundamental cycle of $H_{2}$ is indicated in shaded grey. The boundary $\partial H_{22}$ (dashed) consists of the paths induced by $H_{22}$ on $\partial H_{2}$, and of the separator of $H_{2}$. Since the subgraph induced on $T$ by the vertices of $H_{22}$ is disconnected, an artificial root and edges (in blue) are added to $H_{22}$.
\label{fig:decomp}
}
\end{center}
\end{figure}

We associate with every vertex $v\in G$ the (at most 2) rootmost pieces $H$ in $\mathcal T$ in which $v$ is an apex (or the atomic pieces containing $v$ if $v$ is never an apex). We denote these pieces by $H_v$. Note that every piece that contains a vertex $v$ is either an ancestor of a piece in $H_v$ or a descendant of a piece in $H_v$.
For a vertex $v\in G$ we define the {\em ancestor pieces} of $v$  to be the set of (weak) ancestors in $\mathcal T$ of the pieces $H_v$.
By definition of $H_v$, every vertex, apex or not, has $O(\log n)$ ancestors pieces.
We similarly define the ancestor separators/paths/apices of a vertex $v\in G$ as the separators/separator-paths/apices of any ancestor piece of $v$. See Figure~\ref{fig:decomposition}.

\begin{figure}[h]
\begin{center}
\includegraphics[scale=0.18]{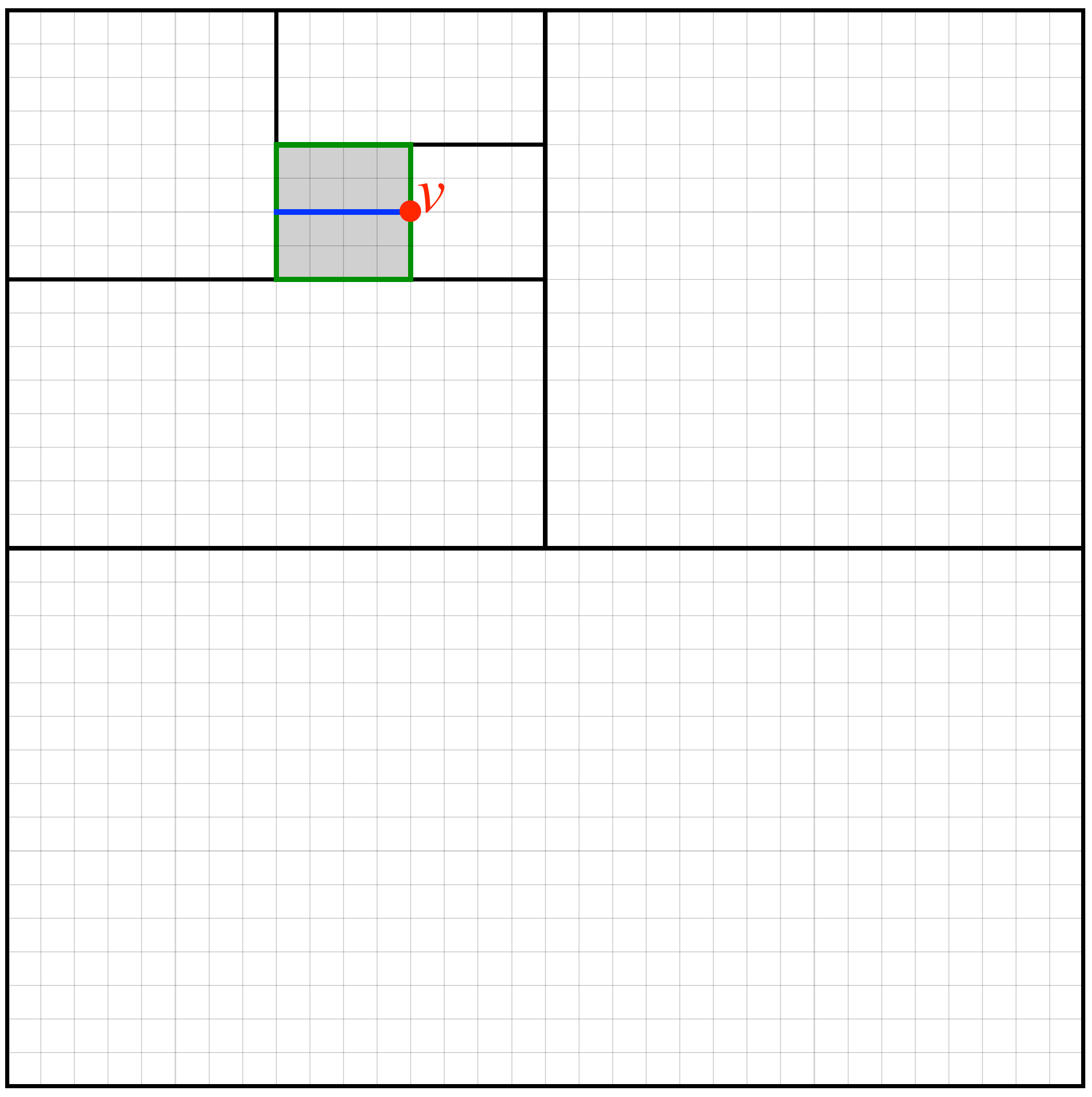}
\caption{Pieces in a recursive decomposition. The separators in this example alternate between horizontal and vertical lines.
All rectangular pieces that contain the gray piece $H$ are its ancestors.
The boundary $\partial H$ of $H$ is shown in green.
The maximal subpath of the cycle separator of $H$ that is internally disjoint from $\partial H$ is shown in blue.
The vertex $v$ is an apex of $H$ because it is an endpoint of this maximal subpath. The separator of $H$ is the blue path (without its endpoints).
\label{fig:decomposition}
}
\end{center}
\end{figure}

We say that the separator $Q$ of a piece $H$ {\em separates} two vertices $u$ and $v$ (in $H$) if any $u$-to-$v$ path in $H$ must touch the separator $Q$ of $H$ or an apex of $H$. I.e., if at least one of the following holds: (1) $u\in Q$ or $u$ is an apex of $H$, or (2) $v\in Q$ or $v$ is an apex of $H$, or (3) each of $u$ and $v$ is in one distinct child of $H$.
Note that
if $Q$ separates $u$ and $v$ in $H$ then every $u$-to-$v$ path in $G$ either touches $Q$ or touches the boundary of $H$.

For a subgraph $H$, a path $P$ and a vertex $v$, let $\first{H}{v}{P}$ denote the first vertex of $P$ that is reachable from $v$ in $H$, and let $\last{H}{v}{P}$ denote the last vertex of $P$ that can reach $v$ in $H$. If vertex $u$ appears before vertex $v$ on a path $P$ then we denote this by $u<_P v$ (or simply $u<v$ if $P$ is clear from the context).
Throughout the paper, we gradually describe the information stored in the labels along with the explanations of why this particular information is stored (and why it is polylogarithmic). To assist the reader, we highlight in gray the parts that describe the information stored. For starters, we let
\hlgray{
every vertex $v\in G$ store in its label, for every ancestor path $P$ of $v$, the identity of $P$ and, if $v \in P$, the location of $v$ in $P$} (so that given two vertices $u,v$ of $P$, we can tell if $u<v$). We denote $P[u,v]$ the subpath of $P$ between vertices $u$ and $v$. To avoid unnecessary repetitions in the text, we assume that this information is included in every labeling scheme described in this paper, and do not include it explicitly in their descriptions.

\medskip
\noindent
{\bf Thorup's non-faulty labeling.}
Using the above definitions and notations, it is now very simple to describe Thorup's non-faulty reachability labeling~\cite{Thorup04}.
Consider any vertex $v$. Let $H$ be the rootmost piece in $\mathcal T$ in which $v$ belongs to the separator.
The crucial observation is that $v$ is separated from every other vertex in $G$ either by the separator of $H$ or by the separator of some ancestor piece of $H$.
Hence, every vertex $v\in G$ stores in its label $\first{G}{v}{P}$ and $\last{G}{v}{P}$ for every path $P$ of the separator of every ancestor of the rootmost piece in which $v$ belongs to the separator.
Then, given a query pair $u,v$, there exists a $u$-to-$v$ path in $G$ if and only if $\first{G}{u}{P} < \last{G}{v}{P}$ for one of the $O(1)$ paths
$P$ of the separator of an ancestor piece of the rootmost piece whose separator separates $u$ and $v$. Both $u$ and $v$ store the relevant information for these paths in their labels. We note that in Thorup's scheme we do not need to worry about apices since each vertex $v$ only stores information in pieces above the first time $v$ appears on a separator.

\section{Reachability Labeling}\label{sec:reach_labels}

In this section, we describe our reachability labeling scheme. I.e., what to store in the labels so that given the labels of any three vertices $s,f,t$ we can infer whether $t$ is reachable from $s$ in $\Gf$. We call the $s$-to-$t$ path $R$ in $\Gf$ the {\em replacement path}.

\begin{theorem}\label{thm:reach}
    There exists a labeling scheme for reachability for $G$ that, given vertices $s,t,f$ returns whether $t$ is reachable from $s$ in $\Gf$.  The size of each label is $\Oone$.
\end{theorem}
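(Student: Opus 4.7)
The plan is to piggyback on Thorup's non-faulty reachability scheme applied to the recursive decomposition tree $\mathcal{T}$ with shortest-path separators. Each vertex $v$ stores, as in the non-faulty setting, the identity and position of every ancestor separator path $P$ together with $\first{G}{v}{P}$ and $\last{G}{v}{P}$. Given a query $(s,t,f)$, the labels of $s$ and $t$ identify the rootmost ancestor piece $H$ whose separator $Q$ separates them; since $Q$ consists of $O(1)$ shortest directed paths, it suffices to decide, for each constituent path $P$, whether some $s$-to-$t$ path in $\Gf$ crosses $P$. I split on whether $f\in P$.

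For $f\notin P$, the non-faulty equivalence ``$s$ reaches $t$ through $P$ iff $\first{\Gf}{s}{P}\le_P \last{\Gf}{t}{P}$'' persists, so I would design two symmetric sub-schemes $\LabelstfoP$ and its reverse that each read the labels of a source and $f$ and return the corresponding first or last reachable vertex on $P$. To build $\LabelstfoP$, recurse with the separator $P'$ that separates $s$ from $f$ at the rootmost ancestor piece containing both: either the replacement path stays inside the side of $P'$ containing $s$, in which case $\first{G_{P'}}{s}{P}$ is already cached at $s$, or it crosses $P'$ through $a=\first{G_{P'}}{s}{P'}$, reducing to a scheme $\LabelatfoP$ with source $a\in P'$, target path $P$, and fault $f\notin P\cup P'$. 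For $\LabelatfoP$ I would use the canonical-path idea of the overview: for every $v\in P'$ let $u$ be the last vertex of $P'$ with $\first{G}{v}{P}=\first{G}{u}{P}$, fix a canonical path $C_u$ realizing that reach, and set $C_v:=C_u$; a planar uncrossing argument makes distinct canonical paths vertex-disjoint, so each $f$ stores the at most one canonical path containing it. The label of $a$ records $\first{G}{a}{P}$, $\first{G\setminus C_a}{a}{P}$, and the last vertex on $C_a$ whose failure still leaves $\first{G}{a}{P}$ reachable; every $f\in C_u$ stores $\first{G\setminus\{f\}}{u}{P}$. A constant-time case split on where $f$ lies relative to $C_a$ then yields $\first{\Gf}{a}{P}$.

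For $f\in P$, write $P=P_1\cdot f\cdot P_2$; then $s$ reaches $t$ through $P$ in $\Gf$ iff $\first{\Gf}{s}{P_i}\le_P \last{\Gf}{t}{P_i}$ for some $i\in\{1,2\}$, so I must produce $\first{\Gf}{s}{P_i}$. I would decompose this along the three-step reduction of the overview: compute $a=\first{G_{P'}}{s}{P'}$; then $c=\first{G_P\setminus\{f\}}{a}{P_{3-i}}$ via a path internally disjoint from $P$; then $\first{\Gf}{c}{P_i}$ with $c$ already on $P$, and prove that the composition equals $\first{\Gf}{s}{P_i}$ by the usual Find-the-First exchange argument (each prefix of a witness replacement path can be re-routed through an earlier Find-the-First vertex).

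The hard part will be the final subtask, where the designated source, target, and fault all lie on the same separator $P$: this is exactly the case that forced \cite{Reachability} to invent dominator-tree machinery in the oracle setting, and those tools do not distribute into labels. I plan to attack it by further splitting into three sub-labelings as sketched in the overview (first on $P_{3-i}$ inside one child piece, internally disjoint crossing from $P_{3-i}$ to $P_i$, and first on $P_i$ from a source already on $P_i$), each of which I expect to admit an $\Oone$-bit label by exploiting the one-sided orientation of edges touching $P$ inside each child piece and the canonical-path monotonicity already used for $\LabelatfoP$. The bookkeeping subtlety to watch is that apices of ancestor pieces can appear in many pieces, but there are only $\tilde O(1)$ apices along any root-to-leaf path in $\mathcal{T}$; combining everything and summing the $\Oone$ information over the $O(\log n)$ ancestor pieces yields labels of total size $\Oone$, proving \cref{thm:reach}.
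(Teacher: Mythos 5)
Your outline follows the paper's route: Thorup's decomposition, the case split on $f\in P$ versus $f\notin P$, the reduction to a ``first reachable vertex on $P$'' scheme via the separator $P'$ between $s$ and $f$, the canonical-path construction for $\LabelatfoP$, and the three-stage reduction when $f\in P$. Up to that point the proposal matches the paper's proof of \cref{thm:reach} and of \cref{lem:stfopreach,lem:atfopreach}.

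The genuine gap is the last subtask, which you yourself flag as the hard part and then defer with ``I expect to admit an $\Oone$-bit label by exploiting the one-sided orientation of edges touching $P$ \ldots and the canonical-path monotonicity already used for $\LabelatfoP$.'' This is precisely the scheme the paper calls $\LabelPtoP$ (\cref{lem:ptopreach}): both the query vertex $b$ and the fault $f$ lie on the same separator path $P$, and one must report the first vertex of $P$ before/after $f$ reachable from $b$ in $\Gf$. Canonical-path disjointness does not transfer here: the monotonicity behind $\LabelatfoP$ comes from the fact that an earlier vertex of $P'$ can reach a later one along $P'$, whereas a $b$-to-$P_1$ replacement path must leave $P$ and re-enter it past $f$, and the relevant objects are detours around $f$, which are not pairwise disjoint but only non-crossing (a laminar family). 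The paper's solution needs three ingredients you have not supplied: (i) working in the incised graph $H^{\times P}$ so that the endpoints of $P$ lie on a single face, which is what forces a bypass leaving $P$ on one side to re-enter on the same side and lets one uncross bypasses, so that only the largest bypass (resp.\ smallest byway) on each side matters and $f$ can afford to store it; (ii) an auxiliary scheme in which every vertex of $P$ stores $O(\log n)$ geometrically nested detours $d^v_1,d^v_2,\dots$ together with the secondary detours $\hat d^v_i$, so that from the labels of $b$ and $f$ alone one can recover the largest detour containing $b$ but not $f$ -- a halving trick, not a monotonicity argument; and (iii) the four-case analysis combining the auxiliary output with the stored bypass/byway endpoints $L_f,R_f$ and their reachable targets $L_f^{\pm},R_f^{\pm}$. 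Without a concrete construction of this kind, the chain of reductions for the case $f\in P$ is not closed, and this is exactly the configuration (all of $s,t,f$ on one separator path) that the paper identifies as the crux that previously required dominator-tree machinery in the oracle setting.
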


\subsection{Reduction into simpler problems}

Instead of describing our labeling scheme that proves \cref{thm:reach} directly, we describe labeling for several specialized tasks, and show how to combine these specialized labeling schemes into the desired labeling scheme for reachability.
While this division may seem somewhat excessive for the description of the reachability labels, it serves to clearly present and explain the high level structure of the scheme, in preparation for the more complicated and elaborate scheme for approximate distances in~\cref{sec:dist_labels}.
The first reduction is to two labeling schemes from a vertex to a path in the presence of a failed vertex, one for the case that the failing vertex is not on the path, and the other for the case that it is.

\begin{restatable}{lemma}{stofpreachlem}\label{lem:stfopreach}
    There exists a labeling scheme $\LabelstfoP=\LabelstfoP_{G,P}$ where $G$ is a planar graph equipped with a decomposition tree $\mathcal T$, and $P$ is a path in $\mathcal T$.
    Let $s$ and $f\notin P$ be two vertices of $G$ that are not an ancestor apex of one another, and such that $P$ is an ancestor of both $s$ and $f$.
    Given the labels of $s$ and $f$, one can compute the index on $P$ of the vertex $b$ that is $\first{\Gf}{s}{P}$.
    In this labeling scheme, the only vertices that store a label are those that have $P$ as an ancestor. The size of each label stored by such a vertex is $\tilde O(1)$.
\end{restatable}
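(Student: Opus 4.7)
The plan is to reduce the original query to a specialized sub-task $\LabelatfoP$ in which the starting vertex already lies on some separator path $P'$ that separates $s$ from $f$. Since $s$ and $f$ are not ancestor apices of one another and share $P$ as a common ancestor path, there is a common ancestor piece $H'$ in $\mathcal T$ whose separator contains a path $P'$ that separates $s$ from $f$ in $G$. Let $G_{P'}$ denote the side of $P'$ that contains $s$ but not $f$, and let $a=\first{G_{P'}}{s}{P'}$. In the label of $s$ (for this specific ancestor $P$) I would store, for each of the $O(\log n)$ candidate separating ancestor paths $P'$, the vertex $a$ together with $\first{G_{P'}}{s}{P}$, which is unaffected by the failure because $f\notin G_{P'}$. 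This amounts to $\tilde O(1)$ bits.

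Any replacement path $R$ from $s$ to $\first{\Gf}{s}{P}$ either stays inside $G_{P'}$, in which case the answer is the stored $\first{G_{P'}}{s}{P}$, or it leaves $G_{P'}$ through $P'$. In the latter case the `Find the First' argument lets us assume that $R$ visits $a$, so $\first{\Gf}{s}{P}=\first{\Gf}{a}{P}$ and the problem reduces to $\LabelatfoP$: given $a\in P'$ and $f\notin P'\cup P$, compute $\first{\Gf}{a}{P}$. The correct candidate between the two outcomes (path inside $G_{P'}$ versus path through $a$) is selected by taking whichever stored value appears earlier on $P$.

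For $\LabelatfoP$ I would use the canonical-path construction sketched in the overview. For each $u\in P'$ compute $\first{G}{u}{P}$ and partition $P'$ into maximal runs of consecutive vertices sharing the same image; for each run let $u^*$ be its last vertex, fix a canonical path $C_{u^*}$ in $G$ from $u^*$ to $\first{G}{u^*}{P}$, and set $C_v:=C_{u^*}$ for every $v$ in the run. Planarity together with the maximality of the runs implies that distinct canonical paths are pairwise internally disjoint, so each $f\notin P'\cup P$ lies on at most one canonical path, whose identifier I would store in $f$'s label together with $\first{\Gf}{u^*}{P}$ and the index of $f$ along $C_{u^*}$. In the label of a representative $a\in P'$ I would store $\first{G}{a}{P}$, $\first{G\setminus C_a}{a}{P}$, and a single threshold index $f^*$ on $C_a$ separating the positions of failures for which the trivial answer $\first{G}{a}{P}$ still holds from those for which it does not. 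At query time the algorithm compares $f$'s recorded canonical path to $a$'s: if they differ, return $\first{G}{a}{P}$; if they coincide and $f$ lies past $f^*$ on $C_a$, return $\first{G}{a}{P}$; if $f\in C_a$ lies before $f^*$, return whichever of the stored $\first{G\setminus C_a}{a}{P}$ and $\first{\Gf}{u^*}{P}$ is earlier on $P$.

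The main obstacle will be proving the structural monotonicity claim that justifies storing only a single threshold $f^*$ per canonical path: the set of positions $f\in C_a$ for which $\first{\Gf}{a}{P}=\first{G}{a}{P}$ forms a suffix of $C_a$. I would prove this by a planarity-and-rerouting argument: if a replacement path from $a$ meets $C_a$ only on the portion after $f$, then splicing in the remaining suffix of $C_a$ produces a path to $\first{G}{a}{P}$ that avoids $f$; conversely, if the replacement path is forced to enter $C_a$ strictly before $f$, the `Find the First' argument lets us reroute it through $u^*$, so the answer becomes $\first{\Gf}{u^*}{P}$, which is independent of $a$. Once this monotonicity is established, the case analysis above is straightforward, and the $\tilde O(1)$ label-size bound follows because each vertex stores only constantly many identifiers per ancestor path $P$, and only vertices having $P$ as an ancestor store anything.
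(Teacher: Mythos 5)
Your high-level reduction is the same as the paper's: locate a lowest common ancestor piece, take $a=\first{G_{P'}}{s}{P'}$ on a boundary path $P'$ on $s$'s side (so $f$ cannot affect it), keep $\first{G_{P'}}{s}{P}$ as the candidate for a replacement path that never leaves $s$'s side, and delegate the rest to a sub-scheme $\LabelatfoP$ built from canonical paths. One repairable omission first: at query time you only hold the labels of $s$ and $f$, yet your decoding needs the $\LabelatfoP$ data of $a$ (its stored $\first{G\setminus C_a}{a}{P}$ and your threshold). Storing only "the vertex $a$" in $s$ is not enough; the paper resolves this by label composition, i.e.\ $s$ stores, for each candidate $P'$, the entire $\LabelatfoP$ label of $a$, which is still $\tilde O(1)$.

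The genuine gap is the structural claim you yourself single out as the main obstacle: that for a fixed $a$, the set of failure positions $f\in C_a$ with $\first{\Gf}{a}{P}=\first{G}{a}{P}$ is a suffix of $C_a$, so a single threshold $f^*$ stored at $a$ suffices. This monotonicity is false (the technical overview's "store the last $f$ on $C_a$" is an acknowledged simplification; the actual proof of \cref{lem:atfopreach} does something else). Counterexample: one interval whose last vertex is $u^*$, canonical path $C_a=u^*\rightarrow f\rightarrow f'\rightarrow p$ with $p=\first{G}{a}{P}$, and a single extra edge $a\rightarrow f'$; this is planar and $f,f'\notin P'\cup P$. Removing the earlier vertex $f$ leaves the route $a\rightarrow f'\rightarrow p$, so the trivial answer still holds, but removing the later vertex $f'$ disconnects $a$ from $p$ altogether, so it fails. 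Any threshold consistent with $f$ classifies $f'$ as "past $f^*$", and your decoder returns $p$, a vertex unreachable from $a$ in $G\setminus\{f'\}$ — an unsound (too early) answer that breaks the reachability query. The paper avoids this by putting the threshold on the other axis: the label of $f$ stores the last vertex of the prefix of its interval $I\subseteq P'$ that can still reach $\first{G}{f}{P}$ in $\Gf$, and the decoder checks whether $a$ lies in that prefix. That set genuinely is a prefix of $I$, because an earlier vertex of $I$ reaches any later one along $P'$, and $P'$ avoids $f$; no analogous argument exists along $C_a$, since $a$'s alternative routes may enter $C_a$ at arbitrary points. Your other two candidates, $\first{G\setminus C_a}{a}{P}$ stored at $a$ and $\first{\Gf}{u^*}{P}$ stored at $f$, coincide with the paper's; only the detection of the "replacement path re-enters $C_a$ after $f$" case must be redone as above.
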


\begin{restatable}{lemma}{stopreachlem}\label{lem:stopreach}
    There exists a labeling scheme $\LabelstoP=\LabelstoP_{H,P}$ where $H$ is a planar graph equipped with a decomposition tree $\mathcal T$, and $P$ is a path in $\mathcal T$ such that both endpoints of $P$ lie on the same face of $H$.
    Let $s$ and $f\in P$ be two vertices of $H$ that are not an ancestor apex of one another, and such that $P$ is an ancestor of both $s$ and $f$.
    Given the labels of $s$ and $f$, one can compute the indices on $P$ of the vertices $b_1 = \first{H\setminus f}{s}{P_1}$ and $b_2 = \first{H \setminus f}{s}{P_2}$, where $P_1$ (resp. $P_2$) is the prefix (resp. suffix) of $P$ that precedes (resp. follows) $f$, excluding $f$.
    In this labeling scheme, the only vertices that store a label are those that have $P$ as an ancestor. The size of each label stored by such a vertex is $\tilde O(1)$.
\end{restatable}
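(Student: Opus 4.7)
The plan is to establish \cref{lem:stopreach} by describing how to compute $b_1 = \first{H\setminus f}{s}{P_1}$; the computation of $b_2$ will follow symmetrically by reversing edge directions and swapping $P_1$ with $P_2$. Following the ``Find the First'' template from the technical overview, I would introduce an auxiliary ancestor separator path $P'$ of the rootmost piece in the decomposition tree of $H$ whose separator separates $s$ from $f$. Denoting by $H_{P'}$ the side of $P'$ containing $s$ (and not $f$) and by $H_P$ the side of $P$ containing $s$, the key structural claim to prove first is: for any $s$-to-$P_1$ path $R$ in $H\setminus f$, setting $a = \first{H_{P'}}{s}{P'}$ and $c = \first{(H_P\setminus f)}{a}{P_2}$ where the latter is taken over paths internally disjoint from $P$, we have $\first{H\setminus f}{c}{P_1} = \first{H\setminus f}{s}{P_1}$. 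The proof is by a shortcutting argument: replace the prefix of $R$ up to its first visit to $P'$ by an $s$-to-$a$ path in $H_{P'}$, and its portion between $P'$ and $P$ by an $a$-to-$c$ path in $H_P\setminus f$; both replacements are justified because $a$ precedes $R$'s first $P'$-crossing and $c$ precedes its first $P_2$-crossing.

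Given the structural reduction, I would build the label in three composable layers. The first layer stores, for each ancestor separator $P'$ of $s$, the value $\first{H_{P'}}{s}{P'}$. The second layer provides a labeling scheme, keyed to the pair $(P',P)$, that given the labels of $s$ and $f$ returns $c$; it is obtained by invoking \cref{lem:stfopreach} with respect to the sub-graph $H_P$ viewed as a planar graph, with $P_2$ playing the role of the target path, using the fact that $f\notin P_2$ to apply the $f\notin P$ scheme. The third layer is the only genuinely new piece: from a vertex $c\in P_2$ and $f\in P$, compute $\first{H\setminus f}{c}{P_1}$. I would split this third layer further, as in the overview, into (i) computing $\first{H\setminus P_1}{c}{P_2}$, (ii) crossing from $P_2$ to $P_1$ via a path internally disjoint from $P_1$, and (iii) continuing along $P_1$ from the crossing vertex to the true first-reachable vertex, each tackled via a ``canonical path'' argument analogous to the $\LabelatfoP$ construction sketched for the $f\notin P$ case.

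For the canonical-path arguments, I plan to use the fact that both endpoints of $P$ lie on the same face of $H$. This face assumption is what makes the family of canonical paths from $P_2$ back to $P_1$ (or vice versa) laminar up to re-use: distinct canonical paths are vertex disjoint, so every potential faulty vertex $f$ lies on at most one of them and can store a constant amount of information identifying it. The label of $f$ then records the canonical path through $f$ together with the ``replacement'' first-reachable values needed when $f$ invalidates the default answer; the label of $c$ records the canonical path it would use and the last vertex on it before which the answer is unaffected. Combining these three layers and the earlier two gives the claimed $\tilde O(1)$ label size, since each layer contributes polylogarithmically per ancestor path of the queried vertex.

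The main obstacle I anticipate is layer three, i.e., the genuine ``all three of $s,t,f$ on the same path'' regime. Here the three canonical-path sub-problems interact: a replacement path from $c$ to $P_1$ may oscillate between the two sides of $P$, using $f$-avoiding shortcuts that cross $P$ multiple times, and one must argue that despite this freedom the vertex output by the composition of the three sub-schemes is exactly $\first{H\setminus f}{c}{P_1}$. A secondary obstacle is bookkeeping for apices: since a single apex can belong to many pieces, the labels must store apex-specific data separately, and I will handle this by reusing the apex-handling mechanism introduced in the preliminaries so that apex information contributes only an additive $\tilde O(1)$ per apex in any ancestor piece.
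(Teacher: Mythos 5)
Your high-level reduction (route through $a=\first{H_{P'}}{s}{P'}$ on a separator $P'$ separating $s$ from $f$, then through a first contact with $P$, then finish with a $P$-to-$P$ sub-scheme) is the same skeleton as the paper's, but your second layer has a genuine flaw. You propose to obtain $c=\first{H_P\setminus f}{a}{P_2}$ by invoking \cref{lem:stfopreach} on $H_P$ \emph{with $P_2$ playing the role of the target path}. That labeling must be built at preprocessing time for a fixed path of the decomposition, whereas $P_2$ is the suffix of $P$ determined by the failed vertex $f$ and is only known at query time; one cannot afford a separate $\LabelstfoP$ labeling for every possible suffix of $P$, and $\first{H\setminus f}{a}{P_2}$ is not derivable from $\first{H\setminus f}{a}{P}$ (the latter may land before $f$). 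This is precisely where the paper introduces a new scheme (\cref{lem:atopfreach}): it deletes the outgoing edges of $P$, uses that $P$ lies on a single face, and proves that the set of $P$-vertices reachable from $a$ forms at most two intervals of the set reachable from the first vertex $z$ of $P'$, so that $a$ stores two interval endpoints and $f$ stores its successor in the reachable set. Your proposal supplies no substitute for this structural fact.

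Your third layer, which is the main technical content of the lemma (query vertex and $f$ both on $P$), is also not correct as sketched. You assert that the canonical paths from $P_2$ back to $P_1$ are pairwise vertex disjoint, so each $f$ lies on at most one and can store $O(1)$ replacement data; that disjointness holds for the $P'\!\to\!P$ canonical paths of the $f\notin P$ case, but for $P$-to-$P$ reachability the relevant objects are detours/bypasses internally disjoint from $P$, which form a nested (laminar) family, not a disjoint one — a single vertex can lie on polynomially many of them — and moreover the failed vertex lies on $P$ itself, so the issue is not which canonical path contains $f$ but which detours survive and how to bypass $f$. The paper resolves this in \cref{lem:ptopreach} via the auxiliary procedure of Section~\ref{sec:singlePath}: each vertex stores $O(\log n)$ nested detours of geometrically decreasing size, and each $f$ stores the largest bypass (resp.\ smallest byway) on each side of $P$, with correctness relying on the single-face assumption to rule out crossings. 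You explicitly leave this step as an anticipated obstacle, so the proposal does not yet constitute a proof. A smaller omission: your structural claim assumes the replacement path first meets $P$ on $P_2$ after crossing $P'$; a complete decoder must also handle first contact on $P_1$ and the case where the path never leaves the piece containing $s$, as the paper's proof does.
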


\begin{proof}[Proof of \cref{thm:reach}]
Let $G_{rev}$ be the graph obtained from $G$ by reversing all edges (the reverse graph $G_{rev}$ has exactly the same decomposition tree as $G$, but the direction of each path is reversed).

The label of a vertex $v$ consists of the following:
\begin{enumerate}
\item \label{it:reach:thorup} \hlgray{For each piece $H$ in the recursive decomposition $\mathcal T$ of $G$ such that $v \in H \setminus \partial H$, $v$ stores its label in the standard (non-faulty) labeling of Thorup for $H\setminus \partial H$.}

\item \label{it:reach:last} \hlgray{For every ancestor piece $H$ in $G$, $v$ stores $\last{H}{v}{\P}$ for each of the $O(\log n)$ paths $\P$ of $\partial H$.}

\item \label{it:reach:stfop} Using of \cref{lem:stfopreach}, \hlgray{for every ancestor path $P$ of $v$ in the recursive decomposition $\mathcal T$, $v$ stores $\LabelstfoP_{G,P}(v)$}.

\item \label{it:reach:stop} Using \cref{lem:stopreach}, \hlgray{for every ancestor piece $H$ of $v$ in $\mathcal T$, for every path $P$ of the cycle separator $C$ of $H$, $v$ stores $\LabelstoP_{H^{\times P},P}(v)$}, where $H^{\times P}$ is the graph obtained from $H\setminus \partial H$ by making an incision along all the edges of $C$\footnote{We refer here to the cycle separator $C$ of $H$ and not to the separator $Q$ of $H$ because the separator $Q$ only includes the subpaths of $C$ that do not belong to $\partial H$. Here we want to make the incision along the entire cycle separator.} other than those of $P$. Note that, because of the incision, the endpoints of $P$ lie on a single face of $H^{\times P}$, so \cref{lem:stopreach} indeed applies.
\item \label{it:reach:apex} \hlgray{For every ancestor apex $a$ of $v$, for every path $\P$ that is an ancestor of $v$, $v$ stores in its label $\first{G\setminus a}{v}{\P}$, and  $\first{G\setminus v}{a}{\P}$.}
If $a$ lies on $P$ then let $P_2$ be the suffix of $P$ after $a$ (not including $f$).
$v$ also stores \hlgray{$\first{G\setminus a}{v}{P_2}$}.
Similarly, if $v$ lies on $P$ then let $P_2$ be the suffix of $P$ after $v$ (not including $v$).
$v$ also stores \hlgray{$\first{G\setminus v}{a}{P_2}$}.

\item \hlgray{$v$ also stores all the above items computed in the graph $G_{rev}$ instead of the graph $G$.}
\end{enumerate}

\paragraph{Size.} Since each vertex has only $\tilde O(1)$ ancestor pieces, paths and apices, and by \cref{lem:stfopreach,lem:stopreach}, all items above sum up to a label of size $\tilde O(1)$.

\paragraph{Decoding and Correctness.}
Let $\Htf$ be the rootmost piece in $\mathcal T$ whose separator $Q$ separates $t$ and $f$.
Let $H$ be a child piece of $\Htf$ that contains $t$ (if both children of $\Htf$ contain $t$ then, if one of the children does not contain $f$ we choose $H$ to be that child).
Note that by choice of $H$, $f \notin H \setminus \partial H$.

We assume without loss of generality that $s \in \Htf$. We handle the other case analogously to the description below, by swapping the roles of $s$ and $t$ and working in $G_{rev}$ instead of in $G$.

Observe that by definition of $H$ and of separation, $f \in \partial H$ iff $f \in \Q$.
Consider first the case when a replacement path $R$ does not touch $\partial H$. i.e., $s,t$ and $R$ are all contained in $H\setminus \partial H$, and $f$ is not contained in $H \setminus \partial H$. In this case querying Thorup's non-faulty labels for $H\setminus \partial H$ (stored in item (\ref{it:reach:thorup})) will correctly identify the existence of a replacement path.

To treat the case where the replacement path $R$ touches $\partial H$, we separately handle the cases where $f \notin \Q$ and $f \in Q$.

\paragraph{When $f \notin \Q$ (and so, $f \notin \partial H$).}

In this case, $R$ must have a suffix contained in $H$, and this suffix is unaffected by the fault $f$. More precisely, $R$ exists iff $\first{\Gf}{s}{\P} < \last{H}{t}{\P}$  for one of the paths $\P$ forming $\partial H$.
The vertex $\last{H}{t}{\P}$ can be retrieved from item (\ref{it:reach:last}) of the label of $t$. Notice that by the rootmost choice of $\hat H$, $H$ is an ancestor piece of $t$, so $t$ indeed stores $\last{H}{t}{\P}$.
It thus remains only to describe how to find $\first{\Gf}{s}{\P}$ from the labels of $s$ and $f$.
If either $s$ or $f$ store $\first{\Gf}{s}{P}$ in item (\ref{it:reach:apex}), we are done.
Otherwise neither $s$ nor $f$ is an ancestor apex of the other, and since both $s$ and $f$ are in $\Htf$, $\P$ is indeed an ancestor of both $s$ and $f$, so, by \cref{lem:stfopreach}, $\first{\Gf}{s}{\P}$ can be obtained from  $\LabelstfoP_{G,P}(s)$  (stored in item (\ref{it:reach:stfop}) of the label of $s$) and $\LabelstfoP_{G,P}(f)$  (stored in item (\ref{it:reach:stfop}) of the label of $f$).

\paragraph{When $f \in \Q$.}
Let $\P$ be the path of $\Q$ that contains $f$.
Consider first the case where the replacement path $R$ touches some path $\PP\neq P$ of $\Q$ or of the boundary of some ancestor of $H$. Since boundary paths are vertex disjoint, $f\in \P$ implies $f \notin \PP$.
Hence, we can obtain $\first{\Gf}{s}{\PP}$ in a similar manner to the case $f \notin Q$ above, with $\hat P$ taking the role of $P$. In an analogous manner, we can obtain $\last{\Gf}{t}{\PP}$ as we just found $\first{\Gf}{s}{\PP}$, but with $\hat P$ taking the role of $P$, $t$ taking the role of $s$, and  $G_{rev}$ taking the role of $G$ (we cannot use part (\ref{it:reach:last}) of the label of $t$ in this case because now $f$ does belong to $\partial H$).
Then, $s$ can reach $t$ in $\Gf$ iff $\first{\Gf}{s}{\PP} <_{\PP} \last{\Gf}{t}{\PP}$. See Figure~\ref{fig:touchesornot} (right).

Now consider the case where other than $P$, $R$ does not touch any path of $Q$ or any path of the boundary of an ancestor of $H$. In this case, we might as well use labels in $\hat H^{\times P}$ instead of in $G$ because $R$ does touch $\partial \hat H$, and only crosses $Q$ at $P$.
Let $P_1$ and $P_2$ be the prefix and suffix obtained from $P$ by deleting $f$.
If either $s$ or $f$ is an ancestor apex of one another then
$\first{\Gf}{s}{P_2}$ is stored in item (\ref{it:reach:apex}) of either $s$ or $t$, and, if $\first{\Gf}{s}{P_1}$ exists, then it is equal to $\first{\Gf}{s}{P}$, which is also stored in  item (\ref{it:reach:apex}) of either $s$ or $t$. (If $\first{\Gf}{s}{P}$ is not a vertex of $P_1$ then $\first{\Gf}{s}{P_1}$ does not exist.
If neither $s$ not $t$ is an ancestor apex of the other, then $P$ is an ancestor of both $s$ and $f$, and $\first{\Gf}{s}{P_1}$ and $\first{\Gf}{s}{P_2}$ can be obtained, by \cref{lem:stopreach},  from $\LabelstoP_{\hat H^{\times P},P}(s)$  (stored in item (\ref{it:reach:stop}) of the label of $s$) and $\LabelstoP_{\hat H^{\times P},P}(f)$  (stored in item (\ref{it:reach:stop}) of the label of $f$).
Then, $s$ can reach $t$ in $\Gf$ iff either $\first{\Gf}{s}{P_1} <_{P_1} \last{\Gf}{t}{P_1}$ or $\first{\Gf}{s}{P_2} <_{P_2} \last{\Gf}{t}{P_2}$.
\end{proof}

\subsection{The $\LabelstfoP$ labeling (\cref{lem:stfopreach})}
\label{src:stfopreach}

To show the labeling schemes $\LabelstfoP$ (i.e., prove \cref{lem:stfopreach}) we will compose the following specialized labeling schemes, which we prove in the sequel.

\begin{restatable}{lemma}{atfopreachlem}\label{lem:atfopreach}
    There exists a labeling scheme $\LabelatfoP=\LabelatfoP_{H,P,P'}$ where $H$ is a planar graph, and $P$ and $P'$ are two paths.
    Given the labels of a vertex $a$ on $P'$ and a vertex $f$ not in $P'\cup P$, one can retrieve the index on $P$ of the vertex $b = \first{H\setminus f}{a}{P}$.
    The size of each label is $\Oone$.
\end{restatable}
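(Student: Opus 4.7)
The scheme is built around a family of \emph{canonical paths}. For every $v \in P'$ let $u(v)$ be the vertex furthest along $P'$ for which $\first{H}{u(v)}{P} = \first{H}{v}{P}$, fix any path $C_{u(v)}$ in $H$ from $u(v)$ to $\first{H}{v}{P}$, and set $C_v := C_{u(v)}$. The opening step is a structural lemma: any two distinct canonical paths are vertex-disjoint. The proof is a planar exchange argument: if $C_v$ and $C_w$ shared a vertex $z$, splicing them at $z$ would yield, for instance, a path from $u(v)$ to $\first{H}{w}{P}$ in $H$, and combined with the linear orders on $P'$ and $P$ together with the maximality in the definition of $u(\cdot)$, this forces $C_v = C_w$. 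The corollary we need is that every vertex of $H\setminus(P'\cup P)$ lies on at most one canonical path.

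\textbf{Labels.} Each $v \in P'$ stores its position on $P'$, the anchor $u(v)$, the value $\first{H}{v}{P}$, the ``avoidance'' value $b_v := \first{H\setminus C_v}{v}{P}$, the endpoints and length of $C_v$, and a threshold $t_v \in C_v$ described below. Each $f \notin P'\cup P$ stores the unique anchor $u \in P'$ (if any) with $f \in C_u$, the position of $f$ along $C_u$, and the value $\first{H\setminus f}{u}{P}$. By the disjointness of canonical paths every label contains $O(1)$ entries, so each label has size $\tilde O(1)$.

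\textbf{Query.} Given $a \in P'$ and $f$, first determine whether $f \in C_a$ by comparing $u(a)$ against the anchor stored in $f$. If $f \notin C_a$, return $\first{H}{a}{P}$: the concatenation of the subpath $P'[a,u(a)]$ with $C_a$ is $f$-free (since $f \notin P'$ and $f \notin C_a$), so $\first{H}{a}{P}$ is reachable from $a$ in $H\setminus\{f\}$, and $\first{H\setminus f}{a}{P} \ge_P \first{H}{a}{P}$ is immediate. If $f \in C_a$, write $C_a$ as $C_1$ followed by $f$ followed by $C_2$, and let $R$ be any path from $a$ to $b := \first{H\setminus f}{a}{P}$ in $H\setminus\{f\}$. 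There are three exhaustive cases: if $R$ avoids $C_a$, then $b = b_a$; if $R$ meets $C_1$ at some $y$, then $u(a)$ reaches $b$ via $C_1[u(a),y] \cdot R[y,b]$ while $a$ reaches $u(a)$ along $P'$, forcing $b = \first{H\setminus f}{u(a)}{P}$, which is stored by $f$; if $R$ meets $C_2$ at some $y$, then $R[a,y] \cdot C_2[y,\first{H}{a}{P}]$ is $f$-free, so $b = \first{H}{a}{P}$. The first two candidates are always reachable from $a$ in $H\setminus\{f\}$; the third is reachable exactly when the position of $f$ in $C_a$ lies within the range certified by $t_a$. Returning the earliest on $P$ among the valid candidates produces $b$.

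\textbf{Main obstacle.} The chief technical difficulty is the disjointness lemma for canonical paths, which is what keeps labels at non-path vertices $\tilde O(1)$; its proof is a planar exchange argument that critically leverages the latest-anchor definition of $u(\cdot)$. A secondary subtlety is the monotonicity needed to certify the third case by a single threshold $t_a$: here we again use planarity to argue that the set of faults $f \in C_a$ preserving $\first{H\setminus f}{a}{P} = \first{H}{a}{P}$ forms a contiguous portion of $C_a$, so that one threshold suffices.
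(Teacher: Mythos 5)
Your construction is essentially the paper's: canonical paths attached to the maximal intervals of $P'$ sharing the same $\first{H}{\cdot}{P}$, disjointness of these paths by a splicing argument (planarity is not actually needed there), and a three-candidate query ($\first{H\setminus C_a}{a}{P}$ stored at $a$, $\first{H\setminus f}{u(a)}{P}$ stored at $f$, and $\first{H}{a}{P}$ when it survives the fault). The first two candidates and the cases "$R$ avoids $C_a$" and "$R$ meets $C_1$" are handled exactly as in the paper and are fine.

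The gap is in how you certify the third case. You claim that the set of faults $f\in C_a$ for which $\first{H\setminus f}{a}{P}=\first{H}{a}{P}$ is a contiguous portion of $C_a$, so that a single threshold $t_a$ stored at $a$ decides validity. This is false, and planarity does not rescue it: the set in question is exactly the set of vertices of $C_a$ that are not cut vertices separating $a$ from $m=\first{H}{a}{P}$, and these can alternate along $C_a$. Concretely, let the only routes from the anchor $u(a)$ to $m$ pass through $w_1$, then one of two parallel two-edge paths to $w_2$, then $w_2$, then one of two parallel two-edge paths to $w_3=m$; this is planar (even outerplanar). Whatever canonical path is chosen, the faults at its two "parallel-branch" vertices preserve reachability of $m$ while the fault at $w_2$, which lies between them on $C_a$, does not. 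Any threshold rule then errs on some fault, and the error propagates to the output: for $f=w_2$ your decoder deems candidate three valid and returns $m$, although $m$ is unreachable from $a$ in $H\setminus f$ (and the other two candidates do not mask this, since they are later or null). The paper certifies this case from the other side: the fault $f$ stores the last vertex of the prefix of its interval $I$ that can reach $\first{H}{f}{P}$ in $H\setminus f$. There the needed monotonicity is genuine — it is over positions on $P'$, not on $C_a$: since $f\notin P'$, any earlier vertex of $I$ can reach any later one along $P'$, so the set of interval vertices that can still reach $\first{H}{f}{P}$ is a prefix of $I$, and the query simply compares $a$'s index on $P'$ with the stored one. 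Replacing your $t_a$ by this fault-side index repairs the scheme; as stated, your certification step is wrong.
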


\begin{restatable}{lemma}{stoareachlem}\label{lem:stoareach}
    There exists a (trivial) labeling scheme    $\Labelstoa=\Labelstoa_{H, P'}$  where $H$ is a planar graph with a path $P'$.
    Given the label of a vertex $s$, one can retrieve the vertex $\first{H}{s}{P'}$.
    The size of each label is $\Oone$.
\end{restatable}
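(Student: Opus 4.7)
The plan is entirely straightforward, matching the parenthetical ``(trivial)'' in the statement. Since the queried quantity $\first{H}{s}{P'}$ depends only on $s$ (together with the fixed graph $H$ and path $P'$), the label of $s$ simply records the answer itself. Concretely, we let $\Labelstoa_{H,P'}(s)$ consist of a single value: either the index along $P'$ of the vertex $\first{H}{s}{P'}$, or a designated null marker if no vertex of $P'$ is reachable from $s$ in $H$. This takes $O(\log n) = \Oone$ bits.

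To construct the labels, for each vertex $s \in H$ we need to precompute $\first{H}{s}{P'}$. One way is to traverse $P'$ from its start to its end, and for each vertex $p$ of $P'$ in turn perform a reverse reachability search from $p$ in $H$, assigning the index of $p$ to every previously unmarked vertex $s$ that is discovered; any vertex left unmarked at the end gets the null marker. Any standard reachability computation works here, since the lemma only asserts existence of the labels and a bound on their size, not efficiency of construction.

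To answer a query given the label of $s$, we simply read off the stored index. Combined with the identity of $P'$ (which, by the gray-highlighted convention from the Preliminaries, every vertex stores in its label along with its own position on every ancestor path), this uniquely identifies the vertex $\first{H}{s}{P'}$, or correctly reports that no such vertex exists. There is no technical obstacle whatsoever in this construction, which is exactly why the lemma is labeled \emph{trivial}; its purpose in the paper is purely to provide a uniform interface matching the other specialized labeling schemes ($\LabelatfoP$, $\LabelstfoP$, etc.) that will be composed in the proof of \cref{lem:stfopreach}.
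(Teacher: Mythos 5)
Your proposal is correct and is essentially identical to the paper's proof, which simply observes that each vertex $s$ stores the identity of $\first{H}{s}{P'}$ directly in its label. The additional details you give about construction and null handling are fine but unnecessary; nothing further is needed.
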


We note that the proof of \cref{lem:stoareach} is trivial, as each vertex $s$ of $H$ only needs to store the identity of $\first{H}{s}{P'}$. The proof of \cref{lem:atfopreach} appears after the following proof of \cref{lem:stfopreach}.

\begin{proof}[Proof of \cref{lem:stfopreach}]
The labeling scheme only labels vertices whose ancestor is $P$. The label of such a vertex $v$ consists of the following:
\begin{enumerate}

\item Using \cref{lem:atfopreach}, \hlgray{for every ancestor path $P'$ of $v$ that has $P$ as an ancestor, the label $\LabelatfoP_{G,P',P}(v)$.}

\item (Composition of labels) For a path $P'$, let $G^1$ be the graph $G$ labeled with the labels $\LabelatfoP_{G,P',P}$ of \cref{lem:atfopreach}.
Using \cref{lem:stoareach}, \hlgray{for every ancestor path $P'$ of $v$ such that $P$ is an ancestor of $P'$, $v$ stores the label $\Labelstoa_{G^1,P'}(v)$} (i.e., the label $\Labelstoa_{G^1,P'}(v)$ contains not only the identity of the desired vertex $a=\first{H}{v}{P'}$, but also the label $\LabelatfoP_{G,P',P}(a)$ that $a$ is labeled with in $G^1$).
\end{enumerate}

\paragraph{Size.} Since each vertex has $\tilde O(1)$ ancestor paths and by \cref{lem:stopreach,lem:atfopreach}, the size of the label of each vertex is $\tilde O(1)$.

\paragraph{Decoding and Correctness.}
Assume, per the statement of the lemma that $P$ is an ancestor of both $s$ and $f$ and that $s$ and $f$ are not an ancestor apex of one another.
Consider the set of leafmost pieces in $\mathcal T$ that contain both $s$ and $f$. Let $\Hsf'$ be such a piece. It must be that $\Hsf'$ is an ancestor piece of both $s$ and $f$ or else one of $s$ and $f$ is an ancestor apex of the other. Hence, there are only $O(1)$ leafmost pieces that contain both $s$ and $f$.
To avoid unnecessary clutter we shall assume there is a unique piece $\Hsf'$. In reality we would have to apply the same argument for all $O(1)$ such pieces.
Since $\Hsf'$ is an ancestor piece of both $s$ and $f$, we can find the piece $\Hsf'$ by traversing the list of ancestors of $s$ (stored in $s$) and of $f$ (stored in $f$) until finding the lowest common ancestor.
Let $\Hsf$ be the child piece of $\Hsf'$ that contains only $s$ (if $\Hsf'$ is an atomic piece then define $\Hsf=\Hsf'$).
Since $P$ is an ancestor of both $s$ and $f$, $P$ is a boundary path of a (possibly weak) ancestor $\Htf$ of $\Hsf'$.

Consider a path in $\Gf$ from $s$ to $\first{\Gf}{s}{\P}$.
Such a path begins with a prefix that is contained in $\Hsf$ from $s$ to $a=\first{\Hsf}{s}{\Psf}$ where $\Psf$ is some path of $\partial\Hsf$.
Since $f \notin \Hsf$, the candidate $a$ can be retrieved from the label $\Labelstoa$ of $s$, stored in item (3), along with the label of $a$ in $\LabelatfoP_{G,P',P}$. From this label, and from the label $\LabelatfoP_{G,P',P}$ of $f$ stored in item (2) we can obtain $\first{\Gf}{a}{\P}$ (which is equal to $\first{\Gf}{s}{\P}$). Note that the conditions of \cref{lem:atfopreach} apply since $f \notin H'$ (and hence $f \notin P'$), and $f \notin P$.
\end{proof}

\begin{figure}[htb]
  \begin{center}
 \includegraphics[scale=0.18]{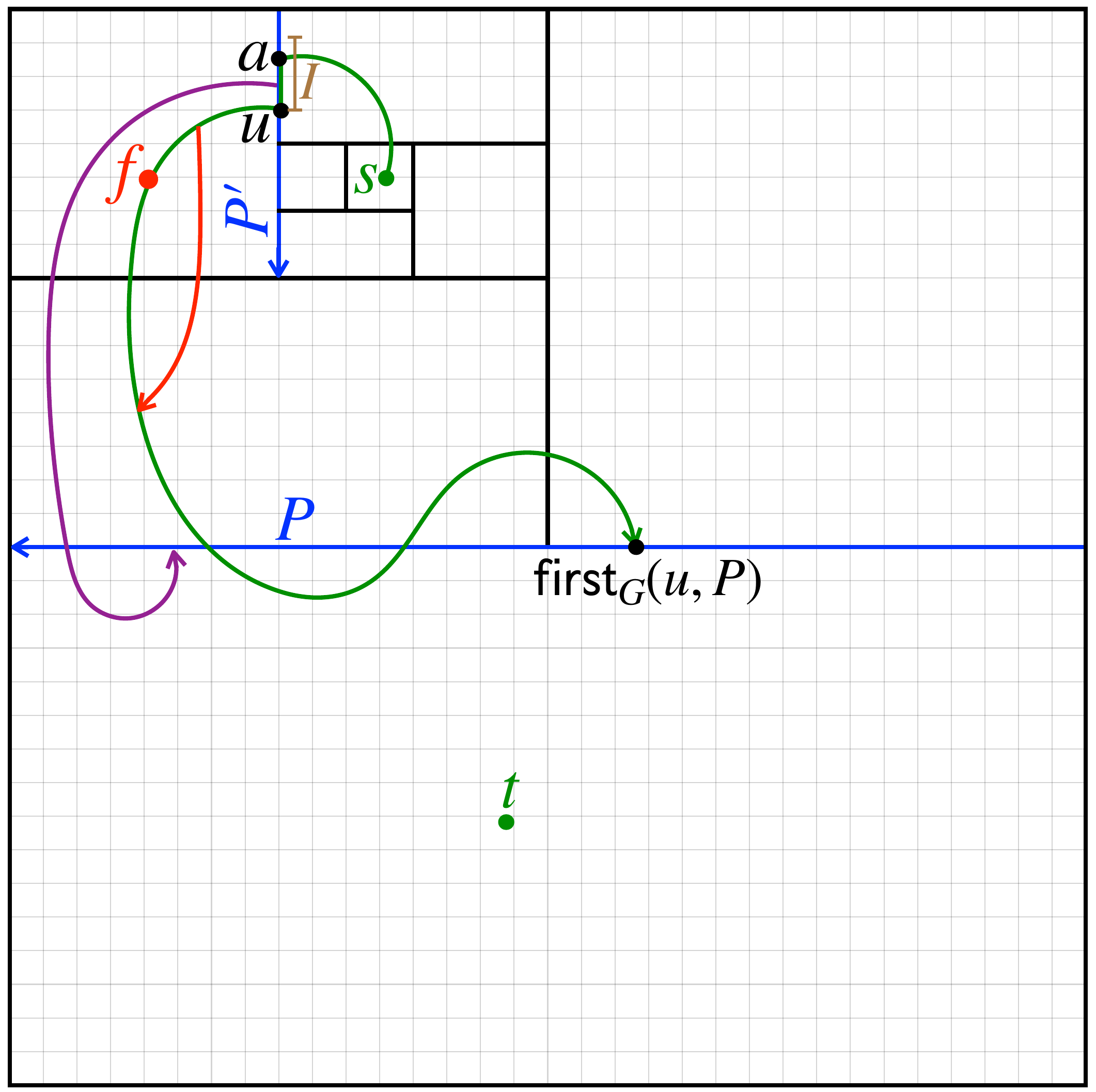}
  \caption{When $f \notin \Q$: The (green) $s$-to-$t$ path in $G$ first touches the (blue) separator path $\Psf\in \partial\Hsf$ at vertex $a$ (that is found using \cref{lem:stoareach}), then continues along $\Psf$ to $\u$, then goes through $f$ to $\first{G}{u}{\P}$ (that is found using \cref{lem:atfopreach}) on the (blue) separator path $\P \in \partial H$. The (brown) interval $I$ contains all vertices $p$ that can reach $\first{G}{p}{\P}$ through $u$. When $f$ fails, the replacement $s$-to-$t$ path in $\Gf$ either takes a (red) detour around $f$ (and then $f$ stores this information) or is (purple) completely disjoint from the original (green) $u$-to-$\first{G}{u}{\P}$ part (and then $s$ stores this information).\label{fig:fnotinQ}}
   \end{center}
\end{figure}

\begin{proof}[Proof of \cref{lem:atfopreach}.]

Consider the Pair of paths $\Psf$ and $\P$ in the statement of the lemma.
We shall define a set $C = C_{\Psf,\P}$ of canonical paths from $\Psf$ to $\P$ in the graph $H$.
Notice that for two vertices $p \leq_{\Psf} q$ on $\Psf$, $\first{G}{p}{\P} \leq_{\P} \first{G}{q}{\P}$ (this is because any vertex of $\Psf$ earlier than $q$ can reach $\first{G}{q}{\P}$ through $q$).
It follows that $\Psf$ can be partitioned into maximal contiguous intervals of vertices that have the same $\first{G}{q}{\P}$.
For every such interval $\I$, we can think of all $p \in \I$ as first going along $\Psf$ until the last vertex $\u$ of $\I$, and then reaching $\first{G}{\u}{\P}$ in $G$ using the same path $S_\I$. We call $S_\I$ the canonical path of interval $\I$, and let $C$ be the set of all canonical paths for all the intervals in the partition of $\Psf$.
For a vertex $a\in I$, we call $S_I$ the canonical interval (of $\Psf,\P$) used by $a$.

Observe that the paths in $C$ are mutually disjoint, since otherwise the corresponding intervals would be the same interval.
It follows that $f$ can be on at most one of those canonical paths.
The canonical paths and intervals are useful because their definition only depends on $\Psf$ and $\P$, but not on the identity of $s$ and $f$.

The label of a vertex $v$ consists of the following:

\begin{enumerate}
        \item \label{it:reach:first} \hlgray{$\first{H}{v}{\P}$.}

    \item \hlgray{if $v$ lies on $P'$, $v$ stores:}
    \begin{enumerate}
        \item \label{it:reach:index} \hlgray{its index on $\Psf$.}
        \item \label{it:reach:avoidcanon} \hlgray{$\first{H\setminus S_I}{v}{\P}$, where $S_I$ is the canonical path used by $v$.}
    \end{enumerate}

    \item if $v$ lies on some canonical path $S_\I$ of $C_{\Psf,\P}$, we let $v$ store:
    $v$ also stores:
    \begin{enumerate}
    \item \label{it:reach:canon} \hlgray{the indices in $\Psf$ of the endpoints of $\I$.}
        \item \label{it:reach:canon2} \hlgray{the index of the last vertex of the prefix of $I$ that can reach $\first{H}{f}{\P}$ in $H\setminus v$.}
        \item \label{it:reach:canon3} \hlgray{$\first{H\setminus v}{u}{\P}$, where $u$ is the last vertex of $I$.}
    \end{enumerate}

\end{enumerate}

\paragraph{Size.} Clearly, each vertex stores $\tilde O(1)$ bits.

\paragraph{Decoding and Correctness.}
Given the labels of $a$ and $f$, we can check if the index of $a$ on $\Psf$ (stored in item (\ref{it:reach:index}) is in the interval $\I$ stored in item (\ref{it:reach:canon}) of the label of $f$.
If $a$ is not in the interval $\I$, then the path in $H$ from $a$ to $\first{H}{a}{\P}$ does not go through $f$ and therefore $\first{H\setminus f}{a}{\P}=\first{H}{a}{\P}$, which is available in item (\ref{it:reach:first}) of the label of $a$.
Otherwise, $a$ is in the interval $\I$ stored by $f$, and the path in $H$ from $a$ to $\first{H}{a}{\P}$ does go through $f$.
In particular $\first{H}{a}{\P} = \first{H}{f}{\P}$.
Let $S_1$ be the prefix of $S_I$ ending just before $f$, and let $S_2$ be the suffix of $S_I$ starting immediately after $f$. Then, there are two options regarding the path in $H \setminus f$ from $a$ to $\first{H \setminus f}{a}{\P}$:

\begin{enumerate}

\item The path intersects $S_2$. In this case, the path can continue along $S_2$ until reaching $\first{H}{f}{\P}$, so $\first{H \setminus f}{a}{\P}= \first{H}{f}{\P}$ and we have it stored in item (\ref{it:reach:first}) of the label of $f$. It only remains to check if this is indeed the case. Consider all vertices $p\in \I$ that can reach $\first{H}{f}{\P}$ in $H \setminus f$. They must constitute a (possibly empty) prefix of $\I$ (since any such vertex $p\in \I$ can reach any later vertex of $\I$ by going along $\I$). Item (\ref{it:reach:canon2}) of the label of $f$ stores the index of the last vertex of the prefix of $\I$ that can reach $\first{H}{f}{\P}$ in $H \setminus f$. Therefore, we only need to check if $a$ is earlier on $P'$ than this vertex.

\item The path is disjoint from $S_2$. To handle this case, we will identify two valid candidates for $\first{H\setminus f}{a}{\P}$ and take the earlier in $\P$ of these two candidates:

\begin{enumerate}

\item The path is disjoint from $S$. To handle this case, we use item (\ref{it:reach:avoidcanon}) of the label of $a$, which stores the first vertex of $\P$ that is reachable from $a$ in $H$ using a path that is disjoint from the canonical path $S_\I$) used by $a$.
Since $f \in S_I$ then $f \notin  H\setminus S_I$ and so $\first{H\setminus S_I}{a}{\P}$ is a valid candidate for $\first{H \setminus f}{s}{\P}$.

\item  The path intersects $S_1$, so it might as well go through the first vertex $\u$ of $S_1$ (the last vertex of the interval $\I$ stored by $f$). To handle this case, Hence, $\first{H \setminus f}{\u}{\P}$ (stored in item (\ref{it:reach:canon3}) of the label of $f$) is also a valid candidate for $\first{H \setminus f}{a}{\P}$. \qedhere
\end{enumerate}
\end{enumerate}
\end{proof}

\begin{figure}[htb]
  \begin{center}
 \includegraphics[scale=0.205]{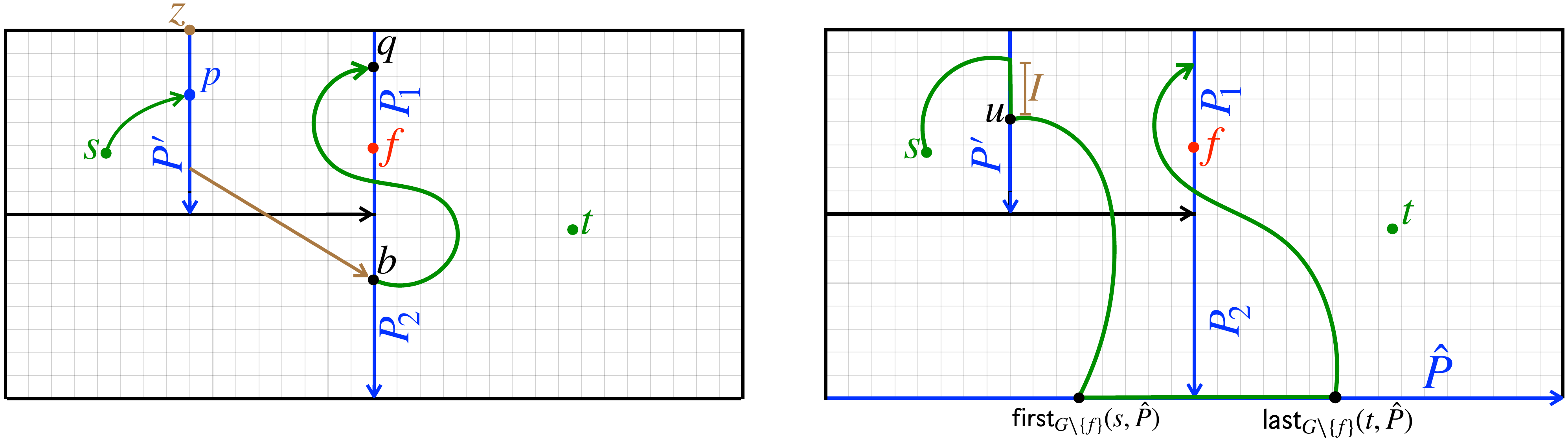}
   \caption{When $f\in \Q$: In the left example, the (green) replacement path $R$ touches only $\P\in \Q$ (the path on which $f$ lies).
   In this case, $b$ is the first vertex of $\Pb$ that is reachable from $s$ in a path internally disjoint from $\Q$. The path from $b$ to $q=\first{\Gf}{b}{\Pa}$ is then found using the mechanism of Section~\ref{sec:singlePath}.
   In the right example, the (green) replacement path $R$ touches some other path $\hat P \neq P$ (on which $f$ does not lie). In this case, using  \cref{lem:stfopreach} twice we check if
   $\first{\Gf}{s}{\PP} <_{\PP} \last{\Gf}{t}{\PP}$.
    \label{fig:touchesornot}}
   \end{center}
\end{figure}

\subsection{The  $\LabelstoP$ labeling (\cref{lem:stopreach})}

To show the labeling scheme $\LabelstoP$ (i.e., prove \cref{lem:stopreach}) we will compose the following specialized labeling schemes, which we prove in the sequel.

\begin{restatable}{lemma}{atopfreachlem}\label{lem:atopfreach}
    There exists a labeling scheme
    $\LabelatoP=\LabelatoP_{H,P,P'}$ where $H$ is a planar graph with paths $P'$ and $P$. Such that $P\cap P'=\emptyset$, $P$ has no outgoing edges, and $P$ lies on a single face of $H$.
    For $f\in P$ let $P_1$ and $P_2$ be the prefix and suffix of $P$ before and after $f$ (without $f$), respectively.
    Given the labels of two vertices $a\in P'$ and $f\in P$, one can retrieve the two vertices $b_1 = \first{H\setminus\{f\}}{a}{P_1}$ and $b_2 = \first{H\setminus\{f\}}{a}{P_2}$.
    The size of each label is $\Oone$.
\end{restatable}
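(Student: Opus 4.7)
I will prove Lemma~\ref{lem:atopfreach} by extending the canonical-path approach of Lemma~\ref{lem:atfopreach}, exploiting the two structural hypotheses: that $P$ has no outgoing edges and that $P$ lies on a single face of $H$.

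The first hypothesis yields a clean structural reduction. Since no vertex of $P$ has edges into $H \setminus P$, the set of vertices of $P$ reachable from any $a \in P'$ in $H$ is a suffix of $P$ starting at $c(a) := \first{H}{a}{P}$, and moreover every path from $a$ to $c(a)$ visits $P$ only at its endpoint. Consequently, if $c(a) <_P f$ then $b_1 = c(a)$ (the path to $c(a)$ avoids $f$, since $f \neq c(a)$ is on $P$), and otherwise $b_1$ is undefined; symmetrically, $b_2 = c(a)$ whenever $c(a) >_P f$. Each $a \in P'$ stores $c(a)$ together with its index on $P$, so these ``easy'' cases are handled by comparing indices against $f$.

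The difficult case is computing $b_2$ when $c(a) \leq_P f$. Here any path from $a$ to $P_2$ in $H \setminus \{f\}$ must enter $P$ for the first time at a vertex of $P_2$: entering $P$ via $P_1$ would force a forward walk along $P$ that is blocked at the removed vertex $f$, since $P$ has no outgoing edges with which to escape. Hence $b_2 = \min_P (E(a) \cap P_2)$, where $E(a) \subseteq P$ is the set of ``direct entries'' of $a$, i.e.\ vertices $v \in P$ such that some path from $a$ to $v$ visits $P$ only at $v$. To realize this with $\tilde O(1)$ labels I adapt the canonical-path construction: partition $P'$ into maximal intervals $\{I\}$ with common $c(\cdot)=c_I$ and define mutually disjoint canonical paths $S_I$ from the last vertex $u_I$ of each interval to $c_I$. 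Each $a \in I$ stores the identifier of $I$ together with $c_I$; each $f \in P$ stores the identifier of the at-most-one interval with $c_I=f$ (at most one because primary canonical paths are disjoint). I then introduce a \emph{secondary} canonical-path layer: for each interval $I$, a path in $H \setminus P[\text{start}, c_I]$ from $u_I$ to its next direct entry $d_I$ into $P$. By a planar argument exploiting the single-face property of $P$, secondary canonical paths are mutually disjoint and also disjoint from the primary ones, so each vertex of $H$ receives only $O(1)$ canonical-path tags across both layers. Given labels of $a$ and $f$: if $c(a) \leq_P f <_P d_{I(a)}$ then $b_2 = d_{I(a)}$; if $f$ lies on the secondary canonical path of $I(a)$ then the label of $f$ provides the alternative entry; and if $f \geq_P d_{I(a)}$ we recurse into a further layer of the same kind.

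The principal obstacle is that $E(a)$ can in principle have $|P|$ elements (for example if $a$ has edges to every vertex of $P$), so naively the layered construction may require many levels and blow up the label size. My plan is to argue, via a charging scheme specific to the single-face embedding of $P$, that the number of ``useful'' layers encountered at any single vertex is only $\tilde O(1)$: because $P$ lies on a single face, successive canonical paths peel off nested regions of $H$ and the nesting cannot be too deep before structural properties of the planar decomposition force path reuse. Establishing this bound rigorously, in a form compatible with the decomposition-tree machinery of the paper, is the main technical step; if the direct argument does not yield $\tilde O(1)$ immediately, a fallback using an $O(\log |P|)$-depth dyadic decomposition of $P$ combined with Lemma~\ref{lem:atfopreach} applied to each dyadic subpath (only those containing the relevant endpoints of canonical paths stored at a given vertex) still yields $\tilde O(1)$ label size per vertex and thus meets the target.
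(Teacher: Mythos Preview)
Your treatment of $b_1$ is correct and matches the paper: since $P$ has no outgoing edges, any path from $a$ to $c(a)=\first{H}{a}{P}$ meets $P$ only at its terminus, so comparing the index of $c(a)$ with that of $f$ settles $b_1$.

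For $b_2$ your proposal has a genuine gap. The depth of your layered recursion equals the number of direct entries of $u_{I(a)}$ into $P$ that precede $f$, and nothing in the single-face hypothesis bounds this; e.g.\ if $u_{I(a)}$ has an edge to every vertex of $P$ then the recursion has depth $\Theta(|P|)$, and both $a$ and $f$ would need $\Theta(|P|)$-size labels. The suggested charging scheme is only a hope, and the dyadic fallback does not work as described: Lemma~\ref{lem:atfopreach} assumes $f\notin P$, and even if reinterpreted as precomputing $\first{H}{a}{P_{ij}}$ for dyadic subpaths $P_{ij}$ (which is all that lemma would yield here, since deleting a sink $f\in P$ does not alter reachability to other sinks on $P$), covering an arbitrary suffix $P_2$ still forces $a$ to carry information about $\Theta(|P|)$ dyadic intervals.

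The paper sidesteps all of this with a single planar observation. Let $z$ be the first vertex of $P'$ and write $\Pz(v)\subseteq P$ for your set $E(v)$ of direct entries. Trivially $\Pz(a)\subseteq\Pz(z)$. The key structural fact (Claim~\ref{claim:z}) is that, within the sequence $\Pz(z)$ ordered along $P$, the subset $\Pz(a)$ occupies at most \emph{two} intervals of consecutive elements. The single-face hypothesis is exactly what drives this: if $u,v\in\Pz(a)$ lie in distinct such intervals, the cycle formed by an $a$-to-$u$ path, an $a$-to-$v$ path, and the face arc between $u$ and $v$ must enclose $z$ (otherwise every vertex of $\Pz(z)$ between $u$ and $v$ would also be reachable from $a$), and once $z$ is enclosed, every vertex of $\Pz(z)$ outside $P[u,v]$ is reachable from $a$. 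Hence $a$ stores only the at-most-four interval endpoints, $f$ stores the first vertex of $\Pz(z)$ after $f$ on $P$, and $b_2$ is recovered by a constant-time case check. This is what your multi-layer construction was reaching for but could not bound.
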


\begin{restatable}{lemma}{ptopreachlem}\label{lem:ptopreach}
    There exists a labeling scheme $\LabelPtoP=\LabelPtoP_{G,P}$ where $G$ is a planar graph, and $P$ is a path of $G$ whose two endpoints lie on the same face.
    Given the labels of two vertices $b$ and $f$ on $P$ let $P_1$ and $P_2$ be the prefix and suffix of $P$ before and after $f$ (without $f$), respectively.
    One can compute the indices on $P$ of some vertices $b_1 = \first{\Gf}{b}{P_1}$ and $b_2=\first{\Gf}{b}{P_2}$.
    The size of each label is $\Oone$.
\end{restatable}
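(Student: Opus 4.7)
Following the Technical Overview's further decomposition of the ``$f\in P$'' case, my plan is to implement a few dedicated sub-labelings and compose them. By symmetry (reverse the orientation of $P$) it suffices to describe how to recover $b_1=\first{\Gf}{b}{P_1}$, and this splits into two sub-cases according to whether $b\in P_1$ or $b\in P_2$. In the second sub-case I canonicalize any replacement path $R$ from $b$ to $b_1$ as a concatenation of three consecutive pieces: (i) a prefix from $b$ to some vertex $c\in P_2$ that lies entirely in $G\setminus P_1$, (ii) a middle portion from $c$ to some vertex $b'\in P_1$ that is internally disjoint from $P_1\cup\{f\}$, and (iii) a suffix from $b'$ to $b_1$ in $\Gf$. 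In the first sub-case, only piece (iii) is needed (with $b$ playing the role of $b'$).

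\textbf{Incision and the three sub-schemes.} Since the endpoints of $P$ lie on a single face of $G$, I first incise $G$ along $P$, producing a planar graph $\tilde G$ in which $P$ splits into two copies $P^L,P^R$ bounding a common face and disconnecting $\tilde G$ into two sides. For piece (ii) I invoke $\LabelatoP$ (\cref{lem:atopfreach}) on each side: one copy of $P_1$, with its outgoing edges to the interior stripped, plays the role of the ``$P$'' of \cref{lem:atopfreach}, and the corresponding copy of $P_2$ plays the role of ``$P'$''; given the label of $c\in P'$ and of $f\in P$ one obtains a candidate $b'$ on $P_1$. For pieces (i) and (iii), I introduce two dedicated sub-labelings, one per side of the incision, each modeled on the canonical-path mechanism of $\LabelatfoP$ (\cref{lem:atfopreach}): for every vertex $v$ on the relevant portion of $P$ I fix a canonical witness path to its corresponding target, and planarity guarantees that distinct canonical paths on the same side are internally disjoint, so every potential fault $f$ lies on at most one canonical path per side and stores the required bypass information in $\tilde O(1)$ bits. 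The $\LabelPtoP$-label of a vertex is the concatenation of all these sub-labels over both sides of the incision and both orientations of $P$; the decoder enumerates the $O(1)$ possible side combinations for the three pieces, obtains a candidate for each, and returns the earliest one on $P$.

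\textbf{Main obstacle.} The principal difficulty is designing the sub-labeling for piece (iii)---sub-problem (3) of the overview---in which both the source $b'$ and the failing vertex $f$ sit on $P$ itself. In $\LabelatfoP$ the source lived on a distinct path $P'$ and could independently record which canonical path it uses; here both endpoints are vertices of $P$, so the labels of $b'$ and of $f$ must carry coordinated information about canonical paths on both sides of the incision while still fitting in $\tilde O(1)$ bits. The heart of the proof will be a careful case analysis on which side of the incision is used by the portion of $R$ that actually realizes $b_1$ and on which canonical-path interval each of $b'$ and $f$ belongs to; once this analysis is set up, the same trichotomy ``the replacement path is either disjoint from the canonical path of $b'$, or intersects its prefix up to $f$, or intersects its suffix after $f$'' that powered the proof of \cref{lem:atfopreach} should yield an $O(1)$-sized set of candidates to compare, giving correctness and the $\tilde O(1)$ label-size bound.
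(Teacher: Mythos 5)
There is a genuine gap, and it sits exactly where you placed your ``main obstacle.'' Your decomposition for $b\in P_2$ into pieces (i)--(iii) essentially re-derives, inside \cref{lem:ptopreach}, the composition the paper performs one level up (in \cref{lem:stopreach}), and its residual piece (iii) --- source $b'\in P_1$, fault $f\in P$, find $\first{\Gf}{b'}{P_1}$ --- is itself an instance of the very lemma you are proving (the case $b<_P f$). So the argument is circular unless that case is solved by an independent mechanism, and the mechanism you sketch does not transfer. In \cref{lem:atfopreach} the canonical paths are pairwise disjoint because the source lives on a \emph{separate} path $P'$ and $\first{G}{\cdot}{P}$ is monotone along $P'$: two intersecting canonical paths would force their intervals to have the same first vertex and hence coincide. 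When the source and the fault both lie on $P$ and the target is an \emph{earlier} vertex of $P$, a replacement path weaves on and off $P$ arbitrarily many times, and the natural witness objects (the paper's ``detours,'' ``bypasses,'' and ``byways'') are \emph{nested}, not disjoint: a single vertex can lie on or inside polynomially many of them, so ``$f$ stores the one canonical path through it'' fails, and the trichotomy of \cref{lem:atfopreach} has nothing to latch onto. This is precisely why the paper introduces two different ingredients: the auxiliary procedure of Section~\ref{sec:auxiliary}, which exploits laminarity (Claim~\ref{claim:nested}) and has each vertex store $O(\log n)$ geometrically-chosen nested detours so that, for any query pair, the largest detour containing $b$ but not $f$ is recoverable from the two labels; and, per fault $f$, the largest bypass / smallest byway on each side of $P$ (using that $P$'s endpoints lie on a common face, so same-side bypasses can be assumed maximal), whose ``first reachable vertex after the fault'' is stored in $f$'s label. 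None of this appears in your proposal, and without it the $\tilde O(1)$ label bound for piece (iii) (and, similarly, for your piece (i)) is unsupported.

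Two further points. First, your sub-problems (i) and (ii) are parameterized by $P_1,P_2$, which depend on $f$; any ``canonical'' structure must therefore be defined $f$-independently (the paper does this by defining detours via ``does not touch $P$ before $v$'' plus maximality, and by stripping out-edges of \emph{all} of $P$, not of $P_1$), a subtlety your incision-plus-canonical-paths sketch glosses over. Second, the claimed symmetry ``reverse the orientation of $P$ and recover $b_2$ from the $b_1$ procedure'' does not hold cleanly: reversing the indexing of $P$ turns ``first vertex of $P_2$ reachable from $b$'' into a \emph{last}-vertex problem, and reversing the graph's edges turns reachability from $b$ into reachability to $b$; the paper accordingly treats the four cases (source before/after $f$, target before/after $f$) by similar but explicitly non-symmetric modifications.
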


We first prove \cref{lem:stopreach} assuming \cref{lem:ptopreach,lem:atopfreach}, and then prove the two latter lemmas.
\begin{proof}[Proof of \cref{lem:stopreach}]
Let $H$ and $P$ be as in the statement of the lemma.
The labeling is obtained by composition of labels via \cref{lem:stfopreach,lem:atopfreach,lem:ptopreach}.
To this end we define auxiliary labeled graphs.
Let $H_P$ be the graph $H$ with the vertices of $P$ labeled with their labels in $\LabelPtoP_{H,P}$ of \cref{lem:ptopreach}.
Let $H_P^0$ be the graph obtained from $H_P$ by deleting all the edges outgoing from the vertices of $P$.
For a path $P'$ such that $P$ is an ancestor path of $P$, Let $H_{P,P'}$ be the graph $H^0_P$ with the vertices of $P$ and $P'$ labeled with their labels in $\LabelatoP_{H^0_P,P',P}$ of \cref{lem:atopfreach}.

The label of a vertex $v$ consists of:
\begin{enumerate}
    \item \label{it:reach:stoa-stopf}\hlgray{For every ancestor piece $H'$ of $v$ in $H_{P,P'}$ such that $P$ is an ancestor path of $H'$, for every boundary path $P'$ of $H'$, $v$ stores $\Labelstoa_{H',P'}(v)$} using \cref{lem:stoareach}.
    (i.e., the label $\Labelstoa_{H',P'}(v)$ contains not only the identity of the desired vertex $a=\first{H'}{v}{P'}$, but also the label of $a$ in $H_{P,P'}$).

    \item if $v$ is a vertex of $P$, $v$ stores:
    \begin{enumerate}
        \item \label{it:reach:atop-stopf} \hlgray{For every ancestor path $P'$ of $v$ such that $P$ is an ancestor of $P'$, the $\LabelatoP_{H^0_P,P',P}(v)$.}
        \item \label{it:reach:ptop-ptopf} \hlgray{$\LabelPtoP_{H,P}(v)$.}
    \end{enumerate}
\end{enumerate}

\paragraph{Size.} Since each vertex has $\tilde O(1)$ ancestor paths and by \cref{lem:stopreach,lem:atopfreach,lem:ptopreach}, the size of the label off each vertex is $\tilde O(1)$.

\paragraph{Decoding and Correctness}
As in the setup of the proof of \cref{lem:stfopreach}, assume, per the statement of the lemma that $P$ is an ancestor of both $s$ and $f$ and that $s$ and $f$ are not an ancestor apex of one another.
Consider the set of leafmost pieces in $\mathcal T$ that contain both $s$ and $f$. Let $\Hsf'$ be such a piece. It must be that $\Hsf'$ is an ancestor piece of both $s$ and $f$ or else one of $s$ and $f$ is an ancestor apex of the other. Hence, there are only $O(1)$ leafmost pieces that contain both $s$ and $f$.
To avoid unnecessary clutter we shall assume there is a unique piece $\Hsf'$. In reality we would have to apply the same argument for all $O(1)$ such pieces.
Since $\Hsf'$ is an ancestor piece of both $s$ and $f$, we can find the piece $\Hsf'$ by traversing the list of ancestors of $s$ (stored in $s$) and of $f$ (stored in $f$) until finding the lowest common ancestor.
Let $\Hsf$ be the child piece of $\Hsf'$ that contains only $s$ (if $\Hsf'$ is an atomic piece then define $\Hsf=\Hsf'$).
Since $P$ is an ancestor of both $s$ and $f$, $P$ is a boundary path of a (possibly weak) ancestor $\Htf$ of $\Hsf'$.

We give the algorithm for identifying $b_1$. The argument for $b_2$ is identical (just replace $b_1$ by $b_2$).
Consider a path $R$ from $s$ to $b_1$ in $H$.
Let $b$ be the first vertex of $R$ that belongs to $P$.
Note that the prefix $R[s,b]$ does not contain any edge outgoing from any vertex of $P$. Hence, identifying this prefix can be done in the graphs $H^0_P$ and $H_{P,P'}$ in which all such edges were removed.
In other words, we may assume without loss of generality that $b = \first{H^0_P}{s}{P}$.
Consider the maximal prefix of $R[s,b]$ that is entirely contained in $H'$. Note that since $f \notin H'$, $f$ is not on this maximal prefix.
Let $P'$ be the path of $\partial H'$ at which this maximal prefix of $R$ terminates.
Let $a$ be the vertex of $P'$ at which this prefix terminates.
We may assume without loss of generality that $a = \first{H'}{s}{P'}$. Furthermore, since $R[s,b]$ visits $a$, $b = \first{H^0_P}{s}{P}$ is also $b = \first{H^0_P}{a}{P}$.

We now show that the vertices $a,b,b_1$ can be retrieved from the labels of $s$ and of $f$.
The vertex $a$ can be retrieved, by \cref{lem:stoareach} from the label $\Labelstoa_{H',P'}(s)$ stored in item (\ref{it:reach:stoa-stopf}) of the label of $s$, along with the label $\LabelatoP_{H^0_P,P',P}(a)$.
By \cref{lem:atopfreach}, the label of $b$ in $H^0_P$ can be obtained from $\LabelatoP_{H^0_P,P',P}(a)$ and from $\LabelatoP_{H^0_P,P',P}(f)$, which is stored in item (\ref{it:reach:atop-stopf}) of the label of $f$.
Note that, by definition of $H^0_P$, the label of $b$ in $H^0_P$ is $\LabelPtoP_{H,P}(b)$.
Finally, from $\LabelPtoP_{H,P}(b)$, and $\LabelPtoP_{H,P}(f)$ (stored in item (\ref{it:reach:ptop-ptopf}) of the lable of $f$), we can obtain, by \cref{lem:ptopreach}, the identity of $b_1$ in $H$.
\end{proof}

We next prove \cref{lem:atopfreach}, restated here for convenince.
\atopfreachlem*
\begin{proof}[Proof of \cref{lem:atopfreach}.]
Let $a$ be a vertex of $P'$ and $f$ be a vertex of $P$ as in the statement of the lemma.
Let $z$ be the first vertex of $\Psf$.
For a vertex $v$, Let $\Pz(v)$ denote the set of vertices of $\P$ that are reachable from $v$ in $H$.
Since anything reachable from $a$ is also reachable from $z$, we have that $\Pz(a)\subseteq \Pz(z)$. In fact, planarity dictates the following:
\begin{claim}\label{claim:z}
	The set $\Pz(a)$ consists of at most two intervals of consecutive vertices of $\Pz(z)$.
\end{claim}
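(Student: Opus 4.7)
The plan is to prove the claim by contradiction using a planarity argument. Suppose $\Pz(a)$ forms at least three intervals of $\Pz(z)$: then we can pick vertices $q_1 <_\P p_1 <_\P q_2 <_\P p_2 <_\P q_3$ on $\P$, all lying in $\Pz(z)$, with $q_1,q_2,q_3 \in \Pz(a)$ and $p_1,p_2 \in \Pz(z)\setminus \Pz(a)$. Choose, in $H$, paths $Q_i$ from $a$ to $q_i$ (for $i=1,2,3$) and $R_j$ from $z$ to $p_j$ (for $j=1,2$). The first observation is that $R_j$ cannot pass through $a$: otherwise the suffix of $R_j$ from $a$ would give a path in $H$ from $a$ to $p_j$, contradicting $p_j \notin \Pz(a)$.

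Next, embed $H$ in the plane so that the face of $H$ containing $\P$ is the outer face; this is possible by the hypothesis of \cref{lem:atopfreach}. Then $\P$ appears as an arc on the boundary of this outer face, and $q_1,q_2,q_3$ appear on that arc in this order. The connected subset $Q_1\cup Q_2\cup Q_3$ together with the $\P$-arc from $q_1$ to $q_3$ partitions the rest of the drawing into open regions. By a standard Jordan-curve argument in the disc model, there is a region $A$ whose portion of the outer-face boundary is the $\P$-sub-arc from $q_1$ to $q_2$ (and therefore contains $p_1$), and a region $B$ whose portion is the $\P$-sub-arc from $q_2$ to $q_3$ (and therefore contains $p_2$). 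Any shared boundary between $A$ and $B$ lies in $\{a\} \cup Q_2 \cup \{q_2\}$, while the boundary separating $A$ (resp.\ $B$) from the remaining region lies in $\{a\} \cup Q_1 \cup \{q_1\}$ (resp.\ $\{a\} \cup Q_3 \cup \{q_3\}$). The vertex $z$ lies in (the closure of) at most one of the three regions; moreover, $z$ cannot lie on any $Q_i$ or equal $a$, for then $a$ would reach $z$ and hence $\Pz(z)\subseteq \Pz(a)$, already contradicting $p_1 \notin \Pz(a)$.

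Consequently, for at least one $j\in\{1,2\}$ the target $p_j$ lies on the boundary of a region different from the one containing $z$, so $R_j$ must cross this shared boundary at some vertex $w$. Here the hypothesis that $\P$ has no outgoing edges does the crucial work: $w$ cannot equal any $q_i\in \P$, because $R_j$ would then be unable to continue past $q_i$ and so could not reach $p_j\neq q_i$. Therefore $w$ is a vertex of some $Q_i$ distinct from $q_i$ (possibly $w=a$), and concatenating the prefix of $Q_i$ from $a$ to $w$ with the suffix of $R_j$ from $w$ to $p_j$ yields an $a$-to-$p_j$ path in $H$, contradicting $p_j \notin \Pz(a)$. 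The main obstacle in this plan is the topological step asserting the three-region structure; I would justify it by citing the Jordan curve theorem in the disc model rather than grinding through a case analysis, since the three paths $Q_1,Q_2,Q_3$ meet the outer-face boundary at the three points $q_1,q_2,q_3$ appearing in this cyclic order.
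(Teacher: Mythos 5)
Your proof is correct and rests on essentially the same planarity argument as the paper's: the paper takes two vertices of $\Pz(a)$ lying in distinct intervals of $\Pz(z)$, forms the undirected cycle from the two $a$-to-$\P$ paths together with the subpath of $\P$ between them, shows that $z$ must be enclosed by this cycle (using that $\P$ lies on a single face and, implicitly, that paths touch $\P$ only at their endpoints since $\P$ has no outgoing edges), and then converts any forced intersection with an $a$-path into reachability from $a$ — exactly the mechanism you use. Your version is organized as a contradiction with three witnesses $q_1,q_2,q_3$ and a three-region Jordan decomposition (plus an explicit case analysis on where $z$ lies), but this is the same idea with one extra path rather than a genuinely different route, and your topological step is asserted at the same level of rigor as the paper's own one-line enclosure claim.
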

\begin{proof}
Consider two vertices $u,v \in \Pz(a)$ that belong to two disjoint intervals of $\Pz(z)$. Let $C$ be the (undirected) cycle formed by the $a$-to-$v$ path, the $a$-to-$u$ path, and $\P[u,v]$ (see Figure \ref{fig:z}).
Since $P$ lies on a single face, no vertices of $P$ other than those of $P[u,v]$ are (even weakly) enclosed by $C$.
The vertex $z$ must be enclosed by $C$, otherwise, the path from $z$ to any vertex of $\P[u,v]$ must intersect the $a$-to-$v$ path or the $a$-to-$u$ path (and is thus reachable from $p$ as well in contradiction to the two intervals being disjoint). Since $z$ is enclosed by $C$, any vertex of $\P$ that is reachable from $z$ and does not belong to $\P[u,v]$ is also reachable from $z$ (since the path to it from $z$ must intersect the $a$-to-$v$ path or the $a$-to-$u$ path).
\end{proof}

Equipped with the structure described in \cref{claim:z}, the label of a vertex $v$ consists of the following:
\begin{enumerate}
    \item If $v$ is a vertex of $P'$, $v$ stores:
    \begin{enumerate}
        \item \label{it:reach:b1-z} \hlgray{$\first{H}{v}{P}$.}
        \item \label{it:reach:Pz-z}\hlgray{ the identities of the endpoints of the two intervals of $\Pz(v)$ in $\Pz(z)$ (where $p$ and $z$ are as defined above).}
    \end{enumerate}

    \item \label{it:reach:f-z} If $v$ is a vertex of $P$, \hlgray{$v$ stores the identity of the first vertex $u$ of $\Pz(z)$ that is after $v$ on $\Psf$ (where $z$ is the first vertex of $\Psf$).}
\end{enumerate}

The size of the label is clearly $O(1)$.

To obtain $b_1$ we simply check whether $\first{H}{a}{P}$ (stored in item (\ref{it:reach:b1-z}) of the label of $a$ appears before $f$ on $P$. If so, since $P$ has no outgoing edges, $\first{H}{a}{P} = \first{H \setminus f}{a}{P_1}$. Otherwise, $P_1$ is not reachable from $a$ in $H \setminus f$.

Next we describe how to find $b_2 = \first{\Htf^\circ\setminus \Q}{s}{\Pb}$. If the vertex $u$ stored in item (\ref{it:reach:f-z}) of the label of $f$ falls inside one of the two intervals stored in item (\ref{it:reach:Pz-z}) of the label of $a$ then $b_2=u$.
Otherwise, $b_2$ is the earliest starting endpoint that is later than $f$ among the two endpoints of the two intervals stored in the label of $a$ (if both intervals are before $f$ on $\P$ then $b_2$ does not exist).
\end{proof}

\begin{figure}[htb]
  \begin{center}
 \includegraphics[scale=0.205]{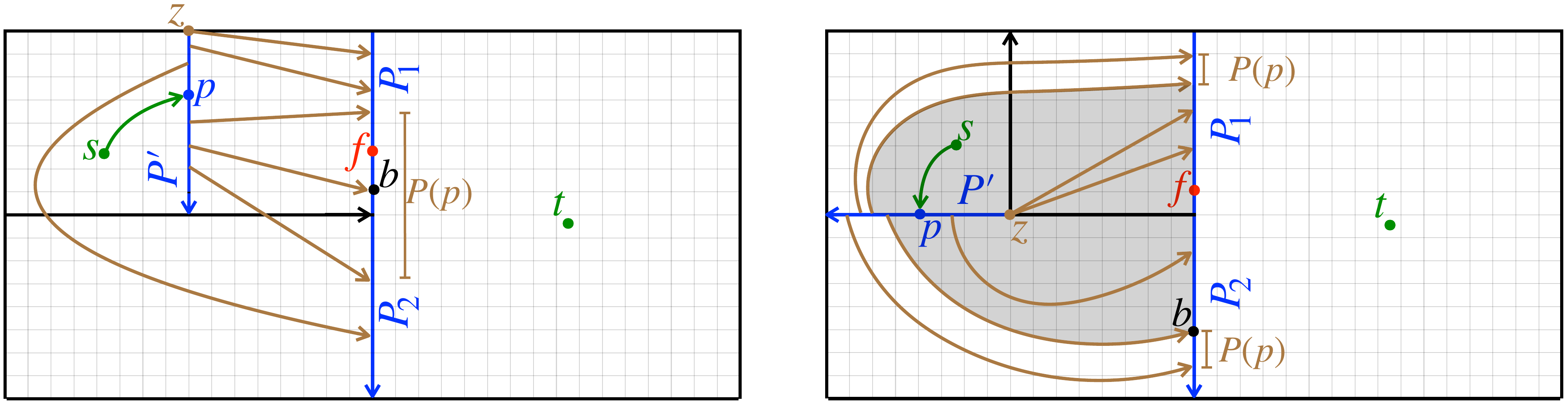}
   \caption{When $f\in \Q$: The vertical (blue) separator path $\P$ (in this example the separator $\Q$ consists of just a single path $\P$) is partitioned by $f$ into $\Pa$ and $\Pb$. The vertex $z$ is the first vertex of $p$'s (blue) path $\Psf$. The brown paths are the ways that $z$ can reach $\P$ in $\Htf^\circ\setminus \Q$. The vertex $b$ is the first vertex of $\Pb$ that is reachable from $p$ in $\Htf^\circ\setminus\Q$. In the left image, $f$ lies inside the interval of vertices $\Pz(p)$ that are reachable from $p$ ($b$ is therefore stored in $f$). In the right image, $f$ lies outside the two intervals $\Pz(p)$ (and $b$ is therefore stored in $s$). The shaded area is the cycle $C$ in the proof of Claim~\ref{claim:z}.  \label{fig:z}}
   \end{center}
\end{figure}

\subsection{The $\LabelPtoP$ labeling (\cref{lem:ptopreach})} \label{sec:singlePath}
In this section we present the secondary labeling scheme  (with polylogarithmic-size labels), which addresses the following problem: We are given a directed planar graph $G$ and a single path $\P$ in $G$ whose endpoints lie on the same face. We need to label the vertices of $\P$ such that given the labels of any two vertices $b,f$ of $\P$ we can find the first vertex $p < f$ of $\P$ that is reachable from $b$ in $\Gf$ and the first vertex $p > f$ of $\P$ that is reachable from $b$ in $\Gf$.

\subsubsection*{An auxiliary procedure}\label{sec:auxiliary}
We begin with an auxiliary procedure that will be useful for our labeling.
In this procedure, we wish to label the vertices of $\P$, such that given the labels of any two vertices $b,f$ such that $f$ is before $b$ on $P$ (i.e., $f < b$), we can find the first vertex $p>f$ of $P$ that is reachable in $G$ from $b$ using a path that does not touch $f$ or any vertex before $f$ (i.e., does not touch any vertex $v \leq f$ of $\P$).

Let $H_P$ be the graph composed of the path $\P$ and the following additional edges: for every pair of vertices $u,v \in \P$ where $u>v$, we add an edge $(u,v)$ iff (1) there exists a $u$-to-$v$ path in $G$ that does not touch $\P$ before $v$, and (2) there is no such $w$-to-$v$ path for any $w>u$. The following claim shows that instead of working with $G$, we can work with $H_\P$:

\begin{claim}\label{claim:HP}
	Given any $f < b$, the first vertex $p>f$ that is reachable from $b$ using a path that does not touch $f$ or any vertex of $\P$ before $f$, is the same in $G$ and in $H_P$.
\end{claim}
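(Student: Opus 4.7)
The plan is to prove the equality by establishing two opposing inequalities on the position in $\P$ of the ``first reachable vertex'' in each graph. Let $p_G$ be the first vertex of $\P$ strictly after $f$ that is reachable from $b$ in $G$ along a path avoiding every vertex of $\P$ at position $\le_\P f$, and let $p_{H_\P}$ be the analogous quantity in $H_\P$.

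For the direction $p_G \le_\P p_{H_\P}$, I would show that any walk from $b$ in $H_\P$ whose vertices all lie strictly after $f$ on $\P$ can be translated edge-by-edge into a walk from $b$ in $G$ avoiding every vertex of $\P$ at position $\le_\P f$. An edge of $\P$ inside $H_\P$ is also an edge of $G$ with both endpoints strictly after $f$; a shortcut edge $(u,v)$ of $H_\P$ witnesses, by condition~(1) of its definition, a $u$-to-$v$ path in $G$ that does not touch $\P$ before $v$, and since $v >_\P f$ this sub-path avoids every vertex of $\P$ at position $\le_\P f$. Concatenating these pieces shows that $p_{H_\P}$ itself is reachable from $b$ in $G$ by a valid walk, hence $p_G \le_\P p_{H_\P}$.

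For the reverse direction $p_{H_\P} \le_\P p_G$, I would take a simple $b$-to-$p_G$ path $\pi$ in $G$ that avoids $\P$-vertices at position $\le_\P f$ and decompose it at its consecutive touches with $\P$, say $b = v_{i_1}, v_{i_2}, \dots, v_{i_m} = p_G$. Each inter-touch sub-path of $\pi$ is internally disjoint from $\P$. If $v_{i_{j+1}} >_\P v_{i_j}$, I simply follow $\P$ forward inside $H_\P$; all intermediate vertices lie after $v_{i_j} >_\P f$. Otherwise $v_{i_{j+1}} <_\P v_{i_j}$, and the inter-touch sub-path certifies condition~(1) of $H_\P$ for the pair $(v_{i_j}, v_{i_{j+1}})$; the maximality requirement in condition~(2) then provides a shortcut $(u, v_{i_{j+1}}) \in H_\P$ with $u \ge_\P v_{i_j}$, so I walk forward along $\P$ from $v_{i_j}$ to $u$ and take that shortcut. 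Every vertex visited along the resulting $H_\P$-walk stays strictly after $f$ on $\P$, so $p_G$ is reachable from $b$ in $H_\P$ along a valid walk, yielding $p_{H_\P} \le_\P p_G$.

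I do not expect a serious obstacle: the claim is essentially a soundness-and-completeness statement for the shortcut edges of $H_\P$, and the only care needed is routine bookkeeping that every vertex visited along a translated walk lies strictly after $f$ on $\P$. The main subtlety is in the backward-intersection case above: the maximal witness $u$ guaranteed by condition~(2) satisfies $u \ge_\P v_{i_j}$ because $v_{i_j}$ itself is a candidate witness for endpoint $v_{i_{j+1}}$, so the forward traversal of $\P$ from $v_{i_j}$ to $u$ is well-defined and never dips to $f$ or earlier.
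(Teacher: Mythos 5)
Your proposal is correct and takes essentially the same route as the paper: both directions are the soundness/completeness simulation between $G$ and $H_P$, using condition (1) to expand detour edges into $G$-paths that avoid $P$-vertices at or before $f$, and condition (2) to obtain a maximal witness $u$ at least as late as the departure vertex so the detour can be reached by walking forward along $P$. The only cosmetic difference is in the $G\to H_P$ direction: the paper exploits the minimality of $p$ to observe that the $G$-path touches no vertex of $P$ before $p$, so a single maximal detour edge $(u,p)$ with $u\ge b$ suffices, whereas you simulate an arbitrary valid path segment-by-segment at its touches with $P$ -- both arguments are valid.
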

\begin{proof}
Let $S$ be the corresponding $b$-to-$p$ path in $G$ for $f<p<b$.
The path $S$ visits no vertex of $\P$ before $f$ (by definition of $S$) and no vertex of $\P$ between $f$ and $p$ (by definition of $p$). Hence, in $H_P$ there is an edge $(u,p)$ for some $u$ where $u=b$ or $u>b$, and so $S$ is represented in $H_P$ (by a path composed of a $b$-to-$u$ prefix along $\P$ followed by the single edge $(u,p)$). In the other direction, let $R$ be the corresponding $b$-to-$p$ path in $H_P$. The path $R$ visits no vertex of $\P$ before $f$ (by definition of $R$) and every edge $(u,v)$ of $R$ corresponds to a path in $G$ that visits no vertex of $\P$ before $v$ (and therefore visits no vertex of $\P$ before $f$), hence $R$ is appropriately represented in $G$.
\end{proof}

We call the edges of $H_P$ that are not edges of $\P$ {\em detours}.
We will use the fact that detours do not cross (i.e., they form a laminar family):

\begin{claim}\label{claim:nested}
If there is a detour $(u,v)$ then there is no detour  $(w,x)$ with $v<x<u<w$.
\end{claim}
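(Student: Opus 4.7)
The plan is to derive a contradiction from the existence of both detours by splicing them into a new path that witnesses what condition (2) in the definition of the detour $(u,v)$ forbids. Fix realizing paths $S_1$ (a $u$-to-$v$ path in $G$ avoiding $\P$ strictly before $v$) and $S_2$ (a $w$-to-$x$ path avoiding $\P$ strictly before $x$), which exist by condition (1) for the two detours.

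The key step is to argue that $S_1$ and $S_2$ must share at least one vertex $z$. Because the endpoints of $\P$ lie on a common face, I would first reduce to the situation in which $\P$ itself lies on the boundary of a single face (by the standard incision trick: add a virtual arc through the common face closing $\P$ into a cycle, then cut along that arc; this does not affect reachability in $G$). Then $\P$ bounds a topological disk into which every other vertex and edge of $G$ is embedded, and on the boundary of this disk the four endpoints appear in the alternating order $v<x<u<w$. By the Jordan curve theorem the curves realizing $S_1$ and $S_2$ cannot be drawn disjointly in the disk, and in a planar embedding two such curves can only meet at a shared vertex of the graph.

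Given such a $z$, I would splice: the walk $S' = S_2[w,z]\cdot S_1[z,v]$ is a directed $w$-to-$v$ walk in $G$, from which we extract a simple path. The suffix $S_1[z,v]$ inherits from $S_1$ the property of avoiding $\P$ before $v$, while the prefix $S_2[w,z]$ inherits from $S_2$ the stronger property of avoiding $\P$ before $x$; since $v<x$, the entire $S'$ avoids $\P$ strictly before $v$. Because $w>u$, the existence of $S'$ directly contradicts condition (2) in the definition of the detour $(u,v)$, which prohibits any $y$-to-$v$ path with $y>u$ that avoids $\P$ before $v$. The claim follows.

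The main obstacle is exclusively the planarity step that converts the topological interleaving of the four endpoints on $\P$ into a shared graph vertex $z$; the splicing and the verification that $S'$ respects the ``before $v$'' constraint are routine. This planarity step is also the only place the hypothesis on the endpoints of $\P$ is used, which matches the way Claim~\ref{claim:z} relies on the same hypothesis.
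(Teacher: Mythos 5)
There is a genuine gap, and it is exactly at the step you yourself flag as the main obstacle: the assertion that the realizing paths $S_1$ and $S_2$ must share a vertex is false in this setting, and the proposed reduction does not rescue it. In \cref{sec:singlePath} the hypothesis is only that the two \emph{endpoints} of $\P$ lie on a common face, not that $\P$ lies on a face. Adding a virtual arc between the endpoints and cutting along it turns $\P$ into a closed curve, but $G$ may have vertices and edges strictly on both sides of that curve, so you do not obtain a disk bounded by $\P$ that contains all of $G$. Concretely, the $u$-to-$v$ path realizing $(u,v)$ can leave and re-enter $\P$ on its left while the $w$-to-$x$ path realizing $(w,x)$ lives entirely on the right of $\P$; despite the interleaving $v<x<u<w$, these two curves need not meet. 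The same-face condition on the endpoints is used in the paper only to argue that a \emph{single} bypass must leave and re-enter $\P$ on the same side; it does not force two different detour paths onto the same side. (Your analogy with \cref{claim:z} does not apply: there the path $P$ is assumed to lie on a single face and the two paths share the source $a$.) Note also that in the configuration above the pair $(u,v)$ would in fact fail condition (2) of the detour definition, so an intersection cannot be forced by appealing to the detour conditions either without presupposing the very claim being proved.

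The claim needs no planarity at all. Since $v<x$, every vertex of the $x$-to-$u$ subpath of $\P$ is $\ge x>v$, so the concatenation of the $(w,x)$ detour path, $\P[x,u]$, and the $(u,v)$ detour path is a $w$-to-$v$ path in $G$ that touches no vertex of $\P$ before $v$; as $w>u$, this contradicts condition (2) in the definition of the detour $(u,v)$. This is the paper's one-line argument, and it is what your splice should be replaced by: your verification that a spliced path respects the ``before $v$'' constraint and contradicts (2) is fine, but the intersection vertex $z$ it relies on need not exist, whereas the subpath $\P[x,u]$ always does.
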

\begin{proof}
By concatenating the $(w,x)$ detour, the $x$-to-$u$ subpath of $\P$, and the $(u,v)$ detour, we get a path in $G$ that does not touch $\P$ before $v$ but starts at a vertex $w>u$ contradicting condition (2) of the $H_P$ edges.
\end{proof}

We now explain what needs to be stored to facilitate the auxiliary procedure.
We define the {\em size} $|d|$ of a detour $d=(u,v)$ as the number of vertices of $\P$ between $v$ and $u$.
We say that a vertex of $\P$ is {\em contained} in a detour $d=(u,v)$ if it lies on $\P$ between $v$ and $u$. We say that a detour $d'$ is {\em contained} in detour $d$ if all vertices that are contained in $d'$ are also contained in $d$.
Notice that, from Claim~\ref{claim:HP}, the vertex $p$ sought by the auxiliary procedure is an endpoint of the largest detour $d$ that contains $b$ and does not contain $f$. To find $d$, \hlgray{every vertex $v$ of $\P$ stores a set of $O(\log n)$ nested detours $d^v_1,d^v_2,\dots$ where $d^v_1$ is the largest detour containing $v$, and $d^v_{i+1}$ is the largest detour of size $|d^v_{i+1}| \le |d^v_i|/2$ containing $v$}. By Claim~\ref{claim:nested}, this set of nested detours is well defined.
\hlgray{Additionally, for each $d^v_i$, $v$ stores the largest detour $\hat{d}{^v_i}$ that is strictly contained in $d^v_i$ but does not contain $d^v_{i+1}$.}
See Figure \ref{fig:nested}.

\begin{figure}[htb]
  \begin{center}
 \includegraphics[scale=0.3]{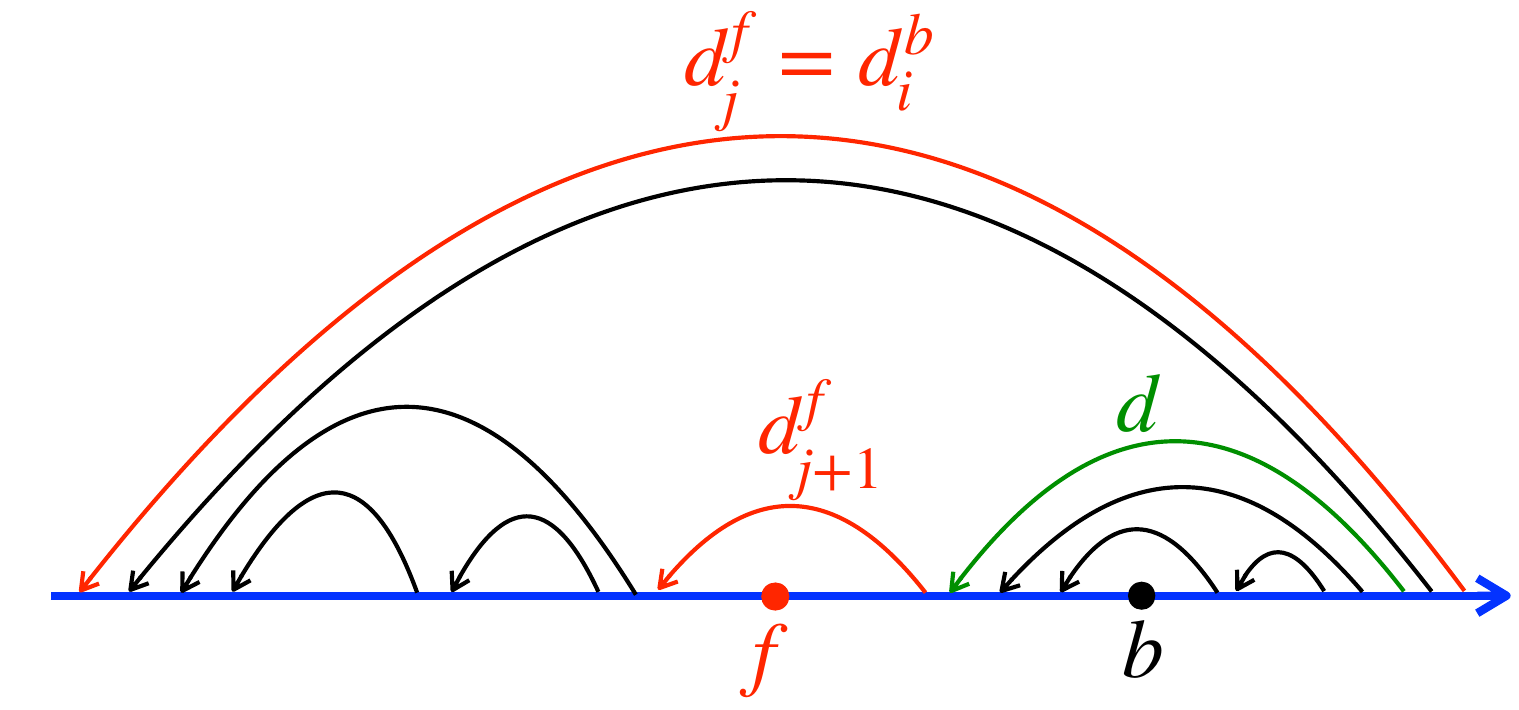}
  \caption{A nested system of $O(\log n)$ detours  $d^f_1,d^f_2,\dots$ of $f$ (in red) used in the auxiliary procedure. The (green) detour $d$ is the largest detour  that contains $b$ and does not contain $f$. Either $d=\hat{d}{^f_j}$  and is stored in $f$, or $d=d^b_{i+1}$ and is stored in $b$.
  \label{fig:nested}}
   \end{center}
\end{figure}

Consider the smallest detour $d^f_j=d^b_i$ that is saved by both $b$ and $f$. If such a detour does not exist then there is no detour that contains both $b$ and $f$ and so  $d= d^b_1$ is stored in $b$. If $|d|>|d^f_j|/2$ then $d$ must be the largest detour that is contained in $d^f_j$ but does not contain $d^f_{j+1}$. In other words, $d=\hat{d}{^f_j}$ and is stored in $f$. If on the other hand $|d|\le |d^f_j|/2$, then $|d|\le |d^b_i|/2$ and by the choice of $d^b_i$ we have that $d^b_{i+1}$ does not contain $f$ and so $d=d^b_{i+1}$ and is stored in $b$. This completes the description of the auxiliary procedure.

\subsubsection*{Description of the $\LabelPtoP$ labeling}
Equipped with the above auxiliary procedure, we are now ready to describe the secondary labeling scheme. Recall that there are four cases to consider: Given $b,f \in \P$ where $b$ can be before/after $f$ we wish to find the first vertex $p$ before/after $f$ that is reachable from $b$ in $\Gf$. There are four cases to consider:

\medskip
\noindent
{\bf Given \boldmath$f < b$, find the first vertex \boldmath$p<f$ that is reachable from \boldmath$b$ in \boldmath$\Gf$.} Consider the $b$-to-$p$ path $R$ in $\Gf$. Let $r$ be the first vertex of $R$ that belongs to $\P$ and $r<f$. Let $r'$ be the vertex of $\P$ that precedes $r$ on $R$. Note that $r<f$, that $r'>f$, and that the $r'$-to-$r$ subpath of $R$ (which we call a {\em bypass} of $f$ and denote $R[r',r]$) is internally disjoint from $\P$ and it either emanates to the left or to the right of $\P$. Moreover, since the endpoints of $\P$ lie on the same face, then if $R$ emanates at $r'$ to the left (resp. right) of $\P$ it must enter $\P$ at $r$ from the left (resp. right) of $\P$. This means that we can assume w.l.o.g that the bypass $R[r',r]$ is the largest such bypass (in terms of the number of vertices of $\P$ between $r$ and $r'$). This is because any smaller bypass that is on the same side of $\P$ as $R[r',r]$ is either contained in $R[r',r]$ (in which case we might as well use $R[r',r]$) or intersects $R[r',r]$ implying (in contradiction) that there is a larger bypass.  See Figure~\ref{fig:bypass}.

 \begin{figure}[htb]
  \begin{center}
 \includegraphics[scale=0.3]{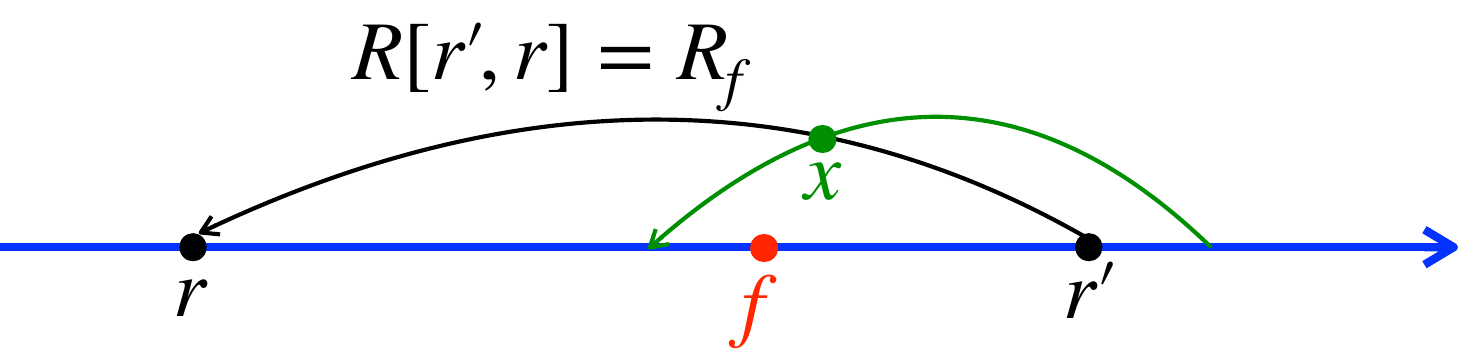}
  \caption{A bypass $R[r',r]$ (in black) of $f$.
  If there was an intersecting bypass (in green) that is larger than $R[r',r]$ then the green subpath before $x$ and the black subpath after $x$ would constitute a larger bypass.  \label{fig:bypass}}
   \end{center}
\end{figure}

 We let \hlgray{each vertex $f \in \P$ store the endpoints of the largest bypass $L_f$ (resp. $R_f$) that is internally disjoint from $\P$, emanates left (resp. right) of $\P$ at a vertex that is after $f$, and enters $\P$ from the left (resp. right) at a vertex that is before $f$. The vertex $f \in \P$ also stores the first vertex $L_f^+$ (resp. $R_f^+$) of $\P$ that is before $f$ and is reachable in $\Gf$ from the endpoint of $L_f$ (resp. $R_f$) that is after $f$.}

In order to find $p$, we first use the auxiliary procedure of Section~\ref{sec:auxiliary} to find the first vertex $p'>f$ that is reachable in $G$ from $b$ using a path that does not touch any vertex of $\P$ before $f$. Then, we check whether $p'$ is contained in $L_f$ and if so we consider $L_f^+$ as a candidate for $p$. Similarly, we check whether $p'$ is contained in $R_f$ and if so we consider $R_f^+$ as a candidate for $p$. Finally, we return the earlier of the (at most two) candidates.

 \medskip
\noindent
{\bf Given \boldmath$f < b$, find the first vertex \boldmath$p>f$ that is reachable from \boldmath$b$ in \boldmath$\Gf$.}
This case is very similar to the previous case. The only differences are: (1) we let \hlgray{every vertex $f \in \P$ store the first vertex $L_f^-$ (resp. $R_f^-$) of $\P$ that is after $f$ and is reachable in $\Gf$ from the endpoints of $L_f$ (resp. $R_f$)}, and (2) we add $p'$ itself as a third possible candidate for $p$.

 \medskip
\noindent
{\bf Given \boldmath$b < f$, find the first vertex \boldmath$p>f$ that is reachable from \boldmath$b$ in \boldmath$\Gf$.}
 This case and the next one are handled by small (but not symmetric) modifications of the previous two cases.
 Consider the $b$-to-$p$ path $R$ in $\Gf$. Let $r$ be the first vertex of $R$ that belongs to $\P$ and $r>f$. Let $r'$ be the vertex of $\P$ that precedes $r$ on $R$. The subpath $R[r',r]$ (which we call a {\em byway} of $f$) is internally disjoint from $\P$, and if it emanates at $r'$ to the left (resp. right) of $\P$ then it must enter $r$ from the left (resp. right) of $\P$. We can assume w.l.o.g that  $R[r',r]$ is the smallest such byway (because any larger byway that is on the same side of $\P$ as $R[r',r]$ either contains $R[r',r]$ (in which case we might as well use $R[r',r]$) or intersects $R[r',r]$ implying (in contradiction) that there is a smaller byway. See Figure~\ref{fig:byway}.

 We let \hlgray{each vertex $f \in \P$ store the endpoints of the smallest byway $L_f$ (resp. $R_f$) containing $f$ that is to the left (resp. right) of $\P$. The vertex $f \in \P$ also stores the first vertex $L_f^-$ (resp. $R_f^-$) of $\P$ that is after $f$ and is reachable in $\Gf$ from the endpoint of $L_f$ (resp. $R_f$) that is after $f$.}

In order to find $p$, we begin by finding the first vertex $p'<f$ that is reachable in $G$ from $b$ using a path that does not touch any vertex of $\P$ after $f$. This is done by using a variant of the auxiliary procedure of Section~\ref{sec:auxiliary} with the following modification. In $H_P$ we add the detour $(u,v)$ iff (1) there exists a $u$-to-$v$ path in $G$ that does not touch $\P$ before $v$, and (2) there is no such $u$-to-$w$ path for any $w<v$.
After finding $p'$ using this mechanism, we check whether $p'$ appears on $P$ before the starting point of the byway $L_f$, and if so we consider $L_f^-$ as a candidate for $p$. Similarly, we check whether $p'$ appears on $P$ before the starting point of the byway $R_f$ and if so we consider $R_f^-$ as a candidate for $p$. Finally, we return the earlier of the (at most two) candidates.

 \begin{figure}[htb]
  \begin{center}
 \includegraphics[scale=0.3]{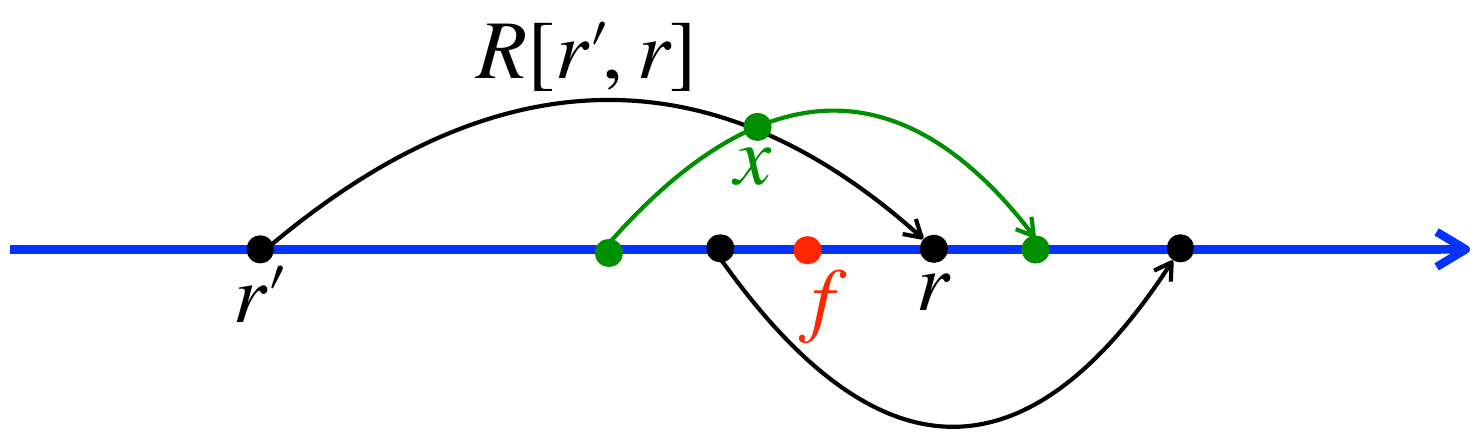}
  \caption{Two byways of $f$ (in black), the top one is $R[r',r]$. If there was an intersecting byway (in green) that is smaller than $R[r',r]$ then the green subpath before $x$ and the black subpath after $x$ would constitute a smaller byway. \label{fig:byway}}
   \end{center}
\end{figure}

 \medskip
\noindent
{\bf Given \boldmath$b < f$, find the first vertex \boldmath$p<f$ that is reachable from \boldmath$b$ in \boldmath$\Gf$.}
 This case is very similar to the previous case. The only differences are: (1) we \hlgray{let every vertex $f \in \P$ store the first vertex $L_f^+$ (resp. $R_f^+$) of $\P$ that is before $f$ and is reachable in $\Gf$ from the endpoints of $L_f$ (resp. $R_f$),} and (2) we add $p'$ itself as a third possible candidate for $p$.

\section{Approximate Distance Labeling}\label{sec:dist_labels}
In the remainder of this paper, we extend our labeling scheme from reachability to approximate distances.
We use standard notation for weighted graphs.
For a path $P$ we use $\len(P)$ to denote the total length of $P$, i.e. the sum of $P$'s edge weights.
We also use $\dist_H(x,y)$ to denote the $x$-to$y$ distance in the graph $H$.

We begin by describing the scaling approach of Thorup~\cite{Thorup04}.
A $(3,r)$-layered spanning tree $T$ in a digraph $H$ is an unoriented rooted spanning tree (i.e., a rooted spanning tree when ignoring the directions of edges) such that any path in $T$ from the root is the concatenation of at most $3$ shortest directed paths in $H$, each of length at most $r$.
We say that $H$ is a $(3,r)$-layered graph if it has such a spanning tree.
\begin{lemma}{\cite[Lemma 3.2 \footnote{The statement of item (3) in~\cite{Thorup04} only concerns shortest paths, but the proof there actually applies to any path of length at most $r$, as stated here.}]{Thorup04}} \label{lem:ThorupScale}
Given a directed graph $G$ and a scale $r$, we can construct in linear time a series of directed graphs $G_1^r, \dots , G_k^r$, such that:
\begin{enumerate}
    \item The total number of edges and vertices in all $G_i^r$'s is linear in the number of edges and vertices of $G$.
    \item Each vertex $v$ of $G$ has an index $i(v)$, such that $v$ does not belong to any $G_i^r$ other than $G_{i(v)}^r$ $G_{i(v)-1}^r$, and $G_{i(v)-2}^r$. Moreover, for any path $P$ of length at most $r$ that starts at $v$, $P$ is a path in $G$ if and only if $P$ is a path in at least one of the three $G_i^r$'s containing $v$.
    \item Each $G_i^r$ is a $(3,r)$-layered graph.
    \item $G_i$ is a minor of $G$. That is, $G_i$ is obtained from $G$ by deletion of edges and vertices, and contraction of edges.
\end{enumerate}
\end{lemma}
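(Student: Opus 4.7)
The plan is to implement Thorup's standard distance-scaling construction and verify the four properties. First, remove every edge of weight greater than $r$, since no such edge can lie on any path of length at most $r$; this affects neither the set of paths of length $\le r$ nor any of the claimed properties. Pick an arbitrary source $s$ in $G$ (adding an artificial super-source with zero-weight edges to every vertex if $G$ is not strongly connected), compute a single-source shortest-path tree $T$ in $G$ rooted at $s$, and let $\delta(v) = d_G(s,v)$. Partition the vertices into slabs $L_j = \{v : \delta(v) \in [jr, (j+1)r)\}$. For each index $i$, define $G_i^r$ as the minor of $G$ obtained by contracting $L_0 \cup \cdots \cup L_{i-3}$ (which is closed under taking $T$-parents after the removal of heavy edges, hence connected in $T$) into a single super-source $s_i^{\ast}$, and deleting $L_{i+1} \cup L_{i+2} \cup \cdots$ together with their incident edges. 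For $v \in L_j$ set $i(v) = j+2$, so that $v$ appears in exactly $G_j^r, G_{j+1}^r, G_{j+2}^r = G_{i(v)-2}^r, G_{i(v)-1}^r, G_{i(v)}^r$.

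Items (1), (4), and the indexing portion of (2) follow directly from the construction: each $G_i^r$ is a minor of $G$ by design, every vertex of $G$ lies in exactly three $G_i^r$'s (so the total size is linear), and the assignment $i(v) = j+2$ matches the required indexing. For the path-preservation portion of (2), consider any path $P$ of length at most $r$ starting at $v \in L_j$. Every vertex $u$ on $P$ satisfies $\delta(u) \le \delta(v) + r < (j+2)r$, so $u \in L_0 \cup \cdots \cup L_{j+1}$, which is disjoint from the deleted slabs of $G_{j+1}^r = G_{i(v)-1}^r$. Every edge of $P$ therefore survives in $G_{j+1}^r$, either as an original edge between non-contracted vertices, or as an edge incident to $s_{j+1}^{\ast}$ when at least one endpoint was contracted. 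Hence $P$ is a valid walk in $G_{j+1}^r$, one of the three $G_i^r$'s containing $v$. The converse direction is immediate from $G_i^r$ being a minor of $G$. Crucially, the argument uses only the length bound on $P$, not the fact that $P$ is a shortest path, which yields the strengthening mentioned in the footnote.

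For item (3), take as spanning tree of $G_i^r$ the image $T_i$ of $T$ under the contraction and deletion. A root-to-vertex path in $T_i$ from $s_i^{\ast}$ has total length at most $3r$: it consists of one super-source edge of weight $\le r$ into $L_{i-2}$ followed by a $T$-subpath traversing at most three slabs of width $r$ each. Partitioning this path greedily at vertex boundaries and using that every remaining edge has weight at most $r$ breaks it into at most three consecutive sub-paths each of length at most $r$; each sub-path is a subpath of the shortest $s$-to-endpoint path in $G$ (modulo identifying the contracted prefix with $s_i^{\ast}$), hence itself a shortest directed path in $G_i^r$. This gives the $(3,r)$-layered property.

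The main technical obstacle lies in the partition in item (3) into three pieces of length $\le r$; slab-boundary-crossing edges and the contracted super-source edge together threaten to push the count past three, and this is precisely why the preliminary removal of edges of weight $> r$ and the particular choice of three slabs per $G_i^r$ are necessary---matching the exact accounting given in Thorup's original argument. A secondary point is verifying that contractions preserve planarity (automatic, since $L_0 \cup \cdots \cup L_{i-3}$ is a connected subtree of $T$) and do not inflate the total edge count. The whole construction runs in linear time via one planar SSSP computation plus a single traversal of $T$ to form the slabs, super-sources, and minors.
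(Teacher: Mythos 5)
There is a genuine gap, and it is exactly at the point where directedness matters. (Note also that the paper itself does not prove this statement -- it is quoted from Thorup -- so the comparison is with Thorup's construction.) Your layering is the standard \emph{undirected} scaling decomposition: slabs $L_j=\{v:\delta(v)\in[jr,(j+1)r)\}$ of a single-source shortest-path tree. First, your fix for non-strong-connectivity (a super-source with zero-weight edges to \emph{every} vertex) makes $\delta(v)=0$ for all $v$, so all vertices land in $L_0$ and the decomposition collapses. Second, and more fundamentally, even with a legitimate source the path-preservation step of item (2) fails in a digraph: for a path $P$ of length at most $r$ from $v\in L_j$ to $u$ you correctly get $\delta(u)\le\delta(v)+r$, but there is no lower bound $\delta(u)\ge\delta(v)-r$ (that inequality needs a short $u$-to-$v$ path, which directedness does not provide). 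So $P$ may dive into $L_0\cup\dots\cup L_{j-2}$, whose vertices are contracted into $s^{\ast}_{j+1}$ in $G^r_{j+1}$; then $P$ is not a path in any of the three graphs containing $v$, and the ``if and only if'' of item (2) is false for your graphs. Your sentence ``hence $P$ is a valid walk in $G^r_{j+1}$'' silently replaces the path $P$ by a walk through the contracted supernode, which is not what the lemma asserts (and is also what would break the later use of the lemma, where the labels must certify $P$ itself).

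Thorup's construction avoids this by building the layers with \emph{alternating} forward and backward growth: $W_{i+1}$ is obtained from $W_i$ by adding, alternately, all vertices reachable \emph{from} $W_i$ within distance $r$, or all vertices that can \emph{reach} $W_i$ within distance $r$; the layers are $L_i=W_i\setminus W_{i-1}$, and $G^r_i$ consists of a constant number of consecutive layers with the earlier ones contracted to a root. The alternation is precisely what guarantees that a path of length at most $r$ starting in $L_j$ can neither fall back into $W_{j-2}$ nor escape $W_{j+1}$ (one direction is enforced by the parity of the step that created $L_{j-1}$ or $L_{j+1}$, the other by taking one extra layer -- hence three graphs per vertex), and it is also where the constant $3$ in the $(3,r)$-layered property comes from: root paths in the resulting tree are concatenations of at most three directed shortest paths of length at most $r$, alternating in orientation. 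Your item-(3) accounting has a secondary problem of the same flavor: a purely forward SSSP tree gives a super-source edge of weight up to $r$ plus a residual shortest path of length up to almost $3r$, which does not obviously decompose into only three segments of length at most $r$. To repair the proof you would need to replace the single-source slab layering by this alternating forward/backward ball-growing, after which your verification of items (1), (2) and (4) goes through essentially as you wrote it.
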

We invoke \cref{lem:ThorupScale} at exponentially growing scales $r$ (i.e., all powers of $2$ up to $2nM$ where $M$ is the largest edge-weight in the graph).
To each of the resulting graphs $G_i^r$, we will apply a labeling scheme for reporting the length of an $s$-to-$t$ path in $\Gf$ that is with additive error at most $\frac{\eps}{2} \cdot r$ with respect to the length of the shortest such path.
By item (2) of the Lemma, the distance reported for one of the three graphs containing $s$ at the scale $r$ satisfying $2\dist_{\Gf}(s,t) \geq r \geq \dist_{\Gf}(s,t)$, is a $(1+\eps)$ multiplicative approximation of the correct answer, and no distance reported for any graph containing $s$ will be smaller than the correct answer.

Hence, from now on we focus on describing the labeling scheme with additive error $\eps r$ for a $(3,r)$-layered directed planar graph.
The distance labeling scheme generalizes the reachability labeling scheme, but not in a black box manner.
We shall follow the overall structure of the reachability labels. In particular, the graph will be recursively decomposed using fundamental cycle separators, but now we use the the $(3,r)$-layered spanning tree $T$ guaranteed by item (3) in \cref{lem:ThorupScale}.
Each separator $Q$ still consists of a constant number of directed paths, but now this constant is larger (the spanning tree we use now is $3$-layered rather than $2$-layered for reachability), and now these paths are shortest paths, each of length at most $r$.
We further break each of these paths into $O(1/\eps)$ subpaths, each of length at most $\eps r$. Hence, from now on we treat every separator $Q$ as a set of $O(1/\eps)$ directed shortest paths of length $\eps r$ each.

To explain the high level idea of our approach,  let us describe how to generalize from reachability to approximate distances in the {\em non-faulty} case.
Assume we just want to approximate the length of a shortest path from $s$ to $t$ under the assumption that this shortest path intersects a particular separator path $P$.
In the case of reachability, we could deduce the answer by comparing $\first{G}{s}{P}$ and $\last{G}{P}{t}$.
To generalize to approximate distances, let $\dfirst{G}{s}{P}{\alpha}$ (resp., $\dlast{G}{t}{P}{\alpha}$) denote the first (resp., last) vertex of $P$ that is reachable from $s$ (resp., can reach $t$) in $G$ via a path of length at most $\alpha$. We store $\dfirst{G}{s}{P}{i \eps r}$ and $\dlast{G}{t}{P}{i\eps r}$ for every integer $0 \leq i \leq 1/\eps$.
To answer the query, we find $i,j$ minimizing $(i+j)$ such that $\dfirst{G}{s}{P}{i \eps r} \leq_P \dlast{G}{t}{P}{j\eps r}$, and return $(i+j+1)\eps r$ as the estimated distance.

This is correct since $\dfirst{G}{s}{P}{i \eps r} \leq_P \dlast{G}{t}{P}{j\eps r}$ implies there exists an $s$-to-$t$ path of length at most $(i+j+1)\eps r$ (the $+1$ term is for going along the path $P$ whose length is at most $\eps r$).
To see the approximation guarantee, consider the shortest $s$-to-$t$ path $\Gamma$ that intersects $P$. Let $\alpha$ be the length of the prefix of $\Gamma$ ending at the earliest vertex of $P$ visited by $\Gamma$. Let $\beta$ be the length of the maximal suffix of $\Gamma$ that is internally disjoint from $P$. Then, for $i=\lceil\frac{\alpha}{\eps r} \rceil$, and $j=\lceil\frac{\beta}{\eps r} \rceil$,  $\dfirst{G}{s}{P}{i \eps r} \leq_P \dlast{G}{t}{P}{j \eps r}$, so we will return at most $(i+j+1)\eps r \leq \alpha + \beta + 3\eps r$, while the length of $\Gamma$ is at least $\alpha+\beta$.

We note that Thorup's distance labels use the idea of {\em portals} (a.k.a $\eps$-covers) which results in smaller labels and more efficient query algorithm.
Our more brute force approach incurs polynomial factors in $1/\eps$, but allows us to use the ideas of the fault-tolerant reachability mechanism from the previous sections.

Note, however, that our scheme for fault-tolerant reachability requires in some cases to break the $s$-to-$t$ path into more than two segments (yet still a constant number of segments), and requires additional modifications as we explain in the rest of this section. In the following sections, some of the text is a straightforward adaptation of the corresponding text for reachability implementing the high-level ideas above. However, in several places (which we will highlight) significant changes and additional ideas are required.

\subsection{Fault tolerant approximate distance labeling}

In this section, we describe our approximate distances labeling scheme. I.e., what to store in the labels so that given the labels of any three vertices $s,f,t$ we can approximate $\dist_{\Gf}(s,t)$.
We follow the structure of the reachability labels from \cref{sec:reach_labels}, and adjust it to the more complicated task of approximate distances. Our main result is the following theorem.

\begin{theorem}\label{thm:dist}
    There exists a labeling scheme for a planar graph $G$ that, given vertices $s,t,f$ returns a $(1+\eps)$-approximation of $\dist_{\Gf}(s,t)$.
    The size of each label is $\Opoly$.
\end{theorem}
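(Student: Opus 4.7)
The plan is to mirror, scale-by-scale, the structure of the reachability labeling from \cref{sec:reach_labels}, while replacing every primitive of the form ``\emph{first vertex on a separator path}'' with an approximate-budget analog, and handling the resulting slackness via the good-cross--bad-cross machinery sketched in the technical overview. First, I would apply \cref{lem:ThorupScale} at every scale $r = 2^i$ for $0 \le i \le \lceil \log(2nM)\rceil$. For each scale $r$, I build an independent labeling that takes vertices $s,t,f$ (all three of which lie in some common $G_j^r$) and returns a value that is at most $\dist_{\Gf}(s,t) + O(\eps r)$ and never below $\dist_{\Gf}(s,t)$. At query time one answers from every scale that contains $s$ and takes the minimum; at the ``correct'' scale $r$ with $r \le \dist_{\Gf}(s,t) \le 2r$, the additive $O(\eps r)$ error becomes a $(1+O(\eps))$ multiplicative error, which after rescaling $\eps$ gives the theorem. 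Each vertex participates in $O(1)$ graphs per scale and there are $O(\log(nM))$ scales, so the total label size is $\tilde O(1) \cdot \text{(per-scale label size)}$.

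Within a single scale, I decompose $G_j^r$ recursively using the $(3,r)$-layered spanning tree, so that every separator consists of $O(1)$ shortest paths of length $\le r$, and then I subdivide each such path into $O(1/\eps)$ consecutive subpaths of length $\le \eps r$. From now on, a ``separator path'' $P$ has $\len(P) \le \eps r$. I then follow the reduction chain of \cref{thm:reach}: the top-level decoding splits into the $f \notin P$ and $f \in P$ cases and is reduced, as in the reachability proof, to analogs of \cref{lem:stfopreach,lem:stopreach}. In the approximate setting, instead of recovering the single vertex $\first{\Gf}{s}{P}$, each analog lemma must, for every budget $\alpha \in \{0, \eps r, 2\eps r, \dots, O(r)\}$ and each $P$, recover a vertex $a^\star$ satisfying the $\ddfirst{\Gf}{s}{P}{\alpha}{\eps r}$ property together with an upper bound on $\dist_{\Gf}(s,a^\star)$ of value at most $\alpha + O(\eps r)$. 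Because one invokes these analogs only $O(1)$ times per query (as in reachability), the total additive error is still $O(\eps r)$.

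The main obstacle, and the place where real work is required beyond copying \cref{sec:reach_labels}, is the approximate-distance analog of each of the four underlying sub-lemmas (\cref{lem:atfopreach,lem:stoareach,lem:atopfreach,lem:ptopreach}). For $\LabelatfoP$ I would redefine the canonical paths $C_u$ so that the partition of $P'$ reflects not just which vertex of $P$ is reached first but approximately the cost to reach it, maintaining the same disjointness property; this requires storing $O(1/\eps)$ canonical structures per vertex of $P'$ rather than one. For $\LabelPtoP$ I would reinterpret the bypass/byway/detour structures as length-bounded detours and store, for each $f$, a nested hierarchy of $O(\log n / \eps)$ detours at geometrically shrinking sizes \emph{and} geometrically shrinking length budgets. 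The truly delicate piece, however, is the ``easy problem'' sub-routine (finding the approximate first-reachable vertex on $P_1$ from a vertex of $P_2$ using paths internally disjoint from $P$), where the naive adaptation of the reachability argument fails: here I deploy the good-cross--bad-cross selection procedure exactly as in the overview, producing a sequence $b_1 > b_2 > \cdots$ of representative vertices on $P_2$ together with certifying vertices $v_1, v_2, \dots$ on $P_1$, and using the slack created by each bad cross as a decreasing potential that cannot go below zero, so the sequence has length $O(1/\eps)$ and fits in the label of $f$. Analogous good-cross--bad-cross selection procedures, each tailored to its sub-problem, are then run inside the analogs of \cref{lem:atfopreach,lem:atopfreach,lem:ptopreach}.

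Putting the pieces together, each sub-labeling has per-vertex size $\text{poly}(1/\eps) \cdot \polylog(n)$ (a $1/\eps$ factor from subdividing $P$, a $1/\eps$ factor from enumerating budgets $\alpha$, and the $O(1/\eps)$ length of the good-cross sequence), and composition through $O(\log n)$ ancestor pieces inflates this only by $\polylog(n)$ factors. Multiplying by $O(\log(nM))$ scales gives the claimed $\Opoly$ label size. Correctness follows by induction on the recursion depth: at each separator crossing one pays at most $O(\eps r)$ additive error (from the length of the chosen $\eps r$-subpath plus the relaxation in the $\ddfirst$ guarantee plus the good-cross slack), and the total number of crossings charged to a single query is $O(1)$, so the final answer is within $O(\eps r)$ of $\dist_{\Gf}(s,t)$, which rescaling $\eps$ converts into the desired $(1+\eps)$ approximation.
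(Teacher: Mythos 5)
Your proposal follows essentially the same route as the paper: invoking \cref{lem:ThorupScale} at all powers-of-two scales, subdividing each separator shortest path into $O(1/\eps)$ subpaths of length $O(\eps r)$, replacing ``first'' by the budget-parameterized $\delta$-first relaxation, mirroring the reachability reduction through approximate analogs of \cref{lem:stfopreach,lem:stopreach}, and resolving the path-to-path sub-routines with the good-cross/bad-cross potential argument, with the same $\Opoly$ size accounting. The only detail-level divergence is that the paper does \emph{not} keep the canonical paths disjoint in the $\LabelatfoP$ analog; it instead proves each vertex lies on only $O(1/\eps)$ canonical paths, which is exactly the bound your ``$O(1/\eps)$ canonical structures per vertex'' accounting needs.
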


Before defining the analogous of \cref{lem:stopreach,lem:stfopreach} from \cref{sec:reach_labels}, we introduce the analog of $\first{F}{v}{P}$.
For a graph $F$, a vertex $v$, a path $P$, and a number $\alpha$, let $\dfirst{F}{v}{P}{\alpha}$ denote the first vertex on $P$ that is reachable from $v$ in $F$ via a path of length at most $\alpha$.
We also present the following relaxation of $\dfirst{F}{u}{P}{\alpha}$.

\begin{definition}[$\delta$-$\mathsf{first}$]\label{def:deltafirst}
Let $F$ be a directed  graph with edge weights from $\mathbb R^+$, and let $P$ be a path.
A vertex $v\in F$ is $\ddfirst{F}{u}{P}{\alpha}{\delta}$ if
\begin{enumerate}
    \item $v \leq_\P \dfirst{F}{u}{\P}{\alpha}$ and,
    \item $\dist_{F}(u,v)\le \alpha+\delta$.
\end{enumerate}
\end{definition}

We follow the reduction into two labeling schemes from a vertex to a path in
the presence of a failed vertex, one for the case that the failing vertex is not on the path, and the
other for the case that it is.

\begin{restatable}{lemma}{stofplabellem}\label{lem:stfoplabel}
    There exists a labeling scheme $\LabelstfoP=\LabelstfoP_{G,P,r,\alpha,\eps}$ where $G$ is a planar graph equipped with a decomposition tree $\mathcal T$,  $P$ is a path in $\mathcal T$ with $\len(P)=0$, and $r,\alpha,\eps\in \mathbb R^+$ such that $\alpha\le r$, and the length of every separator in $\mathcal T$ is bounded by $r$.
    Let $s$ and $f\notin P$ be two vertices of $G$ that are not an ancestor apex of one another, and such that $P$ is an ancestor path of both $s$ and $f$.
    Given the labels of $s$ and $f$, one can compute the index on $P$ of some vertex $b$ that is $\ddfirst{\Gf}{s}{P}{\alpha}{\eps r}$.
    In this labeling scheme, the only vertices that store a label are those that have $P$ as an ancestor. The size of each label stored by such a vertex is $\Opoly$.
\end{restatable}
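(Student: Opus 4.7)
My plan is to mirror the structure of the reachability proof of \cref{lem:stfopreach}, composing an approximate-distance variant of $\LabelatfoP$ (the distance analog of \cref{lem:atfopreach}, stated elsewhere in the paper, which reports a $\ddfirst{H\setminus f}{a}{P}{\alpha'}{\eps' r}$ for prescribed budget $\alpha'$ and tolerance $\eps' r$) with a distance-aware variant of $\Labelstoa$ that returns $\dfirst{H}{v}{P'}{\beta}$ rather than $\first{H}{v}{P'}$. The genuinely new ingredient compared to the reachability case is that we no longer know a priori how to split the total budget $\alpha$ between (a) the prefix of the replacement path from $s$ to the first vertex at which it hits a boundary path $P'$ of the piece $\Hsf$ containing $s$ but not $f$, and (b) the suffix that continues from that vertex to $P$ while avoiding $f$. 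I will overcome this by enumerating $O(1/\eps)$ candidate split values at granularity $\eps r$, which only inflates the label size by a $\operatorname{poly}(1/\eps)$ factor.

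Concretely, the label of a vertex $v$ with $P$ as an ancestor will contain two kinds of information. First, for every ancestor path $P'$ of $v$ that has $P$ as an ancestor and every integer $j\in\{0,1,\ldots,\lceil 1/\eps\rceil\}$, the label of $v$ in the approximate $\LabelatfoP$ scheme on $G$ with paths $P'$ and $P$, budget $\alpha_j := \max(0,\alpha - j\eps r)$, and additive tolerance $\eps r/2$. Second, for every ancestor piece $H$ of $v$, every boundary path $P'$ of $H$ (including $P' = P$ when $P$ is a boundary path of $H$), and every $j$, the vertex $a^{H,P'}_j := \dfirst{H}{v}{P'}{j\eps r}$ together with a copy of its first-kind label for the same $j$, exactly in the compositional style of the proof of \cref{lem:stfopreach}. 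To decode, given labels of $s$ and $f$, I locate $\Hsf'$ and $\Hsf$ exactly as in the reachability proof. Then, for every boundary path $P'$ of $\Hsf$ and every $j$, I retrieve $a_j := a^{\Hsf,P'}_j$ together with its approximate-$\LabelatfoP$ label from the label of $s$, combine with $f$'s matching approximate-$\LabelatfoP$ label, and obtain a candidate vertex $b_{P',j}$ that is a $\ddfirst{\Gf}{a_j}{P}{\alpha_j}{\eps r/2}$. I return the earliest such $b_{P',j}$ on $P$.

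For correctness, I pick $(P^*, j^*)$ where $P^*$ is the first boundary path of $\Hsf$ visited by a shortest path $R$ witnessing $\dfirst{\Gf}{s}{P}{\alpha}$ (taking $P^* = P$ if $R$ stays inside $\Hsf$), and $j^*$ is the smallest integer with $j^*\eps r \geq \alpha_1$, where $\alpha_1$ is the length of the prefix of $R$ up to its first vertex on $P^*$. Then $\dist_{\Gf}(s, a_{j^*}) \leq j^*\eps r \leq \alpha_1 + \eps r$ and $a_{j^*}$ lies no later on $P^*$ than the $P^*$-vertex used by $R$, so the budget $\alpha_{j^*} = \alpha - j^*\eps r$ suffices, up to the $\eps r/2$ tolerance of the underlying $\LabelatfoP$, to reach a vertex of $P$ no later than $\dfirst{\Gf}{s}{P}{\alpha}$. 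Triangle inequality then gives
\[
\dist_{\Gf}(s, b_{P^*, j^*}) \leq j^*\eps r + \alpha_{j^*} + \tfrac{\eps r}{2} \leq \alpha + \tfrac{\eps r}{2} < \alpha + \eps r,
\]
so $b_{P^*, j^*}$ satisfies both conditions of \cref{def:deltafirst}. The size bound is immediate: a vertex has $\tilde O(1)$ ancestor pieces and paths, we enumerate $O(1/\eps)$ values of $j$, and each atomic label has size $\Opoly$, for a total of $\Opoly$.

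The main obstacle is verifying the position condition $b_{P^*,j^*} \leq_P \dfirst{\Gf}{s}{P}{\alpha}$. The reachability proof handles this trivially because all reachability-style sets are closed under "taking an earlier vertex on a forward separator." For approximate distances, the argument must use that $a_{j^*}$ is earlier on $P^*$ than the $P^*$-vertex used by $R$, so concatenating a path from $a_{j^*}$ along $P^*$ to that vertex with the remainder of $R$ certifies the existence of an $a_{j^*}$-to-$\dfirst{\Gf}{s}{P}{\alpha}$ path of length at most $\alpha_{j^*}$ in $\Gf$ (using $\len(P) = 0$ and the $\eps r$ bound on separator subpaths to control the cost of the extra $P^*$-segment, which is then absorbed into the $\eps r/2$ tolerance of the underlying $\LabelatfoP$ by carefully propagating the approximation parameter).
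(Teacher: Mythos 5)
Your overall architecture matches the paper's: guess how the budget $\alpha$ splits on an $O(\eps r)$ grid, find an approximate first vertex on an intermediate separator subpath via a $\Labelstoa$-type label, and compose with an approximate $\LabelatfoP$ (plus a separate family of stored items for the case where the replacement path reaches $P$ without crossing the separator between $s$ and $f$). However, there is a genuine gap exactly at the step you flag as "the main obstacle," and the way you propose to close it does not work. In \cref{def:deltafirst}, the slack $\delta$ relaxes only condition (2) (the length of the path to the reported vertex); condition (1) compares against $\dfirst{F}{u}{P}{\alpha'}$ with the \emph{exact} budget $\alpha'$. With your coupled budgets $\alpha_{j}=\alpha-j\eps r$, the witness path from $a_{j^*}$ to $b^*:=\dfirst{\Gf}{s}{P}{\alpha}$ is $P^*[a_{j^*},a^*]\cdot R[a^*,b^*]$, whose length can be as large as $(\alpha-\alpha_1)+\eps r$, while the budget you hand to the second stage is $\alpha_{j^*}\le\alpha-\alpha_1$, and in fact can be smaller than $\alpha-\alpha_1$ by almost $\eps r$ (since $j^*\eps r$ may overshoot $\alpha_1$). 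So the available budget can fall short by up to $2\eps r$, in which case $\dfirst{\Gf}{a_{j^*}}{P}{\alpha_{j^*}}$ may lie strictly after $b^*$ on $P$, and the vertex you return, guaranteed only to be $\le_P$ that vertex, need not satisfy $b\le_P b^*$. The $\eps r/2$ tolerance cannot absorb this, because it never moves the comparison point in condition (1). A secondary problem of the same flavor: \cref{lem:atfoplabel} requires $P$ and $P'$ to be $0$-length paths, so you cannot instantiate it "on $G$" with a separator subpath $P'$ of true length up to $\eps r$; the connector $P^*[a_{j^*},a^*]$ then eats real budget.

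The paper avoids both issues by decoupling the budgets and by changing the graph, not the tolerance: it enumerates $\beta,\gamma$ independently over multiples of $\eps'\alpha$ subject only to $\beta+\gamma\le(1+2\eps')\alpha$, so each budget can round its true segment length \emph{up} (never down) and the witness path always fits; and it runs the second-stage $\LabelatfoP$ in $G^0_{P'}$, the copy of $G$ with the edges of $P'$ set to length $0$, so the connector along $P'$ costs nothing against $\gamma$, while its true length $\le\eps' r$ is charged only to the additive distance error. Your scheme becomes correct if you adopt these two modifications (independent budget pairs with total allowed to exceed $\alpha$ by $O(\eps\alpha)$, and zeroing $P'$ before applying \cref{lem:atfoplabel}); as written, the positional half of the claim is unproven and can fail.
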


\begin{restatable}{lemma}{stoplabellem}\label{lem:stoplabel}
    There exists a labeling scheme $\LabelstoP=\LabelstoP_{G,P,r,\alpha,\eps}$ where $G$ is a planar graph equipped with a decomposition tree $\mathcal T$, and $P$ is a path in $\mathcal T$ such that both endpoints of $P$ lie on the same face of $G$ and $\len(P)=0$ and $r,\alpha,\eps\in \mathbb R^+$ such that $\alpha\le r$, and the length of every separator in $\mathcal T$ is bounded by $r$.
    Let $s$ and $f\in P$ be two vertices of $G$ that are not an ancestor apex of one another, and such that $P$ is an ancestor path of both $s$ and $f$.
    Given the labels of $s$ and $f$, one can compute the indices on $P$ of some vertices $b_1$ and $b_2$ that are $\ddfirst{\Gf}{s}{P_1}{\alpha}{\eps r}$ and $\ddfirst{\Gf}{s}{P_2}{\alpha}{\eps r}$, respectively, where $P_1$ (resp. $P_2$) is the prefix (resp. suffix) of $P$ that precedes (resp. follows) $f$, excluding $f$.
    In this labeling scheme, the only vertices that store a label are those that have $P$ as an ancestor path. The size of each label stored by such a vertex is $\Opoly$.
\end{restatable}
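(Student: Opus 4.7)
The plan is to adapt the proof of~\cref{lem:stopreach} by composing three approximate-distance analogs of the building blocks used there: an approximate version of $\Labelstoa$ (\cref{lem:stoareach}), of $\LabelatoP$ (\cref{lem:atopfreach}), and of $\LabelPtoP$ (\cref{lem:ptopreach}). Each analog takes an additional budget parameter and returns an appropriate $\delta$-$\mathsf{first}$ vertex as in~\cref{def:deltafirst}. I will assume these three analogs are established in subsequent subsections, just as the three reachability sub-lemmas are used as black boxes in the proof of~\cref{lem:stopreach}; their proofs follow the good-cross/bad-cross framework outlined in the technical overview.

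Mirroring the argument in~\cref{lem:stopreach}, fix $s,f$ satisfying the hypotheses, let $\Hsf'$ be the leafmost piece in $\mathcal T$ containing both of them, and let $\Hsf$ be the child of $\Hsf'$ containing $s$ but not $f$. A shortest path in $\Gf$ from $s$ to any target on $P_j$ (for $j\in\{1,2\}$) of length at most $\alpha$ decomposes into at most three segments: an initial segment inside $\Hsf$ reaching some vertex $a$ on a boundary path $P'$ of $\Hsf$; a middle segment in the auxiliary graph $H^0_P$ (from the proof of \cref{lem:stopreach}) reaching a vertex $b$ on $P_1$ or on $P_2$; and a final segment in $\Gf$ from $b$ to the target on $P_j$, possibly traversing $P$ (for free, since $\len(P)=0$) and detours. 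Let $\alpha_1,\alpha_2,\alpha_3\ge 0$, with $\alpha_1+\alpha_2+\alpha_3\le\alpha$, denote their lengths. We discretize each $\alpha_i$ to multiples of $\eps r/6$, which yields $O(1/\eps^3)$ budget triples to enumerate.

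The label of a vertex $v$ is built by composition, in direct analogy with \cref{lem:stopreach}. For every ancestor path $P'$ of $v$ having $P$ as its own ancestor and every discretized $\alpha_1$, $v$ stores the approximate-$\Labelstoa$ label in $\Hsf$ with budget $\alpha_1$, which embeds the approximate-$\LabelatoP$ label of the returned vertex for every discretized $\alpha_2$. For $v\in P$ and every discretized $\alpha_2$, $v$ stores its approximate-$\LabelatoP$ label (which itself embeds, for every discretized $\alpha_3$, the approximate-$\LabelPtoP$ label of the returned vertex), and for every discretized $\alpha_3$ the approximate-$\LabelPtoP$ label of $v$ itself. Apex vertices are handled by dedicated direct entries listing the relevant $\delta$-$\mathsf{first}$ vertices over all discretized budgets, exactly as in item (\ref{it:reach:apex}) of the proof of \cref{thm:reach}. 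Since each vertex has $\tilde O(1)$ ancestor paths and each inner sub-label is $\Opoly$, the total label size is $\Opoly$. At query time we identify $\Hsf$ as in~\cref{lem:stopreach}, enumerate over all discretized triples $(\alpha_1,\alpha_2,\alpha_3)$ summing to at most $\alpha$, run the three-step chain for each, and return the $P_j$-earliest candidate as $b_j$. Correctness follows because rounding the true split $(\alpha_1^*,\alpha_2^*,\alpha_3^*)$ up to the nearest multiples of $\eps r/6$ inflates the total by at most $\eps r/2$, while each of the three sub-lemmas contributes an additional $\eps r/6$ of additive slack, summing to at most $\eps r$.

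The main obstacle I anticipate is the case split on whether the intermediate vertex $b$ lies on $P_1$ or on $P_2$: when computing $b_1$, the actual shortest path may first reach a vertex $b\in P_2$ and then detour in $\Gf$ back to $P_1$ without crossing $f$. This forces us to run the chain with both $P_2$ and $P_1$ as the intermediate target and, in the former case, invoke the approximate-$\LabelPtoP$ step to cross back to $P_1$, and symmetrically for $b_2$. A secondary subtlety is that the apex interactions during composition must be managed by direct storage as in~\cref{thm:reach}, now indexed over all discretized budget values so that correctness of the composed lookups is preserved without inflating the label size beyond $\Opoly$.
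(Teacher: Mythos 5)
Your overall route is the paper's: guess a budget split over an $O(\eps r)$-grid, compose approximate versions of $\Labelstoa$, $\LabelatoP$ and $\LabelPtoP$ (these are \cref{lem:stoalabel,lem:atopflabel,lem:ptoplabel}), and return the $\le_P$-earliest candidate. However, there is a case your three-step chain cannot handle and which the paper treats with a dedicated component: the replacement path from $s$ may reach $P$ \emph{without ever touching the separator $Q$ that separates $s$ from $f$} (this is possible because $P$ is an ancestor path, so portions of $P$ lie on $\partial\Hsf$), i.e., the first vertex of the path on $Q$ does not exist. In that situation the first boundary path the path meets is a subpath of $P$ itself, so the $\LabelatoP$ step is inapplicable (\cref{lem:atopflabel} requires $P\cap P'=\emptyset$), and your labels give the decoder no way to continue: your $\Labelstoa$ entries embed $\LabelatoP$ labels, not $\LabelPtoP$ labels, and the $\LabelPtoP$ label of the first-reached vertex of $P$ is stored at that vertex, not at $s$. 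The paper's proof stores, for every maximal subpath $P^*$ of $P$ on $s$'s side of $Q$ and every budget $\gamma$, the vertex $b^\gamma_{Q,P^*}=\dfirst{G'_Q}{s}{P^*}{\gamma}$ \emph{together with its} $\LabelPtoP$ label, and the decoder runs a two-step chain for these candidates (Case~1 of the paper's correctness argument). Without adding such entries, your scheme fails exactly on those queries.

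A second, quantitative gap is in the error accounting. You never subdivide the separator paths into $O(\eps r)$-length subpaths (equivalently, re-weight $P'$ to length $0$ and charge its true length), yet the ``middle point'' argument needs to walk along $P'$ from the vertex returned by the $\Labelstoa$ step to the true crossing point of the replacement path; a separator path in $\mathcal T$ only satisfies $\len(P')\le r$, so that detour can cost up to $r$, and the claim that ``each of the three sub-lemmas contributes an additional $\eps r/6$'' is not justified as stated. The paper works with $\eps' r$-length subpaths $P'$ of $Q$, invokes \cref{lem:atopflabel} on a copy in which $\len(P')=0$ (its hypotheses require a $0$-length $P'$), and pays explicit $+\eps' r$ terms when transferring distances back to $G$, choosing $\eps'=\eps/7$ so that everything sums to at most $\alpha+\eps r$. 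Finally, your apex discussion is moot here: the lemma's hypotheses already exclude the case that $s$ and $f$ are ancestor apices of one another, and apices are handled once, in the proof of \cref{thm:dist}, not inside this lemma.
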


We prove the above two lemmas in \cref{sec:stfoP,sec:stoPf}.
Given these lemmas, we now prove our main theorem.

\begin{proof}[Proof of \cref{thm:dist}]
Let $G_{rev}$ be the graph obtained from $G$ by reversing all edges ($G_{rev}$ has exactly the same decomposition tree as $G$, but the direction of each path is reversed).
Let $\eps' = \frac{\eps}{10}$ be an approximation parameter, and for every $r\in\mathbb R^+$ let $\Gamma_r=\{ i\eps' r \mid i\in [\ceil{\frac{1}{\eps'}}] \}$.

Let $v$ be a vertex.
For each $r$ that is a power of $2$ between $1$ and $2nM$, for every $i$ such that $v\in G^r_i$, and for every $\alpha,\beta\in\Gamma_r$ the label of $v$ stores the following:
\begin{enumerate}
\item \label{it:dist:thorup} \hlgray{For each piece $H$ in the recursive decomposition $\mathcal T$ of $G^r_i$ such that $v \in H \setminus \partial H$, $v$ stores its label in the standard (non-faulty) approximate distance labeling of Thorup for $H\setminus \partial H$.}

\item \label{it:dist:last} \hlgray{For every ancestor piece $H$ in $G^r_i$, $v$ stores $\dfirst{H}{v}{\P}{\beta}$ for each of the $\Opoly$ paths $\P$ of $\partial H$.}

\item \label{it:dist:stfop} Using \cref{lem:stfoplabel}, \hlgray{for every ancestor path $P$ of $v$ in the recursive decomposition $\mathcal T$, $v$ stores $\LabelstfoP_{G',P,r,\alpha,\eps'}(v)$ where $G'$ is obtained from $G^r_i$ by setting $\len(P)=0$}.

\item \label{it:dist:stop} Using \cref{lem:stoplabel}, \hlgray{for every ancestor piece $H$ of $v$ in $\mathcal T$, for every path $P$ of the separator $C$  of $H$, $v$ stores $\LabelstoP_{H^{\times P},P,r,\alpha,\eps'}(v)$},  where $H^{\times P}$ is the graph obtained from $H\setminus \partial H$ by making an incision along all the edges of $C$ other than those of $P$ and setting $\len(P)=0$.
Note that, because of the incision, the endpoints of $P$ lie on a single face of $H^{\times P}$, so \cref{lem:stoplabel} indeed applies.

\item \label{it:dist:apex} \hlgray{For every ancestor apex $a$ of $v$, for every ancestor path $\P$ of $v$, $v$ stores in its label $\dfirst{G^r_i\setminus \{a\}}{v}{\P}{\alpha}$, and  $\dfirst{G^r_i\setminus \{v\}}{a}{\P}{\alpha}$.}
If $a$ lies on $P$ then let $P_2$ be the suffix of $P$ after $a$ (not including $f$).
$v$ also stores \hlgray{$\dfirst{G^r_i\setminus \{a\}}{v}{P_2}{\alpha}$}.
Similarly, if $v$ lies on $P$ then let $P_2$ be the suffix of $P$ after $v$ (not including $v$).
$v$ also stores \hlgray{$\dfirst{G^r_i\setminus \{v\}}{a}{P_2}{\alpha}$}.

\item \hlgray{$v$ additionally stores all the above items in the graph $(G^r_i)_{rev}$ instead of the graph $G^r_i$.}
\end{enumerate}

\paragraph{Size.} Each vertex participates in $O(\log M n)$ graphs $G^r_i$, and $|\Gamma_r|=O(1/\eps)$.
Thus, for each $G^r_i$ such that $v\in G^r_i$ and for every $\alpha,\beta\in \Gamma_r$, $v$ has only $\Opoly$ ancestor pieces, paths and apices, and by \cref{lem:stfoplabel,lem:stoplabel}, all items above sum up to a label of size $\Opoly$.

\paragraph{Decoding and Correctness.}
Let $R$ be a shortest path from $s$ to $t$ in $\Gf$.
By the second property of \cref{lem:ThorupScale}, there exists a graph $G^r_i$ such that $R$ is contained in $G^r_i$.
If $f\notin G^r_i$ we query Thorup's non-faulty approximate distances labels for $G^r_i$ (stored in item (\ref{it:dist:thorup})).
This will output a $(1+\eps')$-approximation of $\dist_{G^r_i}(s,t)= \dist_{\Gf}(s,t)$.
Otherwise, let $\Htf$ be the rootmost piece in $\mathcal T$ whose separator $Q$ separates $t$ and $f$.
Let $H$ be a child piece of $\Htf$ that contains $t$ (if both children of $\Htf$ contain $t$ then, if one of the children does not contain $f$ we choose $H$ to be that child).
Note that by choice of $H$, $f \notin H \setminus \partial H$.

We assume without loss of generality that $s \in \Htf$. We handle the other case symmetrically to the description below, by swapping the roles of $s$ and $t$ and working in $(G^r_i)_{rev}$ instead of in $G^r_i$.

Observe that by definition of $H$ and of separation, $f \in \partial H$ iff $f \in \Q$.
Consider first the case when $R$ does not touch $\partial H$. i.e., $s,t$ and $R$ are all contained in $H\setminus \partial H$, and $f$ is not contained in $H \setminus \partial H$. In this case, querying Thorup's non-faulty labels for $H\setminus \partial H$ (stored in item (\ref{it:dist:thorup})) will output a $(1+\eps')$-approximation of $\dist_{H\setminus \partial H}(s,t)=\dist_{\Gf}(s,t)$.

To treat the case where the replacement path $R$ touches $\partial H$, we separately handle the cases where $f \notin \Q$ and $f \in Q$.

\paragraph{When $f \notin \Q$ (and so, $f \notin \partial H$).}
In this case, $R$ must have a suffix contained in $H$, and this suffix is unaffected by the fault $f$.
Let $a$ be the last vertex on $R$ which is on $Q$, and let $\alpha$ and $\beta$ be the smallest numbers in $\Gamma_r$ with $\alpha \ge \len(R[s,a])$ and $\beta\ge \len(R[a,t])$.
Let $P$ be the subpath of $Q$ that contains $a$.
If $R$ exists then $\dfirst{G^r_i\setminus\{f\}}{s}{\P}{\alpha} <_P \dfirst{H_{rev}}{t}{\P}{\beta}$.
The vertex $\dfirst{H_{rev}}{t}{\P}{\beta}$ is stored in item (\ref{it:dist:last}) of the label of $t$ in $(G^r_i)_{rev}$.
Notice that by the rootmost choice of $\hat H$, $H$ is an ancestor piece of $t$, so $t$ indeed stores $\dfirst{H_{rev}}{t}{\P}{\beta}$.
It thus remains only to describe how to find $\dfirst{G^r_i\setminus\{f\}}{s}{\P}{\alpha}$ from the labels of $s$ and $f$.
If either $s$ or $f$ store $\dfirst{G^r_i\setminus\{f\}}{s}{P}{\alpha}$ in item (\ref{it:dist:apex}), we are done.
Otherwise neither $s$ nor $f$ is an ancestor apex of the other, and since both $s$ and $f$ are in $\Htf$, $\P$ is indeed an ancestor path of both $s$ and $f$, so, by \cref{lem:stfoplabel}, a vertex $a'$ that is an $\ddfirst{G^r_i\setminus\{f\}}{s}{\P}{\alpha}{\eps'r}$ can be obtained from  $\LabelstfoP_{G',P,r,\alpha,\eps'}(s)$  (stored in item (\ref{it:dist:stfop}) of the label of $s$) and $\LabelstfoP_{G',P,r,\alpha,\eps'}(f)$  (stored in item (\ref{it:dist:stfop}) of the label of $f$).
The decoding algorithm will recognize that indeed $a' \le_P \dfirst{H_{rev}}{t}{\P}{\beta}$ and deduce that there exists a path of length at most $(\alpha+\eps'r)+\eps'r+\beta$ from $s$ to $t$ in $\Gf$.
Thus, the distance we output is
\begin{align*}
(\alpha+\eps'r)+\eps'r+\beta&\le (\alpha+\beta)+2\eps'r
\\&\le|R|+4\eps'r
\\&\le|R|+8\eps'|R|
\\&\le(1+\eps)|R|= (1+\eps)\dist_{\Gf}(s,t).
\end{align*}

\paragraph{When $f \in \Q$.}
Let $\P$ be the path of $\Q$ that contains $f$.
Consider first the case where the path $R$ touches some path $\PP\neq P$ of $\Q$ or of the boundary of some ancestor of $H$. Since boundary paths are vertex disjoint, $f\in \P$ implies $f \notin \PP$.
Hence, we can obtain a vertex $a'$ that is an $\ddfirst{G^r_i\setminus\{f\}}{s}{\PP}{\alpha}{\eps'r}$ in a similar manner to the case $f \notin Q$ above, with $\hat P$ taking the role of $P$.
In an analogous manner, we can obtain a vertex $b'$ that is a $\ddfirst{(G^r_i)_{rev}\setminus\{f\}}{t}{\PP}{\alpha}{\eps'r}$ as we just found $a'$, but with $\hat P$ taking the role of $P$, $t$ taking the role of $s$, and  $(G^r_i)_{rev}$ taking the role of $G^r_i$ (we cannot use part (\ref{it:dist:last}) of the label of $s$ or $t$ in this case because now $f$ does belong to $\partial H$).

Now consider the case where other than $P$, $R$ does not touch any path of $Q$ or any path of the boundary of an ancestor of $H$. In this case, it is valid to use labels in $\hat H^{\times P}$ instead of in $G^r_i$ because $R$ does touch $\partial \hat H$, and only crosses $Q$ at $P$.
Let $P_1$ and $P_2$ be the prefix and suffix obtained from $P$ by deleting $f$.
If either $s$ or $f$ is an ancestor apex of one another then
$\dfirst{G^r_i\setminus\{f\}}{s}{P_2}{\alpha}$ is stored in item (\ref{it:dist:apex}) of either $s$ or $t$, and, if $\dfirst{G^r_i\setminus\{f\}}{s}{P_1}{\alpha}$ exists, then it is equal to $\dfirst{G^r_i\setminus\{f\}}{s}{P}{\alpha}$, which is also stored in  item (\ref{it:dist:apex}) of either $s$ or $t$.
(If $\dfirst{G^r_i\setminus\{f\}}{s}{P}{\alpha}$ is not a vertex of $P_1$ then $\dfirst{G^r_i\setminus\{f\}}{s}{P_1}{\alpha}$ does not exist.)
If neither $s$ nor $t$ is an ancestor apex of the other, then $P$ is an ancestor path of both $s,t$ and $f$.
Let $a$ be the last vertex on $R$ which is on $P$, and let $\alpha$ and $\beta$ be the smallest numbers in $\Gamma_r$ where $\alpha \ge \len(R[s,a])$ and $\beta \ge \len(R[a,t])$.
We use the labels $\LabelstoP_{\hat H^{\times P},P,r,\alpha,\eps'}(s), \LabelstoP_{\hat H^{\times P},P,r,\alpha,\eps'}(f), \LabelstoP_{(\hat H^{\times P})_{rev},P,r,\beta,\eps'}(f)$ and $\LabelstoP_{(\hat H^{\times P})_{rev},P,r,\beta,\eps'}(t)$ to obtain the following vertices:
\begin{enumerate}
    \item $a_1$ which is an $\ddfirst{\hat H^{\times P}}{s}{P_1}{\alpha}{\eps'r}$
    \item $a_2$ which is an $\ddfirst{\hat H^{\times P}}{s}{P_2}{\alpha}{\eps'r}$,

    \item $b_1$ which is an $\ddfirst{(\hat H^{\times P})_{rev}}{s}{P_1}{\beta}{\eps'r}$
    \item $b_2$ which is an $\ddfirst{(\hat H^{\times P})_{rev}}{s}{P_2}{\beta}{\eps'r}$
\end{enumerate}
The decoding algorithm will recognize that indeed we have  $a_1 \le_P b_1$ (if $a\in P_1$) or $a_2\le_P b_2$  (if $a\in P_2$) and deduce that there exists a path of length at most $(\alpha+\eps'r)+\eps'r+(\beta+\eps'r)$ from $s$ to $t$ in $\Gf$.
Thus, the distance we output is
\begin{align*}
(\alpha+\eps'r)+\eps'r+(\beta+\eps'r)&\le (\alpha+\beta)+3\eps'r
\\&\le|R|+5\eps'r
\\&\le|R|+10\eps'|R|
\\&\le(1+\eps)|R|= (1+\eps)\dist_{\Gf}(s,t).
\end{align*}
Notice that the decoding algorithm does not know a-priori the correct values of $G^r_i$, $\alpha$ and $\beta$.
Thus, the algorithm iterates over all possible values and returns the minimum distance found.
Clearly, every distance found for some values implies a path in $\Gf$, thus the algorithm will always return at least $\dist_{\Gf}(s,t)$, and as we proved it will always be at most $(1+\eps)\dist_{\Gf}(s,t)$.
\end{proof}

\subsection{The $\LabelstfoP$ labeling}\label{sec:stfoP}

In this section we provide a labeling scheme $\LabelstfoP$, proving \cref{lem:stfoplabel}.
Our labeling scheme makes use of the following two auxiliary labeling schemes:

\begin{restatable}{lemma}{stoalabellem}\label{lem:stoalabel}
    There exists a (trivial) labeling scheme    $\Labelstoa=\Labelstoa_{H, P,\alpha}$  where $H$ is a planar graph and $P$ is a $0$-length path.
    Given the label of a vertex $a$, one can retrieve the vertex $\dfirst{H}{a}{P}{\alpha}$.
    The size of each label is $\Opoly$.
\end{restatable}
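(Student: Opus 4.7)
The plan is essentially trivial, as acknowledged by the word ``trivial'' in the statement. For every vertex $a$ of $H$, I would store in its label $\Labelstoa_{H,P,\alpha}(a)$ the identity of the single vertex $\dfirst{H}{a}{P}{\alpha}$ (or a null symbol $\bot$ if no vertex of $P$ is reachable from $a$ in $H$ via a path of length at most $\alpha$). Since the labeling scheme is parameterized by the fixed tuple $(H, P, \alpha)$, there is no need to store any structural information about $H$ or $P$ in the label; the single vertex identity suffices.

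To compute these labels at preprocessing time, I would run a reverse single-source-style computation: one can, for example, iterate over vertices of $P$ or use Dijkstra from each vertex, but the specific preprocessing is immaterial to the statement. The query is simply to read the stored vertex from the label of $a$ and return it. Correctness is immediate from the construction.

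For the size bound, the label stores a single vertex identifier (or $\bot$), which requires $O(\log |V(H)|)$ bits; this is $\Oone \subseteq \Opoly$, matching the claimed bound.

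The main (and only) thing worth mentioning is the convention for the undefined case: when no vertex of $P$ is reachable from $a$ within budget $\alpha$, we set $\Labelstoa_{H,P,\alpha}(a) = \bot$ so that later composition (as used, e.g., in the proof of \cref{lem:stfoplabel}) can detect this situation. There is no genuine obstacle in this proof; it serves only as a notational convenience for cleanly composing this lookup with the more involved labeling schemes proved elsewhere in the paper.
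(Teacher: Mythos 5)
Your proposal matches the paper exactly: the paper treats this lemma as trivial, with each vertex simply storing the identity of the sought vertex on $P$ (the same storage-and-lookup scheme you describe), and your handling of the undefined case and the $O(\log n)$-bit size bound are consistent with how the lemma is used in the compositions elsewhere. No gaps.
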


\begin{restatable}{lemma}{atfoplabellem}\label{lem:atfoplabel}
    There exists a labeling scheme $\LabelatfoP=\LabelatfoP_{H,P,P',\alpha,\eps}$ where $H$ is a planar graph, $P$ and $P'$ are two $0$-length paths and $\alpha,\eps\in\mathbb R^+$.
    Given the labels of a vertex $a$ on $P'$ and a vertex $f$ not in $P'\cup P$, one can retrieve the index on $P$ of some vertex $b\in P$ such that $b$ is an $\ddfirst{H\setminus f}{a}{P}{\alpha}{\eps\alpha}$.
    The size of each label is $\Opoly$.
\end{restatable}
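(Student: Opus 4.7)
My plan is to follow the structure of the reachability proof of \cref{lem:atfopreach}, replacing the canonical paths based on equal $\mathsf{first}$ values by a partition of $P'$ produced by the good cross - bad cross framework from the technical overview. For every $v \in P'$ that can reach $P$ in $H$ within budget $\alpha$, let $p(v) = \dfirst{H}{v}{P}{\alpha}$ and fix a canonical shortest $v$-to-$p(v)$ path $D_v$ of length at most $\alpha$. Because $\len(P')=0$ and $P'$ is directed, $p$ is monotone non-decreasing along $P'$: from $v \le_{P'} w$, walking on $P'$ to $w$ is free and then $D_w$ yields a $v$-to-$p(w)$ path of length at most $\alpha$. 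The same argument shows that for any $u\in P'$, the ``good cross with $u$'' set $\{v\ge_{P'} u : \dist_H(v,p(u))\le (1+\eps)\alpha\}$ is automatically a contiguous prefix of $P'[u,\cdot]$.

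Next I greedily partition the relevant part of $P'$ into contiguous groups $I_1,\ldots,I_k$ by iteratively picking a ``cache anchor'' $u_j$ (the leftmost yet-unassigned vertex reaching $P$ within $\alpha$) and assigning to $I_j$ the maximal good-cross prefix of $P'[u_j,\cdot]$ with respect to $u_j$. By construction every $v\in I_j$ satisfies $p(u_j)\le_P p(v)$ and $\dist_H(v,p(u_j))\le (1+\eps)\alpha$, so $p(u_j)$ is a legitimate $\ddfirst{H}{v}{P}{\alpha}{\eps\alpha}$ whenever the witnessing path avoids $f$. To bound $k=O(1/\eps)$ I would use a planarity plus decreasing-potential argument analogous to the easy problem: every group transition is a bad cross between $u_{j+1}$ and $u_j$, planarity forces $D_{u_j}$ and $D_{u_{j+1}}$ to intersect, and the ``opposite'' concatenation (a prefix of $D_{u_j}$ through the intersection followed by the corresponding suffix of $D_{u_{j+1}}$) yields a cheap relay path of length at most $(1-\eps)\alpha$ that is then passed through subsequent bad-cross intersections, shrinking by $\eps\alpha$ per iteration until it cannot remain non-negative.

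The labels then mimic \cref{lem:atfopreach}: each $a \in I_j \subseteq P'$ stores its index on $P'$, the endpoints of $I_j$, the cached answer $p(u_j)$, and a backup $\dfirst{H\setminus D_{u_j}}{a}{P}{\alpha+\eps\alpha}$ for the case $f\in D_{u_j}$. Each $f\notin P\cup P'$ stores, for each of the at most $O(1/\eps)$ groups $I_j$ with $f\in D_{u_j}$: the endpoints of $I_j$, the value $\dfirst{H}{f}{P}{\alpha}$ together with the longest prefix of $I_j$ from which this vertex remains reachable in $H\setminus f$ within budget $\alpha+\eps\alpha$, and the candidate $\dfirst{H\setminus \{f\}}{w_j}{P}{(1+\eps)\alpha}$ based at the rightmost vertex $w_j$ of $I_j$ (usable by every $a\in I_j$ since $a\le_{P'} w_j$ and $P'$ walks are free forward). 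The total label size is $\Opoly$. Decoding identifies $I_j$ from the joint labels, returns $p(u_j)$ when $f\notin D_{u_j}$, and otherwise returns the earliest on $P$ among the three stored candidates, mirroring the three-case analysis of \cref{lem:atfopreach}.

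The main obstacle I anticipate is the $O(1/\eps)$ bound on the number of groups: unlike the easy problem where all cheap paths share a common endpoint, here the cheap paths join different $(u_j,p(u_{j+1}))$ pairs, so assembling a single strictly-decreasing potential requires carefully relaying each cheap path through the next bad-cross intersection using the planar arrangement of $P'$ and $P$ inside $H$. A secondary delicate step is verifying that candidates (ii) and (iii) satisfy both conditions of the $\delta$-$\mathsf{first}$ definition (being earlier on $P$ than the true first, and respecting the $\alpha+\eps\alpha$ length bound) once concatenated with $P'$ walks, since walking on $P'$ is free only in the forward direction.
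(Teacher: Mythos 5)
There are two genuine gaps, and they are precisely the points where your proposal deviates from the structure the paper is forced into. First, the claimed bound $k=O(1/\eps)$ on the \emph{total} number of groups is false: since $u_j <_{P'} u_{j+1}$ and $p(u_j)\le_P p(u_{j+1})$, the endpoints of $D_{u_j}$ and $D_{u_{j+1}}$ do not interleave, so planarity does \emph{not} force them to intersect, and with no intersection there is no cheap relay path. A ``comb'' instance (each vertex of $P'$ having its own private length-$\alpha$ path to a distinct vertex of $P$, all vertex-disjoint) produces $\Theta(n)$ groups, each a singleton. The paper never proves, and never needs, a global bound; it only bounds the number of canonical paths through any \emph{single} vertex $f$ (third property of \cref{def:canonicalpaths}, proved in \cref{lem:fewcanonical} via an $f$-bad-set potential argument in which the common vertex $f$ itself supplies the intersections), and its canonical paths are two-piece concatenations $A_{a,b}$ chosen by a greedy over the restricted pair set $L_\gamma$ rather than your anchor intervals.

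Second, and more fundamentally, anchoring each group at its \emph{leftmost} vertex $u_j$ breaks both halves of \cref{def:deltafirst} in the faulty cases. Since $u_j\le_{P'} a$ and $P'$ is traversable for free only forward, $a$ cannot reach $u_j$, so the good-cross witness from $a$ to $p(u_j)$ is an arbitrary path unrelated to $D_{u_j}$; hence ``$f\notin D_{u_j}$'' gives no guarantee that $\dist_{H\setminus f}(a,p(u_j))\le(1+\eps)\alpha$, and condition (2) can fail. Conversely, your candidate $\dfirst{H\setminus\{f\}}{w_j}{P}{(1+\eps)\alpha}$ at the rightmost vertex $w_j\ge_{P'}a$ fails condition (1): $w_j$ cannot reach back to $u_j$ or to $D_{u_j}$, so nothing forces it to be $\le_P \dfirst{H\setminus f}{a}{P}{\alpha}$; and a candidate based at $u_j$ fixes (1) only at the cost of a $2\alpha$-length witness, reintroducing the factor-2 loss. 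This is exactly the tension the paper resolves with machinery absent from your plan: canonical paths that start weakly \emph{after} $a$ on $P'$ (so the witness $P'[a,b]\cdot A_{a,b}$ avoids $f\notin P'$ whenever $f\notin A_{a,b}$), the guessed scale $\gamma$ together with the restriction to near-budget-saturating pairs $L_\gamma$ which powers \cref{clm:useS1} (the case where the replacement path meets the canonical path before $f$), and the separate $\Labelonlysuffix$ labeling of \cref{lem:onlysuffixLabeling} with its own $O(1/\eps)$-step sequence along the canonical path for the remaining case. Without these ingredients your three-candidate decoding cannot simultaneously satisfy both conditions of the $\delta$-$\mathsf{first}$ definition.
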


The proof of \cref{lem:stoalabel} is trivial and the proof of \cref{lem:atfoplabel} appears in \cref{sec:atfop}.

Conceptually, \cref{lem:stfoplabel} is obtained by composing \cref{lem:atfoplabel,lem:stoalabel}, as we describe in the following overview.
However, the concept of composing two label schemes introduces several technical details, making the proof presented below appear more intricate.

\begin{figure}[h]
\begin{center}
\includegraphics[width=0.3\textwidth]{vtfoP-overview}
\caption{An lustration of a path from $s$ to $b=\dfirst{\Gf}{s}{P}{\alpha}$.
The blue subpath is from $s$ to $a\in P'$, and the gray subpath is from $a$ to $b \in P$.
We use $\Labelstoa$ to find a good middle point for $a$ on $P'$, and $\LabelatfoP$ to find a good middle point for $b$ on $P$.
\label{fig:vtfoP}
}
\end{center}
\end{figure}

\paragraph{Overview (see \cref{fig:vtfoP}).}
Let $R$ be a shortest path from $s$ to $b=\dfirst{\Gf}{s}{P}{\alpha}$ in $\Gf$, and let $Q$ be a separator that separates $s$ and $f$.
We are interested in a special vertex on $R$: the first vertex $a^*$ on $R$ that is in $Q$, and specifically on some $P'$ which is an $\eps r$-subpath of $Q$.
We 'guess' estimations $\beta,\gamma$ for the lengths of $R[s,a^*]$ and $R[a^*,b]$, respectively.
In this overview, it is useful to consider a vertex $m$ that is $\ddfirst{\Gf}{s}{P}{\beta}{\eps r}$ as a sufficiently good 'middle point.
Specifically, if one aims to reach an early vertex on $P$ from $s$ within a budget $\alpha$, requiring the path to pass through $m$ and allowing the budget to exceed $\alpha$ by $\eps r$ does not worsen the result on $P$.
Using the labels $\Labelstoa$ with budget $\beta$, we find a sufficiently good 'middle point' $a$ for replacing $a^*$.
Starting from $a$, we use the labels $\LabelatfoP$ with budget $\gamma$ to find a vertex earlier on $P$ than $\dfirst{\Gf}{a}{P}{\gamma}\le_P \dfirst{\Gf}{a^*}{P}{\gamma}$.

Since each middle point is reachable with a budget increase of $O(\eps r)$, the final point is reachable with a budget increase of $O(\eps r)$.
Since every middle point is 'reasonably good', and the budget used in each phase is an estimation of the length of the corresponding  subpath of $R$, it is guaranteed that the final destination is at least as good as $b$.

We are now ready to provide the formal proof of \cref{lem:stfoplabel}.

\stofplabellem*
\begin{proof}
Let $\eps' = \frac{\eps}{4} \in \Theta (\eps)$ be an approximation parameter and let $\Gamma=\{ i\eps' \alpha \mid i\in [\ceil{\frac{1}{\eps'}}] \}$.
For a subpath $P'$ of a separator in the fully recursive decomposition of $G$, we denote as $G^0_{P'}$ the graph obtained from $G$ by setting the weight of every edge in $P'$ to be $0$.
For every $\gamma\in\Gamma$ let $G^1_{P',\gamma}$ be the graph $G$ where the label of each vertex $v$ is set to be $\LabelatfoP_{G^0_{P'},P,P',\gamma,\eps'}(v)$.
We also define $G^2_{P',\gamma}$ as the subgraph of $G^1_{P',\gamma}$ induced only on vertices below the separator of $P'$ in the fully recursive decomposition of $G$.

\paragraph{The Labeling.}
For every vertex $v$ such that $P$ is an ancestor of $v$ in $\mathcal T$, the label of $v$ stores
\hlgray{
for every triplet $(P',\beta,\gamma)$ such that $P'$ is an $\eps' r$-subpath ancestor of $v$  and $\beta,\gamma\in \Gamma$, the label $\Labelstoa_{G^2_{P',\gamma},P',\beta}(v)$ and the label of $v$ in $G^1_{P',\gamma}$.
Moreover, the label of $v$ stores the identifiers of the ancestor pieces of $v$ in the full recursive decomposition of $G$.
Finally, for every ancestor separator $Q$ of $v$ below $P$, the label of $v$ stores $b^\beta_Q= \dfirst{G'_{Q}}{v}{P}{\beta}$ where $G'_{Q}$ is the subgraph below the separator of $Q$, and the label $\LabelatfoP_{G,P,P,\gamma,\eps'}(b^\beta_Q)$.
\footnote{Notice that $\dfirst{G'_{Q}}{v}{P}{\alpha}$ is a slight abuse of notation, since $P$ is not contained in $G'_{Q}$. Note that the original definition of this notation can be applied similarly even if $P$ is only partially contained in the graph. In this context, $\dfirst{G'_{Q}}{v}{P}{\alpha}$ is the first vertex of $P$ obtainable from $v$ via a path in $G'_{Q}$ with length at most $\alpha$}
}

\paragraph{Size.}
There are $O(\log n)$ ancestor separators of $v$ in the fully recursive decomposition.
Each separator is partitioned into $O(\frac{1}{\eps})$ subpaths, so overall there are $\Opoly$ options for $P'$ for each vertex $v$.
It follows from $|\Gamma|= O(\frac{1}{\eps})$ that there are $\Opoly$ triplets $(P',\beta,\gamma)$.
For every such triplet, the label stores an $\Labelstoa$-type label in $G^2_{P',\gamma}$ and the label of a vertex in $G^2_{P',\gamma}$.
Each vertex in $G^2_{P',\gamma}$ is labeled with a $\LabelatfoP$-type label, which is of size $\Opoly$ due to \cref{lem:atfoplabel}. 
Due to \cref{lem:stoalabel}, the size of $\Labelstoa$ is $\Opoly$, which should be multiplied by another $\Opoly$ factor due to vertices of $G_{P',\gamma}$ each having labels of size $\Opoly$.
Finally, the height of the recursive decomposition is logarithmic, so storing the identifiers of ancestor pieces of $v$ and storing $b^\beta_Q$ and a $\LabelatfoP$-type label of $b^\beta_Q$ per ancestor separator $Q$ of $v$ increases the size of the label by $\Opoly$.

\paragraph{Decoding.}
Given the labels of two vertices $s$ and $f$ such that $P$ is an ancestor of $s$ and $f$, and none of $s$ and $f$ is an apex ancestor of the other in $\mathcal{T}$.
The algorithm finds an $\ddfirst{\Gf}{s}{P}{\alpha}{\eps r}$ vertex as follows.
The algorithm starts by finding the highest separator $Q$ in the recursive decomposition that has $s$ in one side, and $f$ strictly in the other side.
The label of $s$ stores $b^\beta_{Q} = \dfirst{G'_{Q}}{s}{P}{\alpha}$ and $\LabelatfoP_{G,P,P,\gamma,\eps}(b^\beta_Q)$ for every $\beta, \gamma \in \Gamma$. 
The label of $f$ stores $\LabelatfoP_{G,P,P,\gamma,\eps}(f)$.
We use the two labels to obtain $b^{\beta,\gamma}_Q$ that is a $\ddfirst{\Gf}{b^\beta_Q}{P}{\gamma}{\eps' \gamma}$.

Recall that $Q$ is partitioned into $O(\frac{1}{\eps})$ subpaths with length at most $\eps' r$ each.
For each such subpath $P'$, and for every $\beta,\gamma \in \Gamma$ such that $\beta+\gamma \le (1+2\eps')\alpha$, the algorithm uses $\Labelstoa_{G^2_{P',\gamma},P',\beta}(s)$ to find $a_{P',\beta}= \dfirst{G'_Q}{s}{P'}{\beta}$ where $G'_Q$ is the subgraph induced by vertices below $Q$ in the fully recursive decomposition.
More precisely, the label $\Labelstoa_{G^2_{P',\gamma},P',\beta}(s)$ allows the algorithm to retrieve $\ell_a = \LabelatfoP_{G^0_{P'},P,P',\gamma,\eps'}(a_{P',\beta})$.
Note that, the label of $f$ contains $\ell_f = \LabelatfoP_{G^0_{P'},P,P',\gamma,\eps'}(f)$.
The algorithm uses the labels $\ell_a$ and $\ell_f$ to retrieve a vertex $b_{P',\beta,\gamma}$ on $P$ that is an $\ddfirst{G^0_{P'}\setminus \{f\} }{a_{P',\beta}}{P}{\gamma}{\eps' \gamma}$.
Finally, the algorithm returns the vertex $b = \min_{\le P}\{ b_{P',\beta,\gamma},b^{\beta,\gamma}_Q \mid \beta,\gamma \in \Gamma , P' \textit{ is a subpath of }Q, \beta + \gamma \le (1+2 \eps')\alpha \}$.
That is, the first vertex on $P$ that was found over all choices of $P'$, $\beta$ and $\gamma$, or the first $b^{\beta,\gamma}_{Q}$ found on $P$.

\paragraph{Correctness.}
We show that when decoding the labels of two vertices $s$ and $f$, we indeed return a vertex $b$ that is an $\ddfirst{\Gf}{s}{P}{\alpha}{\eps r}$.
We start by showing that for every candidate $x$ found by the decoding algorithm for $b$, we have $\dist_{\Gf}(s,x) \le \alpha + \eps r$.
Let $Q$ be the lowest separator in the decomposition such that $s$ is in one side of $Q$ and $f$ is strictly in the other side.
Considered a vertex $b^{\beta,\gamma}_Q$ that is a candidate for being $b$ returned by the decoding algorithm.
Recall that $b^{\beta,\gamma}_Q$ is an $\ddfirst{\Gf}{b^\beta_Q}{P}{\gamma}{\eps' \gamma}$ with $b^\beta_Q=\dfirst{G'_Q}{s}{P}{\beta}$.
By definition, we have $\dist_{G'_{Q}}(s,b^\beta_{Q}) \le \beta$ and since $f\notin G'_Q$ we have $\dist_{\Gf}(s,b^\beta_Q) \le \beta$.
By definition, $\dist_{\Gf}(b^\beta_Q ,b^{\beta,\gamma}_Q) \le (1+\eps') \gamma$.
By triangle inequality we have $\dist_{\Gf}(s,b^{\beta,\gamma}_Q) \le \beta + \gamma + \eps' \gamma \le \alpha + 2\eps' \alpha + \eps' \alpha \le (1+\eps) \alpha$.

We now deal with candidates obtained as $b_{P',\beta,\gamma}$.
Recall that $b_{P',\beta,\gamma}$ is an $\ddfirst{G^0_{P'} \setminus \{f\}}{a_{P',\beta}}{P}{\gamma}{\eps' \gamma}$ with $a_{P',\beta} = \dfirst{G'_Q}{s}{P'}{\beta}$.
Since $f \notin G'_Q$ we have $\dist_{\Gf}(s,a_{P',\beta}) \le \dist_{G'_Q}(s,a_{P',\beta}) \le \beta$.
Since $b_{P',\beta,\gamma}$ is $\ddfirst{G^0_{P'}\setminus \{f\}}{a_{P',\beta}}{P}{\gamma^*}{\eps' \gamma}$, we have $\dist_{G^0_{P'}\setminus\{f\}}(a_{P',\beta}, b_{P',\beta,\gamma}) \le \gamma + \eps' \gamma$.
Since $G^0_{P'}$ is obtained from $G$ by reducing the weight of all edges of $P'$ to $0$, and since the total length of $P'$ in $G$ is at most $\eps' r$, we have that $\dist_{\Gf}(a_{P',\beta}, b_{P',\beta,\gamma}) \le \gamma + \eps' \gamma + \eps' r$.
From triangle inequality we get $\dist_{\Gf}(s,b_{P',\beta,\gamma}) \le \beta + \gamma + \eps' \gamma + \eps' r \le \alpha + \eps' (2\alpha + \gamma + r) \le \alpha + 4\eps' r = \alpha + \eps r$ as required.

We are now left with the task of proving $b \le_P \dfirst{\Gf}{s}{P}{\alpha}$.
Let $b^*= \dfirst{\Gf}{s}{P}{\alpha}$ and let $R$ be a shortest path from $s$ to $b^*$ in $\Gf$.
Let $a^*$ be the first vertex on $R$ that is also in $Q$.
If there is no $a^*$, let $b'$ be the first vertex on $R$ that is on $P$ and let $\beta^* = \min \{x\in \Gamma \mid x \ge \len(R[s,b'] \}$ and $\gamma^* = \min \{x\in \Gamma \mid x \ge \len(R[b',b^*] \}$.
Note that $\beta^* + \gamma^* \le (1+2\eps') \alpha$ and therefore the decoding algorithm computed some vertex $b^{\beta^*,\gamma^*}_Q$ as a candidate for $b$.
Recall that $b^{\beta^*,\gamma^*}_Q$ is an $\ddfirst{\Gf}{b^{\beta^*}_Q}{P}{\gamma}{\eps' \gamma}$ vertex for $b^{\beta^*}_Q = \dfirst{G'_Q}{s}{P}{\beta^*}$.
Note that $R[s,b']$ is a path with length at most $\beta^*$ that is completely contained in $G'_{Q}$.
Therefore, $b^{\beta^*}_{Q} =\dfirst{G'_{Q}}{s}{P}{\beta^*} \le_P b'$.
Since $P[b^{\beta^*}_Q,b']\cdot R[b',b^*]$ is a path of length at most $\gamma^*$ in $\Gf$, we also have $b^{\beta^*,\gamma^*}_Q \le_P \dfirst{\Gf}{b'}{P}{\gamma^*} \le_P b^*$.
This concludes the case where $a^*$ does not exist, since we have $b \le_P b^*$ due to $b$ being the $\le_P$ minimum in a set containing $b^{\beta^*,\gamma^*}_Q$.

We now treat the case where $a^*$ exists.
Let $P'$ be the subpath of $Q$ that contains $a^*$.
Let $\beta^* = \min\{x \in \Gamma \mid x \ge \len(R[s,a^*])\}$ and $\gamma^* = \min\{x \in \Gamma \mid x \ge \len(R[a^*,b^*])\}$, and notice that $\beta^* + \gamma^* \le \len(R) + 2\eps' \alpha \le (1+ 2\eps') \alpha$.
Therefore, the decoding algorithm computed some $b_{P',\beta^*,\gamma^*}$ that is $\ddfirst{G^0_{P'}\setminus \{f\} }{a_{P',\beta^*}}{P}{\gamma^*}{\eps' \gamma^*}$ with $a_{P',\beta^*} = \dfirst{G'_Q}{s}{P'}{\beta^*}$.

Since $R[s,a^*]$ is a path from $s$ to $P'$ in $G'_Q$ of length at most $\beta^*$, we have $a_{P',\beta^*} \le_{P'} a^*$.
Since $P'[a_{P',\beta^*},a^*] \cdot R[a^*,b^*]$ is a path from $a_{P',\beta^*}$ to $b^*$ in $G^0_{P'}\setminus \{f\}$ with length at most $\gamma^*$ (recall that the length of $P'$ in $G^0_{P'}$ is zero), we have $b_{P',\beta^*,\gamma^*} \le_P b^*$.
Due to the minimality of the returned vertex $b$ on $P$ across all choices of $P'$, $\beta$ and $\gamma$, we have that $b \le_P b_{P',\beta^*,\gamma^*} \le_P b^*$ as required.
\end{proof}

\subsection{The $\LabelstoP$ labeling}\label{sec:stoPf}

In this section we provide a labeling scheme $\LabelstoP$, proving \cref{lem:stoplabel}.
Our labeling scheme makes use of \cref{lem:stoalabel} and the following two auxiliary labeling schemes:

\begin{restatable}{lemma}{ptoplabellem}\label{lem:ptoplabel}
    There exists a labeling scheme $\LabelPtoP=\LabelPtoP_{G,P,\alpha,\eps}$ where $G$ is a planar graph, $P$ is a $0$-length path of $G$ whose two endpoints lie on the same face and $\alpha,\eps\in \mathbb R^+$.
    Given the labels of two vertices $b$ and $f$ on $P$ let $P_1$ and $P_2$ be the prefix and suffix of $P$ before and after $f$ (without $f$), respectively.
    One can compute the indices on $P$ of some vertices $b_1$ and $b_2$ that are $\ddfirst{\Gf}{b}{P_1}{\alpha}{\eps\alpha}$ and $\ddfirst{\Gf}{b}{P_2}{\alpha}{\eps\alpha}$, respectively.
    The size of each label is $\Opoly$.
\end{restatable}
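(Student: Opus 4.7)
The plan is to generalize the reachability labeling $\LabelPtoP$ of \cref{lem:ptopreach} to handle a length budget $\alpha$, splitting the task into the same four cases based on the relative order of $b$ and $f$ along $P$ and on whether the sought vertex lies on $P_1$ or $P_2$. Since $\len(P) = 0$, once a path enters $P$ it can slide along $P$ in the direction of $P$ for free, so the essential question reduces to finding the earliest vertex of $P_1$ (resp.\ $P_2$) at which one can land via a path in $\Gf$ of length at most $\alpha$, up to an additive cost slack of $\eps\alpha$.

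For the first building block, I would give a length-aware version of the auxiliary procedure of \cref{sec:auxiliary}. The reachability case works in an auxiliary graph of \emph{detours} and exploits their nested structure (\cref{claim:nested}) to store $O(\log n)$ entries per vertex of $P$. I would endow each detour with the length of its underlying path in $G$ and, at every $v\in P$, keep a two-dimensional nested family indexed by the detour's ``size level'' (geometric) and by ``cost bucket'' of granularity $\eps\alpha$, for a total of $\Opoly$ entries per vertex. The nesting still holds by the same planarity argument, and the meet-in-the-middle lookup between the labels of $b$ and $f$ recovers a vertex $p'$ reachable from $b$ in $G$ within cost $(1+O(\eps))\alpha$ via a path that avoids $P$ strictly before $f$.

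The more delicate building block is the analog of the $L_f, R_f, L_f^+, R_f^+$ (and byway) information stored in $f$. In the reachability scheme a single largest/smallest bypass per side suffices, but for approximate distances a single bypass is insufficient: a bypass of maximal span may be too costly while a smaller bypass may be cheaper but land later on $P$. Here I would invoke the \emph{good cross / bad cross} framework of the overview. For each side (left/right) and each target direction (before/after $f$), build an iterative sequence $\beta_1, \beta_2, \dots$ of bypasses of $f$ as follows: $\beta_1$ is chosen to yield the earliest landing point on $P$ achievable within cost $(1+\eps)\alpha$, and $\beta_{i+1}$ is the next bypass whose landing point is strictly earlier on $P$ than $\beta_i$'s but which is \emph{not} within a $(1+\eps)$-slack swap of $\beta_i$ (a ``bad cross''). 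A planarity-based crossing of length-$\alpha$ shortest paths realizing $\beta_i$ and $\beta_{i+1}$ yields, exactly as in the overview, a residual ``cheap'' path whose length serves as a monotonically decreasing potential that drops by at least $\eps\alpha$ per bad cross, so the sequence has length $O(1/\eps)$. I would then store this sequence together with the corresponding landing points in the label of $f$.

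Given the labels of $b$ and $f$, for each of the four cases the decoder runs the length-aware auxiliary procedure to obtain a candidate $p'$, inspects the $\Opoly$ stored entries of $f$ to find the bypass $\beta_i$ whose interval on $P$ covers the relevant index, and returns the $\le_P$-minimum of $p'$, of $p'$ ``shifted'' through each covering bypass, and of the landing points associated to $\beta_i$. Correctness reduces to the good cross / bad cross invariant: every true length-$\alpha$ replacement path from $b$ to its target either matches the auxiliary procedure (in which case $p'$ dominates) or uses a bypass of $f$ that is within $(1+\eps)$-slack of some stored $\beta_i$, whose landing point is then a valid $\ddfirst{\Gf}{b}{P_j}{\alpha}{\eps\alpha}$. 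The main obstacle will be the planarity case analysis: for each of the four cases, and on each side, one must argue that the length-$\alpha$ shortest paths realizing consecutive $\beta_i$'s cross in the way required by the potential argument, using the same-face condition on the endpoints of $P$ to force a left-entering bypass to re-enter from the left (and symmetrically on the right). Unlike the ``easy problem'' of the overview, $G$ is unrestricted here, so the clean side-restricted intersection argument is not directly available, and I expect the bulk of the formal proof to consist of a careful case-by-case planarity analysis that replays the potential argument and bounds each stored sequence at $O(1/\eps)$, yielding total label size $\Opoly$.
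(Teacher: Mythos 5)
There is a genuine gap, and it sits exactly where your sketch leans hardest on the reachability machinery. You claim that after endowing detours with lengths and bucketing costs at granularity $\eps\alpha$, ``the nesting still holds by the same planarity argument.'' But the laminar structure of detours in the auxiliary procedure of \cref{sec:auxiliary} is not a planarity fact at all --- it comes from the maximality condition in the definition of $H_P$ (\cref{claim:nested}): two crossing detours would concatenate into a path contradicting the maximality of the earlier detour's start. Once you impose a length budget $\beta$, this argument collapses: two budget-$\beta$ detours that cross only certify a concatenated path of length up to $2\beta$, which does not contradict maximality within budget $\beta$, so crossing detours genuinely exist and your two-dimensional nested family is not well defined. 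This is precisely the obstruction the paper isolates as the ``hard'' subproblem ($\LabelPztoPz$, \cref{sec:hard_prob}): it is resolved not by re-using nesting, but by cutting $P$ into geometric windows $W_{i,k}$, partitioning the detours of each window into $O(1/\eps^4)$ families in which every two crossing detours \emph{good}-cross (\cref{lem:partition_D}), and only then running the reachability auxiliary procedure inside each family graph $H_{D_j}$, where \cref{claim:2-intervals,claim:approx_eps} guarantee that a path using at most two crossing detours costs at most $(1+\eps)\beta$. Moreover, the bound on the number of families is itself nontrivial: consecutive bad-crossing detours need not intersect as paths (only their spans on $P$ cross), so the naive potential argument you describe does not apply; the paper has to extract, via Erd\H{o}s--Szekeres, a monotone same-side subsequence of the auxiliary pairs $(w_i,z_i)$ before the decreasing-potential argument can be run --- which is why the bound is $O(1/\eps^4)$ rather than the $O(1/\eps)$ you assume.

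Your good-cross/bad-cross treatment of the bypass information $L_f,R_f,L_f^{\pm}$ does correspond to the paper's ``easy'' subproblem ($\LabelPztoPi$, \cref{lem:sequenceseasy}), and that part of your plan is sound in spirit. But you also gloss over the composition: a length-$\alpha$ replacement path from $b$ to $P_1$ splits into a $P_2$-to-$P_2$ portion, a crossing portion, and a further $P_1$-to-$P_1$ portion, and the paper must guess a budget split $\beta+\gamma+\delta\le(1+3\eps')\alpha$ over $\Gamma^3$ and compose the labelings (labeling the graph $G_{\gamma,\delta}$ with $\LabelPztoPi$ labels and then applying $\LabelPztoPz$ on the labeled graph) so that each stage's slack stays additive in $\eps'\alpha$. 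Without an explicit budget-splitting mechanism between your auxiliary stage and your bypass stage, the candidate combination step in your decoder cannot be charged against $\alpha+\eps\alpha$. So the proposal as written would fail at the length-aware auxiliary structure and is incomplete at the composition step; the missing ideas are the window decomposition with the $O(1/\eps^{4})$ good-crossing partition and the explicit multi-budget composition.
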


\begin{restatable}{lemma}{atopflabellem}\label{lem:atopflabel}
    There exists a labeling scheme
    $\LabelatoP=\LabelatoP_{H,P,P',\alpha,\eps}$ where $H$ is a planar graphs $H$ with a $0$-length path $P'$ and a path $P$ without outgoing edges which lies on a single face, such that $P\cap P'=\emptyset$, and $\alpha,\eps\in\mathbb R^+$.
    For $f\in P$ let $P_1$ and $P_2$ be the prefix and suffix of $P$ before and after $f$ (without $f$), respectively.
    Given the labels of two vertices $a\in P'$ and $f\in P$, one can retrieve two vertices $b_1$ and $b_2$ which are $\ddfirst{H\setminus\{f\}}{a}{P_1}{\alpha}{\eps\alpha}$ and $\ddfirst{H\setminus\{f\}}{a}{P_2}{\alpha}{\eps\alpha}$, respectively.
    The size of each label is $\Opoly$.
\end{restatable}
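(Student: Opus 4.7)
The plan is to follow the structure of the reachability proof of \cref{lem:atopfreach}, treating $b_1$ and $b_2$ separately and incorporating the good-cross--bad-cross potential argument sketched in the technical overview to obtain a $(1+\eps)$-approximation. For $b_1$, since $P$ has no outgoing edges, every $a$-to-$b_1$ path with $b_1 \in P_1$ must enter $P$ at some vertex of $P_1$ and then walk forward along $P_1$; in particular it never visits $f$. Hence $\ddfirst{H \setminus f}{a}{P_1}{\alpha}{\eps\alpha} = \ddfirst{H}{a}{P_1}{\alpha}{\eps\alpha}$, independent of $f$ beyond its position on $P$. I would let each $v \in P'$ store, for each $\beta$ in a $\Theta(1/\eps)$-point discretization $\Gamma$ of $[0,\alpha]$, the pair $(\dfirst{H}{v}{P}{\beta}, \beta)$, and let the decoder return the earliest such vertex that still lies in $P_1$.

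For $b_2$ I would define $\mathsf{enter}(v, u)$ as the minimum length of a $v$-to-$u$ path whose only vertex of $P$ is $u$. Because $P$ has no outgoing edges, any path from $a$ to some $w \in P_2$ in $H \setminus f$ must first meet $P$ at a vertex $u \in P_2$ with $u \le_P w$, paying at least $\mathsf{enter}(a, u)$ to get there. Thus $b_2$ is, up to the $\eps\alpha$ slack, the earliest $u \in P_2$ with $\mathsf{enter}(a, u) \le \alpha$. Let $z$ denote the first vertex of $P'$ and set $R_v(\beta) := \{u \in P : \mathsf{enter}(v, u) \le \beta\}$. Since $P'$ has zero length, $R_a(\beta) \subseteq R_z(\beta)$; and the planar cycle argument of \cref{claim:z} further shows that any $u \in R_z(\beta)$ sandwiched strictly between two vertices of $R_a(\beta)$ satisfies $\mathsf{enter}(a, u) \le 2\beta$ -- the weighted analog of the ``at most two intervals'' structural claim, but with a factor-$2$ budget blowup.

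To recover a genuine $(1+\eps)$-approximation I would invoke the good-cross--bad-cross framework. For each $\beta \in \Gamma$, $a$ stores a short sequence of ``certified'' candidate entry points traversing $R_z(\beta)$: two consecutive candidates $u, u'$ are joined by a good cross if the witnessing $a$-to-$u$ and $z$-to-$u'$ paths meet at a vertex $x$ with $\dist_H(a, x) + \dist_H(x, u') \le (1+\eps)\beta$, so that $u'$ is certified to lie in $R_a((1+\eps)\beta)$; a bad cross yields, by the complementary concatenation, a strictly cheaper auxiliary $z$-to-$u$ path whose length serves as a decreasing potential, bounding the number of certified points by $O(1/\eps)$ per $\beta$. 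Meanwhile $f$ stores, for each $\beta$, the first vertex of $R_z(\beta)$ strictly after $f$ on $P$. The decoder then locates $f$ within $a$'s certified sequence exactly as in the decoding of \cref{lem:atopfreach}, and returns the corresponding certified entry point.

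The main obstacle is establishing the good-cross--bad-cross bookkeeping in this two-source setting: the $a$-side and $z$-side witnessing paths interact through a planar cycle enclosing $z$, so the cheap auxiliary path produced by each bad cross lives partly on each side, and it is delicate to verify that its length truly shrinks by $\Omega(\eps \beta)$ per iteration and that it can legitimately be re-used by the next bad cross (as in the ``easy problem'' of the technical overview). Once this potential argument is in place, the $\Opoly$-size label and the $\eps\alpha$ slack guarantee follow by direct adaptation of the reachability decoding; apex handling and degenerate cases ($\mathsf{enter}(z, u) = \infty$ for $u$ near $f$, etc.) mirror those in the reachability proof.
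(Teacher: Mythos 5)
Your treatment of $b_1$ is fine and coincides with the paper's (since $P$ has no outgoing edges, $\dfirst{H}{a}{P}{\alpha}$ decides the matter), but for $b_2$ — the heart of the lemma — your plan rests on a good-cross/bad-cross potential argument that you yourself flag as unverified, and this is a genuine gap rather than a routine detail: the paper proves the lemma without any such iterative selection. The idea you are missing is to bucket the vertices of $P$ not by the sublevel sets $R_z(\beta)$ but by \emph{annuli} around $z$: with $F$ the graph obtained by deleting the edges of $P$, set $P_\beta=\{v\in P\mid \beta-\eps\alpha<\dist_F(z,v)\le\beta\}$ for the $O(1/\eps)$ thresholds $\beta=i\eps\alpha$, and $P_\beta(a)=\{v\in P_\beta\mid \dist_F(a,v)\le\alpha\}$. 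Inside a single annulus the planar crossing argument of \cref{claim:z} loses only $\eps\alpha$ instead of a factor $2$: if $w\in P_\beta$ lies between two vertices of $P_\beta(a)$ and one assumes $\dist_F(a,w)>(1+\eps)\alpha$, then exchanging the two witnessing paths at their crossing point yields a $z$-to-$u^*$ walk of length less than $\beta-\eps\alpha$, contradicting $u^*\in P_\beta$ (this is \cref{claim:Az}). Consequently $a$ stores, per $\beta$, just the endpoints of at most two intervals of $P_\beta$, $f$ stores its successor in each $P_\beta$, and the decoding is verbatim that of \cref{lem:atopfreach}; no potential/selection machinery is needed.

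Beyond being unproven, your scheme as sketched would not give correctness even if the potential bookkeeping went through: $f$ can only store its position relative to a set defined independently of $a$, and the true target $\dfirst{H\setminus\{f\}}{a}{P_2}{\alpha}$ may be \emph{any} vertex of that set after $f$. If both $f$ and the true target fall strictly between two consecutive certified candidates of your sequence, the decoder returns the next candidate and overshoots, violating the requirement $b_2\le_P \dfirst{H\setminus\{f\}}{a}{P_2}{\alpha}$. So what must be certified is every global-set vertex inside a stored range (property (ii) of \cref{claim:z} and of \cref{claim:Az}), and that blanket certification is exactly where your factor-$2$ budget blowup lives — selecting $O(1/\eps)$ points does not remove it, whereas restricting to a width-$\eps\alpha$ annulus does. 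There is also a budget mismatch in your setup: you couple the $a$-side and $z$-side budgets via $R_a(\beta)\subseteq R_z(\beta)$, while the correct statement keeps them separate ($a$ uses budget $\alpha$, the annulus uses $\beta$), which is what makes the single exchange argument close.
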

The proof of \cref{lem:ptoplabel} appears in \cref{sec:ptop} and the proof of \cref{lem:atopflabel} appears in \cref{sec:atfop}.
We note that given \cref{lem:stoalabel,lem:atopflabel,lem:ptoplabel} the label of $\LabelstoP$ is conceptually simple as we explain in the following high-level overview.
Due to certain technical details, the complete proof presented below appears more intricate.

\begin{figure}[h]
\begin{center}
\includegraphics[width=0.3\textwidth]{vtoPf-overview}
\caption{An lustration of a path from $s$ to $b_1=\dfirst{\Gf}{s}{P_1}{\alpha}$.
The blue subpath is from $s$ to $a\in P'$, and the gray subpath is from $a$ to $c \in P_2$ and is internally disjoint from $P$, and the green subpath is from $c$ to $b_1\in P_1$.
We use $\Labelstoa$ to find a good middle point for $a$ on $P'$, and $\LabelatoP$ to find a good middle point for $c$ on $P_2$ and $\LabelPtoP$ to find a good middle point for $b_1$ on $P_1$.
\label{fig:vtoPf}
}
\end{center}
\end{figure}

\paragraph{Overview (see \cref{fig:vtoPf}).}
Let $R$ be a shortest path from $s$ to $b_1=\dfirst{\Gf}{s}{P_1}{\alpha}$ in $\Gf$, and let $Q$ be a separator that separates $s$ and $f$.
We are interested in two special vertices on $R$: the first vertex $a^*$ on $R$ that is in $Q$, and specifically on some $P'$ which is an $\eps r$-subpath of $Q$, and the first vertex $c^*$ on $R$ after $a^*$ that is on $P$ (say, on $P_2$).
We 'guess' estimations $\beta,\gamma,\delta$ for the lengths of $R[s,a^*]$, $R[a^*,c^*]$ and $R[c^*,b_1]$, respectively.
In this overview, it is useful to consider a vertex $m$ that is $\ddfirst{\Gf}{s}{P}{\beta}{\eps r}$ as a sufficiently good 'middle point.
Specifically, if one aims to reach an early vertex on $P$ from $s$ within a budget $\alpha$, requiring the path to pass through $m$ and allowing the budget to exceed $\alpha$ by $\eps r$ does not worsen the result on $P$.
Using the labels $\Labelstoa$ with budget $\beta$, we find a sufficiently good 'middle point' $a$ for replacing $a^*$.
Starting from $a$, we use the labels $\LabelatoP$ with budget $\gamma$ to find a sufficiently good 'middle point' replacing $c^*$.
Finally, we use $\LabelPtoP$ with budget $\delta$ to find a vertex earlier on $P$ than $\dfirst{\Gf}{c^*}{P_1}{\gamma}$.

Since each middle point is reachable with a budget increase of $O(\eps r)$, the final point is reachable with a budget increase of $O(\eps r)$.
Since every middle point is 'reasonably good', and the budget used in each phase is an estimation of the length of the corresponding  subpath of $R$, it is guaranteed that the final destination is at least as good as $b_1$.

We are now ready to provide the formal proof of \cref{lem:stoplabel}.

\stoplabellem*
\begin{proof}
Let $\eps' = \frac{\eps}{7} \in \Theta (\eps)$  be an approximation parameter and let $\Gamma=\{ i\eps' \alpha \mid i\in [\ceil{\frac{1}{\eps'}}] \}$.
For a subpath $P'$ of a separator in $\mathcal T$ below $P$, and for $\beta,\gamma, \delta\in\Gamma$ we define the following graphs.
\begin{enumerate}
    \item $G^P_{P',\delta}$ is the graph $G$ where the weights of all edges of $P'$ are set to $0$, and the label of each vertex $v$ is set to be $\LabelPtoP_{G,P,\delta,\eps'}(v)$ obtained by \cref{lem:ptoplabel}.
    \item $G^0_{P',\delta}$ is the graph obtained from $G^P_{P',\delta}$ by removing all outgoing edges of $P$ (the edges of $P$ itself are not removed).
    \item $G^1_{P',\delta,\gamma}$ is the graph obtained by setting the label of each vertex $v$ in $G^0_{P',\delta}$ to be $\LabelatoP_{G^0_{P',\delta},P,P',\gamma,\eps'}(v)$.
    \item $G^2_{P',\delta,\gamma}$ is the induced graph of $G^1_{P',\delta,\gamma}$ only on vertices that are below the separator of $P'$ in the fully recursive decomposition.
\end{enumerate}

\paragraph{The Labeling.}
For every vertex $v$ below the separator of $P$ in the fully recursive decomposition the label of $v$ stores
\hlgray{
the index of $v$ in $P$, if $v\in P$.
For every tuple $(P',\beta,\gamma,\delta)$ such that $P'$ is an $\eps' r$-subpath ancestor of $v$ and $\beta,\gamma,\delta\in \Gamma$, the label $\Labelstoa_{G^2_{P',\delta,\gamma},P',\beta}(v)$ and the labels of $v$ in $G^2_{P',\delta,\gamma}$ and in $G^P_{P',\delta}$.
Moreover, the label of $v$ stores the identifiers of the ancestor pieces of $v$ in $\mathcal{T}$.
Finally, for every maximal consecutive subpath $P^*$ of $P$ in $G'_{P'}$, and $\gamma,\delta \in \Gamma$ the label of $v$ stores $b^\gamma_{Q,P^*} = \dfirst{G'_{P'}}{v}{P^*}{\gamma}$ where  $Q$ is the separator from which $P'$ originates and $G'_{P'}$ is the subgraph below $Q$, and the label $\LabelPtoP_{G,P,\gamma,\eps'}(b^\gamma_{Q,P^*})$.
}

\paragraph{Size.}
There are $O(\log n)$ ancestor separators of $v$ in $\mathcal T$.
Each separator is partitioned into $O(\frac{1}{\eps})$ subpaths, so overall there are $\Opoly$ options for $P'$ for each vertex $v$.
It follows from $|\Gamma|= O(\frac{1}{\eps})$ that there are $\Opoly$ tuples $(P',\beta,\gamma,\delta)$.
For every such tuple, the label stores an $\Labelstoa$-type label in $G^2_{P',\delta,\gamma}$ and the label of $v$ in $G^2_{P',\delta,\gamma}$ and in $G^P_\delta$.
Due to \cref{lem:stoalabel}, the size of $\Labelstoa$ is $\Opoly$. In our construction, the size should be multiplied by another two $\Opoly$ factors since the $\Labelstoa$ label is applied to the graph $G^1_{P',\delta,\gamma}$ in which every vertex is labeled with a $\LabelatoP$ type label (which is of size $\Opoly$ due to \cref{lem:atopflabel}), and these $\LabelatoP$ labels are applied to the graph $G^P_{P',\delta}$ in which the label of each vertex is a $\LabelPtoP$-type label (which is of size $\Opoly$ due to \cref{lem:ptoplabel}).
Finally, the height of the recursive decomposition is logarithmic, and in every ancestor piece of $v$ below $P$, the number of maximal consecutive subpaths of $P$ is bounded by the number of apices ancestors of $v$ which is $\tilde O(1)$.
It follows that storing for every ancestor piece of $v$, for every maximal consecutive subpath $P^*$ of $P$ in the piece and for every $\gamma,\delta \in \Gamma$ the vertex $b^\gamma_{Q,P^*}$ and an $\LabelPtoP$-type label of $b^\gamma_{Q,P^*}$ increases the size of the label by $\Opoly$.

\paragraph{Decoding.}
Given the labels of two vertices $s$ and $f\in P$ that are not an ancestor apex of one another, and such that $P$ is an ancestor path of both $s$ and $f$, the algorithm finds a vertex $b_1$ that is $\ddfirst{\Gf}{s}{P_1}{\alpha}{\eps r}$ and a vertex $b_2$ that is $\ddfirst{\Gf}{s}{P_2}{\alpha}{\eps r}$ as follows.
The algorithm starts by finding the highest separator $Q$ in the recursive decomposition that has $s$ in one side, and $f$ strictly in the other side (this can be done since both $s$ and $f$ store all ancestor pieces).
Recall that $Q$ is partitioned into $O(\frac{1}{\eps})$ subpaths of length at most $\eps' r$ each.
For each such subpath $P'$, and for every $\beta,\gamma,\delta \in \Gamma$ such that $\beta+\gamma+\delta \le (1+3\eps')\alpha$, the algorithm uses $\Labelstoa_{G^2_{P',\delta,\gamma},P',\beta}(s)$ to find $a_{P',\beta}= \dfirst{G^2_{P',\delta,\gamma}}{s}{P'}{\beta}$.
More precisely, the label $\Labelstoa_{G^2_{P',\delta,\gamma},P',\beta}(s)$ allows the algorithm to retrieve $\ell_a = \LabelatoP_{G^0_{P',\delta},P,P',\gamma,\eps'}(a_{P',\beta})$.
Note that the label of $f$ contains $\ell_f = \LabelatfoP_{G^0_{P',\delta},P,P',\gamma,\eps'}(f)$.
The algorithm uses the labels $\ell_a$ and $\ell_f$ to retrieve vertices $b^1_{P',\beta,\gamma}$ and $b^2_{P',\beta,\gamma}$ on $P_1$ (resp. on $P_2$) such that $b^1_{P',\beta,\gamma}$ is an $\ddfirst{G^0_{P',\delta}\setminus \{f\} }{a_{P',\beta}}{P_1}{\gamma}{\eps' \gamma}$ and $b^2_{P',\beta,\gamma}$ is an $\ddfirst{G^0_{P',\delta}\setminus \{f\} }{a_{P',\beta}}{P_2}{\gamma}{\eps' \gamma}$.
Again, since the vertices of the graph $G^0_{P',\delta}$ are labeled, the algorithm actually obtains $\LabelPtoP_{G,P,\delta,\eps'}$ labels of $b^1_{P',\beta,\gamma}$ and of $b^2_{P',\beta,\gamma}$, denote these labels as $\ell^1_b$ and $\ell^2_b$ respectively.
Recall that the label of $f$ stores $\ell'_f = \LabelPtoP_{G,P,\delta,\eps'}(f)$.
The algorithm uses the labels $\ell^1_b$, $\ell^2_b$ and $\ell'_f$ to obtain:
\begin{itemize}
    \item a vertex $b^{1,1}_{P',\beta,\gamma,\delta}$ that is $\ddfirst{\Gf}{b^1_{P',\beta,\gamma}}{P_1}{\delta}{\eps' \delta}$,
    \item a vertex $b^{1,2}_{P',\beta,\gamma,\delta}$ that is $\ddfirst{\Gf}{b^1_{P',\beta,\gamma}}{P_2}{\delta}{\eps' \delta}$,
    \item a vertex $b^{2,1}_{P',\beta,\gamma,\delta}$ that is $\ddfirst{\Gf}{b^2_{P',\beta,\gamma}}{P_1}{\delta}{\eps' \delta}$,
    \item a vertex $b^{2,2}_{P',\beta,\gamma,\delta}$ that is $\ddfirst{\Gf}{b^2_{P',\beta,\gamma}}{P_2}{\delta}{\eps' \delta}$.
\end{itemize}

Let $b^\gamma_{Q,1} = \min_{\le P} \{b^\gamma_{Q,P^*} \mid P^* \textbf{ is a maximal consecutive subpath of }P\textbf{ in } G'_{Q} \textit{, }b^\gamma_{Q,P^*} \in P_1\}$ and  $b^\gamma_{Q,2} = \min_{\le P} \{b^\gamma_{Q,P^*} \mid P^* \textbf{ is a maximal consecutive subpath of }P\textbf{ in } G'_{Q} \textit{, }b^\gamma_{Q,P^*} \in P_2\}$.
Notice that $b^\gamma_{Q,1}$ and $b^\gamma_{Q,2}$ can be computed using the index of $f$ in $P$ to classify each $b^\gamma_{Q,P^*}$ either to $P_1$ or to $P_2$.
The algorithm uses the labels $\LabelPtoP_{G,P,\delta,\eps'}(b^\gamma_{Q,1})$, $\LabelPtoP_{G,P,\delta,\eps'}(b^\gamma_{Q,1})$, and $\LabelPtoP_{G,P,\delta,\eps'}(f)$ to obtain the vertices:
\begin{enumerate}
    \item $b^{\gamma,\delta}_{Q,1,1}$ that is an $\ddfirst{\Gf}{b^\gamma_{Q,1}}{P_1}{\delta}{\eps' \delta}$.
    \item $b^{\gamma,\delta}_{Q,1,2}$ that is an $\ddfirst{\Gf}{b^\gamma_{Q,1}}{P_2}{\delta}{\eps' \delta}$.
    \item $b^{\gamma,\delta}_{Q,2,1}$ that is an $\ddfirst{\Gf}{b^\gamma_{Q,2}}{P_1}{\delta}{\eps' \delta}$.
    \item $b^{\gamma,\delta}_{Q,2,2}$ that is an $\ddfirst{\Gf}{b^\gamma_{Q,2}}{P_2}{\delta}{\eps' \delta}$.
\end{enumerate}

The algorithm sets the vertex $b_1 = \min_{\le P}\big(\{ b^{x,1}_{P',\beta,\gamma,\delta} \mid \beta,\gamma,\delta \in \Gamma ,P'\textit{ is a subpath of }Q ,\beta + \gamma +\delta \le (1+3 \eps')\alpha, x\in \{1,2\} \} \cup \{ b^{\gamma,\delta}_{Q,x,1} \mid \gamma + \delta \le (1+2\eps')\alpha , x\in \{1,2\} \} \big)$ i.e. the first vertex on $P_1$ that was found across all choices of $\beta,\gamma,\delta$ and a subpath $P'$, or some $b^{\gamma,\delta}_{Q,x,1}$.
The algorithm also sets $b_2 = \min_{\le P}\big(\{ b^{x,2}_{\beta,\gamma,\delta} \mid \beta,\gamma,\delta \in \Gamma ,P'\textit{ is a subpath of }Q, \beta + \gamma +\delta \le (1+3 \eps')\alpha, x\in \{1,2\} \} \cup \{ b^{\gamma,\delta}_{Q,x,2} \mid \gamma + \delta \le (1+2\eps')\alpha , x\in \{1,2\} \}  \big)$.
That is, the first vertex on $P_2$ that was found across all guesses of $\beta,\gamma,\delta$ and a subpath $P'$ of $Q$, or $b^{\gamma,\delta}_{Q,x,2}$.
The algorithm returns $b_1$ and $b_2$.
\paragraph{Correctness.}
We show that when decoding the labels of two vertices $s$ and $f$, we indeed return a vertex $b_1$ that is $\ddfirst{\Gf}{s}{P_1}{\alpha}{\eps r}$.
The proof that $b_2$ is $\ddfirst{\Gf}{s}{P_2}{\alpha}{\eps r}$ is similar.
We start by showing that for every candidate $b'$ the algorithm considers for being $b_1$, we have $\dist_{\Gf}(s,b') \le \alpha + \eps r$.

Let $Q$ be the lowest separator in the decomposition such that $s$ is in one side of $Q$ and $f$ is strictly in the other side, and let $P^*$ be a maximal consecutive subpath of $P$ contained in $G'_Q$.
We start by focusing of a candidate $b'$ of the form $b^{\gamma,\delta}_{Q,x,1}$.
Recall that $\gamma + \delta \le (1+2\eps')\alpha$ and that $b^{\gamma,\delta}_{Q,x,1}$ is an $\ddfirst{\Gf}{b^{\gamma}_{Q,x}}{P_1}{\gamma}{\eps' \gamma}$ vertex for $b^{\gamma}_{Q,x}$ which is the first vertex of $P$ that can be reached on $P_1$ in $G'_Q$ with budget $\gamma$.
By definition, we have $\dist_{G'_{Q}}(s,b^\gamma_{Q,1}) \le \gamma$, and since $f\notin G'_Q$ we have $\dist_{\Gf}(s,b^\gamma_{Q,x}) \le \gamma$.
We also have $\dist_{\Gf}(b^{\gamma}_{Q,x},b^{\gamma,\delta}_{Q,x,1}) \le (1+\eps') \delta$.
It follows from triangle inequality that $\dist_{\Gf}(s,b^{\gamma,\delta}_{Q,x,1})\le \gamma + (1+\eps') \delta \le \alpha + 3\eps' \alpha \le (1+\eps) \alpha$ as required.

We now treat candidates of the form $b^{x,1}_{P',\beta,\gamma,\delta}$ for some $\beta,\gamma,\delta \in \Gamma$ and $x\in \{ 1,2\}$.
Recall that:
\begin{enumerate}
    \item $b^{x,1}_{P',\beta,\gamma,\delta}$ is a $\ddfirst{\Gf}{b^x_{P',\beta,\gamma}}{P_1}{\delta}{\eps' \delta}$.
    \item $b^x_{P',\beta,\gamma}$ is a $\ddfirst{G^0_{P',\delta} \setminus \{ f\}}{a_{P',\beta}}{P_x}{\gamma}{\eps' \gamma}$.
    \item $a_{P',\beta}$ is $\dfirst{G^2_{P',\delta,\gamma}}{s}{P'}{\beta}$
\end{enumerate}
Recall that $f\notin G^2_{P',\delta,\gamma}$ since $G^2_{P',\delta,\gamma}$ only contains vertices in the side of $s$ of the separator of $Q$ in the full recursive decomposition.
Also notice that in $G^2_{P',\delta,\gamma}$ the length of the path $P'$ is $0$, and in $G$ the length of $P'$ is at most $\eps' r$.
It follows that $\dist_{\Gf}(s,a_{P',\beta}) \le \dist_{G^2_{P',\delta,\gamma}}(s,a_{P',\beta}) + \eps' r \le \beta + \eps' r $.
Due to the same reasoning, we also have $\dist_{\Gf}(a_{P',\beta},b^x_{P',\beta,\gamma}) \le \dist_{G^0_{P',\delta} \setminus \{ f\}}(a_{P,\beta},b^x_{P',\beta,\gamma}) + \eps' r \le \gamma + \eps' \gamma + \eps' r$.
Directly from the definition of $\ddfirst{\Gf}{b^x_{P',\beta,\gamma}}{P_1}{\delta}{\eps' \delta}$ we have $\dist_{\Gf}(b^x_{P',\beta,\gamma},b^{x,1}_{P',\beta,\gamma,\delta}) \le \delta + \eps' \delta$.

From triangle inequality and $\beta+\gamma + \delta \le (1+3\eps') \alpha$ we get $\dist_{\Gf}(s,b^{x,1}_{P',\beta,\gamma,\delta}) \le (\beta +\eps' r) + (\gamma + \eps' \gamma + \eps' r) + (\delta + \eps' \delta) \le (\beta + \gamma + \delta) + \eps' (2r + \gamma + \delta) \le \alpha + \eps' (2r + 3\alpha + \gamma + \delta) \le \alpha + 7\eps' r = \alpha + \eps r$ as required.

We are now left with the task of proving $b_1 \le_P \dfirst{\Gf}{s}{P_1}{\alpha}$.
Notice that we have $b_1\in P_1$ since $b_1$ can either be $b_{Q,1}$ which is by definition on $P_1$, or $b^{x,1}_{\beta,\gamma,\delta}$ which is a $P_1$ output of a $\LabelPtoP$-type label.

Let $b^*= \dfirst{\Gf}{s}{P_1}{\alpha}$ and let $R$ be a shortest path from $s$ to $b^*$ in $\Gf$.
Let $a^*$ be the first vertex on $R$ that is also in $Q$, and let $c^*$ be the first vertex on $R[a^*..b^*]$ that is on $P$.
Notice that $a^*$ (and therefore, $c^*$) do not necessarily exist, but if $a^*$ exists, so does $c^*$.
\paragraph{Case 1: $a^*$ does not exist}
Let $b'$ be the first vertex on $R$ that is on $P$.
Let $\gamma^* = \min \{ x\in \Gamma  \mid x \ge \len(R[s,b'] \} $ and $\delta^* = \min \{ x\in \Gamma  \mid x \ge \len(R[b',b^*] \} $ and $b'\in P_x$ for $x\in \{1,2\}$.
Notice that $b'$ must be on one of the maximal consecutive subpaths of $P$ that are contained in $G'_Q$, we denote this maximal subpath as $P^*$.
Since $R[s,b']$ is a path with length at most $\gamma$ in $G'_Q$ from $s$ to $b'$, we have $b^{\gamma^*}_{Q,P^*} =\dfirst{G'_{Q}}{s}{P^*}{\gamma} \le_P b'$.
In particular, $b^{\gamma^*}_{Q,x} \le_P b'$ as an $\le_P$ minimum in a set containing $P^*$.
Notice that $\gamma^*+\delta^* \le (1+2\eps')\alpha$, so the algorithm considered a candidate $b^{\gamma^*,\delta^*}_{Q,x,1}$ that is an $\ddfirst{\Gf}{b^{\gamma^*}_{Q,x}}{P_1}{\delta^*}{\eps' \delta^*}$.
Since $P_1[b^{\gamma^*}_{Q,x},b']R[b',b^*]$ is a path in $\Gf$ from $b^{\gamma^*}_{Q,x}$ to $b^*$ of length at most $\delta^*$, we have that $b^{\gamma^*,\delta^*}_{Q,x,1} \le_P \dfirst{\Gf}{b^{\gamma^*}_{Q,x}}{P_1}{\gamma^*} \le_P b^*$.

This concludes the proof of this case, as $b_1 \le_P b^{\gamma^*,\delta^*}_{Q,x,1}$ as the $\le_P$ minimum of a set containing the latter.

\paragraph{Case 2: $a^*$ exists, and $c^* \in P_1$ }
Let $P'$ be the subpath of $Q$ that contains $a^*$.
Let:
\begin{enumerate}
    \item$\beta^* = \min\{x \in \Gamma \mid x \ge \len(R[s,a^*])\}$,
    \item  $\gamma^* = \min\{x \in \Gamma \mid x \ge \len(R[a^*,c^*])\}$, and
    \item $\delta^* =  \min\{x \in \Gamma \mid x \ge \len(R[c^*,b^*])\}$.
\end{enumerate}
Notice that $\beta^* + \gamma^* + \delta^* \le \len(R) + 3\eps' \alpha$.
Therefore, the decoding algorithm computed some $b^{1,1}_{\beta^*,\gamma^*,\delta^*}$ such that:
\begin{enumerate}
    \item $b^{1,1}_{\beta^*,\gamma^*,\delta^*}$ is a $\ddfirst{\Gf}{b^1_{\beta^*,\gamma^*}}{P_1}{\delta^*}{\eps' \delta^*}$.
    \item $b^1_{\beta^*,\gamma^*}$ is a $\ddfirst{G^0_{P',\delta^*} \setminus \{ f\}}{a_{\beta^*}}{P_1}{\gamma^*}{\eps' \gamma^*}$.
    \item $a_{\beta^*}$ is $\dfirst{G^2_{P',\delta^*,\gamma^*}}{s}{P'}{\beta^*}$.
\end{enumerate}
Notice that we omit $P'$ from the subscript of the vertices listed above (i.e. $b^{1,1}_{P',\beta^*,\gamma^*,\delta^*}$ is written as $b^{1,1}_{\beta^*,\gamma^*,\delta^*}$). This is done for convenience, as this notation will not be used in the current context with a subpath other than $P'$.

Recall that $G^2_{P',\gamma^*,\delta^*}$ is a vertex labeled version of $G'_Q$, with the weights of the edges of $P'$ set to $0$.
Since $R[s,a^*]$ is a path from $s$ to $P'$ in $G'_Q \setminus \{ f\} = G'_Q$ with length at most $\beta^*$, it is in particular a path from $s$ to $a^*\in P'$ in $G^2_{P',\gamma^*,\delta^*}$ with length at most $\beta^*$ and therefore we have $a_{\beta^*} \le_{P'} a^*$.

Recall that $G^0_{P',\delta^*}$ is a vertex labeled version of $G$ with the weights of all edges of $P'$ set to $0$ without outgoing edges from $P$.
Since $R[a^*,c^*]$ is a path from $a^*$ to $c^*$ in $\Gf$ with length at most $\gamma^*$ that do not use outgoing edges of $P$ (as it is internally disjoint from $P$), we have that $P'[a_{\beta^*},a^*] \cdot R[a^*,c^*]$ is a path from $a_{\beta^*}$ to $c^* \in P_1$ in $G^0_{P',\delta^*}\setminus \{f\}$ with length at most $\gamma^*$ (also recall that $f\notin P'$).
We therefore have $b^1_{\beta^*,\gamma^*} \le_P c^*$.

Furthermore, consider the path $P_1[b^1_{\beta^*,\gamma^*},c^*] \cdot R[c^*,b^*]$.
It is a path with length at most $\delta^*$ (recall that $\len(P) =0$) in $\Gf$.
Therefore, we have $b^{1,1}_{\beta^*,\gamma^*,\delta^*} \le_P b^*$.
Due to the minimality of the returned vertex $b_1$ on $P$ across all values of $\beta,\gamma,\delta$ and all subpaths $P'$, we have that $b_1 \le_P b^{1,1}_{\beta^*,\gamma^*,\delta^*} \le_P b^*$ as required.

\paragraph{Case 3: $a^*$ exists and $c^* \in P_2$}
This proof for this case is completely identical to the proof of the previous case, replacing $b^{1,1}_{\beta^*,\gamma^*,\delta^*}$ with $b^{2,1}_{\beta^*,\gamma^*,\delta^*}$.
\end{proof}

\section{The $\LabelPtoP$ Labeling (Proof of \cref{lem:ptoplabel})}\label{sec:ptop}

In this section we prove \cref{lem:ptoplabel} by extending the ideas of \cref{sec:singlePath} from reachability to approximate distances.
We recall the settings of \cref{lem:ptoplabel}.
$G$ is a planar graph, $P$ is a $0$-length path of $G$ whose two endpoints lie on the same face and $\alpha,\eps\in \mathbb R^+$. Our goal is to assign $\Opoly$-sized labels to the vertices of $P$ such that, given the labels of two vertices $b$ and $f$ of $P$, one can compute the indices on $P$ of some vertices $b_1$ and $b_2$ that are $\ddfirst{\Gf}{b}{P_1}{\alpha}{\eps\alpha}$ and $\ddfirst{\Gf}{b}{P_2}{\alpha}{\eps\alpha}$, respectively. Throughout, $P_1$ and $P_2$ denote the prefix and suffix of $P$ before and after vertex $f$ (without $f$) respectively.

\begin{figure}[htb]
\begin{center}
\includegraphics[width=0.5\textwidth]{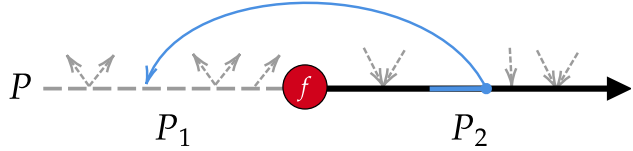}
\caption{An illustration of $G_1$.
We are interested in $P_2$ to $P_1$ paths  (like the blue path) that may start with some edges of $P_2$ and then continue with a subpath which is internally disjoint from $P$.
Formally, this is achieved by removing all in-going edges to $P_2$ and all out-going edges from $P_1$ (the removed edges are displayed in gray in the figure).
\label{fig:G1path}
}
\end{center}
\end{figure}

We define $G_1$ (resp. $G_2$) to be the graph obtained from $\Gf$ by first removing all in-going edges to vertices of $P_2$ (resp. $P_1$), except for the edges of $P_2$ (resp. $P_1$) itself, and then removing all out-going edges from vertices of $P_1$ (resp. $P_2$), including the edges of $P_1$ (resp. $P_2$), see \cref{fig:G1path}.
Intuitively, $G_1$ (resp. $G_2$) is a version of $\Gf$ in which every path from $P_2$ (resp. $P_1$) to $P_1$ (resp. $P_2$)  starts with some subpath of $P_2$ (resp. $P_1$) and is then internally disjoint from $P$.

We divide our task into four auxiliary labeling schemes.

\begin{enumerate}
	\item $\LabelPztoPi_{\beta,\gamma}$ - Given the labels of two vertices $f<_P b$, one can obtain $\dfirst{\Gf}{x}{P_1}{\gamma}$ and  $\dfirst{\Gf}{x}{P_2}{\gamma}$ for a vertex $x$ that is $\ddfirst{G_1}{b}{P_1}{\beta}{\eps\beta}$.
We introduce a labeling scheme for this problem in \cref{sec:easy}.

	\item $\LabelPitoPz_{\beta,\gamma}$ - Given the labels of two vertices $b<_P f$, one can obtain $\dfirst{\Gf}{x}{P_1}{\gamma}$ and  $\dfirst{\Gf}{x}{P_2}{\gamma}$ for a vertex $x$ that is $\ddfirst{G_2}{b}{P_2}{\beta}{\eps\beta}$.
The labeling scheme of $\LabelPztoPi_{\beta,\gamma}$ (\cref{sec:easy}) can be adjusted to solve this problem.

	\item $\LabelPztoPz_\beta$ - Given the labels of two vertices $f<_P b$ one can compute $\ddfirst{G\setminus (P_1\cup\{f\})}{b}{P_2}{\beta}{\eps\beta}$.
We introduce a labeling scheme for this problem in \cref{sec:hard_prob}.

	\item $\LabelPitoPi_\beta$- Given the labels of two vertices $b<_P f$ one can compute $\ddfirst{G\setminus (P_2\cup\{f\})}{b}{P_1}{\beta}{\eps\beta}$.
The labeling scheme of $\LabelPztoPz_\beta$ (\cref{sec:hard_prob}) can be adjusted to solve this problem.
\end{enumerate}

Exploiting the labeling schemes for the auxiliary tasks, we prove the following lemma (see \cref{fig:PtoP}).

\begin{figure}[htb]
\begin{center}
\includegraphics[width=0.5\textwidth]{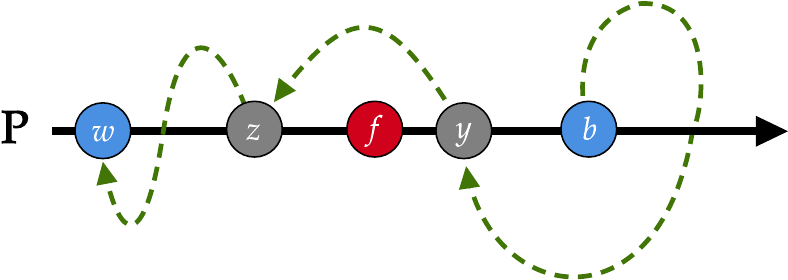}
\caption{An illustration of a path $R$ from $b$ to $w=\dfirst{\Gf}{b}{P_1}{\alpha}$.
The first subpath is from $b$ to $y\in P_2$ (without using $P_1$),
the second subpath is from $y$ to $z\in P_1$ using a path that is internally disjoint from $P$ except for a prefix.
Finally, the third subpath from $z$ to $w\in P_1$ is a path in $\Gf$.
The first part is approximated using $\LabelPztoPz$, the second and third parts are approximated using $\LabelPztoPi$.
\label{fig:PtoP}
}
\end{center}
\end{figure}

\ptoplabellem*
\begin{proof}
Let $\eps'=\eps/6$ and let $\Gamma=\{i\eps'\alpha\mid i\in[0,\ceil{1/\eps'}]\}$ be the set of multiples of $\eps'\alpha$.
For every $(\gamma,\delta)\in\Gamma^2$ we define the labeled graph $G_{\gamma,\delta}$ as a copy of $G$ in which every vertex $v\in V$ is labeled with $\LabelPztoPi_{\gamma,\delta}(v)$ and $\LabelPitoPz_{\gamma,\delta}(v)$, where the labeles are computed with approximation factor $\eps'$.
\hlgray{For every $v\in V$, the label $\LabelPtoP(v)$ stores for every $(\beta,\gamma,\delta)\in\Gamma^3$
the labels of $\LabelPztoPz_{\beta}(v)$ and $\LabelPitoPi_{\beta}(v)$ computed with approximation factor $\eps'$ on the graph $G_{\gamma,\delta}$, and also the label of $v$ in $G_{\gamma,\delta}$.} 

\paragraph{Size.}
By \cref{lem:labelPztoPi,lem:labelPztoPz} we have that the size of each label of $\LabelPztoPi,\LabelPitoPz, \LabelPztoPz$ and $\LabelPitoPi$ is $\Opoly$. Thus, the size of $\LabelPtoP(v)$ is $\Opoly$.
Note that $\LabelPztoPz$ and $\LabelPitoPi$ labels on the graph $G_{\beta,\delta}$, in which vertices have labels of size $\Opoly$, still have size $\Opoly$.

\paragraph{Decoding.}
Given the labels of $b$ and $f$, if $f<_P b$ the computation of the indices on $P$ of some vertices $b_1$ and $b_2$ that are $\ddfirst{\Gf}{b}{P_1}{\alpha}{\eps\alpha}$ and $\ddfirst{\Gf}{b}{P_2}{\alpha}{\eps\alpha}$, respectively, is done as follows.
Iterate over all triplets $(\beta,\gamma,\delta)\in\Gamma^3$ such that $\beta+\gamma+\delta\le (1+3\eps')\alpha$.
The algorithm uses  $\LabelPztoPz_{\beta}(b)$ and $\LabelPztoPz_{\beta}(f)$ to compute some vertex $b'_{\beta,\gamma,\delta}\in G_{\gamma,\delta}$ which is $\ddfirst{G\setminus (P_1\cup\{f\})}{b}{P_2}{\beta}{\eps'\beta}$.
Then, using $\LabelPztoPi_{\gamma,\delta}(f)$ and $\LabelPztoPi_{\gamma,\delta}(b'_{\beta,\gamma,\delta})$ we compute $b^1_{\beta,\gamma,\delta}=\dfirst{\Gf}{x_{\beta,\gamma,\delta}}{P_1}{\delta}$ and  $b^2_{\beta,\gamma,\delta}=\dfirst{\Gf}{x_{\beta,\gamma,\delta}}{P_2}{\delta}$ for a vertex $x_{\beta,\gamma,\delta}$ that is $\ddfirst{G_1}{b'_{\beta,\gamma,\delta}}{P_1}{\gamma}{\eps'\gamma}$.
Finally, $b_1$ and $b_2$ are the earliest vertices of $P$ among all $b^1_{\beta,\gamma,\delta}$  and $b^2_{\beta,\gamma,\delta}$, respectively.
The case where $b<_P f$ is decoded in a similar way, using $\LabelPitoPi$ and $\LabelPitoPz$.

\paragraph{Correctness.}
First notice that for every $(\beta,\gamma,\delta)$ we have $\dist_{\Gf}(b,b^1_{\beta,\gamma,\delta})\le\dist_{\Gf}(b,b'_{\beta,\gamma,\delta})+\dist_{\Gf}(b'_{\beta,\gamma,\delta},x_{\beta,\gamma,\delta})+\dist_{\Gf}(x_{\beta,\gamma,\delta},b^1_{\beta,\gamma,\delta})$.
By definition of $\LabelPztoPz$ we have $\dist_{\Gf}(b,b'_{\beta,\gamma,\delta})\le (1+\eps')\beta$ and by definition of $\LabelPztoPi$ we have $\dist_{\Gf}(b'_{\beta,\gamma,\delta},x_{\beta,\gamma,\delta})\le (1+\eps')\gamma$ and $\dist_{\Gf}(x_{\beta,\gamma,\delta},b^1_{\beta,\gamma,\delta})\le (1+\eps')\delta$.
Thus, $\dist_{\Gf}(b,b^1_{\beta,\gamma,\delta})\le(1+3\eps')\alpha+3\eps'\alpha=(1+6\eps')\alpha=(1+\eps)\alpha$.
By similar arguments, $\dist_{\Gf}(b,b^2_{\beta,\gamma,\delta})\le(1+6\eps')\alpha=(1+\eps)\alpha$, as required.

It remains to prove that $b_1\le_P\dfirst{\Gf}{b}{P_1}{\alpha}$ and $b_2\le_P\dfirst{\Gf}{b}{P_2}{\alpha}$.
We prove the former, the proof of the latter is similar.
Let $R$ be a shortest path from $b$ to $w=\dfirst{\Gf}{b}{P_1}{\alpha}$ in $\Gf$ (see \cref{fig:PtoP}.
Let $z$ be the first vertex on $R$ that is on $P_1$ and let $y$ be the last vertex on $R[b,z]$ which is on $P_2$.
Let $\beta=\min \{q\in\Gamma\mid q\ge\len(R[b,y])\}$, let $\gamma=\min \{ q\in\Gamma\mid q\ge\len(R[y,z])\}$ and let $\delta=\min \{ q\in\Gamma\mid q\ge\len(R[z,w])\}$.
Notice that $\beta+\gamma+\delta\le\alpha+3\eps'\alpha$.
We will show that $b^1_{\beta,\gamma,\delta}\le_P w$, the claim then follows from the minimality (w.r.t $\le_P$) of $b_1$ on $P$ across all choices of $\beta,\gamma,\delta$.

First, $b'_{\beta,\gamma,\delta}\le_P \dfirst{G\setminus (P_1\cup\{f\})}{b}{P_2}{\beta}\le_P y$ since $y$ can be reached in $G\setminus (P_1\cup\{f\})$ from $b$ with a path ($R[b,y]$) of length at most $\beta$.
Second, $x_{\beta,\gamma,\delta}\le_P \dfirst{G_1}{b'_{\beta,\gamma,\delta}}{P_1}{\gamma}\le_P z$ since $z$ can be reached in $G_1$ from $b'_{\beta,\gamma,\delta}$ with a path ($P[b'_{\beta,\gamma,\delta},y]\cdot R[y,z]$) of length at most $\gamma$.
Finally, $b^1_{\beta,\gamma,\delta}\le_P \dfirst{\Gf}{x_{\beta,\gamma,\delta}}{P_1}{\delta}\le_P w$ since $w$ can be reached in $\Gf$ from $x_{\beta,\gamma,\delta}$ with a path ($P[x_{\beta,\gamma,\delta},z]\cdot R[z,w]$) of length at most $\delta$.
\end{proof}

\subsection{The $\LabelPztoPi$ labeling}\label{sec:easy}

In this section we focus on a vertex $f\in P$.
We denote $G'=G^1$.
For a vertex $b\in P_2$ let $D_b$ be a shortest path from $b$ to $\dfirst{G'}{b}{P_1}{\beta}$. If $\dfirst{G'}{b}{P_1}{\beta}=null$ we say that $b$ {\em cannot bypass} $f$.
This section is mostly dedicated to proving the following lemma, which leads to a simple labeling scheme for $\LabelPztoPi$.

\begin{lemma}\label{lem:sequenceseasy}
    There are sequences $I = ( b_1 <_{P_2} b_2 <_{P_2} \ldots <_{P_2} b_t) \subseteq P_2$ and $V =(v_1, v_2, \ldots , v_{t})\subseteq P_1$ such that:
    \begin{enumerate}
        \item For every vertex $b\in P_2$ we have $\min_{\le_{P_1}} \{v_i\mid b_i\ge_{P_2} b\}$ is an $\ddfirst{G'}{b}{P_1}{\beta}{\eps \beta}$.

        \item Every vertex $b >_{P_2} b_t$ cannot bypass $f$.
        \item $t \in O(1/\eps)$.
    \end{enumerate}
\end{lemma}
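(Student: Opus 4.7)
\emph{Construction.} The plan is to build the sequence from right to left on $P_2$. First I will take $b_t$ to be the last vertex of $P_2$ that can bypass $f$, i.e.\ for which $\dfirst{G'}{b_t}{P_1}{\beta}$ is defined. Then, for decreasing $i$, I will let $b_i$ be the last vertex in $P_2(f,b_{i+1})$ that bypasses $f$ and \emph{bad-crosses} $b_{i+1}$, meaning $\dist_{G'}(b_{i+1},\dfirst{G'}{b_i}{P_1}{\beta})>(1+\eps)\beta$; the iteration terminates when no such vertex exists, after which I reindex so that $b_1<_{P_2}\dots<_{P_2}b_t$. For $i\ge 2$ I set $w_i$ to be the successor of $b_{i-1}$ on $P_2$, and $w_1$ to be the vertex of $P$ immediately after $f$; I then define $v_i:=\dfirst{G'}{w_i}{P_1}{\beta}$. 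Each $v_i$ is well defined because $w_i\le_{P_2}b_i$ and, since $\len(P)=0$, $w_i$ reaches $b_i$ along $P_2$ at zero cost and inherits its bypass.

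\emph{Properties (1) and (2).} Property (2) is immediate from the choice of $b_t$. For (1) I first observe the monotonicity $v_1\le_{P_1}v_2\le_{P_1}\dots\le_{P_1}v_t$: if $i<j$ then $w_i<_{P_2}w_j$, so $w_i$ reaches $w_j$ along $P_2$ at zero cost and hence whatever is reachable from $w_j$ within budget $\beta$ is also reachable from $w_i$. Consequently $\min_{\le_{P_1}}\{v_j:b_j\ge_{P_2}b\}$ is empty precisely when $b>_{P_2}b_t$, which agrees with $b$ being unable to bypass $f$; otherwise it equals $v_{i^*}$ for the smallest $i^*$ with $b_{i^*}\ge_{P_2}b$, and then $b\in(b_{i^*-1},b_{i^*}]$ with the convention $b_0:=f$. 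Since $b_{i^*-1}$ is the \emph{last} bad-crosser of $b_{i^*}$, the successor $w_{i^*}$ does \emph{not} bad-cross $b_{i^*}$, so $\dist_{G'}(b_{i^*},v_{i^*})\le(1+\eps)\beta$. Combining this with the zero-cost $P_2$ moves $b\to b_{i^*}$ and $w_{i^*}\to b$ yields both $\dist_{G'}(b,v_{i^*})\le(1+\eps)\beta$ and $v_{i^*}\le_{P_1}\dfirst{G'}{b}{P_1}{\beta}$, certifying the required $\ddfirst{G'}{b}{P_1}{\beta}{\eps\beta}$.

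\emph{Bounding $t$ via the good/bad cross framework.} For each $i$ let $D_i$ be a shortest $b_i$-to-$\dfirst{G'}{b_i}{P_1}{\beta}$ path in $G'$, so $\len(D_i)\le\beta$. I will construct inductively, for $i=t,t-1,\dots,1$, a path $C_i$ from $b_i$ to $\dfirst{G'}{b_t}{P_1}{\beta}$ in $G'$ with $\len(C_i)\le(1-(t-i)\eps)\beta$. Take $C_t:=D_t$. For the step, the planar embedding together with the assumption that $P$ lies on a single face of $G$ makes the endpoints of $D_i$ and $C_{i+1}$ alternate along the closed curve formed by $P$ and a face arc, so $D_i$ and $C_{i+1}$ must share some vertex $z$. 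Splitting both at $z$ and forming the complementary concatenations $A:=C_{i+1}[b_{i+1},z]\cdot D_i[z,\dfirst{G'}{b_i}{P_1}{\beta}]$ and $C_i:=D_i[b_i,z]\cdot C_{i+1}[z,\dfirst{G'}{b_t}{P_1}{\beta}]$ gives $\len(A)+\len(C_i)=\len(D_i)+\len(C_{i+1})$. Since $A$ is a $b_{i+1}$-to-$\dfirst{G'}{b_i}{P_1}{\beta}$ path, the bad-cross assumption yields $\len(A)>(1+\eps)\beta$, hence $\len(C_i)<\len(C_{i+1})+\beta-(1+\eps)\beta=\len(C_{i+1})-\eps\beta$, completing the induction. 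Non-negativity of $\len(C_1)$ forces $(t-1)\eps<1$, i.e.\ $t\le 1+\tfrac{1}{\eps}\in O(1/\eps)$.

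\emph{Main obstacle.} The delicate step is the planar intersection argument: the iteratively built $C_{i+1}$ is itself the outcome of previous concatenations of $D_j$'s, yet one needs its endpoints to remain interleaved with those of $D_i$ along the single face carrying $P$, so that $C_{i+1}$ and $D_i$ are forced to share a vertex. This is the only place where planarity and the single-face hypothesis on $P$ are used, and it is what powers the whole potential argument bounding $t$.
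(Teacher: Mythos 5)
Your construction and the verification of properties (1) and (2) are sound, and they mirror the greedy ``last bad-crosser'' selection sketched in the paper's technical overview. The genuine gap is in the bound on $t$, precisely at the step you flag as the main obstacle: the claim that $D_i$ and $C_{i+1}$ must share a vertex because their endpoints alternate along the closed curve $K$ formed by $P$ and an arc through the face containing $P$'s endpoints. Interleaving of endpoints on $K$ forces two paths to intersect only when both paths lie in the \emph{same} region bounded by $K$. In the present setting a bypass path in $G'$ consists of a prefix along $P_2$ followed by a part internally disjoint from $P$, and that part may emanate to the left or to the right of $P$; the single-face hypothesis only guarantees that a path which leaves on one side re-enters $P_1$ on the same side, not that all bypasses use the same side. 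So it can happen that $D_i$ lives to the left of $P$ while $C_{i+1}$ (built from pieces of $D_{i+1},\dots,D_t$) lives to the right, their only potential common vertices being on $P$ itself, and these need not exist (e.g.\ when the $P_2$-prefix of $D_{i+1}$ departs before reaching $b_i$ and $p_{i+1}\neq p_i$). Since your bad-cross is defined via $\dist_{G'}$ over the whole graph, consecutive elements of your chain can indeed use opposite sides, and then the inequality $\len(A)+\len(C_i)=\len(D_i)+\len(C_{i+1})$ has no starting point because $z$ may not exist; the potential argument, and hence $t\in O(1/\eps)$, does not follow as written. Note also that your overview-style intersection argument is exactly the one the paper only uses under an explicit one-sidedness assumption (``all the edges touching $P$ emanate or enter $P$ from the left'').

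The paper's proof of this lemma avoids the problem by splitting the maximal bypass pairs into left pairs and right pairs and working separately in $G'_{\mathrm{left}}$ and $G'_{\mathrm{right}}$, where every relevant path emanates and re-enters on the same side, so interleaved endpoints genuinely force a shared vertex; it then merges the two $O(1/\eps)$-length sequences, which is harmless because any $P_2$-to-$P_1$ path is entirely one-sided. (The paper's formal selection is also organized differently -- a greedy partition of the pairs into classes in which every two pairs good-cross, rather than your chain of last bad-crossers -- but that difference is cosmetic; your selection would work fine once restricted to a single side.) To repair your argument, run your entire construction twice, once per side, with ``bypass'' and ``bad-cross'' defined with respect to same-side paths only, prove $t=O(1/\eps)$ on each side with your potential argument (which is then valid), and merge the two $(I,V)$ sequences; properties (1) and (2) survive the merge since taking the minimum over a superset of the $v_i$'s only helps and every bypass is covered by one of the two sides.
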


For $v\in P_2$ we denote $v_{first}=\dfirst{G'}{v}{P_1}{\beta}$.
Let $B_f$ be the set of all pairs $(v,v_{first})$, that are maximal with respect to inclusion (i.e. $B_f$ does not include a pair $(v,v_{first})$ if there exists  $(u,u_{first})$ with $v<_P u$ and $u_{first}\le_P v_{first}$).
By the following claim, we can assume that for every $b\in P_2$, the path $D_b$ starts with a prefix $P[b..u]$ and procedes with $D_u$ such that $(u,u_{first}) \in B_f$.

\begin{claim}\label{claim:maxmial_bypass}
    Let $b>_P f$ be a vertex that can bypass $f$.
    There is a path from $b$ to $b_{first}$ beginning with a subpath of $P$ followed by some $D_u$ such that $(u,u_{first})\in B_f$.
\end{claim}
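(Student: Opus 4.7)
My plan is a short case analysis on whether $(b, b_{first})$ already lies in $B_f$. In the trivial case $(b, b_{first}) \in B_f$, I would take $u = b$: the path $D_b$ starts with the degenerate subpath $P[b,b]$ of $P$ followed by $D_u = D_b$, and $(u, u_{first}) \in B_f$ by assumption.

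For the complementary case $(b, b_{first}) \notin B_f$, the definition of $B_f$ produces at least one witness $u' \in P_2$ with $b <_P u'$ and $u'_{first} \le_P b_{first}$. I plan to choose $u$ to be the \emph{latest} such witness on $P_2$ and then establish two things. First, that $P[b,u] \cdot D_u$ is a valid path in $G' = G_1$ from $b$ to $u_{first}$: the prefix $P[b,u]$ is a subpath of $P_2$, whose edges are preserved in $G_1$ (only \emph{incoming} edges to $P_2$ other than those of $P_2$ itself are removed), and $D_u$ is a path in $G'$ by construction. Second, since $\len(P) = 0$, the total length of $P[b,u] \cdot D_u$ is at most $\len(D_u) \le \beta$, so $u_{first}$ is reachable from $b$ in $G'$ within budget $\beta$. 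By the very definition of $b_{first}$ as the first such reachable vertex on $P_1$, this forces $b_{first} \le_P u_{first}$; combined with $u_{first} \le_P b_{first}$ we obtain $u_{first} = b_{first}$.

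It then remains to check that the selected $u$ satisfies $(u, u_{first}) \in B_f$. I will argue by contradiction: if some $u'' \in P_2$ had $u <_P u''$ and $u''_{first} \le_P u_{first}$, then since $u_{first} = b_{first}$, the vertex $u''$ would be a strictly later witness than $u$ for the pair $(b, b_{first})$, contradicting the maximality of $u$ among such witnesses. Hence $(u, u_{first}) \in B_f$, and $P[b,u] \cdot D_u$ is the required path, ending precisely at $b_{first}$.

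The main (minor) technical point will be keeping track of which edges survive in $G' = G_1$; once one observes that the edges of $P_2$ are preserved and that $\len(P) = 0$, the argument reduces to picking $u$ as the latest witness and invoking the definition of $\dfirst{G'}{\cdot}{P_1}{\beta}$ twice, once to upgrade ``reachable'' to ``$b_{first} \le_P u_{first}$'' and once to identify $u_{first}$ with $b_{first}$.
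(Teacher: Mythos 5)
Your proposal is correct. It proves the same statement by a slightly different organization of the argument: the paper first uses the edge\nobreakdash-deletion structure of $G_1$ (no in\nobreakdash-going edges to $P_2$ except its own, no out\nobreakdash-going edges from $P_1$) to decompose $D_b$ as $P[b,u]\cdot D_b[u,b_{first}]$, shows $u_{first}=b_{first}$, and then takes, among all $b$-to-$b_{first}$ paths of length at most $\beta$, one maximizing the departure vertex $u$, deriving $(u,u_{first})\in B_f$ by an exchange/contradiction argument; you instead dispense with the decomposition entirely, case\nobreakdash-split on whether $(b,b_{first})\in B_f$, and in the nontrivial case take the $\le_P$-latest witness $u>_P b$ with $u_{first}\le_P b_{first}$ supplied directly by the definition of $B_f$. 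The shared engine is identical in both proofs: since $\len(P)=0$, the path $P[b,u]\cdot D_u$ has length at most $\beta$, so the definition of $\dfirst{G'}{b}{P_1}{\beta}$ forces $b_{first}\le_P u_{first}$ and hence equality, and extremality of the chosen $u$ (over witnesses for you, over paths for the paper) rules out any pair dominating $(u,u_{first})$. Your route is a bit more elementary for the claim as stated, since it never needs the structural fact that every $P_2$-to-$P_1$ path in $G_1$ splits into a $P_2$-prefix followed by a subpath internally disjoint from $P$; the paper's route establishes that structural form of $D_b$ along the way, but nothing beyond the claim's conclusion is used afterwards, so nothing is lost. One cosmetic remark: your parenthetical description of $G_1$ is incomplete (out\nobreakdash-going edges of $P_1$, including $P_1$'s own edges, are also deleted), but your argument only uses that the edges of $P_2$ survive and that $f<_P b$ keeps $f$ off the prefix, both of which hold, so the proof stands.
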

\begin{proof}
    Since $G'$ does not contain in-going edges into vertices of $P_2$ (except for the edges of $P_2$) and does not contain out-going edges from vertices of $P_1$ it must be that $D_b=P[b,u]\cdot D_b[u,b_{first}]$ for some vertex $u\ge_P b$ such that $D_b[u,b_{first}]$ is internally disjoint from $P$.
    Notice that $\len(D_b[u,b_{first}])=\len(D_b)\le \beta$ since $\len(P)=0$, therefore $u_{first}\le_P b_{first}$.
    Moreover, $\len(P[b,u]\cdot D_u)=\len(D_u)\le \beta$  and therefore, it must be that $u_{first}\ge_P b_{first}$, thus $b_{first}=u_{first}$.

    Now, let us fix $D_b$ as a $b$-to-$b_{first}$ path of length at most $\beta$ that maximizes $u$ (with respect to the $\le_P$ order).
    (Notice that $D_b$ is not necessarily a shortest path.)
    Assume by contradiction that $(u,u_{first})\notin B_f$, so there is a vertex $v\in P_2$  such that $(v,v_{first})\in B_f$ and $v_{first}\le_P u_{first}$ and $u<_P v$.
    However, in this case $\len(P[b,v]\cdot D_v)\le\len(D_v)\le \beta$ which implies $v_{first}\ge_P b_{first}=u_{first}$, and therefore $v_{first}=b_{first}$. The path $P[b,v]\cdot D_v$ contradicts the maximality of $u$.
\end{proof}

Since the endpoints of $P$ lie on the same face, a path $D_b$ that emanates $P$ to the right (resp. left) of $P$ must enter $P$  from the right (resp. left) of $P$.
Let $B_f^{left}$ and $B_f^{right}$ be the set of left and right pairs in $B_f$, respectively.
Let $G'_{right}$ ($G'_{left}$) be the induced graph from $P$, restricted only to vertices $v$ such that there is a path from $P$, that is internally disjoint from $P$, to $v$  that emanates $P$ to the right (resp. left).

We prove \cref{lem:sequenceseasy} for the graph $G'_{right}$ (instead of $G'$).
Applying the same argument for $G'_{left}$ would produce a second pair of sequences $I, V$.
It is straightforward to merge the two $I$'s sequences and the two $V$'s sequences obtained for both cases to complete the proof of \cref{lem:sequenceseasy}, since any $P_2$ to $P_1$ path is contained in either $G'_{right}$ or $G'_{left}$.

\paragraph{An algorithm for \boldmath$\eps=1$.}
We start with a warm-up solution for $\eps=1$ (i.e., a 2-approximation).
In this case, we show sequences $I=(b_1)$ and $V=(v_1)$.
We define $v_1$ and $b_1$ as follows.
$v_1$ is the first vertex of $P_1$ such that there exists a $(b_{v_1},v_1)\in B_f^{right}$. 
$b_1$ is the last vertex of $P_2$ such that there is a pair $(b_1,(b_{1})_{first})\in B_f^{right}$.
We show that the sequences $I$ and $V$ satisfy the lemma for $\eps = 1$.
Notice that the demand $t\in O(1 / \eps)$ is trivial in the current context as $\eps$ and $t$ are both constants.
Additionally, notice that from the definition of $b_1$, every vertex $b>_{P_2} b_1$ cannot bypass $f$.
\begin{claim}
    If $f<_P b\le_P b_1$ then $v_1$ is a $\ddfirst{G'}{b}{P_1}{\beta}{\beta}$.
\end{claim}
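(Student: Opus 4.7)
Plan. Unfolding \cref{def:deltafirst}, the claim requires two things: (a) $v_1 \le_{P_1} \dfirst{G'}{b}{P_1}{\beta}$, and (b) $\dist_{G'}(b,v_1) \le \beta + \beta = 2\beta$. (Here $G'$ is really $G'_{right}$, as per the warm-up setup; the other side is handled by a symmetric argument and the two pairs are merged later.) Part (a) is the soft direction and I dispatch it directly via \cref{claim:maxmial_bypass}: letting $w=\dfirst{G'}{b}{P_1}{\beta}$, that claim yields a $b$-to-$w$ path which is a subpath of $P_2$ concatenated with $D_u$ for some pair $(u,u_{first})\in B_f^{right}$ with $u_{first}=w$. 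In particular $w$ is the $P_1$-coordinate of a pair in $B_f^{right}$, so by the minimality in the definition of $v_1$ we obtain $v_1 \le_{P_1} w$.

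The core of the proof is part (b). Since $b \le_{P_2} b_1$ and $\len(P)=0$, we have $\dist_{G'}(b,b_1)=0$ via $P_2[b,b_1]$, so it suffices to show $\dist_{G'}(b_1,v_1)\le 2\beta$. Consider the two canonical bypass paths $D_{b_{v_1}}$ (from $b_{v_1}$ to $v_1$) and $D_{b_1}$ (from $b_1$ to $(b_1)_{first}$); each has length at most $\beta$, and both use the right side of $P$ in the planar embedding. The maximality clause in the definition of $B_f$ forces the map $v \mapsto v_{first}$ to be strictly monotone on $B_f^{right}$, so assuming for the moment $b_{v_1}\ne b_1$ we get $v_1 <_{P_1} (b_1)_{first}$; hence along $P$ the four relevant endpoints appear in the order $v_1,\, (b_1)_{first},\, f,\, b_{v_1},\, b_1$, which is interleaved for the two bypasses.

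Because the two paths lie on the same side of $P$ with interleaved $P$-endpoints, a Jordan-curve argument (relying on the hypothesis that the endpoints of $P$ lie on a common face of $G$) shows that $D_{b_{v_1}}$ and $D_{b_1}$ must share a vertex $z$. Swapping suffixes at $z$ yields two replacement walks in $G'$, namely $D_{b_1}[b_1,z]\cdot D_{b_{v_1}}[z,v_1]$ from $b_1$ to $v_1$, and $D_{b_{v_1}}[b_{v_1},z]\cdot D_{b_1}[z,(b_1)_{first}]$ from $b_{v_1}$ to $(b_1)_{first}$. Their total length equals $\len(D_{b_1})+\len(D_{b_{v_1}})\le 2\beta$, so in particular $\dist_{G'}(b_1,v_1)\le 2\beta$, which combined with $\dist_{G'}(b,b_1)=0$ finishes the generic subcase. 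The degenerate subcase $b_{v_1}=b_1$ collapses to $v_1=(b_1)_{first}$, and the bound $\dist_{G'}(b,v_1)\le\beta$ is immediate from $P_2[b,b_1]\cdot D_{b_1}$.

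The main obstacle I anticipate is the planarity/interleaving step justifying the crossing of $D_{b_{v_1}}$ and $D_{b_1}$. Some care is needed because by \cref{claim:maxmial_bypass} these canonical paths may begin with a prefix of $P_2$ before becoming internally disjoint from $P$, so the clean ``same side plus interleaved ports'' argument should be applied to the maximal bypass suffixes of each path rather than to the paths themselves. Everything else is routine verification: matching the two conditions of \cref{def:deltafirst} and summing two path lengths.
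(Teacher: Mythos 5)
Your proposal is correct and follows essentially the same route as the paper: the ordering condition comes from \cref{claim:maxmial_bypass} together with the minimality defining $v_1$, and the distance bound comes from the planarity-forced intersection of $D_{b_1}$ and $D_{b_{v_1}}$ (both emanating to the right of $P$ with interleaved endpoints) followed by swapping suffixes at the intersection vertex, giving a $b_1$-to-$v_1$ path of length at most $2\beta$ and hence $\dist_{G'}(b,v_1)\le 2\beta$ since $\len(P)=0$. The only differences are cosmetic: you treat the degenerate case $b_{v_1}=b_1$ and the $P_2$-prefix subtlety explicitly, which the paper leaves implicit.
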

\begin{proof}
Notice that $b_{first} >_{P_1} v_1$ by \cref{claim:maxmial_bypass} and the definition of $v_1$.
We will show that there is a path of length at most $2\beta$ from $b_1$ to $v_1$, which concludes the claim as every $b$ can reach $v$ via $b_1$ with a path of length $2\beta$, making $v$ a $\ddfirst{G'}{b}{P_1}{\beta}{\beta}$ vertex.

Consider the paths $D_{b_1}$ and $D_{b_{v_1}}$.
Since both paths emanate $P$ to the right, and since $ v_1\le_P (b_1)_{first} <_P b_{v_1} \le_{P} b_1$ (due to both pairs being in $B_f$), it must be the case that $D_{b_1}$ intersects with $D_{b_{v_1}}$, say at vertex $z$.
It follows that the path $D_{b_1}[b_1,z] \cdot D_{b_v}[z,v_1]$ is a path of length at most $2\beta$ from $b_1$ to $v_1$, as required.
\end{proof}

\paragraph{An algorithm for any \boldmath$\eps$.}
We now extend the above solution from $\eps=1$ to any $\eps>0$.
We will partition $B_f^{right}$ into $O(1/\eps)$ subsets.
For ease of presentation let us denote $B=B_f^{right}$.

Let $A_1=(b_1,p_1)$ and $A_2=(b_2,p_2)$ with $b_1>_{P_2}b_2$ be two different pairs in $B$.
We first note that due to the maximality with respect to inclusion $p_2<_P p_1<_P f<_P b_2<_P b_1$.
Thus, we say that $A_1$ and $A_2$ {\em cross} each other, and denote it by $A_1\cross A_2$.
If there is a path $S$ from $b_1$ to $p_2$ that is internally disjoint from $P$ and is to the right of $P$ (by definition of $G'_{right}$) of length at most $(1+\eps)r$ then we say that $A_1$ and $A_2$ {\em good-cross} each other, and denote it as $A_1\goodcross A_2$. Otherwise, we say that they bad-cross each other and denote it by $A_1\badcross A_2$.

\begin{lemma}\label{lem:partition_B}
    $B$ can be partitioned into $O(1/\eps)$ subsets $B_1, B_2, \dots, B_t$ such that every two pairs in the  same $B_i$ good-cross each other.
\end{lemma}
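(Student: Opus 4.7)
My plan is to prove Lemma~\ref{lem:partition_B} by explicitly constructing the partition via the iterative ``good-cross / bad-cross'' chain described in the overview preceding the lemma, and then verifying its two required properties: $t = O(1/\eps)$ classes, and pairwise good-crossing within each class.

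The construction proceeds greedily. Let $C_1$ be the pair in $B$ with the maximum $b$-coordinate, and for each $i \geq 1$ let $C_{i+1}$ be the pair in $B$ with the largest $b$-coordinate strictly less than $b_{C_i}$ such that $C_i \badcross C_{i+1}$; the chain terminates at $C_t$ when no further bad-crosser exists. I would then define
\[
    B_i := \{A \in B : b_{C_{i+1}} <_{P_2} b_A \le_{P_2} b_{C_i}\},
\]
with a sentinel $C_{t+1}$ placed at $f$, so that $B_1, \dots, B_t$ form a partition of $B$.

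To bound $t = O(1/\eps)$, I would formalize the potential argument sketched in the overview. The invariant is that there exists a right-side path $\Pi_i$ in $G'_{right}$ from $b_{C_i}$ to $p_{C_1}$ of length at most $(1 - (i-1)\eps)\beta$. The base case $\Pi_1 := D_{C_1}$ has length at most $\beta$. For the inductive step, the endpoints of $\Pi_i$ (namely $b_{C_i}$ and $p_{C_1}$) and of $D_{C_{i+1}}$ (namely $b_{C_{i+1}}$ and $p_{C_{i+1}}$) alternate along $P$, since by the Pareto-optimality of $B$ and the chain construction we have $p_{C_{i+1}} <_P p_{C_1} <_P f <_P b_{C_{i+1}} <_P b_{C_i}$. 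Both paths lie strictly to the right of $P$, so planarity forces them to meet at some vertex $z$. Splicing at $z$ yields the candidate path $\Pi_i[b_{C_i}, z] \cdot D_{C_{i+1}}[z, p_{C_{i+1}}]$ from $b_{C_i}$ to $p_{C_{i+1}}$ of length exceeding $(1+\eps)\beta$ (by the bad-cross of $C_i$ and $C_{i+1}$), so the complementary splice $\Pi_{i+1} := D_{C_{i+1}}[b_{C_{i+1}}, z] \cdot \Pi_i[z, p_{C_1}]$ has length at most $\beta + (1 - (i-1)\eps)\beta - (1+\eps)\beta = (1 - i\eps)\beta$, maintaining the invariant. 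Since $\len(\Pi_t) \geq 0$, we conclude $t \le 1/\eps + O(1)$.

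The main obstacle is proving that for any two pairs $A, A' \in B_i$ with $b_A >_{P_2} b_{A'}$, there exists a right-side path from $b_A$ to $p_{A'}$ of length at most $(1+\eps)\beta$. By construction, neither $A$ nor $A'$ bad-crosses $C_i$, so we obtain witness right-side paths $F_A$ from $b_{C_i}$ to $p_A$ and $F_{A'}$ from $b_{C_i}$ to $p_{A'}$, each of length at most $(1+\eps)\beta$. The direct splice of $D_A$ and $D_{A'}$ at their planar intersection only yields the trivial bound $2\beta$, which is too weak. My approach would be to argue by contradiction: if $A \badcross A'$, the complementary splice produces a right-side path from $b_{A'}$ to $p_A$ of length $< (1-\eps)\beta$, which I would then combine with the witness path $F_A$ (or a portion of $\Pi_i$) via another planar crossing to exhibit a pair with $b$-coordinate in $(b_{C_{i+1}}, b_{C_i})$ that bad-crosses $C_i$ — contradicting the maximal greedy choice of $C_{i+1}$. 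Turning this informal contradiction into a rigorous planar case analysis of how $D_A, D_{A'}, F_A, F_{A'}$ interact is the central technical step I would spend the most effort on; a fallback, should the interval partition fail to be pairwise good-crossing, is to refine the chain by requiring $C_{i+1}$ to be the largest-$b$ bad-crosser not merely of $C_i$ but of \emph{any} pair with $b \ge b_{C_i}$, which strengthens the extremality available in the contradiction.
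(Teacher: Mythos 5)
Your splicing/potential argument for the $O(1/\eps)$ bound is sound and is essentially the paper's own argument (in the paper the chain of bad-crossing pairs arises between consecutive phases of a greedy algorithm rather than being defined directly, but the induction and the arithmetic are the same). The gap is exactly where you place it: your partition into $b$-coordinate intervals $B_i=\{A\in B: b_{C_{i+1}}<_{P_2} b_A\le_{P_2} b_{C_i}\}$ is never shown to be pairwise good-crossing, and that property is the entire content of the lemma beyond the bound on $t$. The extremal choice of $C_{i+1}$ only controls bad-crosses in which $C_i$ itself is the \emph{larger} pair; two pairs $A,A'$ lying strictly inside the interval may bad-cross each other without contradicting that choice. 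The route you sketch (both $A$ and $A'$ are good-crossed by $C_i$, then derive a contradiction) does not go through as stated: a good-cross witness must be a path from $b_A$ itself to $p_{A'}$ that is internally disjoint from $P$ and to the right of $P$, so you cannot upgrade the witness starting at $b_{C_i}$ by prepending $P_2[b_A,b_{C_i}]$ — that prepending only yields the weaker statement $\dist_{G'}(b_A,p_{A'})\le(1+\eps)\beta$ (which is what the paper uses \emph{after} the lemma), not the good-cross relation the lemma asserts; and splicing $D_A$ with the witness from $b_{C_i}$ at a planar crossing only gives a bound around $2\beta$. Your fallback (choosing $C_{i+1}$ as the largest-$b$ bad-crosser of any pair with $b\ge b_{C_i}$) has the same defect, since it still says nothing about bad-crosses whose larger element lies strictly inside the interval.

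The paper avoids this issue by making the good-cross property hold by construction rather than by an interval cut: in phase $i$ it scans the pairs still in $B$ in decreasing order of their first coordinate and admits a pair into the current class only if it good-crosses \emph{every} pair already admitted in that phase; rejected pairs are left for later phases (so the classes are generally not intervals). Pairwise good-crossing is then immediate, and the only thing left to prove is $t\le 1/\eps+1$, which follows from precisely your chain/potential argument applied to one bad-cross witness between each pair of consecutive phases. If you replace your interval construction by this greedy filtering, your remaining paragraphs constitute a complete proof; as written, the central claim is unproven.
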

\begin{proof}
We present an algorithm that partitions $B$.
The algorithm runs in phases, creating the  subset $B_i$ in the $i$'th phase.
At the beginning of every phase, we initialize a set $B_{good}= \emptyset$.
During a phase, we examine the pairs of $B$ in decreasing $<_P$ order of their first vertex.
When examining a pair $A$, we check if $A$ good-crosses all pairs in $B_{good}$, and if so we add $A$ into $B_{good}$ and remove $A$ from $B$.
At the end of the $i$'th phase, we set $B_i=B_{good}$ and append $B_i$ to the partition and if $B=\emptyset$ the algorithm terminates.
Clearly $|B_{good}|\ge 1$, so the algorithm halts with a partition of the initial $B$.
It should also be clear from the construction of $B_{good}$ at every iteration that two pairs in the same $B_i$ good-cross each other.

\begin{figure}[t]
\begin{center}
 \includegraphics[width=0.5\textwidth]{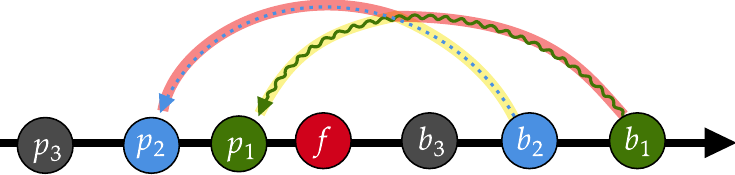}

\includegraphics[width=0.5\textwidth]{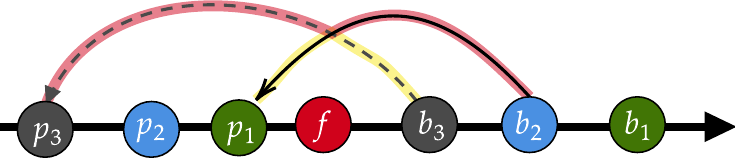}
\caption{An illustration for the proof of \cref{lem:partition_B}.
In the top figure, the green wavy path corresponds to $A_1$ and the blue dotted path corresponds to $A_2$.
The sum of the green and blue paths is bounded by $2\beta$.
Since $A_1\badcross A_2$, the highlighted red path from $b_1$ to $p_2$ is expensive: its length is at least $(1+\eps)\beta$.
Since the yellow and red paths sum up to $\len(A_1) + \len(A_2)$, it follows that the highlighted yellow path from $b_2$ to $p_1$ is cheap:  its length is at most $(1-\eps)\beta$.
\\
The bottom figure demonstrates the next inductive step.
Now, the cheap $b_2$-to-$p_1$ path from the previous step is shown in solid black.
The dashed path from $b_3$ to $p_3$ represents $A_3$.
As in the top picture, we decompose $A_3$ and the cheap path into a highlighted red $b_2$-to-$p_3$ path and a highlighted yellow $b_3$-to-$p_1$ path.
It follows from $A_1 \badcross A_2$ that the red path is of length more than $(1+\eps)\beta$.
Since we started with a cheap $b_2$-to-$p_1$ path, the sum of red and yellow is now $2\beta - \eps \beta$, which leads to the length of yellow being at most $\beta - 2\eps \beta$.
\label{fig:easy}
}
\end{center}
\end{figure}

To conclude the proof, we show that the number of phases $t\le 1/\eps+1$.
    By definition, for every $i \in [2..t]$ there is a pair $A_i=(b_i,p_i)$ in $B_i$ and  a pair $A_{i-1}=(b_{i-1},p_{i-1})$ in $B_{i-1}$ with $b_{i-1}>_{P_2} b_i$ such that $A_{i-1}\badcross A_i$.
    Let $A_t$ be some pair in $B_t$ and $A_{t-1},A_{t-2} , \ldots , A_1$ be a sequence of pairs with $A_{i-1} \in B_{i-1}$ being an arbitrary pair satisfying the above with respect to $A_i \in B_i$, starting with $A_t$.

    We claim that for every $i\in[1..t]$ there is a path from $b_i$ to $p_1$ in $G'_{right}$ of length at most $(1-(i-1)\eps)\beta$ (see \cref{fig:easy}).
    The claim is trivial for $i=1$.
    We thus assume the claim holds for $i$, and prove it for $i+1$.
    Let $P_1$ be the path from $b_i$ to $p_1$ and let $P_2 = D_{b_{i+1}}$ be the path corresponding to the pair $(b_{i+1},p_{i+1})$.
    Since $p_{i+1} \le_P p_1 <_P b_{i+1} <_P b_{i}$, and since paths only emanate to the right of $P$ in $G'_{right}$, we must have that $P_1$ and $P_2$ intersect at some vertex $z$.
    Since $A_i\badcross A_{i+1}$, any path from $b_i$ to $p_{i+1}$ in $G'_{right}$ is of length more than $(1+\eps)\beta$.
    In particular the path $S=P_1[b_i,z]\cdot P_2[z,p_{i+1}]$ is of length more than $(1+\eps)\beta$.
    Let $S'=P_2[b_{i+1},z]\cdot P_1[z,p_1]$.
    We show that $\len(S')\le (1-i\eps)\beta$.
    By the induction hypothesis, $\len(P_1)\le (1-(i-1)\eps)\beta$.
    By the definition of $B$ we have $\len(P_2)\le \beta$.
    Notice that
    $\len(P_1)+\len(P_2)=\len(S)+\len(S')$.
    Since $\len(S)>(1+\eps)\beta$ we have $\len(S')\le \beta+(1-(i-1)\eps)\beta - (1+\eps)\beta= (1-i\eps)\beta$, as required.

To conclude, if $t>1/\eps+1$ then there is a path from $b_t$ to $p_1$ of negative length, a contradiction.
\end{proof}

We proceed to define the sequences $I$ and $V$.
Let $B_1,B_2, \ldots B_t$ be the partition obtained by \cref{lem:partition_B}.
For every $i\in[t]$, let $p'_i$ be the first vertex of $P$ such that there exists a pair $(\cdot,p'_i)\in B_i$.
Let $b'_i$ be the last vertex of $P$ such that there is a $(b'_i,\cdot)\in B_i$.
Assume that the $B_i$'s are indexed such that $b'_1<_P b'_2 <_P \ldots <_P b'_t$ \footnote{The algorithm described in the proof of \cref{lem:partition_B} would output $B_i$'s exactly in the reverse of this order}.
We define $I = (f,b'_1,b'_{2} \ldots ,b'_t)$ and $V= (p'_{1},p'_{2}, \ldots p'_t)$.

It follows immediately from \cref{lem:partition_B} that the lengths of the sequences are $O(\frac{1}{\eps})$.
Additionally, note that each $b'_i$ is a last vertex on $P$ in its respective $B_i$, and $b'_t$ is the last vertex on $P$ among all $b'_i$'s.
Since the sets $B_i$'s form a partition of $B$, it holds that $b'_t$ is the last vertex of $P$ such that there is a pair $(b'_t,\cdot)$ in $P$.
From the definition of $B$, it means that every $b>_{P_2} b'_t$ cannot bypass $f$.
We prove via the following claim that $I$ and $V$ satisfy the remaining property of \cref{lem:sequenceseasy}.
\begin{claim}
     For every vertex $b\in P_2$ we have that $\min_{\le_{P_1}} \{v'_i\mid b'_i\ge_{P_2} b\}$ is a $\ddfirst{G'}{b}{P_1}{\beta}{\eps \beta}$
\end{claim}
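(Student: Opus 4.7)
My plan is to fix some $b \in P_2$ and produce the required witness in two stages: first identify which $B_k$ ``realizes'' $\dfirst{G'}{b}{P_1}{\beta}$, and then use the good-cross property inside a single $B_{i^*}$ to bound the distance from $b$ to the chosen $v'_{i^*}$. We may assume $b \le_{P_2} b'_t$, since the lemma already states that vertices strictly later than $b'_t$ cannot bypass $f$, so there is nothing to prove for them.

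First I would locate a ``certificate'' pair. By \cref{claim:maxmial_bypass}, there exists $(u, u_{first}) \in B$ with $u \ge_{P_2} b$ and $u_{first} = \dfirst{G'}{b}{P_1}{\beta}$. Let $k$ be the (unique) index with $(u, u_{first}) \in B_k$. By the definitions of $b'_k$ (last first-coordinate in $B_k$) and $p'_k$ (first second-coordinate in $B_k$), we get $b'_k \ge_{P_2} u \ge_{P_2} b$, so $k$ is a legal index in the set $S := \{ i : b'_i \ge_{P_2} b\}$, and $p'_k \le_{P_1} u_{first} = \dfirst{G'}{b}{P_1}{\beta}$. Consequently, if $v^* := \min_{\le_{P_1}}\{ v'_i : i \in S \}$ denotes the quantity in the claim, then
\[
v^* \;\le_{P_1}\; v'_k \;=\; p'_k \;\le_{P_1}\; \dfirst{G'}{b}{P_1}{\beta},
\]
which verifies condition (1) of \cref{def:deltafirst}.

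Next I would verify the distance bound $\dist_{G'}(b, v^*) \le \beta + \eps \beta$. Let $i^*$ be the index in $S$ attaining the minimum, and let $(b_{i^*}^\circ, p'_{i^*})$ be any pair in $B_{i^*}$ witnessing $p'_{i^*}$. There are two cases. If $b'_{i^*} = b_{i^*}^\circ$, then $(b'_{i^*}, p'_{i^*}) \in B_{i^*}$ and so by definition of $B$ there is a path from $b'_{i^*}$ to $p'_{i^*}$ in $G'_{right}$ of length at most $\beta$. Otherwise $b'_{i^*} >_{P_2} b_{i^*}^\circ$ (by maximality of $b'_{i^*}$), and since the two pairs both lie in $B_{i^*}$ they good-cross, producing a path from $b'_{i^*}$ to $p'_{i^*}$ in $G'_{right}$ of length at most $(1+\eps)\beta$. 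In either case, concatenating this path with the zero-length subpath $P_2[b, b'_{i^*}]$ (recall $\len(P)=0$ and $b'_{i^*} \ge_{P_2} b$) yields a $b$-to-$v^*$ path of length at most $(1+\eps)\beta$, which is condition (2).

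The main obstacle I anticipate is just confirming that the case distinction $b'_{i^*} = b_{i^*}^\circ$ vs.\ $b'_{i^*} >_{P_2} b_{i^*}^\circ$ is exhaustive and correctly yields a good-cross pair; once this is settled, the rest is bookkeeping around the monotone choice of $i^*$ and the definition of $B_{i^*}$. No new combinatorial ideas are needed beyond \cref{claim:maxmial_bypass} and \cref{lem:partition_B}.
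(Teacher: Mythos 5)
Your proposal is correct and follows essentially the same route as the paper: the ordering condition via the certificate pair from \cref{claim:maxmial_bypass} landing in some $B_k$, and the distance bound via a good-cross inside the block of the minimizing index together with the zero-length prefix $P_2[b,b'_{i^*}]$. The only cosmetic differences are that the paper bounds $\dist_{G'}(b,v'_i)$ for every admissible index rather than just $i^*$, and you make explicit the degenerate case where the pairs realizing $b'_{i^*}$ and $p'_{i^*}$ coincide, which the paper glosses over.
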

\begin{proof}
      First, we claim that for every $b'_i \ge_{P_2} b$, we have $\dist_{G'}(b,v'_{i}) \le (1+\eps)\beta$.
      Recall that $b'_i$ and $v'_i$ are obtained as components of two pairs $(b'_i,\cdot)$ and $(\cdot,v'_i)$ in $B_i$.
      Since both pairs are in $B_i$, they good-cross each other.
      Therefore, $\dist_{G'}(b'_i,v'_i) \le (1+\eps) \beta$ via some shortest path $D_i$.
     We thus have that $P[b,b'_i]\cdot D_i$ is a path from $b$ to $v'_i$ of length at most $(1+\eps)\beta$.

      By \cref{claim:maxmial_bypass} there is a pair $(u,b_{first})\in B$ such that the path $P[b,u]\cdot D_u$ is of length at most $\beta$.
      Let $B_j$ be the set that contains the pair $(u,b_{first})$.
      Recall that $(b_j,\cdot) \in B_j$ is the pair with maximal vertex $b_j$ (with respect to $\le_{P_2}$) in $B_j$, and $(\cdot, v_j)$ is the pair with minimal (with respect to $\le_{P_1}$) vertex $v_j$ in $B_j$.
      It follows that $b_j \ge_{P_2} u \ge_{P_2} b$ and $v_j \le_{P_1} b_{first}$.
      We have therefore shown that there is $b_j \in I$ such that $b_j \ge_{P_2} b$ and $v_{j} \le_{P_1} b_{first}$.
      Therefore, in particular, the vertex $b_i \ge_{P_2} b \in I$ that minimizes $v_i$ also has $v_i \le_{P_1} b_{first}$.

      In conclusion, for $v=\min_{\le_{P_1}} \{v_i\mid b_i\ge_{P_2} b\}$, we have shown that $\dist_{G'}(b,v) \le (1+\eps)\beta$, and that $v\le_{P_1} b_{first}$.
      Therefore, $v$ is an $\ddfirst{G'}{b}{P_1}{\beta}{\eps \beta}$ as required.
\end{proof}

This concludes the proof of \cref{lem:sequenceseasy}, thus implying a simple labeling scheme for $\LabelPztoPi_{\beta,\gamma}$:
\begin{lemma}\label{lem:labelPztoPi}
    There exists a labeling scheme $\LabelPztoPi_{\beta,\gamma}$ with labels of size $\Opoly$.
\end{lemma}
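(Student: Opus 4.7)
The plan is to translate \cref{lem:sequenceseasy} directly into a labeling scheme: since the sequences $I, V$ it produces depend only on $f$ and the fixed parameters (not on the query vertex $b$), all information needed for the query can be packaged into the label of $f$, while $b$'s label needs only its default index on $P$.

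First, I would invoke \cref{lem:sequenceseasy} with parameter $\beta$ and approximation $\eps$, obtaining the sequences $I=(b_1,\ldots,b_t)\subseteq P_2$ and $V=(v_1,\ldots,v_t)\subseteq P_1$ of length $t=O(1/\eps)$. Let $m_i:=\min_{\le_{P_1}}\{v_j : j \ge i\}$ denote the suffix minima of $V$ under the $\le_{P_1}$ order. The label of $f$ would store the indices on $P$ of $b_1,\ldots,b_t$ and, for each $i\in[t]$, the identity of $m_i$ together with the two vertices $\dfirst{\Gf}{m_i}{P_1}{\gamma}$ and $\dfirst{\Gf}{m_i}{P_2}{\gamma}$. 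This is $O(1/\eps)$ entries each of $O(\log n)$ bits, for a total label size of $\Opoly$.

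Given the labels of $f$ and $b$, the decoder scans the stored $b_i$-indices to find the smallest $i^*$ with $b_{i^*}\ge_{P_2} b$. If no such $i^*$ exists then by item (2) of \cref{lem:sequenceseasy} the vertex $b$ cannot bypass $f$, so the decoder reports null for both outputs. Otherwise, the decoder returns the precomputed values $\dfirst{\Gf}{m_{i^*}}{P_1}{\gamma}$ and $\dfirst{\Gf}{m_{i^*}}{P_2}{\gamma}$. Correctness follows from item (1) of \cref{lem:sequenceseasy}: the vertex $x:=m_{i^*}=\min_{\le_{P_1}}\{v_i : b_i\ge_{P_2} b\}$ is an $\ddfirst{G_1}{b}{P_1}{\beta}{\eps\beta}$, and the returned pair is precisely what the lemma requires.

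I do not anticipate a substantive obstacle, since all the combinatorial work — in particular the good-cross/bad-cross potential argument bounding $t$ by $O(1/\eps)$ — has been absorbed into \cref{lem:sequenceseasy}. The only mild point of care is precomputing the suffix minima $m_i$ and their two $\dfirst{\Gf}{\cdot}{\cdot}{\gamma}$ values so that the query reduces, after the $O(1/\eps)$-time scan for $i^*$, to a single array lookup rather than a fresh minimum computation at query time.
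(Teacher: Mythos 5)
Your proposal is correct and matches the paper's proof: both store, in the label of $f$, the sequences from \cref{lem:sequenceseasy} together with the precomputed $\dfirst{\Gf}{\cdot}{P_1}{\gamma}$ and $\dfirst{\Gf}{\cdot}{P_2}{\gamma}$ values (the paper keeps them for all $v\in V$, you only for the suffix minima, a harmless optimization), while $b$ contributes just its index on $P$, and the decoder computes $x=\min_{\le_{P_1}}\{v_i\mid b_i\ge_{P_2} b\}$ and returns the stored pair.
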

\begin{proof}
    \hlgray{Every vertex $u\in P$ stores its index in $P$.   Moreover, $u$ also stores the sequences $I_u$ and $V_u$ as defined in \mbox{\cref{lem:sequenceseasy}}.
    Additionally, for every $v\in V_u$ the label of $u$ stores $\dfirst{G\setminus\{u\}}{v}{P_1}{\gamma}$ and  $\dfirst{G\setminus\{u\}}{v}{P_2}{\gamma}$.} 
    \cref{lem:sequenceseasy} implies that the length of a label is $\Opoly$ .

    When two labels $\LabelPztoPi_\beta(b)$ and  $\LabelPztoPi_\beta(f)$ are given, one can compute $x = \min_{\le_{P_1}} \{v_i\mid b_i\ge_{P_2} b\}$ using the label of $f$ and the index of $b$, and by the first property of \cref{lem:sequenceseasy} it is guaranteed that $x$ is an $\ddfirst{G'}{b}{P_1}{\beta}{\eps \beta}$, and returns $\dfirst{\Gf}{x}{P_1}{\gamma}$ and  $\dfirst{\Gf}{x}{P_2}{\gamma}$, that were stored explicitly, as required.
\end{proof}

\subsection{The $\LabelPztoPz$ labeling}\label{sec:hard_prob}

In this section, we provide a labeling scheme for $P$ such that given the labels of two vertices $f<_P b$ one can compute $\ddfirst{G\setminus (P_1\cup\{f\})}{b}{P_2}{\beta}{\eps\beta}$.

\paragraph{A labeling scheme  for \boldmath$\eps=1$.}
We start with a warm-up solution for $\eps=1$ (i.e., when the goal is to compute $\ddfirst{G\setminus (P_1\cup\{f\})}{b}{P_2}{\beta}{\beta}$).

For any vertex $v\in P$, let $v_{last}$ be the last vertex in $P$ that has a path to $v$ in $G$ of length at most $\beta$ that does not touch any vertex of $P$ before $v$.
The pair $(v_{last},v)$ is called a \emph{detour}.
Let $D$ be the set of all detours $(v_{last},v)$ for $v\in\P$.
Recall that the \emph{size} of the detour $(v_{last},v)$ is the number of vertices in $P[v,v_{last}]$.
We partition the set of detours $D$ into exponential classes of sizes, as follows:
For any $i\in [0,\ceil{\log_{1.1}|P|+1}]$ let $D_i$ be the set of all detours whose size is in the range $[1.1^i,1.1^{i+1})$.
Let $x=1.1^i$.
We further partition every $D_i$ into subsets:
For any $k$, let $D_{i,k}$ be the subset of $D_i$ that contains all the detours $(v_{last},v)$ where both $v$ and $v_{last}$ are in the subpath $W_{i,k}=P[0.1\cdot k\cdot x, 0.1\cdot k\cdot x+1.2x]$ of $P$ (which we call a \emph{window}).
An important property is that any vertex $v$ of $\P$ is contained in $O(1)$ windows for a specific $D_i$ and in $O(\log n)$ windows in total.

\begin{claim}[Properties of the subsets]\label{claim:properties}
    \,
    \begin{itemize}
        \item Any vertex $v$ is contained in $O(\log n)$ windows.
        \item Let $m_{i,k}=W_{i,k}[\ceil{|W|/2}]$ be the middle of $W_{i,k}$.
        For any detour $(v_{last},v)\in D_{i,k}$ we have $v<_P m_{i,k}<_P v_{last}$.
    \end{itemize}

\end{claim}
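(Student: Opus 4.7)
The plan is to prove the two bullets separately by elementary counting for the first and by straightforward arithmetic on window coordinates for the second, using only the definitions of $D_i$, $D_{i,k}$, and $W_{i,k}$.

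For the first bullet, I would fix a scale $i$ and use $x=1.1^i$. The windows at scale $i$ are translates of a common interval of length $1.2x$ along $P$, with consecutive windows shifted by $0.1x$. Hence each vertex of $P$ belongs to at most $\lceil 1.2x/(0.1x)\rceil = 12$ windows at scale $i$. Since the number of scales is $i\in[0,\lceil\log_{1.1}|P|+1\rceil]=O(\log n)$, summing over scales gives that every vertex is in $O(\log n)$ windows in total.

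For the second bullet, I would write $W_{i,k}=P[a,a+1.2x]$ with $a=0.1\cdot k\cdot x$, so that the middle $m_{i,k}$ lies at distance $0.6x$ from the start of the window, i.e.\ at position $a+0.6x$ along $P$. The two hypotheses $(v_{last},v)\in D_{i,k}$ give
\[
a \le v \le v_{last} \le a+1.2x
\qquad\text{and}\qquad
v_{last}-v \ge x,
\]
where the first is membership in the window and the second is the size bound defining $D_i$. Subtracting $v_{last}-v\ge x$ from $v_{last}\le a+1.2x$ yields $v\le a+0.2x < a+0.6x = m_{i,k}$, so $v<_P m_{i,k}$. Adding $v_{last}-v\ge x$ to $v\ge a$ yields $v_{last}\ge a+x > a+0.6x = m_{i,k}$, so $m_{i,k}<_P v_{last}$. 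This is exactly the claimed inequality.

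The main obstacle is essentially cosmetic: reconciling the paper's convention that "size" counts \emph{vertices} with the fact that window lengths in the formulas are real numbers. The additive $\pm 1$ slack is absorbed easily by the gap between $x$ and $1.2x$ (as $0.6x - 0.2x = 0.4x \gg 1$ whenever $x$ is non-trivial; for the smallest scales the claim can be verified by inspection). With this minor bookkeeping handled, both properties follow directly from the definitions, and no deeper structural argument about $G$ or $\Gf$ is needed for this claim.
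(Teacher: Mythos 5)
Your proof is correct, and it matches the paper's treatment: the paper states \cref{claim:properties} as an immediate consequence of the definitions of $D_i$, $D_{i,k}$ and $W_{i,k}$ (counting $O(1)$ overlapping windows per scale over $O(\log n)$ scales, and the size lower bound $\ge 1.1^i$ versus the window length $1.2\cdot 1.1^i$ forcing the detour to straddle the middle vertex), which is exactly the computation you carry out. Your handling of the vertex-count versus real-length bookkeeping via the $0.4x$ slack is a reasonable way to make this explicit.
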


Let $H_{i,k}$ be the (unweighted) graph that is composed of $W_{i,k}$ enriched with an edge $(v_{last},v)$ for  every detour $(v_{last},v)\in D_{i,k}$.
For two detours $(u,v)$ and $(x,y)$ with $u>x$ we say that the detours cross each other, and denote $(u,v)\cross(x,y)$ if $u>_Px>_Pv>_Py$.
The following claim is used here only for $H=H_{i,k}$, we prove a stronger claim, regarding subgraphs of $H_{i,k}$, which will be useful later.

\begin{claim}\label{claim:2-intervals}

    Let $H$ be a subgraph of $H_{i,k}$.
    If there is a $u$-to-$v$ path $S$ in $H$ that does not touch any vertex of $P$ before $v$, then there is a $u$-to-$v$ path in $H$ that does not touch any vertex of $P$ before $v$ and uses either one or two detours. If it uses two detours, then those detours must  cross.
\end{claim}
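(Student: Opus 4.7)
I would decompose $S$ into its maximal forward walks along $P$ interleaved with the detour edges it uses, writing $S$ as $u \to a_1,\ d_1=(a_1,b_1),\ b_1 \to a_2,\ d_2,\ \dots,\ b_{m-1}\to a_m,\ d_m=(a_m,b_m)$, with $u \le a_1$, $b_j \le a_{j+1}$, $b_j \ge v$ for all $j$, and $b_m = v$ (the last detour must end at $v$ since forward walks on $P$ cannot decrease the $P$-index). The cases $m \le 1$ are immediate, so assume $m \ge 2$.

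The heart of the argument will be two elementary coverage observations about which $P$-edges are already guaranteed to appear inside $S$. Setting $M_m = \max_j a_j$, a straightforward induction using $b_j < a_j$ shows that the forward walks of $S$ jointly cover the entire interval $[u, M_m]$ on $P$: given $x \in [u, M_m]$, the smallest $j$ with $a_j \ge x$ gives $x \in [b_{j-1}, a_j]$ (with $b_0 := u$). An analogous pointwise argument shows the walks also cover $[b_1, a_m]$: given $x \in [b_1, a_m]$, the minimal $j \ge 1$ with $a_{j+1} \ge x$ satisfies either $j=1$ (so $b_1 \le x$) or $j \ge 2$ (so $b_j < a_j < x$). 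Hence every $P$-edge inside $[u, M_m]$ or inside $[b_1, a_m]$ is an edge of $H$ itself, not merely of $H_{i,k}$.

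I would then compare the two boundary detours $d_1$ and $d_m$. Because every detour in $D_{i,k}$ strictly crosses the window midpoint $m_{i,k}$ (by the second item of Claim~\ref{claim:properties}), we have $b_1 < m_{i,k} < a_m$, and in particular $b_1 < a_m$; this is what will produce \emph{strict} crossings in the nontrivial case. If $b_1 = v$, the detour $d_1$ itself reaches $v$, so the one-detour path $u \to a_1, d_1$ works. If $b_1 > v$ and $a_1 \le a_m$, the one-detour path that walks $u \to a_m$ along $P$ and then takes $d_m$ works, since the first coverage observation certifies $[u, a_m] \subseteq [u, M_m]$ inside $H$. Finally, if $b_1 > v$ and $a_1 > a_m$, then the order $a_1 > a_m > b_1 > v = b_m$ holds, so $d_1$ and $d_m$ strictly cross; the two-detour replacement walks $u \to a_1$, takes $d_1$, walks $b_1 \to a_m$ (available by the second coverage observation), and takes $d_m$.

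The principal obstacle is precisely the restriction that the replacement path must lie in the subgraph $H$, which may omit most $P$-edges of $W_{i,k}$, and not merely in $H_{i,k}$. Without this restriction the argument is near-immediate; with it, the two coverage observations become essential, as they certify that each $P$-edge reused in the replacement path already appears as a walk-edge of the original path $S$. The midpoint-crossing property of $D_{i,k}$-detours is what converts the weak inequalities coming from the decomposition into the strict ones required by the definition of ``cross''.
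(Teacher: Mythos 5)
Your proof is correct, and at its core it runs on the same engine as the paper's: select two detours of $S$, use the midpoint property from \cref{claim:properties} to force the strict ordering that makes them cross, and reroute through at most two of them. The differences are in the bookkeeping. The paper takes $e_1$ to be the detour of $S$ with the latest tail and $e_2$ the last detour, which turns the case analysis into ``head of $e_1$ equals $v$'', ``$e_1$ and $e_2$ share their tail'', and the two-crossing-detours case, and never needs your comparison of $a_1$ with $a_m$; you instead take the first and last detours and split on whether $a_1\le a_m$. The more substantive difference is your two coverage observations: since the claim is stated for an arbitrary subgraph $H$ of $H_{i,k}$, every $P$-edge of the replacement path must be certified to lie in $H$, and the paper's proof simply walks along $P$ (from $u$ to $b_{last}$, and from $b$ to $a_{last}$) without justifying this -- harmless in its applications, where $H$ is always taken to contain the entire window $W_{i,k}$, but a genuine omission under the literal ``any subgraph'' phrasing. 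Your observations that the forward walks of $S$ cover $[u,\max_j a_j]$ and $[b_1,a_m]$ close exactly that gap (an analogous coverage fact would also repair the paper's variant), at the cost of a somewhat longer argument.
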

\begin{proof}
    Let $e_1=(b_{last},b)$ be a detour with the latest $b_{last}$ in $S$, and let $e_2=(a_{last},v)$ be the last detour used by $S$.
    If $b=v$ then the path that goes from $u$ to $b_{last}$ on $P$ and then on $e_1$ concludes the proof.
    Similarly, if $b_{last} = a_{last}$ then the path that goes from $u$ to $a_{last}$ and uses $e_2$ concludes the proof.
    Otherwise, we have $v<_P b$ and $a_{last} <_P b_{last}$.
    By \cref{claim:properties} we have $b<_P m_{i,k}$ and $m_{i,k}<_P a_{last}$.
     Therefore, $v<_Pb<_P a_{last}<_P b_{last}$ and so the path that goes from $u$ to $b_{last}$, on $P$ uses $e_1$, then goes from $b$ to $a_{last}$ on $P$, and then uses $e_2$ contains exactly two crossing detours.
    \end{proof}

The following claim is a direct consequence of \cref{claim:2-intervals}.

\begin{claim}\label{claim:approx2}
    Let $v<_Pu$ be two vertices in $W_{i,k}$.
    Then, if $v$ is reachable from $u$ in $H_{i,k}$ with a path that does not touch any vertex before $v$ then there is a path in $G$ from $u$ to $v$ that does not touch any vertex before $v$ and is of length at most $2\beta$.
\end{claim}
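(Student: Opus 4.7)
The plan is to derive Claim~\ref{claim:approx2} as a direct consequence of Claim~\ref{claim:2-intervals}, which applies with $H=H_{i,k}$. Given a $u$-to-$v$ path in $H_{i,k}$ that avoids all vertices of $P$ before $v$, Claim~\ref{claim:2-intervals} immediately gives us such a path using either one detour or two crossing detours. I would then translate each of these two cases into a corresponding path in $G$ and bound its length.

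In the one-detour case, the path in $H_{i,k}$ consists of a (possibly empty) subpath of $P$ from $u$ to some $v_{last}$ followed by the single detour edge $(v_{last},v)$. In $G$, the subpath of $P$ from $u$ to $v_{last}$ contributes length $0$ (since $\len(P)=0$), and by definition of a detour there is a $v_{last}$-to-$v$ path in $G$ of length at most $\beta$ that does not touch $P$ before $v$. Concatenating yields a $u$-to-$v$ path in $G$ of length at most $\beta \le 2\beta$ that avoids all vertices of $P$ before $v$.

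In the two-detour case, by Claim~\ref{claim:2-intervals} the detours $(b_{last},b)$ and $(a_{last},v)$ cross, so $v <_P b <_P a_{last} <_P b_{last}$, and the path in $H_{i,k}$ has the form: subpath of $P$ from $u$ to $b_{last}$, detour $(b_{last},b)$, subpath of $P$ from $b$ to $a_{last}$, detour $(a_{last},v)$. The corresponding path in $G$ replaces each detour by its underlying $G$-path of length at most $\beta$; the two $P$-subpaths contribute $0$ length. This gives a $u$-to-$v$ path in $G$ of total length at most $2\beta$. For the avoidance property, the two $P$-subpaths lie entirely at positions $\ge_P u$ and $\ge_P b$, both of which are $>_P v$; the $(b_{last},b)$ detour path does not touch $P$ before $b$, hence not before $v$; and the $(a_{last},v)$ detour path does not touch $P$ before $v$ by definition.

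There is no real obstacle here beyond unwinding the definitions: the main step is already done inside Claim~\ref{claim:2-intervals}. The only subtlety to be careful about is verifying that the reconstructed $G$-path does not touch $P$ before $v$, which reduces to checking, in the two-detour case, that the intermediate $P$-hop from $b$ to $a_{last}$ lies strictly after $v$ on $P$; this is precisely what the crossing condition $v <_P b <_P a_{last}$ guarantees.
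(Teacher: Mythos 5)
Your proof is correct and follows essentially the same route as the paper: invoke Claim~\ref{claim:2-intervals} with $H=H_{i,k}$, replace each of the at most two detour edges by its underlying $G$-path of length at most $\beta$, and use $\len(P)=0$ for the $P$-hops; your extra case analysis and the check that the reconstructed path avoids $P$ before $v$ just spell out details the paper leaves implicit.
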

\begin{proof}
    By \cref{claim:2-intervals} there is a path $S$ from $u$ to $v$ that uses at most two detours of $H_{i,k}$.
    By definition, each such detour corresponds to a path of length at most $\beta$ in $G$.
    Moreover, since the total length of $\P$ is $0$, we have that by replacing each edge in $S$ with its corresponding path in $G$, we get a path in $G$ of length at most $2\beta$ that does not touch any vertex in $\P$ before $v$.
\end{proof}

Using \cref{claim:approx2}, we can use the auxiliary reachability procedure of Section~\ref{sec:auxiliary} for every set $D_{i,k}$ in order to obtain our $2$-approximation:
\hlgray{Every $v\in P$ for every $i,k$ such that $v\in W_{i,k}$, stores its reachability label in $H_{i,k}$, as well as the minimum vertex $a$ such that there is a detour $(a_{last},a)\in D_{i,k}$ with $a\le_P v\le_P a_{last}$, if exists.}
Recall that by \cref{claim:properties} every vertex of $P$ is contained in $O(\log n)$ windows so the size of $v$'s labels is $\tilde O(1)$.

We now explain how to use the information stored in the labels to complete our task.
Let $f$ and $b$ be two vertices with $f<_Pb$, and let $p=\dfirst{G\setminus (P_1\cup\{f\})}{b}{P_2}{\beta}$.
Consider the detour $(p_{last},p)$.
This detour appears in some set $D_{i,k}$
(if $(p_{last},p)$ appears in more than one $D_{i,k}$, consider one arbitrary such set).
It is clear that $p\le_P b\le_P p_{last}$ and therefore $b\in W_{i,k}$ .
If $f$ is not in this window, then  the label of $b$ stores $p$ explicitly as the minimum vertex that can be reached by a detour starting after $b$.
Otherwise, if $f$ is in the window, then we have the labels of the auxiliary reachability procedure of Section~\ref{sec:auxiliary} for both $f$ and $b$, which means we can find the first $p'>_P f$ that is reachable from $b$ in $H_{i,k}$.
Notice that $p'\le_P p$ since $(p_{last},p)\in D_{i,k}$ implies that $p$ is reachable from $b$ in $H_{i,k}$.
It follows from \cref{claim:approx2} that $p'$ is a $\ddfirst{G\setminus (P_1\cup\{f\})}{b}{P_2}{\beta}{\beta}$.

\paragraph{A labeling scheme for any \boldmath$\eps$.}
We are now ready to solve the problem for any $\eps$.
Let $A_1=(u_1,v_1)$ and $A_2=(u_2,v_2)$ with $u_1>u_2$ be two detours in $D$.
If $A_1\cross A_2$, and every path from $u_1$ to $v_2$ that does not touch any vertex before $v_2$ is of length more than $(1+\eps)\beta$, then we say that $A_1$ bad-crosses $A_2$ and denote $A_1\badcross A_2$.
Otherwise, we say that $A_1$ good-crosses $A_2$ and denote $A_1\goodcross A_2$.
Notice that if $A_1$ does not cross $A_2$, then $A_1\goodcross A_2$.

From now on, we focus on a single $D_{i,k}$, and use  the notation $D=D_{i,k}$, $m=m_{i,k}$, $W=W_{i,k}$ and $H=H_{i,k}$.

\begin{lemma}\label{lem:partition_D}
   $D$ can be partitioned into $t=O(1/\eps^4)$ subsets $D_1, D_2, \dots, D_t$ such that if $A_1,A_2\in D_i$ then $A_1\goodcross A_2$.
\end{lemma}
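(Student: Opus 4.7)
My plan is to mirror the greedy construction from the proof of \cref{lem:partition_B}. I will process the detours in $D$ in decreasing order of their $u$-endpoint, and in each phase open a new group $D_j$ that keeps accepting detours that $\goodcross$ every detour currently in $D_j$; as soon as some detour $A$ fails to $\goodcross$ some $A'\in D_j$ (so that $A' \badcross A$), I close $D_j$ and start $D_{j+1}$ with $A$. By construction every two detours placed in the same $D_j$ good-cross each other, so the only task is to bound the number of phases $t$ by $O(1/\eps^4)$.

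To control $t$ I will use a potential argument analogous to the one in \cref{lem:partition_B}. Whenever a new phase $D_{j+1}$ is opened there is a canonical witness $A' \badcross A$ with $A'\in D_j$, $A\in D_{j+1}$. Writing $A'=(u_1,v_1)$ and $A=(u_2,v_2)$ with $u_1 >_P u_2$, the cross condition together with \cref{claim:properties} (both detours straddle the common midpoint $m=m_{i,k}$) forces the underlying $G$-paths to meet by planarity at some vertex $z$. The bad-cross assumption says every $u_1$-to-$v_2$ path avoiding $P$ before $v_2$ costs more than $(1+\eps)\beta$; combined with the two underlying paths costing at most $\beta$ each, a standard swap yields a composite $u_2$-to-$v_1$ path of length at most $(1-\eps)\beta$, exactly mirroring the inductive step of \cref{lem:partition_B}.

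The main obstacle, and the reason the bound is $O(1/\eps^4)$ rather than $O(1/\eps)$, is that this composite cheap path is \emph{not} itself a detour in the sense of our definition: it may touch $P$ anywhere in the interval $(v_2,v_1)$, and after iterating the swap along a chain of witnesses $A_1 \badcross A_2 \badcross \cdots \badcross A_t$ it may touch $P$ several times, on either side. I plan to overcome this by carrying a strengthened inductive invariant of the form ``generalized detour of length $\le (1-k\eps)\beta$ whose $P$-touches fit a specified combinatorial type''; after bucketing by the left/right side of $P$ on which each detour emanates and by the relative order of the $O(1)$ touches relevant at each step, one obtains $O(1/\eps^3)$ types to analyze, and within each type the one-dimensional potential argument of \cref{lem:partition_B} gives a chain length of at most $O(1/\eps)$. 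Multiplying yields $t=O(1/\eps^4)$.

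The hardest part will be defining the ``type'' of a generalized detour so that (i) the swap step at each bad cross always produces a cheaper generalized detour of some valid type, and (ii) the number of types grows by only a polynomial-in-$1/\eps$ factor after each swap. I expect this to hinge on the observation that the generalized detour after a swap inherits its $P$-touches from a prefix of one original path and a suffix of another, so only $O(1)$ new touches are introduced per step, keeping the combinatorial explosion bounded and completing the proof.
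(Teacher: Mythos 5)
Your greedy partition and the general ``potential decreases by $\eps\beta$ per bad cross'' idea match the paper, but the plan has a genuine gap exactly at the point you flag as the hardest part. Your proposed fix --- classifying ``generalized detours'' into $O(1/\eps^3)$ combinatorial ``types'' and running the one-dimensional potential argument within each type --- is never made concrete: there is no definition of a type, no argument that only $O(1/\eps^3)$ of them arise, and, more importantly, no reason why the potential argument applies \emph{within} a type class. The witnesses you extract only certify $A_i\badcross A_{i+1}$ for \emph{consecutive} elements of the chain; once you restrict to a sub-collection (one type class), consecutive elements of that sub-collection need not bad-cross at all, so the swap step that shrinks the potential by $\eps\beta$ has nothing to feed on. The missing ingredient is the paper's Claim (claim:bad\_cross): using $\len(P)=0$ and the window midpoint $m_{i,k}$ (every detour in $D_{i,k}$ straddles $m_{i,k}$), one shows that \emph{all} pairs $A_i\badcross A_j$ for $i<j$ in the chain --- prepending the free subpath $P[u_{j-1},u_i]$ to a hypothetical cheap $u_i$-to-$v_j$ path would contradict $A_{j-1}\badcross A_j$. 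Only with this all-pairs property is it legitimate to pass to a subsequence and still have bad crosses between its consecutive members.

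The second unresolved issue is how to guarantee the intersections needed for the swap along the chain; your appeal to ``planarity at some vertex $z$'' does not suffice, since the detour paths can touch $P$ repeatedly and emanate on either side. The paper handles this by associating to each chain element the segment $S_i=P_i[w_i,z_i]$ that crosses the midpoint $m$ and is internally disjoint from $P$, then applying Erd\H{o}s--Szekeres twice (to make both endpoints $w_i$ and $z_i$ monotone) and taking the majority side, extracting a subsequence $C$ of length at least $\sqrt[4]{t}/2$. On $C$, either a direct composition along $P$ (when one endpoint sequence is non-decreasing) or guaranteed pairwise intersections of same-side, monotone segments (when both are decreasing) lets the $1/\eps$ potential argument go through, giving $\sqrt[4]{t}/2\le 1/\eps+1$ and hence $t=O(1/\eps^4)$. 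Your plan gestures at the same quantitative shape ($\text{poly}(1/\eps)$ buckets times a $1/\eps$ chain) but lacks both the all-pairs bad-cross lemma and a concrete substitute for the Erd\H{o}s--Szekeres/monotone-segment mechanism, so as written it does not constitute a proof.
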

\begin{proof}
We present an algorithm that partitions $D$ (this is exactly the same algorithm as in the proof of \cref{lem:partition_B}, but the definition of a bad-cross in this context is different).
The algorithm runs in phases, creating the  subset $D_i$ in the $i$'th phase.
At the beginning of every phase, we initialize a set $D_{good}= \emptyset$.
During a phase, we examine the detours $(u,v)$ of $D$ in decreasing order of their first coordinate $u$.
When examining a detour $A$, we check if $A$ good-crosses all detours in $D_{good}$, and if so we add $A$ into $D_{good}$ and remove $A$ from $D$.
At the end of the $i$'th phase, we set $D_i=D_{good}$ and append $D_i$ to the partition and if $D=\emptyset$ the algorithm terminates.
Clearly $|D_{good}|\ge 1$, so the algorithm halts with a partition of the initial $D$.

By definition, for a detour $A_t=(u_t,v_t)$ in $D_t$, since $A_t\notin D_{t-1}$, there is a detour $A_{t-1}=(u_{t-1},v_{t-1})\in D_{t-1}$ with $u_{t-1}>u_t$ and $A_{t-1}\badcross A_t$.
Given $A_{i+1}$ (with $i\ge 1$) we define $A_{i}=(u_i,v_i)$ in a similar way.
Thus, we have a sequence $A_1,A_2, \dots, A_t$ of detours such that $A_i\in D_i$ and $A_i\badcross A_{i+1}$.

\begin{figure}[htb]
\begin{center}
\includegraphics[width=0.5\textwidth]{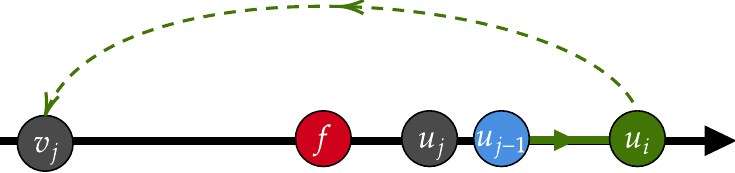}
\caption{An illustration of the proof of \cref{claim:bad_cross}.
The green dashed arrow represents a path of length at most $(1+\eps)\beta$ from $u_i$ to $v_j$ that does not use a vertex before $v_j$ on $P$.
Such a path exists due to the assumption that $A_i \goodcross A_j$.
By starting with $P[u_{j-1},u_i]$ (displayed as a thick green portion of $P$), we obtain a path of the same length from $u_{j-1}$ to $v_j$ that does not use any vertex before $v_j$, a contradiction to $A_j \badcross A_{j-1}$.
\label{fig:hard_bad_cross}
}
\end{center}
\end{figure}

\begin{claim}\label{claim:bad_cross}
For any $1\le i < j \le t$ we have $A_i\badcross A_j$.
\end{claim}
\begin{proof}
By \cref{claim:properties}, $m$ is a position such that for any detour $(u,v)\in D$ we have $u>_Pm>_Pv$.
Recall that $A_{k-1}\badcross A_{k}$ means in particular that $A_{k-1}\cross A_{k}$ thus, $u_{k-1}>_P u_k>_P m>_P v_{k-1}>_P v_k$.
By simple induction $u_1>_P u_2>_P \ldots>_P u_t>_P m>_P v_1>_P v_2>_P \ldots>_P v_t$.
Thus, $A_i\cross A_j$.
Assume by contradiction that $A_i\goodcross A_j$, then there is a path $S$ from $u_i$ to $v_j$ that does not touch any vertex before $v_j$ and is of length is at most $(1+\eps)\beta$.
The path $S'=P[u_{j-1},u_i]\cdot S$ (see \cref{fig:hard_bad_cross}) is a path from $u_{j-1}$ to $v_j$ that does not use any vertex before $v_j$ and $\len(S')=0+\len(S)\le (1+\eps)\beta$, a contradiction to $A_{j-1}\badcross A_j$.
\end{proof}

\begin{figure}[htb]
\begin{center}
\includegraphics[width=0.5\textwidth]{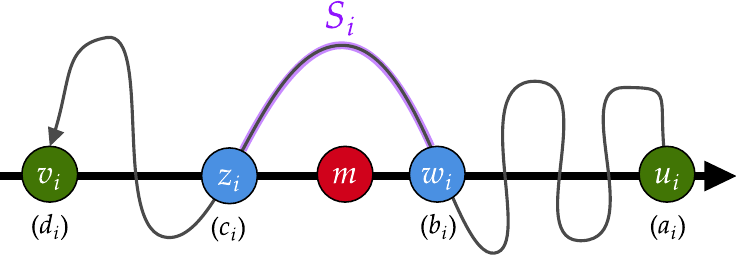}
\caption{A demonstration of $w_i, z_i$ and $S_i$. $P_i$ is black $u_i$-to-$v_i$ path.
\label{fig:hard}
}
\end{center}
\end{figure}

For every detour $A_i = (u_i,v_i)$ in the sequence, we define the vertices $w_i$ and $z_i$ as follows (see \cref{fig:hard}).
Let $P_i$ be the path corresponding to the detour $A_i$ in $G$.
The vertex $z_i$ is defined to be the first vertex visited by $P_i$ that is on $P$ and has $z_i <_P m$ (notice that $z_i$ is well defined because $v_i <_P  m$ is in $P_i$).
The vertex $w_i$ is the vertex of $P$ that precedes $z_i$ in $P_i$ (note that $w_i$ is well defined because $u_i >_P  m >_P  z_i$ is the first vertex of $P_i$).
Additionally, notice that $w_i \ge m$ due to the minimality of $z_i$ in $P_i$.
We also define $S_i = P_i[w_i,z_i]$, and observe that $S_i$ is internally disjoint from $P$.
If $S_i$ is to the left (resp. right) of $P$, we say that $(w_i,z_i)$ is a left (resp. right) pair.
Consider the sequence $W = (w_1,z_1),(w_2,z_2),\ldots ,(w_t,z_t)$.

By Erdős–Szekeres~\cite{ES35}, there is a subsequence $W'$ of $W$ with $|W'| \ge \sqrt{t}$ such that the $w_i$'s of the pairs in $W'$ are monotone (either non-increasing or non-decreasing).
By applying Erdős–Szekeres again, we have that there is a subsequence $W''$ of $W'$ such that the $z_i$ values are monotone and $|W''| \ge  \sqrt[4]{t}$.
Finally, each of the pairs in $W''$ is either a left or a right pair.
We assume w.l.o.g that most of the pairs in $W''$ are left pairs.
It follows that there is a subsequence $C$ of $W''$ such that both $w_i$ and $z_i$ values are monotone, all pairs in $C$ are left pairs, and $|C| \ge |W''|/2 \ge \frac{\sqrt[4]{t}}{2}$.
To conclude the proof of \cref{lem:partition_D}, it remains only to prove that $|C| \le \frac{1}{\eps}+1$, since then $\frac{1}{\eps}+1 \ge \frac{\sqrt[4]{t}}{2}$.

\begin{claim}
    $|C|\le 1/\eps+1$.
\end{claim}
\begin{proof}
    Let $C= (b_1,c_1),(b_2,c_2), \ldots (b_{|C|},c_{|C|})$.
    For every $i\in [|C|]$ such that $(b_i,c_i) = (w_j,z_j)$ we denote $a_i = u_j$, $d_i = v_j$ (see \cref{fig:hard}).
    From now on, we abuse notation by using $A_i$, $P_i$ and $S_i$ to refer to $A_j$, $P_j$ and $S_j$ respectively.

    Recall that both $b_i$ and $c_i$ are monotone.
    We distinguish between two cases:
    \paragraph{Case 1: either the \boldmath$b_i$'s or the \boldmath$c_i$'s are non-decreasing.}
    We assume the $c_i$'s are non-decreasing (the proof for the $b_i$'s is symmetric).
    In this case, we have
    $a_1\le_P  a_2 \le_P  \dots \le_P  a_{|C|}<m\le_P  c_{|C|}\le_P  c_{|C|-1}\le_P \dots\le_P  c_1\le_P  d_1\le_P  d_2\le_P  \dots\le_P  d_{|C|}$.

    We will prove by induction that for every $i\in[|C|]$ we have $\len(P_i[a_i,c_i])\le (1-(i-1)\eps)\beta$.
    For $i=1$ the claim follows since $\len(P_1[a_1,c_1])\le\len(P_1)\le \beta$.
    We assume the claim holds for $i$, and prove it holds for $i+1$.
    Notice that (by \cref{claim:bad_cross}) for any $i< |C|$ we have that the $A_i\badcross A_{i+1}$.
    This means that the length of any path from $a_i$ to $d_{i+1}$ that does not use any vertex of $P$ before $d_{i+1}$ is more than $(1+\eps)\beta$.
    In particular, for the path $S= P_i[a_i,c_i] \cdot P[c_i,c_{i+1}] \cdot P_{i+1}[c_{i+1},d_{i+1}]$ we have $\len(S)>(1+\eps)\beta$.
    Due to the induction hypothesis, $\len(P_i[a_i,c_i]) \le (1-(i-1)\eps)\beta$.
    Therefore, we have $(1-(i-1)\eps) \beta + 0 + \len(P_{i+1}[c_{i+1} ,d_{i+1}]) \ge (1+\eps)\beta$ which leads to $\len(P_{i+1}[c_{i+1} ,d_{i+1}]) \ge i \eps r$.
    Finally, we have $\len(P_{i+1}[a_{i+1},c_{i+1}]) = \len(P_{i+1})-\len(P_{i+1}[c_{i+1},d_{i+1}]) \le \beta -i\eps \beta = (1-i\eps) \beta$ as required.
    If $|C|>1/\eps+1$ we get that $\len(P_{|C|}[a_{|C|},c_{|C|}])$ is negative, a contradiction.

    \paragraph{Case 2: both the \boldmath$b_i$'s and the \boldmath$c_i$'s are decreasing.}
    We make the following claim:
    For every $i\in [|C|]$ there is a path $E_i$ from $a_i$ to $c_1$  with $\len(E_i)\le (1-(i-1)\eps)r$.
    Moreover, $b_i$ is on $E_i$ and $E_i[b_i,c_1]$ is internally disjoint from $P$ and is to the left of $P$.
    The claim  holds for $i=1$ by setting $E_1=P_1[a_1,c_1]$.
    We assume the claim holds for $i$, and prove it for $i+1$.

    Recall that $P_{i+1}$ is the path corresponding to the detour $(a_{i+1},d_{i+1})$ and $b_i>_P b_{i+1}\ge_P m >_P c_1>_P c_{i+1}$.
    Since both $E_i[b_i,c_1]$ and $S_{i+1}=P_{i+1}[b_{i+1},c_{i+1}]$ are internally disjoint from $P$ and go to the left of $P$, they must intersect at some vertex $z\notin P$.

    Since $A_i\badcross A_{i+1}$, any path from $a_i$ to $d_{i+1}$ that does not use any vertex before $d_{i+1}$ is of length more than $(1+\eps)\beta$.
    In particular, for the path $S=E_i[a_i,z]\cdot P_{i+1}[z,d_{i+1}]$ we have $\len(S)>(1+\eps)\beta$.
    Let $S'=P_{i+1}[a_{i+1},z]\cdot E_i[z,c_1]$. Notice that $S'$ is a path from $a_{i+1}$ to $c_1$ that goes through $b_{i+1}$ and $S'[b_{i+1},c_1]$, is internally disjoint from $P$, and is to the left of $P$.
    It remains to show that $\len(S')\le (1-i\eps)\beta$.
    By the induction hypothesis, $\len(E_i)\le (1-(i-1)\eps)\beta$.
    By definition of a detour we have $\len(P_{i+1})\le \beta$.
    Notice that $\len(E_i)=\len(E_i[a_i,z])+\len(E_i[z,c_1])$ and $\len(P_{i+1})=\len(P_{i+1}[a_{i+1},z])+\len(P_{i+1}[z,d_{i+1}])$.
    Therefore $\len(E_i)+\len(P_{i+1})=\len(S)+\len(S')$.
    Since $\len(S)>(1+\eps)\beta$ we have $\len(S')\le \beta+(1-(i-1)\eps)\beta - (1+\eps)\beta= (1-i\eps)\beta$, as required.
    If $|C|>1/\eps+1$, then there is a path from $a_t$ to $c_1$ with negative length, a contradiction.
\end{proof}

This concludes the proof of \cref{lem:partition_D}
\end{proof}

Let $\mathcal D= D_1,D_2,\dots,D_t$ be the partition of $D$ obtained by \cref{lem:partition_D}.
For $D_i\in\mathcal D$, let $H_{D_i}$ be the graph that contains $W$ (the window corresponding to $D$) and all the detours of ${D_i}$ (each detour $(v_{last},v)\in {D_i}$ corresponds to an edge from $v_{last}$ to $v$ in $H_{D_i}$).
The following claim is an extention of \cref{claim:approx2} for general $\eps$.

\begin{claim}\label{claim:approx_eps}
    Let $D_i\in\mathcal D$ be a set and let $v<u$ be two vertices in $W$.
    Then, if $v$ is reachable from $u$ in $H_{D_i}$ with a path not touching any vertex before $v$, then there is a path in $G$ from $u$ to $v$ that does not touch any vertex before $v$ and is of length at most $(1+\eps)\beta$.
\end{claim}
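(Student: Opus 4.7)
}
The plan is to reduce to the good-cross structure guaranteed by \cref{lem:partition_D} via the structural statement of \cref{claim:2-intervals}, so that the $2\beta$ bound from the warm-up case \cref{claim:approx2} is replaced by $(1+\eps)\beta$.

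First, I would apply \cref{claim:2-intervals} to the subgraph $H = H_{D_i}$ of $H_{i,k}$: this yields a $u$-to-$v$ path in $H_{D_i}$ that does not touch any vertex of $P$ before $v$ and uses at most two detours (and if two, they must cross). Thus I only need to realize, in $G$, each of the two cases by a concrete $u$-to-$v$ path of length at most $(1+\eps)\beta$ that avoids $P$ before $v$.

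In the single-detour case, tracing through the proof of \cref{claim:2-intervals} shows that the detour must end precisely at $v$ (i.e., it is of the form $(v_{last},v)$). Then the path $P[u,v_{last}]$ (which has zero length and whose vertices are all $\ge_P u >_P v$) concatenated with the length-$\le \beta$ witness in $G$ of the detour $(v_{last},v)$ gives a path of total length at most $\beta \le (1+\eps)\beta$ that, by definition of $v_{last}$, avoids $P$ strictly before $v$.

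In the two-detour case, the two detours $A_1 = (b_{last},b)$ and $A_2 = (a_{last},v)$ satisfy $b_{last} >_P a_{last} >_P b >_P v$, so $A_1 \cross A_2$ with $b_{last}$ being the larger first coordinate. Since both detours lie in $D_i$, \cref{lem:partition_D} gives $A_1 \goodcross A_2$, which under the crossing means there exists a path in $G$ from $b_{last}$ to $v$ of length at most $(1+\eps)\beta$ that does not touch any vertex of $P$ before $v$. Prepending the zero-length prefix $P[u,b_{last}]$ (whose vertices are all $\ge_P u >_P v$) produces the desired $u$-to-$v$ path of length at most $(1+\eps)\beta$ avoiding $P$ before $v$.

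The main ``obstacle'' is in fact purely bookkeeping: making sure the prepended $P$-prefixes do not violate the ``does not touch any vertex before $v$'' requirement, which follows since every vertex used in those prefixes lies at or after $u >_P v$ on $P$. All the real work — the existence of a short crossing witness in $G$ — has already been carried out in \cref{lem:partition_D}, so the proof reduces to invoking that good-cross property once the case analysis of \cref{claim:2-intervals} has selected the two relevant detours.
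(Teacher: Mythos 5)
Your proposal is correct and follows essentially the same route as the paper: invoke \cref{claim:2-intervals} on the subgraph $H_{D_i}$ to reduce to one detour or two crossing detours, handle the single-detour case with the length-$\le\beta$ witness, and in the two-detour case use that both detours lie in $D_i$ so they good-cross, yielding the $(1+\eps)\beta$ path ending at $v$. The extra bookkeeping you spell out (zero-length $P$-prefixes whose vertices are at or after $u$) is implicit in the paper's argument.
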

\begin{proof}
    By \cref{claim:2-intervals} there is a path $S$ in $H_{D_i}$ from $u$ to $v$ that uses either a single detour, or two crossing detours.
    In the case where $S$ uses a single detour, the corresponding path in $G$ is trivially of length at most $\beta$.
    In the other case, let $A_1 \cross A_2$ be the two detours.
    Since $A_1,A_2\in {D_i}$, by definition of the partition it must be that $A_1\goodcross A_2$, and therefore there is a path in $G$  of length at most $(1+\eps)\beta$ from $u_1$ to $v_2=v$ where $A_1=(u_1,v_1)$ and $A_2=(u_2,v_2)$, as required.
\end{proof}

\begin{lemma}\label{lem:labelPztoPz}
    There exists a labeling scheme $\LabelPztoPz_\beta$ with labels of size $\Opoly$.
\end{lemma}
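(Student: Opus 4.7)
The plan is to refine the $\eps=1$ warm-up scheme presented earlier in this section by replacing each class $D_{i,k}$ with the partition $\mathcal D = \{D_1,\ldots,D_t\}$ of $D_{i,k}$ produced by \cref{lem:partition_D}, which has $t = O(1/\eps^4)$ subsets in which all detours pairwise good-cross. For every window $W_{i,k}$ and every $D_j$ in the partition of $D_{i,k}$, form the auxiliary graph $H_{D_j}$ consisting of $W_{i,k}$ together with the detour edges of $D_j$. The label of a vertex $v \in P$ will then store, for every window $W_{i,k}$ containing $v$ and every $D_j$ in the partition of $D_{i,k}$: the label of $v$ in the auxiliary reachability procedure of \cref{sec:auxiliary} applied to $H_{D_j}$ (with the subpath $W_{i,k}$ of $P$ playing the role of the path), together with the index in $P$ of the minimum vertex $a$ (if any) such that there is a detour $(a_{last},a) \in D_j$ with $a \le_P v \le_P a_{last}$.

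For decoding, given the labels of $f <_P b$, I would iterate over every window $W_{i,k}$ containing $b$ and every $D_j$ in the partition of $D_{i,k}$, producing one candidate from each. When $f \notin W_{i,k}$ (so, since $f <_P b$ and $b \in W_{i,k}$, $f$ lies strictly to the left of $W_{i,k}$), output the stored minimum $a$ from $b$'s label for $D_j$. When $f \in W_{i,k}$, use the auxiliary labels of $b$ and $f$ in $H_{D_j}$ to find the first vertex $p >_P f$ reachable from $b$ in $H_{D_j}$ via a path that avoids every vertex of $W_{i,k}$ at or before $f$, and output $p$. Return the candidate with the smallest index on $P_2$.

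For correctness, let $p^* = \dfirst{G \setminus (P_1 \cup \{f\})}{b}{P_2}{\beta}$. By definition $p^*$ is the second coordinate of some detour $(p^*_{last},p^*) \in D$ with $p^* \le_P b \le_P p^*_{last}$, and by \cref{claim:properties} all three vertices lie in a common window $W_{i,k}$, with the detour belonging to some $D_{j^*}$ of the partition. In $H_{D_{j^*}}$, travelling along $W_{i,k}$ from $b$ to $p^*_{last}$ and then using the detour edge produces a $b$-to-$p^*$ path that touches no vertex of $P$ before $p^*$; since $p^* >_P f$, this witnesses that when $f \in W_{i,k}$ the auxiliary query returns some $p \le_P p^*$, and by \cref{claim:approx_eps} there is a corresponding $b$-to-$p$ path in $G$ of length at most $(1+\eps)\beta$ avoiding every vertex of $P$ before $p$, hence lying in $G \setminus (P_1 \cup \{f\})$. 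When $f \notin W_{i,k}$, the stored minimum $a$ in $D_{j^*}$ satisfies $a \le_P p^*$ and is reachable from $b$ through a single detour of length at most $\beta$ in $G$ that also avoids $P_1 \cup \{f\}$. Hence every candidate is a valid $\ddfirst{G\setminus(P_1\cup\{f\})}{b}{P_2}{\beta}{\eps\beta}$ and at least one of them is $\le_P p^*$. The size bound $\Opoly$ follows from \cref{claim:properties} ($v$ lies in $O(\log n)$ windows), together with $|\mathcal D| = O(1/\eps^4)$ per window and $\tilde O(1)$ bits per auxiliary reachability label.

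The main technical subtlety I anticipate is justifying that the auxiliary reachability scheme of \cref{sec:auxiliary} applies cleanly to $H_{D_j}$: the laminarity property (\cref{claim:nested}) that keeps its labels polylogarithmic refers to the \emph{maximality-filtered} detours that the scheme itself selects from the input graph's connectivity, so it is guaranteed even though the raw detours inside a single $D_j$ need not form a laminar family. Once this is observed, the argument is essentially a direct generalization of the $\eps=1$ warm-up with \cref{claim:approx_eps} replacing \cref{claim:approx2}.
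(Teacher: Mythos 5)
Your proposal matches the paper's proof essentially verbatim: the same labels (auxiliary reachability labels of Section~\ref{sec:auxiliary} in each $H_{D_j}$ for every window containing the vertex, plus the stored minimum vertex reachable by a detour spanning it), the same two-case decoding depending on whether $f$ lies in the window, and the same correctness argument via \cref{lem:partition_D} and \cref{claim:approx_eps}. The only (immaterial) deviation is that you store the minimum detour-endpoint per subset $D_j$ rather than once per $D_{i,k}$, which does not affect the $\Opoly$ bound or the argument.
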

\begin{proof}
We are using the auxiliary reachability procedure of Section~\ref{sec:auxiliary} in our labeling.
\hlgray{For every $D=D_{i,k}$ for every subset of the partition ${D_j}\in \mathcal D$, each vertex in $W_{i,k}$ stores the auxiliary reachability label of $v$ in $H_{D_i}$.}
Additionally, \hlgray{every vertex $v$ in $W_{i,k}$ stores the minimum vertex $a$ such that there is a detour $(a_{last},a)\in D$ with $a\le_P v\le_P a_{last}$, if exists.}
Recall that by \cref{claim:properties} every vertex of $P$ is contained in $O(\log n)$ windows, and by \cref{lem:partition_D} every set is partitioned into $O(1/\eps^4)$ subsets,  so the accumulated size of  labels kept in $v$ is $\Opoly$.

We now explain how to use the information stored in the labels to complete our task.
Let $f$ and $b$ be two vertices on $\P$ with $f<_Pb$, and let $p=\dfirst{G\setminus (P_1\cup\{f\})}{b}{P_2}{\beta}$.
Consider the detour $(p_{last},p)$.
This detour appears in some set $D_{i,k}$ ($i$ and $k$ can be determined by the size of $(p_{last},p)$ and the index of $p$).
(If $(p_{last},p)$ appears in more than one $D_{i,k}$, consider one arbitrary such set.)
Moreover, there exists a subset $D_j$ of $D_{i,k}$, obtained via \cref{lem:partition_D} with $(p_{last},p)\in D_j$.
It is clear that $p\le_P b\le_P p_{last}$ and therefore $b$ is in $W_{i,k}$.
If $f\in W_{i,k}$, then  the label of $b$ stores $p$ explicitly as the minimum vertex that can be reached by a detour starting after $b$.
Otherwise, if $f\in W_{i,k}$, then we have the auxiliary reachability labels of both $f$ and $b$ in $H_{D_j}$, which means we can find the first $p'>_Pf$ that is reachable from $b$ in $H_{D_i}$.
Notice that $p'\le_P p$ since $(p_{last},p)\in D_j$ implies that $p$ is reachable from $b$ in $H_{D_j}$.
It follows from \cref{claim:approx_eps} that $p'$ is an $\ddfirst{G\setminus (P_1\cup\{f\})}{b}{P_2}{\beta}{\eps\beta}$.
\end{proof}

\section{The $\LabelatfoP$ Labeling  (Proof of \cref{lem:atfoplabel})}\label{sec:atfop}

In this section we prove \cref{lem:atfoplabel}.
The setting in this section is as follows.
$G$ (for the ease of presentation we use here $G$ as the name of the graph) is a graph, $P$ and $P'$ are two $0$-length paths and $\alpha,\eps\in\mathbb R^+$.
Our goal is to develop a labeling scheme such that
given the labels of a vertex $a$ on $P'$ and a vertex $f$ not in $P'\cup P$, one can retrieve the index on $P$ of some vertex $b\in P$ such that $b$ is an $\ddfirst{\Gf}{a}{P}{\alpha}{\eps\alpha}$.

Similarly to the case of reachability, we shall define a set of canonical paths that facilitate the distribution of information about avoiding failed vertices. The choice of these canonical paths is more elaborate. The canonical paths are no longer disjoint, but an important feature we maintain is that every vertex $f$ of $G$ lies on a small number of canonical paths, so we can afford to store in $f$'s label information about these paths when $f$ fails.
We work with an estimation $\gamma$ that is a multiple of $\eps \alpha$, such that $\gamma-\eps \alpha
<\dist_G(a,b)\le\gamma$.
Notice that, $\dfirst{G}{P}{a}{\gamma} \le_P b$ since $\gamma\ge\dist_G(a,b)$.
We define $L_\gamma=\{(u,\dfirst{G}{u}{P}{\gamma})\mid u\in P'\text{ and }\dist_G(u,\dfirst{G}{u}{P}{\gamma})>\gamma-\eps \alpha \}$.

We define a set $C = C_{\gamma}$ of canonical paths (which we call $\gamma$-legitimate paths) as follows.

\begin{restatable}[$\gamma$-legitimate set of paths]{definition}{canoniclpathdef}\label{def:canonicalpaths}
A set $C$ of paths in $G$ is a \emph{$\gamma$-legitimate set of paths for $L$} if it has the following properties:
\begin{itemize}
    \item For each $S\in C$ we have $\len(S)\le (1+\eps)\gamma$ and $S=S_1\cdot S_2$ such that both $S_1$ and $S_2$ are shortest paths.
    \item For every $(c,c_{first})\in L$  let $u \ge_{P'} c$ be the first vertex of $P'$ with a path $(u \rightsquigarrow u')\in C$.
    If there are several such paths, let $u'$ be the first (in $P$) among all vertices $u'$.
    It holds that $u$ exists, and $u'$ is a $\ddfirst{G}{c}{P}{\gamma}{\eps\gamma}$.
    \item For every vertex $f$ of $G$ there are $O(1/\eps)$ paths in $C$ going through $f$.
\end{itemize}
\end{restatable}
We call the paths in $C$ \emph{canonical}. 
For every $c\in \Psf$ let $u \ge_{P'} c$ be the first vertex of $P'$ with a path $(u \rightsquigarrow u')\in C$ (if exists).
If there are several such paths, let $u'$ be the first (in $P$) among all values $u'$.
If $u$ exists, we say that $(u\rightsquigarrow u')$ is \emph{the canonical path of $c$}.

In \cref{sec:canonical} we prove the following lemma.
\begin{restatable}{lemma}{canonicalexistslem}\label{lem:canonical_exists}
    For every $G,P',P,\eps,\gamma, L$ there is a $\gamma$-legitimate set of paths.
\end{restatable}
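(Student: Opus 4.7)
The plan is to construct $C$ by sweeping the pairs of $L$ in decreasing order along $P'$ and using the good-cross/bad-cross machinery. Because $\len(P')=0$, the pairs in $L$ are monotone: $c_1 \le_{P'} c_2$ implies $c_{1,first} \le_P c_{2,first}$ (anything reachable from $c_2$ within cost $\gamma$ is reachable from $c_1$ too, via the $0$-length path $P'$). I will say that two pairs $(c_1, c_{1,first}), (c_2, c_{2,first}) \in L$ with $c_1 <_{P'} c_2$ \emph{bad-cross} if $\dist_G(c_2, c_{1,first}) > (1+\eps)\gamma$, and \emph{good-cross} otherwise.

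The sweep maintains a single active candidate pair $(u^*, v^*)$, initialized with the pair of $L$ whose $c$-coordinate is the rightmost on $P'$. When the next pair $(c, c_{first})$ is encountered in decreasing $P'$-order, if $\dist_G(u^*, c_{first}) \le (1+\eps)\gamma$ then update $v^* \leftarrow c_{first}$ (by monotonicity this can only move $v^*$ backward on $P$); otherwise emit $(u^*, v^*)$ and restart with $(u^*, v^*) \leftarrow (c, c_{first})$. After processing all pairs, emit the final $(u^*, v^*)$. For each emitted pair $(u^*_k, v^*_k)$, take any shortest $u^*_k$-to-$v^*_k$ path $S_k$ in $G$; its length is at most $(1+\eps)\gamma$ by the check performed at the last update, and splitting $S_k$ at any internal vertex realizes the required decomposition $S_k = S_{k,1}\cdot S_{k,2}$ into two shortest subpaths, establishing the first bullet of \cref{def:canonicalpaths}.

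The key observation for the second bullet is that the emitted vertices $u^*_1 >_{P'} u^*_2 >_{P'} \cdots >_{P'} u^*_t$ partition $\{c : (c, c_{first}) \in L\}$ into the consecutive intervals $(u^*_{k+1}, u^*_k]$ along $P'$ (taking $u^*_{t+1} = -\infty$). For a pair $(c, c_{first}) \in L$ falling in the $k$-th interval, $u^*_k$ is the first vertex $u \ge_{P'} c$ carrying a path in $C$, $v^*_k$ is $\le_P$-minimal among the $c_{first}$-values of the pairs that fell in that interval, hence $v^*_k \le_P c_{first}$, and $\dist_G(c, v^*_k) \le \dist_G(c, u^*_k) + \dist_G(u^*_k, v^*_k) \le 0 + (1+\eps)\gamma$, so $v^*_k$ is a $\ddfirst{G}{c}{P}{\gamma}{\eps\gamma}$. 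The third bullet then reduces to bounding the total number of canonical paths $t$.

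The hard part will be showing $t \le O(1/\eps)$. Each restart induces a bad-cross between $(u^*_k, (u^*_k)_{first})$ and $(u^*_{k+1}, (u^*_{k+1})_{first})$, so the emitted pairs form a bad-cross chain of length $t$. I would invoke a potential argument in the spirit of \cref{lem:partition_B}: using planarity of $G$ together with the monotonic alignment of endpoints on $P'$ and $P$, the shortest paths witnessing consecutive emitted pairs must share a vertex, and splicing them yields a ``residual'' cheap path whose length drops by $\eps \gamma$ per bad-cross and hence becomes negative after $1/\eps+1$ iterations. Carefully establishing the planar intersection is more delicate here than in \cref{sec:ptop}, because $P$ and $P'$ are two distinct paths rather than two halves of a single separator, so the topological enclosure that forced crossings there is no longer automatic; this is where I expect the main technical work to lie.
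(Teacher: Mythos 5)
There is a genuine gap, and it is concentrated in the third bullet of \cref{def:canonicalpaths}. You write that ``the third bullet then reduces to bounding the total number of canonical paths $t$,'' but the property actually required is a \emph{per-vertex} multiplicity bound: every vertex $f$ of $G$ lies on $O(1/\eps)$ canonical paths. Bounding the total number of emitted paths by $O(1/\eps)$ is both unnecessary and impossible in general: if later vertices of $P'$ cannot cheaply reach the targets of earlier pairs (so every new pair ``bad-crosses'' the active one in your distance-based sense), your sweep restarts at every pair and must emit $\Omega(|L|)$ paths — and any legitimate set must indeed contain that many paths, since for each $(c,c_{first})$ the first canonical start $u\ge_{P'} c$ must reach a vertex $\le_P c_{first}$ within $(1+\eps)\gamma$. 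For the same reason your planned potential argument cannot close the argument: unlike \cref{lem:partition_B}, where both endpoints lie on the two halves of one separator and planarity forces consecutive witness paths to intersect, here the witnesses run between two unrelated paths $P'$ and $P$, consecutive bad-crossing witnesses need not meet at all, and without an intersection there is no residual cheap path and no potential drop, so the chain length is unbounded (you flag this as ``delicate''; it is in fact false, not merely delicate).

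Your bullets (1) and (2) do go through (your distance-based good-cross and the interval/monotonicity argument are fine), but bullet (3) is exactly where the paper's proof does its real work, and your construction discards the structure that proof needs. The paper fixes one witness $A_u$ per pair and defines each canonical path as a \emph{splice} $A_v[v,z_{u,v}]\cdot A_u[z_{u,v},u_{first}]$ of two such witnesses, chosen by a greedy ``last good-crossing pair'' rule; then, for a fixed $f$, it splits the canonical paths through $f$ according to whether $f$ lies on the prefix or the suffix piece, buckets the algorithm's iterations mod $2$ and mod $3$, and bounds each bucket via the $f$-bad-set potential lemma (\cref{lem:fbadbound}, used in \cref{lem:fewcanonical}) applied to the underlying shortest witnesses, which all pass through $f$ and pairwise bad-cross. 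Your scheme instead takes an \emph{arbitrary} shortest $u^*_k$-to-$v^*_k$ path for each emitted pair; nothing in your argument prevents many of these from being routed through a common hub vertex, and the bad-cross inequalities you do have (about $\dist_G(u^*_k,(u^*_{k+1})_{first})$) do not interact with such sharing. To repair the proposal you would need to (i) prove the per-vertex bound, not the total bound, and (ii) either re-introduce the splice structure of the canonical paths (so that the $f$-bad-set machinery applies to the fixed witnesses) or supply a new counting argument tailored to your freely chosen shortest paths; as written, neither is in place.
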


Denote such a set by $C=C_{\gamma}$.
We partition $\Psf$ into maximal contiguous intervals of vertices $c$ with $(c,\cdot)\in L_\gamma$ that have the same canonical path.
We call these intervals the canonical $\gamma$-intervals of $\Psf$ and $\P$.
For every such interval $\I$ of vertices $p$ with the same canonical path $S_\I^\gamma = (x \rightsquigarrow y)$, we
call $S_\I^\gamma$ the canonical path of interval $\I$.

In \cref{sec:onlysuffix}, we prove the following.
\begin{restatable}{lemma}{onlysuffixlem}\label{lem:onlysuffixLabeling}
    For a graph $G$, two paths $P$ and $P'$ of length $0$, a path $A=(u\rightsquigarrow u')$ from the last vertex of $P'$ to $u'$ which is the first vertex of $P$ and numbers $\alpha, \eps \in \mathbb{R}^+$ such that:
    \begin{enumerate}
    \item $\len(A) \le (1+\eps)\alpha$.
    \item $A= A_1 \cdot A_2$ such that $A_1$ and $A_2$ are shortest paths in $G$.
    \end{enumerate}
    There exists a labeling scheme $\Labelonlysuffix=\Labelonlysuffix_{G,P',A,P,\alpha,\eps}(v)$    such that given the labels of two vertices $a \in P'$ and $f \in A \setminus P'$, one can obtain a vertex $b\in P$ such that $\dist_{\Gf}(a,b) \le (1+\eps) \alpha$ and $b \le_P \dfirst{G \setminus A[\ldots f]}{a}{P}{\alpha}$ or conclude that $\dfirst{G \setminus A[\ldots f]}{a}{P}{\alpha}=null$.
    The size of each label is $\Opoly$.
\end{restatable}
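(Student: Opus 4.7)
}
My first step is to exploit the fact that both $P'$ and $P$ have length $0$: since $u$ is the last vertex of $P'$, any $a\in P'$ has $\dist_{\Gf}(a,u)=0$; and since $u'$ is the first vertex of $P$, $\dist_{\Gf}(u',p)=0$ for every $p\in P$. Consequently, $\dist_{\Gf}(a,p)\le \dist_{\Gf}(u,u')$ for every $p\in P$. This immediately handles the ``easy case'': if $\dist_{\Gf}(u,u')\le (1+\eps)\alpha$, the label of $f$ stores a single bit recording this fact and the decoder simply outputs $b=u'$, which is the first vertex of $P$ (hence trivially $\le_P\dfirst{G\setminus A[\ldots f]}{a}{P}{\alpha}$) and satisfies the distance bound.

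For the remaining case $\dist_{\Gf}(u,u')>(1+\eps)\alpha$, I introduce $V_f=\{v\in P'\mid \dfirst{G\setminus A[\ldots f]}{v}{P}{\alpha}\ne null\}$ and $g_f(v)=\dfirst{G\setminus A[\ldots f]}{v}{P}{\alpha}$. Because walking forward along $P'$ from $a$ to any $v\in P'\setminus\{u\}$ with $a\le_{P'}v$ costs $0$ and avoids the deleted vertex $u\in A[\ldots f]$, $V_f$ is a downward-closed prefix of $P'$, i.e.\ $V_f=P'[v_0, v^\ddagger_f]$, and $g_f$ is monotone non-decreasing on $V_f$. This lets the label of $f$ store just the single ``cutoff'' index $v^\ddagger_f$: on query $a$, if $a>_{P'}v^\ddagger_f$ we return $null$, and otherwise $a\in V_f$ and $a'=g_f(a)$ by monotonicity.

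The core task is then to output some $b\le_P g_f(a)$ satisfying $\dist_{\Gf}(a,b)\le(1+\eps)\alpha$ without storing $g_f(a)$ for every $a$. I would use a \emph{good-cross/bad-cross} reduction tailored to this setting. Call $a_1<_{P'}a_2$ in $V_f$ a \emph{good pair} if $\dist_{\Gf}(a_2,g_f(a_1))\le(1+\eps)\alpha$. When the pair is good, monotonicity gives $g_f(a_1)\le_P g_f(a)$ for every $a\in[a_1,a_2]$, and the free $P'$-hop $a\to a_2$ together with the good-cross path yields the required distance bound with $b=g_f(a_1)$. I would greedily pick anchors $a^{(1)}=v^\ddagger_f$, $a^{(2)}=$ the latest predecessor that bad-crosses $a^{(1)}$, and so on, storing the $O(?)$ pairs $(a^{(i)},g_f(a^{(i)}))$ in $f$'s label; the decoder picks the largest $a^{(i)}\le_{P'}a$ and returns $g_f(a^{(i)})$.

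The hard part, and the main obstacle, is bounding the number of anchors by $O(\tfrac{1}{\eps})$ so the label fits in $\Opoly$. Unlike the easy-problem setting, the witness paths from $a_i$ to $g_f(a_i)$ need not cross pairwise---they run ``in parallel'' from $P'$ to $P$, and planarity alone does not force an intersection. The structural hook I plan to use is the decomposition $A=A_1\cdot A_2$ into two shortest paths together with the hypothesis $\dist_{\Gf}(u,u')>(1+\eps)\alpha$: each witness path, together with the subpaths of $A$ before and after $f$, encloses a face whose geometry can be exploited. Each bad-cross between consecutive anchors should manifest an alternative path of length at most $(1-i\eps)\alpha$ (as in the analyses of \cref{lem:partition_B,lem:partition_D}), so a cumulative-potential argument caps the number of bad-crosses at $\tfrac{1}{\eps}+O(1)$. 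Making this forcing argument precise---in particular, identifying which vertex of $A_1\cup A_2$ the two witnesses must share and showing the resulting ``residual'' path is strictly cheaper by $\eps\alpha$---is the main technical step I anticipate needing to carry out in full detail.
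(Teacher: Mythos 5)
Your setup is sound up to a point: the monotonicity of $g_f$ along $P'$, the single cutoff $v^\ddagger_f$, and the good-cross case are all correct. But the proof has a genuine gap exactly where you flag it, and that gap is the entire technical content of the lemma: bounding the number of anchors by $O(1/\eps)$. In your orientation (fix the failure $f$, partition the sources $a$ along $P'$, store everything in $f$'s label) the witness paths emanate from \emph{different} sources on $P'$ and end on a \emph{different} path $P$; nothing forces two of them to intersect each other, and nothing forces them to intersect $A$ either --- a witness for $g_f(a)$ is just some path of length at most $\alpha$ in $G\setminus A[\ldots f]$ and may avoid $A$ entirely, so the hypothesis $\dist_{\Gf}(u,u')>(1+\eps)\alpha$ gives you no handle on it. Consequently a ``bad cross'' between consecutive anchors does not yield a residual path that is cheaper by $\eps\alpha$ (the natural residual would have to travel backwards along $P'$, which is not traversable), and I see no reason the number of pairwise-bad anchors is bounded by anything better than $|P'|$. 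So the potential argument you hope to import from \cref{lem:partition_B,lem:partition_D} does not transfer as stated.

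The paper resolves this by flipping the roles. It fixes the source $a\in P'$ and sweeps the failure $f$ along $A$: after reducing to the case that $A$ is internally disjoint from $P'$ and treating $A_1$ and $A_2$ separately, it builds (\cref{lem:only_s2_sequnces}, via \cref{alg:nos1}) sequences $F=(f_1\le_A\cdots\le_A f_t)$ and $B=(b_1,\dots,b_{t-1})$ with $t=O(1/\eps)$ such that $b_i$ is a valid answer for every $f\in A_1[f_i,f_{i+1})$; these sequences are stored in the label of $a$, and $f$ only stores its index on $A$. In that orientation all witness paths $D_f$ share the same source $a$, and---crucially---both $D_f$ and $A_1$ are shortest paths, so wherever they meet they coincide on a contiguous segment $A_1[\ell_f,r_f]$; this is what forces the crossing structure ($\ell_{f_{i+1}}\in A_1[\ell_{f_i},r_{f_i}]$) and lets each bad cross shorten the hybrid path $S_i=A[u,\ell_{f_i}]\cdot D_{f_i}[\ell_{f_i},b^{f_i}_{suf}]$ by $\eps\alpha$, giving the $O(1/\eps)$ bound. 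Unless you can supply a genuinely new forcing argument for your source-side partition, you should switch to partitioning the failure positions along $A$ as the paper does. (A further small inaccuracy: in your ``easy case'' you output $b=u'$ even when $\dfirst{G\setminus A[\ldots f]}{a}{P}{\alpha}=null$, whereas the lemma asks you to detect the null case; the paper's decoder does this via the cutoff $z'_a$ stored in $a$.)
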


We are now ready to prove \cref{lem:atfoplabel}, which we restate here.
\atfoplabellem*
\begin{proof}

Let $\eps'=\eps/2$ be an approximation factor and let $\Gamma=\{i\eps'\alpha\mid i\in[0,\ceil{1/\eps'}]\}$ be the set of multiples of $\eps'\alpha$.
For every $\gamma\in\Gamma$ let $C_\gamma$ be a set of $\gamma$-canonical paths, computed with respect to $\eps'$ (such a set exists due to \cref{lem:canonical_exists}).

\hlgray{
For every vertex $v\in P'$ and for every $\gamma\in\Gamma$  such that $(v,\cdot)\in L_\gamma$ the label of $v$ stores: The endpoints $(u_{\gamma},u'_{\gamma})$ of its canonical path $S_{I_v}\in C_\gamma$. In addition $v$ stores $\Labelonlysuffix_{G,P'[1,u_\gamma],S_{I_v},P[u'_\gamma,|P|],\alpha,\eps'}(v)$}

\hlgray{For every vertex $v\in G\setminus P'$ and for every $\gamma\in\Gamma$ and for every $S=(u\rightsquigarrow u')\in C_\gamma$ such that $v\in S$ the label of $v$ stores: $\dfirst{G\setminus\{v\}}{u}{P}{\alpha+2\eps'\alpha}$ and $\Labelonlysuffix_{G,P'[1,u],S,P[u',|P|],\alpha,\eps'}(v)$.}

\paragraph{Size.}
By \cref{def:canonicalpaths} every $v\in G$ is on $O(1/\eps')$ canonical paths.
Moreover, for every canonical path, the size of $\Labelonlysuffix(v)$ is $\Opoly$ by \cref{lem:onlysuffixLabeling}.
Thus, the total size of every $\LabelatfoP(v)$ is $\Opoly$.

\paragraph{Decoding.}
Given the labels of $a\in P'$ and $f\in G\setminus P'$  one can compute the index on $P$ of some vertex $b$ such that $b$ is an $\ddfirst{G\setminus f}{a}{P}{\alpha}{\eps\alpha}$ as follows.
For every $\gamma\in\Gamma$ such that $(v,\cdot)\in L_\gamma$  let $S_\gamma=(u_{\gamma},u'_{\gamma})$ be the canonical path of $a$ in $C_\gamma$.
If $f\notin S_{\gamma}$ (which can be obtained from the label of $f$) then $b_\gamma=u'_{\gamma}$.
Otherwise, let $b_{\gamma}^1$ be a vertex on $P$ such that $\dist_{\Gf}(a,b_{\gamma}^1) \le (1+\eps') \alpha$ and $b_{\gamma}^1 \le_P \dfirst{G \setminus S_\gamma[\ldots f]}{a}{P}{\alpha}$ or set $b_{\gamma}^1=null$ if $\dfirst{G \setminus S_\gamma[\ldots f]}{a}{P}{\alpha}=null$, by using $\Labelonlysuffix$.
Let $b_\gamma^2$ be $\dfirst{\Gf}{u_{\gamma}}{P}{\alpha+2\eps'\alpha}$ (stored in the label of $f$).
Let $b_\gamma=\min_{\le_P}\{b^1_\gamma,b^2_\gamma\}$.
Finally, we return $b=\min_{\le_P}\{b_\gamma\mid \gamma\in\Gamma\}$.

\paragraph{Correctness.}
First, we show that $\dist_{\Gf}(a,b)\le (1+2\eps')\alpha =(1+\eps)\alpha$.
\begin{itemize}
    \item If $b=b_\gamma=u'_{\gamma}$ it must be that $f\notin S_\gamma$ and therefore by \cref{def:canonicalpaths} (first property)  we have $\dist_{\Gf}(a,u'_{\gamma})\le\len(P[a,u_\gamma]\cdot S_\gamma)\le (1+\eps')\gamma\le (1+\eps)\alpha$.
    \item If $b=b^1_\gamma$ for some $\gamma$, then by \cref{lem:onlysuffixLabeling} we have $\dist_{\Gf}(a,b^1_\gamma)\le (1+\eps')\alpha\le(1+\eps)\alpha$.
    \item Otherwise, if $b=b^2_\gamma=\dfirst{\Gf}{u_\gamma}{P}{(1+2\eps')\alpha}$, by definition $\dist_{\Gf}(a,b^2_\gamma)\le (1+2\eps')\alpha=(1+\eps)\alpha$.
\end{itemize}

It remains to prove that $b\le_P \dfirst{\Gf}{a}{P}{\alpha}$.
Let $b'=\dfirst{\Gf}{a}{P}{\alpha}$ and let $R$ be a shortest path from $a$ to $b'$ in $\Gf$.
Let $\gamma=x\in\Gamma\mid x\ge \dist_G(a,b')\}$,  and let $S_\gamma=(u_\gamma,u'_\gamma)$ be the canonical path of $a$ in $C_\gamma$.
    We distinguish between three cases:
    \begin{itemize}
        \item If $f\notin S_\gamma$,
        by \cref{def:canonicalpaths} (second property) $b_\gamma=u'_{\gamma}\le_P\dfirst{G}{a}{P}{\gamma}\le_P b'$ by definition of $\gamma$.

        \item If $R\cap S_{\gamma}[u_\gamma,f]=\emptyset$,  then $b^1_\gamma$ is a vertex on $P$ such that $b_{\gamma}^1 \le_P \dfirst{G \setminus S_\gamma[u_\gamma, f]}{a}{P}{\alpha}\le_P b'$ since there exists a path ($R$) in $G \setminus S_\gamma[u_\gamma, f]$ from $a$ to $b'$ of length at most $\alpha$.

        \item Otherwise (if $R\cap S_{\gamma}[u_\gamma,f)\ne\emptyset$) we have that $b^2_\gamma\le_P b'$ by the following claim.
\end{itemize}

      \begin{claim}\label{clm:useS1}
            $b^2_\gamma=\dfirst{\Gf}{u_{\gamma}}{P}{\alpha+2\eps'\alpha}\le_P \dfirst{\Gf}{a}{P}{\alpha}=b'$. 
        \end{claim}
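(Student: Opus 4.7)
The plan is to exhibit a path in $\Gf$ from $u_\gamma$ to $b'$ of length at most $\alpha + 2\eps'\alpha$. Any such path certifies that the first vertex of $P$ reachable from $u_\gamma$ within the stated budget lies no later on $P$ than $b'$, which is exactly the conclusion $b_\gamma^2 \le_P b'$.

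I would fix any vertex $z \in R \cap S_\gamma[u_\gamma, f)$ guaranteed by the hypothesis, and consider the concatenation
\[
\pi = S_\gamma[u_\gamma, z] \cdot R[z, b'].
\]
This path lies entirely in $\Gf$: the prefix terminates strictly before $f$ on $S_\gamma$, so $f \notin S_\gamma[u_\gamma, z]$; the suffix is contained in $R \subseteq \Gf$. It thus remains to bound $\len(\pi)$.

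The key algebraic identity is
\[
\len(\pi) = \len(S_\gamma) + \len(R) - \bigl[\len(R[a, z]) + \len(S_\gamma[z, u'_\gamma])\bigr].
\]
The bracketed quantity is the length of the walk $R[a, z] \cdot S_\gamma[z, u'_\gamma]$ from $a$ to $u'_\gamma$, so it is at least $\dist_G(a, u'_\gamma)$. Since $u'_\gamma \le_P b'$ and $\len(P) = 0$, one can slide from $u'_\gamma$ to $b'$ along $P$ at zero cost, yielding $\dist_G(a, u'_\gamma) \ge \dist_G(a, b')$. The minimality built into $\gamma = \min\{x \in \Gamma : x \ge \dist_G(a, b')\}$ then gives $\dist_G(a, b') > \gamma - \eps'\alpha$. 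Combining these with $\len(S_\gamma) \le (1+\eps')\gamma$, $\len(R) \le \alpha$, and $\gamma \le \alpha$ (since the largest element of $\Gamma$ is $\alpha$) produces
\[
\len(\pi) < (1+\eps')\gamma + \alpha - (\gamma - \eps'\alpha) = \eps'\gamma + \alpha + \eps'\alpha \le \alpha + 2\eps'\alpha,
\]
as required.

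The step I expect to be most delicate is the comparison $\dist_G(a, u'_\gamma) \ge \dist_G(a, b')$: it relies crucially on the zero-length convention for $P$ enforced by the outer construction, so that the shift $u'_\gamma \to b'$ along $P$ is free. If $\len(P)$ were not zero, an additive $\len(P[u'_\gamma, b'])$ term would appear, which is precisely why the higher-level labeling reduces each separator path to length zero before invoking $\LabelatfoP$. Everything else in the argument is bookkeeping around the two ``budget-splitting'' inequalities $\len(S_\gamma) \le (1+\eps')\gamma$ and $\gamma \le \alpha$, which together absorb the $(1+\eps')$ slack of the canonical path into the allowed $2\eps'\alpha$ overshoot.
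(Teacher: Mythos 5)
Your proof is correct, and it reaches the claim by a route that differs from the paper's in two respects. The paper argues by contradiction: assuming $b'<_P b^2_\gamma$, it first deduces that $\len(R[a,z])\le\len(S_\gamma[u_\gamma,z])-2\eps'\alpha$ (otherwise $S_\gamma[u_\gamma,z]\cdot R[z,b']$ would already witness the claim), then notes that the complementary path $R[a,z]\cdot S_\gamma[z,u'_\gamma]$ reaches $u'_\gamma$ within $\gamma-\eps'\alpha$, pins down $u'_\gamma=\dfirst{G}{a}{P}{\gamma}$, and contradicts the $L_\gamma$ threshold $\dist_G(a,\dfirst{G}{a}{P}{\gamma})>\gamma-\eps'\alpha$. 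You run the same exchange at the crossing vertex $z$ directly: you bound the witness path through the complementary walk $R[a,z]\cdot S_\gamma[z,u'_\gamma]$, and you source the crucial lower bound not from membership in $L_\gamma$ but from the minimality of $\gamma$ in $\Gamma$ combined with the free slide from $u'_\gamma$ to $b'$ along the $0$-length $P$, giving $\dist_G(a,u'_\gamma)\ge\dist_G(a,b')>\gamma-\eps'\alpha$. This yields a shorter, contradiction-free argument that never needs to identify $u'_\gamma$ with $\dfirst{G}{a}{P}{\gamma}$, makes transparent where the $2\eps'\alpha$ slack comes from ($\eps'\gamma$ from the stretch of the canonical path, $\eps'\alpha$ from rounding $\gamma$), and as a byproduct shows that $(a,\dfirst{G}{a}{P}{\gamma})\in L_\gamma$ holds automatically for this choice of $\gamma$ — a fact the paper only asserts. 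Two small points you should make explicit: the inequality $u'_\gamma\le_P b'$ is not free, but follows from property (2) of \cref{def:canonicalpaths} ($u'_\gamma\le_P\dfirst{G}{a}{P}{\gamma}$) together with $\dist_G(a,b')\le\gamma$, exactly as the paper argues in the case $f\notin S_\gamma$; and your step $\gamma\le\alpha$ uses the same rounding convention as the paper does (strictly, the largest element of $\Gamma$ is $\lceil 1/\eps'\rceil\eps'\alpha$, which may slightly exceed $\alpha$), so it is a shared bookkeeping convention rather than a new gap.
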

        \begin{proof}
            Recall that $(a,\dfirst{G}{a}{P}{\gamma})\in L_\gamma$ which means that $\dist_G(a,\dfirst{G}{a}{P}{\gamma})>\gamma-\eps\alpha$.
            Assume by contradiction that $b'<_P b^2_\gamma$.
            Let $z$ be some vertex in $R\cap S_\gamma[u_\gamma,f)$.
            Let $\ell=\len (S_\gamma[u_\gamma,z])$.
            It must be that $\len(R[u_\gamma,z])\le \ell-2\eps'\alpha$ since otherwise, $S_\gamma[u_\gamma,z]\cdot R[z,b']$ is a path of length at most $\alpha+2\eps'\alpha$, which contradicts the minimality of $\dfirst{\Gf}{u_{\gamma}}{P}{\alpha+2\eps'\alpha}$ on $P$.
            Thus, $\len(R[u,z])\le \ell-2\eps'\alpha$.
            Now, consider the path $R[a,z]\cdot S_\gamma[z,u'_\gamma]$ (in $G$).
            This is a path of length at most $(1+\eps')\gamma-2\eps'\alpha\le \gamma -\eps'\alpha$ from $a$ to $u'_\gamma$ in $G$.
            Therefore, $u'_\gamma\ge_P \dfirst{G}{a}{P}{\gamma}$.
            By \cref{def:canonicalpaths} (second property) it must be that $u'_\gamma\le_P \dfirst{G}{a}{P}{\gamma}$ which leads to $u'_\gamma= \dfirst{G}{a}{P}{\gamma}$.
            We have shown $\dist_G(a,\dfirst{G}{a}{P}{\gamma})<\gamma-\eps\alpha$ this contradicts $(a,\dfirst{G}{a}{P}{\gamma})\in L_\gamma$.
        \end{proof}
In each case we have shown that $b_\gamma\le_P b'$, and therefore by the minimality of $b$ on $P$, we have $b\le_P b'$.
\end{proof}

\subsection{Selecting canonical paths}\label{sec:canonical}
In this section we solve the following problem.
We are given a weighted directed planar graph $G$ with two paths $P'$ and $P$ where the total length of each path is $0$.
In addition, we are given a length bound $\gamma$, and an approximation factor $\eps>0$.
Finally, we are given a set $L\subseteq P' \times P$ of pairs $(u,u_{first})$ where $u\in P'$ and $u_{first}=\dfirst{G}{u}{P}{\gamma}$.
Notice that $L$ has the following properties:
\begin{enumerate}
    \item \textbf{Uniqueness:} For every $u \in P'$, there is at most one pair $(u,u_{first})$ in $L$.
    \item \textbf{Proximity:} Each pair $(u,u_{first})\in L$ has $\dist_G(u,u_{first})\le \gamma$.
    \item \textbf{Monotonicity:} For every two pairs $(u,u_{first}),(v,v_{first})\in L$, we have that if $u <_{P'} v$ then $u_{first} \le_P v_{first}$.
\end{enumerate}

Our goal in this problem is to find a  \emph{$\gamma$-legitimate set of canonical paths for $L$} as defined in \cref{def:canonicalpaths}:
\canoniclpathdef*
We call the paths in $C$ \textit{canonical} paths. 

The rest of this section is dedicated to the proof of \cref{lem:canonical_exists}.
\canonicalexistslem*

For every $(u,u_{first})\in L$ let $A_u$ be a shortest path from $u$ to $u_{first}$ (notice that $\len(A_u)\le \gamma$).
For every $u<_{P'}v$ if $A_u\cap A_v\ne \emptyset$ we say that $A_u$ crosses\footnote{A more natural terminology would be to say that $A_u$ intersects $A_v$. However, due to a similarity to \cref{sec:easy,sec:hard_prob} we prefer to use the same terminology of crossing here.} $A_v$ and denote $A_u\cross A_v$.
For every $A_u\cross A_v$ (with $u<_{P'} v$) let $z_{u,v}$ be the first vertex on $A_v$ which is also on $A_u$  (see \cref{fig:canonical-algorithm}).
Let $P_{u,v}=A_v[v,z_{u,v}]$ and $S_{u,v}= A_u[z_{u,v},u_{first}]$ be the prefix of $A_v$ until $z_{u,v}$ and the suffix of $A_u$ from $z_{u,v}$, respectively.
Let $A_{u,v}=P_{u,v}\cdot S_{u,v}$.
If $\len(A_{u,v})\le (1+\eps)\gamma$ we say that $A_u$ good-crosses $A_v$ and we denote $A_u\goodcross A_v$; otherwise $A_u$ bad-crosses $A_v$, and we denote $A_u\badcross A_v$.

 \begin{figure}[htb]
  \begin{center} \includegraphics[scale=0.7]{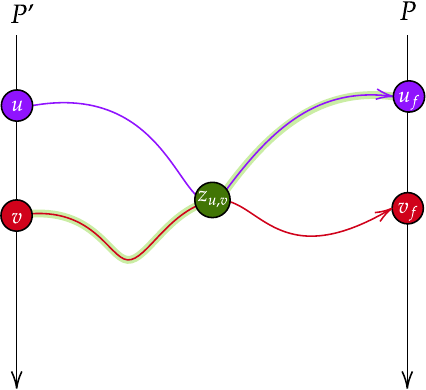}
  \caption{
  An illustration of the path $A_{u,v}=P_{u,v}\cdot S_{u,v}$.\label{fig:canonical-algorithm}}
   \end{center}
\end{figure}
\paragraph{Algorithm.}
Consider the pairs $(u,u_{first})$ in $L$ to be ordered according to the $<_{P'}$ order of $u$.
We construct the set $C$ using the following algorithm.
The algorithm initializes $C=\emptyset$ and a pair $(a,a_{first})$ to be the first pair in $L$.
The algorithm repeats the following procedure until the halting condition is met.
The algorithm finds the last pair $(b,b_{first})\in L$ such that $A_a\goodcross A_b$.
We distinguish between two cases:
If $a\ne b$, the algorithm inserts $A_{a,b}$ to $C$ and sets $(a,a_{first})\leftarrow (b,b_{first})$.
Otherwise, if $a=b$, the algorithm inserts $A_{a,b}=A_a$ to $C$ and sets $(a,a_{first})$ to be the pair in $L$ following $(a,a_{first})$; if there is no such pair, the algorithm halts and returns $C$.

We prove that $C$ is indeed a $\gamma$-legitimate set of canonical paths.

\begin{claim}\label{claim:canonical_first}
For each $S\in C$ we have $\len(S)\le (1+\eps)\gamma$ and $S=S_1\cdot S_2$ such that both $S_1$ and $S_2$ are shortest paths.
\end{claim}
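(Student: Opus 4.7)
The plan is to verify both clauses of the claim by inspecting what each path that the algorithm inserts into $C$ actually looks like. Every $S\in C$ is produced by one of exactly two branches of the algorithm: either (i) a later pair $(b,b_{\mathrm{first}})\neq(a,a_{\mathrm{first}})$ was found with $A_a\goodcross A_b$, and then $S=A_{a,b}=P_{a,b}\cdot S_{a,b}$ was inserted; or (ii) no later good-crossing partner existed, i.e.\ the chosen $b$ coincided with $a$, and then $S=A_a$ was inserted. I will treat these two cases separately for each of the two clauses.

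For the length bound, case (i) is immediate from the definition of $\goodcross$: we declared $A_a\goodcross A_b$ exactly to mean $\len(A_{a,b})\le(1+\eps)\gamma$, and by construction $S=A_{a,b}$. For case (ii), $S=A_a$ is a shortest $a$-to-$a_{\mathrm{first}}$ path in $G$; by the proximity property of $L$ listed at the beginning of the section, $\dist_G(a,a_{\mathrm{first}})\le\gamma$, so $\len(S)=\len(A_a)=\dist_G(a,a_{\mathrm{first}})\le\gamma\le(1+\eps)\gamma$.

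For the decomposition $S=S_1\cdot S_2$ into two shortest paths, case (i) is again built into the definition of $A_{a,b}$: $P_{a,b}=A_b[b,z_{a,b}]$ is a contiguous subpath of the shortest path $A_b$ (hence itself a shortest path between its two endpoints), and $S_{a,b}=A_a[z_{a,b},a_{\mathrm{first}}]$ is analogously a subpath of the shortest path $A_a$, so setting $S_1=P_{a,b}$, $S_2=S_{a,b}$ works. In case (ii), $S=A_a$ is already a single shortest path; one may pick any vertex $z$ on $A_a$ (e.g.\ $z=a$) and set $S_1=A_a[a,z]$, $S_2=A_a[z,a_{\mathrm{first}}]$, both of which are shortest paths as subpaths of $A_a$.

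The claim is essentially a direct unpacking of the algorithm's actions together with the definition of $\goodcross$ and the elementary fact that subpaths of shortest paths remain shortest paths, so I do not anticipate any real obstacle here. The substantive content of $\gamma$-legitimacy—namely property (2), relating the endpoints of canonical paths to $\dfirst{G}{c}{P}{\gamma}$, and property (3), bounding the number of canonical paths through any fixed vertex $f$—will need to be addressed in subsequent claims, and those are the places where the specific choice of a \emph{last} good-crossing partner $b$ in the algorithm becomes important.
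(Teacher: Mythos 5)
Your proof is correct and follows essentially the same route as the paper: the length bound comes directly from the definition of $\goodcross$, and the decomposition follows because $P_{a,b}$ and $S_{a,b}$ are subpaths of the shortest paths $A_b$ and $A_a$. The only difference is that you spell out the degenerate $a=b$ branch (using the proximity property $\len(A_a)\le\gamma$ and a trivial split), which the paper handles implicitly by writing $S=A_{a,b}$ uniformly; this is a harmless, slightly more explicit treatment of the same argument.
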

\begin{proof}
    Let $S\in C$, by the algorithm $S=A_{a,b}$ for some vertices $a$ and $b$ such that $A_a \goodcross A_b$.
    By definition of $\goodcross$ we have $\len(S)=\len(A_{a,b})\le (1+\eps)\gamma$.
    Moreover, $S=A_{a,b}=P_{a,b}\cdot S_{a,b}$ and both $P_{a,b}$ and $S_{a,b}$ are shortest paths (as they are subpaths of shortest paths).
\end{proof}

\begin{claim}
For every $(c,c_{first})\in L$  let $u \ge_{P'} c$ be the first vertex of $P'$ with a path $(u \rightsquigarrow u')\in C$.
If there are several such paths, let $u'$ be the first (in $P$) among all vertices $u'$.
It holds that $u$ exists, and $u'$ is a $\ddfirst{G}{c}{P}{\gamma}{\eps\gamma}$.
\end{claim}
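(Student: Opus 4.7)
The plan is to trace how the algorithm handles the pair $(c, c_{first})$ and, in every possible case, to exhibit a concrete canonical path starting at some $u \ge_{P'} c$ whose endpoint on $P$ is at most $c_{first}$. The key structural observation is that the algorithm maintains a non-decreasing sequence of ``current pairs'' $(a_j, (a_j)_{first})$, and the path added at iteration $j$ starts at some $b_j \ge_{P'} a_j$; moreover $a_{j+1} \ge_{P'} b_j$ (with equality in case a, and $a_{j+1}$ the $L$-successor of $a_j = b_j$ in case b), so the sequence of starting vertices $b_j$ is non-decreasing on $P'$. Thus the starting vertices of paths produced before the iteration that reaches or skips over $c$ are all $\le_{P'} a_j$, which pins down $u$ once the iteration handling $c$ is located.

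Every pair in $L$ is either processed ($a_j = c$ for some $j$) or skipped ($a_j <_{P'} c <_{P'} b_j$ for some $j$). I will split into three sub-cases:
\begin{enumerate}
\item $c$ is processed and case b is taken. Then $A_c \in C$ starts at $c$ and ends at $c_{first}$; here $u = c$.
\item $c$ is processed and case a is taken, with target $b >_{P'} c$. The path $A_{c, b} \in C$ starts at $b$ and its endpoint is $c_{first}$ (since the suffix $S_{c,b}$ is a subpath of $A_c$ ending at $c_{first}$). Here either $u = b$, or, if iteration $j-1$ was case a targeting $c$, then $u = c$ and the previously added path $A_{a_{j-1}, c}$ starting at $c$ witnesses $u' \le_P (a_{j-1})_{first} \le_P c_{first}$ by the monotonicity of $L$.
\item $c$ is skipped, so case a is taken at iteration $j$ with $a_j <_{P'} c <_{P'} b_j$. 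The path $A_{a_j, b_j} \in C$ starts at $b_j$ and ends at $(a_j)_{first} \le_P c_{first}$ by the monotonicity of $L$; here $u = b_j$.
\end{enumerate}

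In each case I exhibit a canonical path $(u \rightsquigarrow u^{\star})$ with $u^{\star} \le_P c_{first}$, so since $u'$ is the $\le_P$-minimum endpoint among canonical paths starting at $u$, the first required property $u' \le_P c_{first} = \dfirst{G}{c}{P}{\gamma}$ follows. The distance bound is immediate: $c$ reaches $u$ along $P'[c, u]$ with length $0$ (since $\len(P') = 0$) and then reaches $u'$ along a canonical path of length at most $(1+\eps)\gamma$ by \cref{claim:canonical_first}; hence $\dist_G(c, u') \le (1+\eps)\gamma$, which together with $u' \le_P \dfirst{G}{c}{P}{\gamma}$ is exactly the definition of $\ddfirst{G}{c}{P}{\gamma}{\eps\gamma}$.

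The only non-trivial bookkeeping is confirming that $u$ coincides with either $c$ or the $b_j$ of the iteration handling $c$; this reduces to the non-decreasing property of the $b_j$'s together with a short argument that no canonical path added in an earlier iteration can start strictly between $c$ and $b_j$. I do not anticipate a deeper obstacle beyond this.
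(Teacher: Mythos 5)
Your proposal is correct and takes essentially the same route as the paper: identify the canonical path $A_{a,b}=(u\rightsquigarrow u')$ that the algorithm adds to serve $c$, deduce $a\le_{P'} c\le_{P'} u$ from the algorithm's structure, and conclude via the monotonicity of $L$ together with the length bound of \cref{claim:canonical_first} and $\len(P')=0$. The only difference is presentational: the paper obtains $a\le_{P'}c$ by a one-line minimality/contradiction argument (otherwise $C$ would contain a path starting at $a$ or at its $L$-predecessor, contradicting the choice of $u$), whereas you track the execution with an explicit processed-versus-skipped case analysis, which amounts to the same structural fact.
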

\begin{proof}
Let $(w,w_{first})$ be the last pair in $L$.
Notice that the algorithm necessarily iterates the last pair, and therefore adds a path starting in $w \ge_{P'} c$ to $C$.
Therefore, the vertices $u$ and $u'$ exist.
Let $A_{a,b}=(u \rightsquigarrow u')$.
In particular $u=b$ and $u'=a_{first}$.
Notice that $b\ge_{P'} c$.
Moreover it must be that $a\le_{P'} c$ since otherwise $C$ would also contain some other path $A_{x,a}$ or $A_{x,a-1} $ (where $(a-1,\cdot) \in L$ is the pair preceding $(a,a_{first})$ in $L$ and $x\in P'$) contradicting the minimality of $b$.
It follows from the monotonicity of $L$ that $u'=a_{first} \le_P c_{first}=\dfirst{G}{c}{P}{\gamma}$.

Consider the path $S'=P'[c,b]\cdot A_{a,b}$ from $c$ to $a_{first}$.
The path $S'$ has length $\len(S')=0 + \len(A_{a,b})\le (1+\eps)\gamma$ by \cref{claim:canonical_first}.
It follows that $a_{first}$ is an $\ddfirst{G}{c}{P}{\gamma}{\eps\gamma}$.
\end{proof}

It remains to prove the last property.
For a set $C' \subseteq \{A_{u} \mid (u,u_{first}) \in L  \}$ and a vertex $f$ we say that $C'$ is an \emph{$f$-bad} set if $f$ is on every path in $C'$, and every two different paths in $C'$ bad-cross each other.
We prove the following helpful lemma regarding $f$-bad sets.
\begin{lemma}\label{lem:fbadbound}
    Let $C'$ be an $f$-bad set for some vertex $f$ of  $G$.
    It must hold that $|C'| \le 1/\eps +1$
\end{lemma}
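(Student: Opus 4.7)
The plan is to exploit the common pivot $f$: every path $A_u \in C'$ is, by construction, a shortest $u$-to-$u_{first}$ path of length at most $\gamma$ that passes through $f$. For each $A_u \in C'$ set $a_u = \dist_G(u, f)$ and $b_u = \dist_G(f, u_{first})$; these quantities are attained by the prefix $A_u[u,f]$ and the suffix $A_u[f,u_{first}]$ respectively (subpaths of shortest paths are shortest), and $a_u + b_u \le \gamma$.

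The first step will be to prove the key inequality: for any $u <_{P'} v$ in $C'$, $\len(A_{u,v}) \le a_v + b_u$. The idea is that $f$ is a common vertex of $A_u$ and $A_v$, so the first intersection $z_{u,v}$ on $A_v$ occurs at or before $f$ on $A_v$; hence $\len(A_v[v,z_{u,v}]) \le a_v$ and $\dist_G(z_{u,v},f) = a_v - \len(A_v[v,z_{u,v}])$. Because $A_u$ is a shortest path, $\len(A_u[z_{u,v},u_{first}]) = \dist_G(z_{u,v},u_{first}) \le \dist_G(z_{u,v},f) + b_u$, and summing the two pieces of $A_{u,v}$ yields the bound.

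The second step converts the bad-cross hypothesis into a growing sequence of $a$-values. Order the paths of $C'$ by $u_1 <_{P'} u_2 <_{P'} \ldots <_{P'} u_t$. For each $j\ge 2$, the assumption $A_{u_{j-1}} \badcross A_{u_j}$ together with the inequality from the first step gives $a_{u_j} + b_{u_{j-1}} \ge \len(A_{u_{j-1},u_j}) > (1+\eps)\gamma$, and combining with $a_{u_{j-1}} + b_{u_{j-1}} \le \gamma$ yields $a_{u_j} > a_{u_{j-1}} + \eps\gamma$. Iterating, $a_{u_t} > (t-1)\eps\gamma$, while $a_{u_t} \le \gamma$, so $t \le 1/\eps + 1$, as required.

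I expect the main obstacle to be establishing the first inequality, which is the one place where we must use more than the ``two shortest paths of length $\le\gamma$ sharing a vertex'' picture that drove \cref{lem:partition_B} and \cref{lem:partition_D}: here we cannot simply split $A_u$ and $A_v$ at an arbitrary common vertex, because $A_{u,v}$ is defined using the \emph{first} such vertex on $A_v$, which need not be $f$. The resolution is to route through $f$ via the triangle inequality (applied to the shortest-path subpath $A_u[z_{u,v},u_{first}]$) to transfer the bound from ``intersection at $z_{u,v}$'' to ``intersection at $f$''. Once this is in place, the second step is a direct potential argument in the spirit of the earlier good/bad-cross lemmas.
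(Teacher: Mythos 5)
Your proposal is correct and follows essentially the same route as the paper: the paper's inductive step likewise compares $\len(A_{u,v})$ with the length of the concatenation $A_v[v,f]\cdot A_u[f,u_{first}]$ (your $a_v+b_u$) and extracts a gain of $\eps\gamma$ from each consecutive bad cross, the only differences being that you track the growing prefix lengths $\dist_G(u_i,f)$ rather than the shrinking suffix lengths $\len(A_{u_i}[f,(u_i)_{first}])$, and you obtain the key inequality by a triangle-inequality detour through $f$ rather than the paper's two-case analysis of whether $f$ precedes or follows $z_{u,v}$ on $A_u$. These are cosmetic variants of the same potential argument, and your version (including the observation that $z_{u,v}\le_{A_v} f$, which justifies the bound $\dist_G(z_{u,v},f)\le a_v-\len(A_v[v,z_{u,v}])$) is sound.
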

\begin{proof}
    Let $C' = A_{a_1},A_{a_2}, \ldots , A_{a_t}$ be the paths in $C'$ ordered such that $a_1 \le_{P'} a_2 \le_{P'} \ldots \le_{P'} a_{t} = a_{|C'|}$.
    We claim that for every $i\in [t]$, we have $\len(A_{a_i}[f,({a_i})_{first}])\le (1-(i-1)\eps)\gamma$.

    For $i=1$, this is true since $f$ is on $A_{a_1}$ and therefore $\len(A_{a_1}[f,({a_1})_{first}]\le \len(A_{a_1})\le \gamma$.
    We thus assume the claim holds for $i$, and prove it for $i+1$.

    Consider the path
    $S^*=A_{a_{i+1}}[a_{i+1},f] \cdot A_{a_i}[f,({a_i})_{first}]$.
    We first claim that $\len(S^*)\ge \len(A_{a_i,a_{i+1}})$.
    Recall that $z=z_{a_i,a_{i+1}}$ is the first vertex in $A_{a_{i+1}}$ that is also on $A_{a_i}$.
    Notice that $f\ge_{A_{a_{i+1}}}z$.
    There are two cases to consider:

 \begin{figure}[t]
  \begin{center} \includegraphics[scale=0.6]{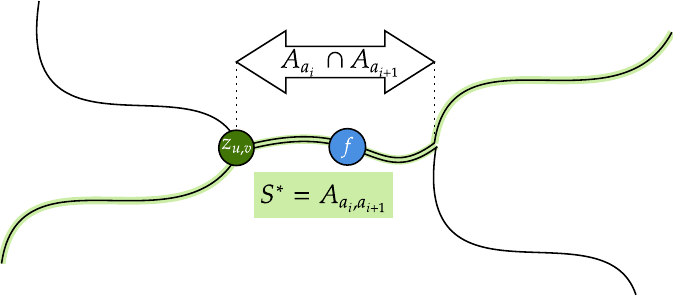}
  \caption{
  An illustration of the case $f\ge_{A_{a_i}}z$ where $S^*=A_{a_i,a_{i+1}}$.\label{fig:case_f>z}}
   \end{center}
\end{figure}
    \begin{itemize}
        \item if $f\ge_{A_{a_{i}}} z$ (see \cref{fig:case_f>z}):
        Since both $A_{a_i}$ and $A_{a_{i+1}}$ are shortest paths, we can assume that $A_{a_{i+1}}[z,f]=A_{a_i}[z,f]$.
        Therefore,
        \begin{align*}
        A_{a_i,a_{i+1}}&=A_{a_{i+1}}[a_{i+1},z]\cdot A_{a_i}[z,({a_i})_{first}]\\&= A_{a_{i+1}}[a_{i+1},z]\cdot A_{a_i}[z,f]\cdot A_{a_i}[f,({a_i})_{first}] \\&= A_{a_{i+1}}[a_{i+1},z]\cdot A_{a_{i+1}} [z,f]\cdot A_{a_i}[f,({a_i})_{first}]\\&=A_{a_{i+1}}[a_{i+1},f]\cdot A_{a_i}[f,({a_i})_{first}]=S^*
        \end{align*}
        Hence, $\len(S^*)=\len(A_{a_i,a_{i+1}})$.

 \begin{figure}[b]
  \begin{center} \includegraphics[scale=0.7]{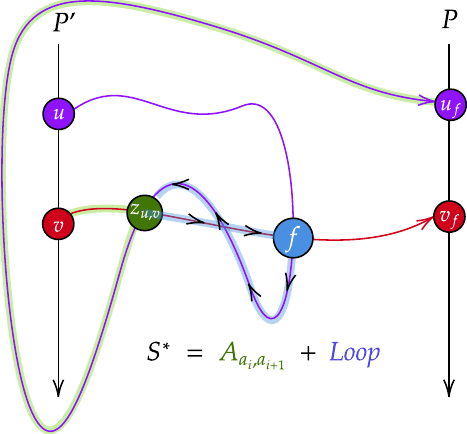}
  \caption{
  An illustration of the case $f\le_{A_{a_i}}z$.\label{fig:case_f<z}}
   \end{center}
\end{figure}
        \item if $f\le_{A_{a_{i}}} z$ (see \cref{fig:case_f<z}):
        In this case, there exists two subpaths, $A_{a_i}[f,z]$ and $A_{a_{i+1}}[z,f]$
        So, we have
        \begin{align*}
            S^*&=A_{a_{i+1}}[a_{i+1},f]\cdot A_{a_i}[f,({a_i})_{first}]
            \\&=A_{a_{i+1}}[a_{i+1},z]\cdot A_{a_{i+1}}[z,f]\cdot A_{a_i}[f,z]\cdot A_{a_i}[z,({a_i})_{first}]\\
            \text{and \quad}A_{a_i,a_{i+1}}&=A_{a_{i+1}}[a_{i+1},z]\quad\quad\quad\quad\quad\cdot\quad\quad\quad\quad\,\,\, A_{a_i}[z,({a_i})_{first}]
        \end{align*}

        Therefore, $\len(S^*)=\len(A_{a_i,a_{i+1}})+\len(A_{a_{i+1}}[z,f]\cdot A_{a_i}[f,z])\ge \len(A_{a_i,a_{i+1}})$.
    \end{itemize}

    Since $A_{a_i}$ and $A_{a_{i+1}}$ are two different paths in $C'$, it must be that $A_{a_i}\badcross A_{a_{i+1}}$ which means $\len(A_{a_i,a_{i+1}})>(1+\eps)\gamma$.
    Therefore, $\len(S^*)>(1+\eps)\gamma$.

    By definition, $\gamma\ge \len(A_{a_{i+1}})=\len(A_{a_{i+1}}[a_{i+1},f])+\len(A_{a_{i+1}}[f,({a_{i+1}})_{first}])$, which implies
    \begin{equation}\label{eq:len1}
        \len(A_{a_{i+1}}[f,({a_{i+1}})_{first}]) \le \gamma - \len(A_{a_{i+1}}[a_{i+1},f])
    \end{equation}

    By the induction hypothesis $\len(A_{a_i}[f,({a_i})_{first}])\le (1-(i-1)\eps)\gamma$ which implies
    \begin{equation}\label{eq:len2}
        0 \le   (1-(i-1)\eps)\gamma - \len(A_{a_i}[f,({a_i})_{first}])
    \end{equation}
    By summing \cref{eq:len1,eq:len2}, we obtain
    \begin{align*}
        \len(A_{a_{i+1}}[f,({a_{i+1}})_{first}])&\le \gamma+(1-(i-1)\eps)\gamma-\len(A_{a_{i+1}}[a_{i+1},f])-\len(A_{a_i}[f,({a_i})_{first}])
        \\&\le(2-(i-1)\eps)\gamma-\len(S^*)
        \\&<(2-(i-1)\eps)\gamma-(1+\eps)\gamma=(1-i\eps)\gamma
    \end{align*}
    as required.

    Assume by contradiction that $t= |C'| >1/\eps+1$, and deduce that $A_{a_t}[f,({a_t})_{first}]$ is a path of negative length, in contradiction.
    Thus, $|C'|\le 1/\eps+1$ as required.
\end{proof}

Equipped with \cref{lem:fbadbound}, we are finally ready to prove the last property.
\begin{lemma}\label{lem:fewcanonical}
For every vertex $f$ of $G$ there are $O(1/\eps)$ canonical paths going through $f$.
\end{lemma}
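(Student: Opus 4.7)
The plan is to bound the number of canonical paths in $C$ passing through $f$ by $O(1/\eps)$ via \cref{lem:fbadbound}. Every canonical path in $C$ has the form $A_{a_i,b_i}=P_{a_i,b_i}\cdot S_{a_i,b_i}$, where $P_{a_i,b_i}$ is a prefix of $A_{b_i}$ and $S_{a_i,b_i}$ is a suffix of $A_{a_i}$. Hence, if $f\in A_{a_i,b_i}$ then either $f\in A_{a_i}$ (``suffix case'') or $f\in A_{b_i}$ (``prefix case''). I would partition the relevant iterations into $I_A$ and $I_B$ accordingly, and bound each of the two sets separately; the final bound follows from $|I_A\cup I_B|\le |I_A|+|I_B|$.

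I would first handle $I_A$. Since the $a$-sequence produced by the algorithm is strictly increasing, any $i<j$ in $I_A$ give paths $A_{a_i}$ and $A_{a_j}$ that both contain $f$ and hence cross. Suppose $A_{a_i}\goodcross A_{a_j}$. The greedy choice of $b_i$ as the last pair good-crossing $A_{a_i}$ forces $b_i\ge a_j$. However, since $a_{i+1}\in\{b_i,\mathrm{succ}(b_i)\}$ and the $a$-sequence is strictly increasing, $a_j\le b_i$ can only hold when $j=i+1$. Consequently, for $j-i\ge 2$ we have $A_{a_i}\badcross A_{a_j}$, so picking every other index of $I_A$ yields an $f$-bad set to which \cref{lem:fbadbound} applies, giving $|I_A|\le 2(1/\eps+1)$.

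The main obstacle is bounding $|I_B|$, since the greedy rule controls the $b$'s only indirectly (through $A_{a_i}$, not $A_{b_i}$). I would establish the slightly weaker separation claim: for $i,j\in I_B$ with $j-i\ge 3$, $A_{b_i}\badcross A_{b_j}$. In the degenerate case $a_i=b_i$, the canonical path at iteration $i$ is $A_{a_i}=A_{b_i}$ itself; the greedy condition at iteration $i$ implies no pair $> a_i$ good-crosses $A_{a_i}$, which immediately gives $A_{b_i}\badcross A_{b_j}$ for every $j>i$ in $I_B$. In the proper case $a_i<b_i$, we have $a_{i+1}=b_i$, so iteration $i+1$ runs greedy on $A_{a_{i+1}}=A_{b_i}$; a hypothetical $A_{b_i}\goodcross A_{b_j}$ would force $b_{i+1}\ge b_j$, but the chain $b_{i+1}\le a_{i+2}\le\cdots\le a_j\le b_j$ collapses to all equalities, which together with strict monotonicity of the $a$-sequence forces $j=i+2$, contradicting $j-i\ge 3$. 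Taking every third element of $I_B$ then yields an $f$-bad set, and \cref{lem:fbadbound} gives $|I_B|\le 3(1/\eps+1)$.

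Combining the two bounds yields $|I_A\cup I_B|=O(1/\eps)$. The main conceptual difficulty is that the greedy control over $A_{b_i}$ only manifests at the \emph{next} iteration via the substitution $a_{i+1}=b_i$; tracking the constant amount of slack in the iteration index introduced by this indirection is precisely what forces the ``every third'' rather than ``every other'' selection for $I_B$.
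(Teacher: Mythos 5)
Your proposal is correct and follows essentially the same route as the paper's proof: the paper likewise splits the canonical paths through $f$ according to whether $f$ lies on the suffix $S_{a_i,b_i}\subseteq A_{a_i}$ or the prefix $P_{a_i,b_i}\subseteq A_{b_i}$, thins each class by iteration parity (respectively, residue mod $3$) so that the surviving $A_{a_i}$'s (respectively $A_{b_i}$'s) pairwise bad-cross, and then invokes \cref{lem:fbadbound} to get the $2(1/\eps+1)+3(1/\eps+1)=O(1/\eps)$ bound. Your chain-collapse argument for the prefix case ($b_{i+1}\le a_{i+2}\le\cdots\le a_j\le b_j$ forcing $j=i+2$) is just a rephrasing of the paper's maximality argument for the pair chosen at iteration $i+1$, so there is nothing substantively different to report.
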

\begin{proof}
    Let $f$ be a vertex of $G$ and let $C_f=\{A_{a,b}\in C\mid f \text{ is on }A_{a,b}\}$.
    Recall that for every $A_{a,b}\in C_f$, $A_{a,b}=P_{a,b}\cdot S_{a,b}$.
    We define two subsets $C_f^P=\{ A_{a,b} \in C_f \mid f \in P_{a,b}  \}$ and $C_f^S=\{ A_{a,b} \in C_f \mid f \in S_{a,b}  \}$.
    Note that $C_f^P\cup C_f^S = C_f$.

    We make the following claim.
    \begin{claim}\label{claim:CfS_small}
        $|C_f^S| \le 2/\eps + 2$.
    \end{claim}
    \begin{proof}
        We partition $C^S_f$ into two sets $C_E$ and $C_O$.
        $C_E$ and $C_O$ contain canonical paths of $C^S_f$ that were added to $C$ in even and odd steps of the algorithm, respectively.
        We proceed by proving that $C'_E=\{A_{u} \mid A_{u,v} \in C_E \}$ is $f$-bad, which implies that $|C_E| = |C'_E| \le 1 /\eps+1$ via \cref{lem:fbadbound}.
        The size of $C_O$ can be bounded using identical arguments, which  yields the claim.

        Let $C_E = A_{a_1,b_1},A_{a_2,b_2}, \ldots ,A_{a_t,b_t}$ be the canonical paths of $C_E$ ordered such that $a_1\le_{P'} a_2 \le_{P'} \ldots \le_{P'} a_t$.
         Clearly, all the paths in $C'_E$ intersect in $f$, so we just need to show that every pair of different path forms a bad-cross.
        We start by observing that if the algorithm adds $A_{a',b'}$ to $C$, and two iterations later adds $A_{a'',b''}$ to $C$, it must be the case that $a'' >_{P'} b'$.
        It follows from the monotonicity of the $a$ values of $A_{a,b}$ added by the algorithm, and by the fact that $C_E$ contains only paths that were added on an even step of the algorithm that $b_i <_{P'} a_j$ for every $i<j \in [t]$.

        We claim that for every $i<j \in [t]$, we have $A_{a_i} \badcross A_{a_j}$. This follows from $A_{a_i} \cross A_{a_j}$ (due to the intersection of $S_{a_i}$ and $S_{a_j}$ in $f$), and from $b_i$ ($<_{P'} a_j$) being the last vertex on $P'$ that participates in a pair of $L$ and has $A_{a_i} \goodcross A_{b_i}$.
    \end{proof}
        In addition, we make the following claim.
    \begin{claim}\label{claim:CfP_small}
        $|C_f^P| \le 3/\eps + 3$.
    \end{claim}
    \begin{proof}
        The proof is very similar to the proof of \cref{claim:CfS_small}, but require a some subtle adjustments.

        We partition $C^P_f$ into three sets $C_0$, $C_1$ and $C_2$.
        $C_0$, $C_1$ and $C_2$ contain canonical paths of $C^P_f$ that were added to $C$ in steps numbered with $0$, $1$, and $2$ modulo $3$ of the algorithm, respectively (i.e. $C_1$ contains path added in steps $1,4,7,\ldots$).
        We proceed by proving that $C'_0=\{A_{v} \mid A_{u,v} \in C_0 \}$ is $f$-bad, which implies that $|C_0| = |C'_0| \le 1 /\eps+1$ via \cref{lem:fbadbound}.
        The sizes of $C_1$ and $C_2$ can be bounded using identical arguments, which  yields the claim.

        Let $C_0 = A_{a_1,b_1},A_{a_2,b_2}, \ldots , A_{a_t,b_t}$ be the canonical paths of $C_0$ ordered such that $a_1\le_{P'} a_2 \le_{P'} \ldots \le_{P'} a_t$.
         Clearly, all the paths in $C'_0$ intersect in $f$, so we just need to show that every pair of different path forms a bad-cross.

        We claim that for every $i<j \in [t]$, we have $A_{b_i} \badcross A_{b_j}$, by considering two cases.
        \paragraph{Case 1: $a_i = b_i$.}
        In this case, no pair $(a',b')$ in $L$ after $(a_i,({a_i})_{first})$ has $A_{a_i} \goodcross A_{a'}$ .
        In particular, $A_{a_i} \badcross A_{b_j}$ since $a_i <_{P'} a_j \le_{P'} b_j$.
        Since $a_i=b_i$, we have $A_{b_i} \badcross A_{b_j}$ as required.

        \paragraph{Case 2: $a_i \neq b_i$.}
        Let $3k$ be the iteration number in which $A_{a_i,b_i}$ was added to $C$.
        In this case,  the algorithm will process the pair $(b_i,({b_{i}})_{first})$ in iteration $3k+1$.
        In this iteration, the path $A_{b_i,c}$ would be added to $C$ such that $(c,c_{first}) \in L$ is the last pair such that $A_{b_i} \goodcross A_c$.
        In iteration $3k+2$, some path will be added to $C$, and all following iterations (in particular, the iteration in which $A_{a_j,b_j}$ was added) will process pairs $(a',b')$ such that $b' \ge_{P'} a' >_{P'} c$.
        Due to the maximality of $c$ and the fact that $A_{b_j}$ crosses $A_{b_i}$ in $f$, we must have $A_{b_i} \badcross A_{b_j}$ as required.
    \end{proof}
    Combining \cref{claim:CfS_small,claim:CfP_small}  we complete the proof of \cref{lem:fewcanonical}.
\end{proof}

\subsection{The $\Labelonlysuffix$ labeling (proof of \cref{lem:onlysuffixLabeling})}
\label{sec:onlysuffix}
In this section, we prove \cref{lem:onlysuffixLabeling}; The settings are as follows.
We are given a graph $G$, two paths $P$ and $P'$ of length $0$, a path $A=(u\rightsquigarrow u')$ from the last vertex of $P'$ to $u'$ which is the first vertex of $P$ and numbers $\alpha, \eps \in \mathbb{R}^+$ such that:
\begin{enumerate}
    \item $\len(A) \le (1+\eps)\alpha$.
    \item $A= A_1 \cdot A_2$ such that $A_1$ and $A_2$ are shortest paths in $G$.
\end{enumerate}
For most of this section, we focus on a vertex $a \in P'$.
For every vertex $f\in A$, we define $b^f_{suf}=\dfirst{G\setminus A[u,f]}{a}{P}{\alpha}$, where $P[|P|]$ is the last vertex of $P$.
Note that the set of vertices $f$ such that $b^f_{suf}$ is well defined forms a (possibly empty) prefix of $A$.
We denote as $z'$ the last vertex on $A$ such that $b^{z'}_{suf}$ is well defined (if there is no such vertex, we say that $z'=null$).
We make the following observation regarding $b_{suf}^f$.
\begin{observation}\label{obs:bfsuff_mon}
   For every $f_1\le_{A} f_2 \le_{A} z'$ we have $b_{suf}^{f_1}\le_{P} b_{suf}^{f_2}$.
\end{observation}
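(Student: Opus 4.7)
The plan is to observe that this is a direct consequence of the monotonicity of $\dfirst{\cdot}{a}{P}{\alpha}$ with respect to edge/vertex deletions: removing more vertices can only push the first reachable vertex later on $P$ (or make it undefined).

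First I would note that since $f_1 \le_A f_2$, the prefix $A[u,f_1]$ is contained in the prefix $A[u,f_2]$. Consequently, the subgraph $G \setminus A[u,f_2]$ is a subgraph of $G \setminus A[u,f_1]$ (we have removed at least as many vertices).

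Next, since $f_2 \le_A z'$, by definition of $z'$ the vertex $b^{f_2}_{suf} = \dfirst{G \setminus A[u,f_2]}{a}{P}{\alpha}$ is well defined. Let $R$ be a path in $G \setminus A[u,f_2]$ from $a$ to $b^{f_2}_{suf}$ of length at most $\alpha$ witnessing this. Because every vertex of $R$ lies in $G \setminus A[u,f_2] \subseteq G \setminus A[u,f_1]$, the path $R$ is also a path in $G \setminus A[u,f_1]$, from $a$ to $b^{f_2}_{suf} \in P$, with length at most $\alpha$.

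Finally, by definition $b^{f_1}_{suf} = \dfirst{G \setminus A[u,f_1]}{a}{P}{\alpha}$ is the earliest vertex on $P$ that is reachable from $a$ in $G \setminus A[u,f_1]$ via a path of length at most $\alpha$. The existence of the witness path $R$ above implies $b^{f_1}_{suf} \le_P b^{f_2}_{suf}$, as required. There is no real obstacle here; the statement is a routine monotonicity property of the $\mathsf{first}^\alpha$ operator under subgraph inclusion, and the only thing worth being careful about is the side condition $f_2 \le_A z'$, which is exactly what guarantees that $b^{f_2}_{suf}$ (and hence $b^{f_1}_{suf}$) is actually defined.
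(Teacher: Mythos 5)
Your argument is correct and is exactly the intended justification: since $A[u,f_1]\subseteq A[u,f_2]$, any witness path for $b_{suf}^{f_2}$ in $G\setminus A[u,f_2]$ survives in $G\setminus A[u,f_1]$, forcing $b_{suf}^{f_1}\le_P b_{suf}^{f_2}$ (and incidentally showing $b_{suf}^{f_1}$ is defined). The paper states this as an observation without proof, and your monotonicity argument is precisely the reasoning it implicitly relies on.
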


Most of the section is devoted to the proof of the following technical lemma.
\begin{lemma}\label{lem:only_s2_sequnces}
    If $z' \neq null$, there are sequences $F=(u=f_1 \le_A f_2 \le_A \ldots \le_A f_t =z')$ and $B=(b_1,b_2, \ldots ,b_{t-1})$, with $t=O(1/\eps)$, such that for every $f_i \le_A f <_A f_{i+1}$, we have:
    \begin{enumerate}
        \item $b^{f}_{suf} \ge_P b_i$.
        \item $\dist_{\Gf}(a,b_i) \le (1 + \eps) \alpha$.
    \end{enumerate}
\end{lemma}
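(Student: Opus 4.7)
The plan is to construct the sequences by a greedy iterative procedure. Initialize $f_1 := u$, and at iteration $i\ge 1$ set $b_i := b^{f_i}_{suf}$ (well defined since $f_i \le_A z'$) and let $\pi_i$ be a shortest path from $a$ to $b_i$ in $G\setminus A[u,f_i]$, so $\len(\pi_i)\le \alpha$. Define $f_{i+1}$ to be the first vertex of $A$ strictly after $f_i$ that is traversed by $\pi_i$ (with $f_{i+1}\le_A z'$); if no such vertex exists, halt with $f_{i+1}:=z'$. This is a natural greedy rule: among all witnesses certifying $b_i$, we take a specific $\pi_i$ and push $f_{i+1}$ as far along $A$ as $\pi_i$ permits.

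Both properties fall out of the construction. Property (1) is immediate from the monotonicity Observation~6.2: for any $f$ with $f_i \le_A f <_A f_{i+1}$ we have $b_i = b^{f_i}_{suf} \le_P b^{f}_{suf}$. Property (2) follows because $\pi_i$ avoids $f_i$ together with every vertex of $A$ in $(f_i,f_{i+1})$, hence $\pi_i$ is a valid witness in $G\setminus\{f\}$ for every $f$ in $[f_i,f_{i+1})$, yielding $\dist_{\Gf}(a,b_i)\le \len(\pi_i)\le \alpha \le (1+\eps)\alpha$.

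The heart of the argument is to bound $t=O(1/\eps)$, which I plan to do by adapting the good-cross / bad-cross framework used for the easy problem in Section~5.1. For consecutive indices, $\pi_i$ passes through $f_{i+1}$ while $\pi_{i+1}$ avoids all of $A[u,f_{i+1}]$; since both paths share the source $a$ and land on $P$, planarity forces a common vertex $z_i$ on $\pi_i \cap \pi_{i+1}$ (the decomposition $A=A_1\cdot A_2$ into two shortest paths is what allows us to localize this intersection so that $\pi_i[a,z_i]$ avoids $A[u,f_{i+1}]$). This yields two composite $a$-to-$P$ paths $\sigma^{(1)}_i := \pi_{i+1}[a,z_i]\cdot\pi_i[z_i,b_i]$ and $\sigma^{(2)}_i := \pi_i[a,z_i]\cdot\pi_{i+1}[z_i,b_{i+1}]$ whose total length is at most $\len(\pi_i)+\len(\pi_{i+1})\le 2\alpha$. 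A good cross, i.e.\ $\len(\sigma^{(1)}_i)\le (1+\eps)\alpha$, would produce an alternative witness for $b_i$ avoiding strictly more of $A$ than $\pi_i$, contradicting the greedy maximality of $f_{i+1}$; so every step is a bad cross, which forces $\len(\sigma^{(2)}_i) < (1-\eps)\alpha$. Feeding $\sigma^{(2)}_i$ into the cross at the next index (exactly as the path $C_{1,2}$ was propagated in Section~5.1) drains an additional $\eps\alpha$ of potential at each bad cross. Since the carry-over length must stay non-negative, the process terminates after at most $1/\eps + O(1)$ steps.

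The main obstacle I anticipate is the precise planar-topology step that furnishes the intersection vertex $z_i$ with the right ``side'' structure, so that $\pi_i[a,z_i]$ is internally disjoint from $A[u,f_{i+1}]$ and $\pi_{i+1}[a,z_i]$ is a genuine witness of the improved avoidance needed both for the good-cross contradiction and for chaining the bad-cross potential forward. This is where the hypothesis that $A$ consists of only two shortest paths $A_1\cdot A_2$ is essential: it restricts the combinatorics of how the $\pi_i$'s can weave in and out of $A$, and guarantees that the carry-over path $\sigma^{(2)}_i$ itself meets the avoidance property required to reapply the cross argument at iteration $i+1$.
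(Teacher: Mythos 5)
Your construction secures properties (1) and (2) easily (indeed with budget $\alpha$ rather than $(1+\eps)\alpha$), but the heart of the lemma is the bound $t=O(1/\eps)$, and there your greedy rule does not support the potential argument you invoke. You advance $f_{i+1}$ to the first vertex of $A$ after $f_i$ lying on an (arbitrary) shortest witness $\pi_i$; nothing in this rule forces consecutive steps to be bad crosses. The contradiction you claim for a good cross --- that an alternative witness for $b_i$ avoiding strictly more of $A$ would contradict the ``greedy maximality'' of $f_{i+1}$ --- is unfounded: $\pi_i$ was not chosen to maximize how much of $A$ it avoids (it is just some shortest path), and even if it had been, the good-cross path is only guaranteed length $\le(1+\eps)\alpha$ while $\pi_i$ is constrained to length $\le\alpha$, so its existence contradicts nothing. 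Consequently consecutive steps can be good crosses, no potential is drained, and with your rule $t$ can be as large as $|A|$, since each $\pi_i$ may re-enter $A$ immediately after $f_i$. A secondary gap: you assert that planarity plus $A=A_1\cdot A_2$ forces $\pi_i$ and $\pi_{i+1}$ to intersect at a useful vertex $z_i$; in general consecutive witnesses need not intersect anywhere outside $a$, and your scheme has no mechanism for that case.

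The paper's algorithm differs exactly where yours breaks: it defines $f_{i+1}$ as the \emph{first} vertex $x$ of $A$ with $\dist_{G\setminus A[u,x]}(a,b^{f_i}_{suf}) > (1+\eps)\alpha$. This threshold-driven advancement simultaneously yields property (2) for every $f<_A f_{i+1}$ and guarantees that consecutive selected vertices bad-cross, which is what drives the potential argument: the hybrid paths $S_i = A[u,\ell_{f_i}]\cdot D_{f_i}[\ell_{f_i},b^{f_i}_{suf}]$ lose $\eps\alpha$ in length at each step, so $t\le 1/\eps+O(1)$. Moreover, the crossing structure is obtained not from planarity but from the entry/exit vertices $\ell_f,r_f$ of $D_f$ on $A_1$ (using that $D_f$ and $A_1$ are shortest paths, so one may take $D_f[\ell_f,r_f]=A_1[\ell_f,r_f]$); the decomposition $A=A_1\cdot A_2$ is exploited by running the construction separately on $A_1$ and on $A_2$ and concatenating the resulting sequences; and the situations where consecutive witnesses fail to cross ($D_f\cap A_1=\emptyset$, or $\ell_f$ beyond the tracked leftmost exit $R$) are handled by explicit termination cases that output the stored vertex $b'$ associated with $R$. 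To salvage your approach you would need to redefine $f_{i+1}$ via the $(1+\eps)\alpha$ threshold rather than via where $\pi_i$ happens to re-enter $A$, and add this bookkeeping for non-crossing consecutive witnesses.
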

\begin{proof}
First, we reduce the problem of finding such sequences to the case where $A$ is internally disjoint from $P'$.
Let $x$ be the first vertex on $P'$ that is on $A$.
Let $\Tilde{P'} = P'[a,x]$ and $\Tilde{A}=A[x,u']$, and assume we have sequences $\Tilde{F}$ and $\Tilde{B}$ as in the statement of the lemma for $\Tilde{P'}$ and $\Tilde{A}$ in $G \setminus A[u,x)$. Note that $\Tilde{P'}$ and $\Tilde{A}$ are internally disjoint due to the definition of $x$.
We claim that the sequences $F=(u)\cdot\Tilde{F}$ and $B=(u')\cdot \Tilde{B}$ satisfy the lemma, for $P'$ and $A$, due to the following.
\begin{itemize}
    \item For every $f \ge_A x$, the sequences $\Tilde{B}$ and $\Tilde{F}$ are satisfactory for $P'$ and $A$, as $b^f_{suf}$ in $G$ is the same as $b^f_{suf}$ in $G\setminus A[u,x)$ for $f \ge_A x$, and distances in $\Gf$ are shorter than distances in $G \setminus(A[u,x) \cup {f})$.
    \item  For every $f<_A x$ we have that $P'[a,x]\cdot A[x,u']$ is a path of length at most $(1+\eps)\alpha$ to $u'$ in $G\setminus A[u,f]$, and $b^f_{suf}\ge_P u'$ by definition.
\end{itemize}
So, from now on we assume that $A$ is internally disjoint from $P'$.

Recall that $A=A_1\cdot A_2$, and let $z$ be the last vertex of $A_1$ such that $b_{suf}^z$ is well defined (notice that $z'$ is not necessarily on $A_1$).
We show how to compute sequences $F$ and $B$ satisfying the required conditions for every $f\in A_1[u,z]$.
If $z'=z$ (i.e. the last vertex with well defined $b^f_{suf}$ on $A$ is in $A_1$)- these $F$ and $B$ sequences conclude the lemma.
Otherwise, $z$ is the last vertex on $A_1$.
In this case, we apply the same construction for $A_2[z,z']$, and the concatenation of the $F$ and $B$ sequences concludes the lemma.
We describe the construction for $A_1$, the construction for $A_2$ is similar.
Note that if $z=null$ then the lemma follows trivially.
We can therefore assume that $b^u_{suf}$ is well defined.

For every vertex $f\in A[u,z']$, let $D_f$ be a shortest path from $a$ to $b^f_{suf}$ in $G\setminus A[u,f]$.
Let $r_f$ be the last vertex on $D_f$ that is also on $A_1$ (note that $r_f$ may be undefined).
We denote as $\ell_f$ the first vertex on $A_1$ that is also on $D_f$.
Notice that, by definition we have $\ell_f \le_{D_f} r_f$ and $\ell_f \le_{A_1} r_f$.
Since both $A_1$ and $D_f$ are shortest paths, we can assume $D_f[\ell_f ,r_f] = A_1[\ell_f,r_f]$.

For two vertices $f_1 <_{A_1} f_2$, we say that $f_1$ crosses $f_2$, denoted as $f_1 \cross f_2$ if $\ell_{f_2} \in A_1[\ell_{f_1}, r_{f_1}]$.
For two crossing vertices $f_1 <_{A_1} f_2$, we say that $f_1$ bad-crosses $f_2$  and denote $f_1 \badcross f_2$, if $\dist_{G\setminus A_1[u,f_2]}(a,b_{suf}^{f_1}) > (1+\eps)\alpha$.

\paragraph{Algorithm.}
We present the following algorithm (see \cref{alg:nos1} below) that generates sequences  $F=(u=f_1,f_2,\dots ,f_t=z)$ and $B=(b_1,b_2,\dots ,b_{t-1})$.
The algorithm initializes the following variables.
\begin{enumerate}
    \item A vertex $f$ initially set to $u$: $f$ is meant to iterate $A_1$ from left to right.

    \item A vertex $R$ initially set to $r_u$ (or $null$ if $r_u$ is undefined). The vertex $R$ keeps track of the leftmost value of $r_f$ that was encountered so far.

    \item $b'$ keeps track of the $b^{f_R}_{suf}$ value of the vertex $f_R$ from which the value $R$ was obtained (even if $R=null$).
    Initially, $b'$ is set to $b^u_{suf}$.

\end{enumerate}

Having initialized $f$,$R$, and $b'$, the algorithm initializes the sequences as $F = (f)$ and $B = (b^f_{suf})$.

The algorithm runs the following procedure repeatedly until a terminating condition is met.
The algorithm finds the vertex $x$ which is the first vertex in $A_1[f,z]$ such that $\dist_{G \setminus A[u,x]}(a, b_{suf}^f) > (1 + \eps)\alpha$.
If there is no such $x$, the algorithm appends $z$ to $F$ and terminates.

Otherwise, the algorithm assigns $f\leftarrow x$, appends $f$ to $F$ and $b^f_{suf}$ to $B$, and proceeds according to the following cases.

If $D_f \cap A_1 = \emptyset$, the algorithm appends $z$ to $F$ and terminates.

If $D_f \cap A_1 \neq \emptyset$ and $\ell_f >_{A_1} R$, the algorithm appends $\ell_f$ and $z$ to $F$ (in that order), appends $b'$ to $B$ and terminates.
Finally, if $r_f <_{A_1} R$, the algorithm updates $R \leftarrow r_f$ and $b' \leftarrow b^f_{suf}$.

\begin{algorithm}
\caption{Partition $A_1$}\label{alg:nos1}
\SetKwInOut{Input}{Input}\SetKwInOut{Output}{Output}
\Input{$A=A_1\cdot A_2$, $G$, $u$, $u'$, $a$, $z$, $\alpha$, $\eps$}
\Output{Sequences $F$ and $B$}

Initialize  $f \leftarrow u$, $R \leftarrow r_f$, $b' \leftarrow b_{suf}^f$, $F \leftarrow (f)$, and $B \leftarrow (b_{suf}^f)$\label{ln:init_loop}\;
\While{true}{
    Let $x$ be the first vertex in $A_1[f,z]$ such that $\dist_{G \setminus A[u,x]}(a, b_{suf}^f) > (1 + \eps)\alpha$\label{ln:iteration}\;
    \If{$x$ does not exist}{
        Append $z$ to $F$ and \Return $F,B$\label{ln:loop_no_x}\;
    }
    $f \leftarrow x$\;
    Append $f$ to $F$ and $b_{suf}^f$ to $B$\label{ln:loop_append}\;
    \If{$D_f \cap A_1 = \emptyset$}{
        Append $z$ to $F$ and \Return $F,B$\label{ln:loop_disj}\;
    }
    \Else{
        \If{$\ell_f >_{A_1} R$}{
            Append $\ell_f$, and $z$ to $F$, and  $b'$ to $B$, and \Return $F,B$ \label{ln:loop_no_cross}\;
        }
        \Else{
            \If{$R >_{A_1} r_f$}{
                $R \leftarrow r_f$ and $b' \leftarrow b_{suf}^f$\label{ln:loop_R}\;
            }
        }
    }
}
\end{algorithm}

\paragraph{Correctness.}

Notice that the following invariant holds at any time during the execution of \cref{alg:nos1}:
\begin{invariant}\label{inv:Rmin}
    At the beginning of every iteration (\cref{ln:iteration}), for every $f\in F$ we have
    $R\le_{A_1} r_{f}$.\footnote{If $r_f$ is undefined we consider $z$ as $r_f$.}
\end{invariant}

Let $F= (f_1,f_2, \ldots ,f_t)$ and $B=(b_1,b_2,\dots,b_{t-1})$ be the output of the algorithm.

We make the following helpful claim.
\begin{claim}\label{claim:bad_cross_t_minus_4}
    For every $i\in [1..t-4]$, it holds that $f_i \badcross f_{i+1}$.
\end{claim}
\begin{proof}
    Since $i \le t-4$, the vertex $f_{i+1}$ must have been added to $F$ in line \cref{ln:loop_append} and both $D_{f_{i+1}} \cap A_1 \neq \emptyset$ and $\ell_{f_{i+1}} \le_{A_1} R$ (otherwise, the algorithm terminates with $t \le i+3$).
    Due to \cref{inv:Rmin}, we have $R \le_{A_1} r_{f_i}$ and therefore $\ell_{f_{i+1}} \le_{A_1} r_{f_i}$.
    Since $f_{i+1}$ is the first vertex in $A[f_{i},z]$ such that $\dist_{G\setminus A[u,f_{i+1}]}(a,b^{f_i}_{suf})) > (1+\eps)\alpha$, we must have $f_{i+1} \ge_{A_1} \ell_{f_i}$.
    Otherwise, $D_{f_i}$ is a path from $a$ to $b^{f_i}_{suf}$ in $G \setminus A[u,f_{i+1}]$ of length $\len(D_{f_i}) \le \alpha$.
    We have shown that $\ell_{f_{i+1}} \in A_1[\ell_{f_i} , r_{f_i}]$ and therefore $f_i \cross f_{i+1}$.
    From the definition of $f_{i+1}$ (\cref{ln:iteration}) we have that $\dist_{G\setminus A[u,f_{i+1}]}(a,b^{f_i}_{suf}) > (1+\eps) \alpha$ and therefore $f_i \badcross f_{i+1}$ as required.
\end{proof}

We define a sequence of useful paths in $G$.
For every $f_i\in F$ such that $\ell_{f_i} \in A_1$, we denote $S_i=A[u,\ell_{f_i}]\cdot D_{f_i}[\ell_{f_i},b_{suf}^{f_i}]$.
The paths $S_i$ are instrumental in the proofs of the following claims.

\begin{claim}\label{claim:Si_not_long}
    Let $i\in[1..t]$ such that $r_{f_i}\in A_1$ then, for any $f>_{A_1}r_{f_i}$ we have $\dist_{\Gf}(a,b_{suf}^{f_i})\le \len(S_i) \le (1+\eps)\alpha$.
\end{claim}
\begin{proof}
First note that by definitions of $S_i$ and $r_{f_i}$, we have $A_1[r_{f_i}+1,z]\cap S_i=\emptyset$ (and recall that $A_1[\ell_{f_i},r_{f_i}]=D_{f_i}[\ell_{f_i},r_{f_i}]$).
Therefore, $P'[a,u]\cdot S_i$ is a path in $\Gf$ for any $f >_A r_{f_i}$ and it follows that $\dist_{\Gf}(a,b^{f_i}_{suf})\le \len(P')+\len(S_i)=\len(S_i)$.

It remains to prove $\len(S_i)\le (1+\eps)\alpha$.
We consider two cases regarding $b_{suf}^{f_i}$:
(1) $b_{suf}^{f_i}=u'$, in this case since $D_{f_i}[\ell_{f_i},u']$ is a shortest path, we have $\len(D_{f_i}[\ell_{f_i},u'])\le\len(A[\ell_{f_i},u'])$.
Thus, $\len(S_i)=\len(A[u,\ell_{f_i}])+\len(D_{f_i}[\ell_{f_i},u'])\le\len(A[u,\ell_{f_i}])+\len(A[\ell_{f_i},u'])=\len(A)\le(1+\eps)\alpha$.

The second case is (2) $b_{suf}^{f_i}>_P u'$.
From definition of $b_{suf}^{f_i}$ we have that $\dist_{G\setminus A[u,f_i]}(a,u')>\alpha$.
In particular, by definition of $\ell_{f_i}$ we have $\len(D_{f_i}[a,\ell_{f_i}]\cdot A[\ell_{f_i},u'])>\alpha$.
Moreover we have $\len(D_{f_i}[a,\ell_{f_i}]\cdot D_{f_i}[\ell_{f_i},b_{suf}^{f_i}])=\len(D_{f_i})\le \alpha$ and $\len(A[u,\ell_{f_i}]\cdot A[\ell_{f_i},u'])=\len(A)\le (1+\eps)\alpha$.
Combining the above we get $\len(S_i)=\len(A[u,\ell_{f_i}]\cdot D_{f_i}[\ell_{f_i},b_{suf}^{f_i}])\le \alpha+(1+\eps)\alpha-\alpha=(1+\eps)\alpha$.
\end{proof}

We prove the following claim by induction:
\begin{claim}\label{claim:bad_no_s1}
For every $i\in [1,t-4]$ it holds that $\len(S_i)\le (1-(i-2)\eps)\alpha$.
\end{claim}
\begin{proof}
Notice that if $t<5$ the claim is vacuously true.
Notice that for $i\in [1,t-4]$ we have $b_i=b_{suf}^{f_i}$, $\ell_{f_i}$ and $r_{f_i}$ are well defined and by \cref{claim:bad_cross_t_minus_4} we have $f_i\badcross f_{i+1}$.

We prove the claim by induction on $i$.
For $i=1$ the claim $\len(S_1)\le (1+\eps)\alpha$ follows from \cref{claim:Si_not_long}.

We assume the claim holds for $i$ and prove for $i+1\le t-4$.
By the algorithm (\cref{ln:iteration}), $\dist_{G\setminus A[u,f_{i+1}]}(a,b_i)>(1+\eps)\alpha$ and in particular, $\len(D_{f_{i+1}}[a,\ell_{f_{i+1}}]\cdot S_i[\ell_{f_{i+1}},b_i])>(1+\eps)\alpha$ (note that $\ell_{f_{i+1}}\in A[\ell_{f_i},r_{f_i}]$ since $f_i\badcross f_{i+1}$).
Moreover, $\len(S_i[u,\ell_{f_{i+1}}]\cdot S_i[\ell_{f_{i+1}},b_i])=\len(S_i)\le(1-(i-2)\eps)\alpha$ and $\len(D_{f_{i+1}}[a,\ell_{f_{i+1}}]\cdot D_{f_{i+1}}[\ell_{f_{i+1}},b_{i+1}])=\len(D_{f_{i+1}})\le \alpha$.
Combining the above we get $\len(S_{i+1})=\len(S_{i+1}[u,\ell_{f_{i+1}}]\cdot S_{i+1}[\ell_{f_{i+1}},b_{i+1}])\le \alpha+(1-(i-2)\eps)\alpha-(1+\eps)\alpha=(1-(i-1)\eps)\alpha$ as required.
\end{proof}

The following claim is a direct consequence of \cref{claim:bad_no_s1}.

\begin{claim}\label{claim:F_is_small}
    $|F|=O(1/\eps)$
\end{claim}
\begin{proof}
    Assume by contradiction that $t= |F| >1/\eps+7$, and deduce by \cref{claim:bad_no_s1} that $S_i$ is a path of negative length, in contradiction.
    Thus, $|F|\le 1/\eps+7$ as required.
\end{proof}

We prove the following claim.
\begin{claim}\label{claim:F_and_B}
    For every $f\in A_1$ such that  $f_i \le_{A_1} f <_{A_1} f_{i+1}$, we have:
    \begin{enumerate}
        \item $b^f_{suf} \ge_P b_i$.
        \item $\dist_{G\setminus \{f\}}(a,b_i) \le (1 + \eps) \alpha$.
    \end{enumerate}
\end{claim}
\begin{proof}
We consider several cases regarding $b_i$.

\paragraph{Case 1: $b_i=b_{suf}^{f_i}$.}
In this case by \cref{obs:bfsuff_mon} since $f_i\le_{A_1} f$ we have $b_{suf}^f\ge_{P} b_{suf}^{f_i}=b_i$.
We consider two sub-cases regarding $f_{i+1}$:
\begin{itemize}
    \item If $f_{i+1}$ was added in Lines \ref{ln:loop_no_x}, \ref{ln:loop_append} or \ref{ln:loop_disj}.
    In each of these lines, $f_{i+1}$ is set to be the first $x$ in $A_1[f_{i},z]$ such that $\dist_{G \setminus A[u,x]}(a, b_i) > (1 + \eps)\alpha$ or is set to $z$ if there is no such $x$.
    Either way, due to $f <_{A_1} f_{i+1}$ we have $\dist_{G \setminus \{f\}}(a, b_i) \le \dist_{G \setminus A[u,f]}(a, b_i) \le (1 + \eps)\alpha$ as required.
    \item If $f_{i+1}=\ell_{f_i}$ was added in \cref{ln:loop_no_cross}, then $\dist_{G\setminus \{f\}}(a,b_i)  \le \dist_{G\setminus A_1[u,\ell_{f_i}-1]}(a,b_i)\stackrel{(*)}{\le} \len(D_{f_i}) \le \alpha   \le (1 + \eps) \alpha$,
    where the inequality (*) follows from $D_{f_i}\cap A_1[u,\ell_{f_i}-1]=\emptyset$ by defintion of $\ell_{f_i}$.
\end{itemize}

\paragraph{Case 2: $b_i$ was added in \cref{ln:loop_no_cross}.}
In this case $f_i=\ell_{f_{i-1}}$ and $f_{i+1}=z$.
In addition $b_i=b'$ at this time of the algorithm.
Consider the values of $R$ and $b'$ in the iteration of the algorithm where \cref{ln:loop_no_cross} was executed.
Notice that $b_i=b'=b_{suf}^{f_j}$ and $R=r_{f_j}$ for some $j<i$ (as assigned in either \cref{ln:init_loop} or \cref{ln:loop_R}) such that $r_{f_j}\in A_1$.
It follows from \cref{obs:bfsuff_mon} that $b^f_{suf} \ge_P b^{f_i}_{suf} \ge_P b^{f_j}_{suf}=b_i$ as required.
The algorithm executes \cref{ln:loop_no_cross} since $f\ge_{A_1}\ell_{f_{i-1}}>_{A_1}R=r_{f_j}$.
Finally, by \cref{claim:Si_not_long} we have $\dist_{\Gf}(a,b_{suf}^{f_j})\le \len(S_j)\le (1+\eps)\alpha$.
\end{proof}

Combining \cref{claim:F_and_B,claim:F_is_small}, concludes the proof of \cref{lem:only_s2_sequnces}.
\end{proof}

We are now ready to present the labeling scheme for $\Labelonlysuffix$, proving \cref{lem:onlysuffixLabeling}.

\onlysuffixlem*
\begin{proof}
\hlgray{For every vertex $v\in P'$:
Let $z'_v$ be the last vertex of $A$ such that $b^{z'}_{suf}$ is well defined.
If $z'=null$, the label of $v$ stores a flag.
Otherwise, $v$ stores the sequences $F$ and $B$ obtained by applying \mbox{\cref{lem:only_s2_sequnces}} on $v$ and $b^{z'}_{suf}$.}
In addition \hlgray{for every vertex $v\in A$:
the label of $v$ stores $v$'s index in $A$.}
From \cref{lem:only_s2_sequnces}, it is clear that the size of the label is $\Opoly$.

Given the labels of $a\in P'$ and $f\in A$, one can obtain a vertex $b\in P$ such that $\dist_{\Gf}(a,b) \le (1+\eps) \alpha$ and $b \le_P \dfirst{G \setminus A[\ldots f]}{a}{P}{\alpha}$ or conclude that $\dfirst{G \setminus A[\ldots f]}{a}{P}{\alpha}=null$, as follows.
If $z'_a=null$ (marked in the label of $a$ with a flag) or $f>_A z'_a$ ($z'_a=F[|F|]$), then we conclude that $\dfirst{G \setminus A[\ldots f]}{a}{P}{\alpha}=null$.
If $f=z'_a$, we simply return $b^{z'_a}_{suf}$.
Otherwise, let $i$ be the index such that  $f_i \le_A f <_A f_{i+1}$, return $b_i$, which by \cref{lem:only_s2_sequnces} satisfies the requirements.
\end{proof}

\section{The $\LabelatoP$ Labeling (Proof of \cref{lem:atopflabel})}\label{sec:atopf}
In this section we prove \cref{lem:atopflabel}.
The settings in this section are as follows.
$G$ (for the ease of presentation we use here $G$ as the name of the graph) is a graph, $P'$ is a $0$-length path, $P$ is a path without outgoing edges which lies on a single face, and $\alpha,\eps\in\mathbb R^+$.
For $f\in P$ let $P_1$ and $P_2$ be the prefix and suffix of $P$ before and after $f$ (without $f$), respectively.
Our goal is to develop a labeling scheme such that
given the labels of vertices $a\in P'$ and $f\in P$, one can retrieve two vertices $b_1$ and $b_2$ which are $\ddfirst{\Gf}{a}{P_1}{\alpha}{\eps\alpha}$ and $\ddfirst{\Gf}{a}{P_2}{\alpha}{\eps\alpha}$, respectively.
The size of each label is required to be $\Opoly$.

Let $z$ be the first vertex of $\Psf$.
Let $\Gamma=\{i\eps\alpha\mid i\in[0,\ceil{1/\eps}]\}$ be the set of multiples of $\eps\alpha$.
Let $F$ be the graph obtained from $G$ by removing all edges of $P$.
For every $\beta\in\Gamma$ let $P_\beta=\{v\in P\mid \beta-\eps\alpha<\dist_F(z,v)\le \beta\}$ denote the set of vertices of $P$ whose distance from $z$ in $F$ (which is exactly the distance in $G$ via paths that are \textbf{internally disjoint} from $P$) is in the interval $(\beta-\eps\alpha,\beta]$.
Notice that since $P$ lies on a single face, all paths entering $P$ enter from the same side.

For every $a\in P'$ let $P_\beta(a)=\{v\in P_\beta \mid \dist_F(a,v)\le \alpha\}$ denote the subset of vertices of $P_\beta$ whose distance from $a$ in $F$ is at most $\alpha$.
By definition $P_\beta(a) \subseteq P_\beta$ and planarity dictates the following:
\begin{claim}\label{claim:Az}
For any $\beta\in\Gamma$, considering the sets $P_{\beta}, P_\beta(a)$ as sequences ordered according to the order along $\P$.
There is a set of at most two intervals of consecutive vertices of $P_\beta$ such that:
(i) every vertex in $P_\beta(a)$ is in one of these two intervals,
and (ii) for every vertex $w \in P_\beta$ in each of these intervals, $\dist_F(a,w)\le (1+\eps)\alpha$.
Finally, (iii) the endpoints of each interval are in $P_\beta$.
\end{claim}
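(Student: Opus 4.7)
The plan is to show that the set of ``far'' vertices of $P_\beta$ (those $w \in P_\beta$ with $\dist_F(a,w) > (1+\eps)\alpha$) forms at most one contiguous block in $P_\beta$; its complement, which contains $P_\beta(a)$ and consists entirely of vertices with $\dist_F(a,\cdot) \le (1+\eps)\alpha$, then consists of at most two intervals of $P_\beta$, giving (i)--(iii).

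The main step is the following planar lemma: for any $u <_P v$ in $P_\beta(a)$ and any $w \in P_\beta$ with $u <_P w <_P v$ and $\dist_F(a,w) > (1+\eps)\alpha$, the closed cycle $C_{u,v}$ obtained by concatenating a shortest $a \to u$ path $\pi_u$ in $F$, the subpath $P[u,v]$, and the reverse of a shortest $a \to v$ path $\pi_v$ must enclose $z$. To prove it, I would use that $P$ has no outgoing edges in $G$ (and hence in $F$) and lies on a single face of $H$: consequently all non-$P$ edges incident to any vertex of $P$ lie on the non-face side, and $C_{u,v}$ bounds a region on the non-face side such that, locally at $w \in P[u,v]$, all incoming edges to $w$ in $F$ lie inside $C_{u,v}$. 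If $z$ were outside $C_{u,v}$, then a shortest $z \to w$ path in $F$, which ends at $w$ via an edge inside $C_{u,v}$ but starts outside, must cross the boundary of $C_{u,v}$; it cannot meet $P[u,v]$ other than at $w$ (since $P$-vertices have no outgoing edges in $F$), so it visits some vertex $y \in \pi_u \cup \pi_v$, say $y \in \pi_v$. Using that $y$ lies on both the shortest $z \to w$ path and $\pi_v$, together with $\dist_F(z,w) \le \beta$, $\dist_F(z,v) > \beta - \eps\alpha$ (from $v \in P_\beta$), and $\dist_F(a,y) + \dist_F(y,v) = \dist_F(a,v) \le \alpha$, a short triangle-inequality calculation gives
\[
\dist_F(a,w) \le \dist_F(a,y) + \dist_F(y,w) < \dist_F(a,v) + \eps\alpha \le (1+\eps)\alpha,
\]
contradicting that $w$ is far.

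Given the lemma, suppose towards contradiction that $P_\beta(a)$ cannot be covered by two such intervals. Then there exist $y_1 <_P y_2 <_P y_3$ in $P_\beta(a)$ and far vertices $w_1, w_2 \in P_\beta$ with $y_i <_P w_i <_P y_{i+1}$ for $i=1,2$. I would choose the three shortest paths $\pi_{y_1}, \pi_{y_2}, \pi_{y_3}$ to be pairwise non-crossing, a standard property achievable by consistent tie-breaking in planar graphs. Then the cycles $C_{y_1,y_2}$ and $C_{y_2,y_3}$ share only $\pi_{y_2}$, and their enclosed regions lie on opposite sides of $\pi_{y_2}$, hence are disjoint. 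But the lemma forces $z$ to be enclosed by both, a contradiction. Therefore $P_\beta(a)$ is covered by the two maximal runs of $P_\beta$-vertices that contain no far vertex and meet $P_\beta(a)$; the endpoints of each such interval lie in $P_\beta$ by construction, giving (iii). The main technical obstacle will be the topological step, namely verifying rigorously, using the single-face assumption on $P$ and the absence of outgoing edges from $P$-vertices in $F$, that a shortest $z \to w$ path really must traverse $\pi_u \cup \pi_v$ instead of sneaking into $w$ from outside $C_{u,v}$, and handling possible shared prefixes of $\pi_u$ and $\pi_v$ when defining the inside of $C_{u,v}$.
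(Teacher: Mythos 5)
Your proposal is correct and follows essentially the same route as the paper: both arguments hinge on the fact that a far vertex of $P_\beta$ lying between two vertices of $P_\beta(a)$ forces $z$ to be enclosed by the cycle formed by the two shortest $a$-to-$P$ paths and the intervening subpath of $P$, combined with the same cut-and-paste length computation using $\beta-\eps\alpha < \dist_F(z,\cdot)\le\beta$; the paper merely organizes it by selecting the unique consecutive-pair cycle strictly enclosing $z$ and taking the complementary (cyclic) interval, while you argue the far vertices cannot occupy two separate gaps. The one point to tidy up is the degenerate case where $z$ lies on $\pi_{y_2}$ itself, so both cycles enclose $z$ only weakly and disjointness of the open regions alone yields no contradiction --- but your own inequality applied with $y=z$ already shows no far vertex can exist in that situation, so the fix is immediate.
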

\begin{proof}
Recall that $P$ lies on a single face of $G$.
Let $P^\circ$ be the cycle of the boundary of the face $P$ lies on.
We shall show that with respect to the cyclic order on $\P^\circ$, a single interval in the statement of the claim suffices.
This interval consists of at most two intervals of $\P$.

Consider all pairs of consecutive vertices (in the cyclic order along $\P_\beta(a)$) $u,v \in P_\beta(a)$.
Let $C_{uv}$ be the (undirected) cycle formed by the $a$-to-$u$ shortest path in $F$, the $a$-to-$v$ shortest path in $F$, and $\P^\circ[u,v]$.
Since the union of the shortest paths forming the above cycles is a tree (a subtrtee of the shortest path tree rooted at $a$) whose leaves are all on $\P$, the vertex $z$ is strictly enclosed by at most one of these cycles.
Let $C_{u^*v^*}$ be the cycle that strictly encloses $z$.
We choose the interval to start at $v^*$ and end at $u^*$ (i.e., the interval containing all vertices of $P_\beta$ except those in $\P(u^*,v^*)$).
By definition, the interval contains all vertices of $P_\beta(a)$, so property (i) is satisfied.
To show property (ii), consider a vertex $w \in P_\beta$ in this interval.
By choice of the interval, $z$ is enclosed by $C_{u^*v^*}$ and $w$ is not strictly enclosed by $C_{u^*v^*}$.
Hence, the shortest path from $z$ to $w$ must intersect the $a$-to-$u^*$ path or the $a$-to-$v^*$ path.
Suppose without loss of generality that the intersection is with the $a$-to-$u^*$ path (the other case is identical with $v^*$ taking the role of $u^*$), and let $x$ be an intersection vertex.
If $w \in  P_\beta(a)$ then property (ii) is satisfied by definition of $P_\beta(a)$.
Otherwise, suppose for the sake of contradiction that property (ii) is not satisfied.
That is, $\dist_F(a,w)>(1+\eps)\alpha$.
In particular, the sum of lengths of $a$-to-$x$ prefix of the $a$-to-$u^*$ path and the $x$-to-$w$ suffix of the $z$-to-$w$ path is at least $(1+\eps)\alpha$.
But since the sum of lengths of the $a$-to-$u^*$ path and of the $z$-to-$w$ path is less than $\beta + \alpha$,
we get that the sum of the length of the $z$-to-$x$ prefix of the $z$-to-$w$ path and the $x$-to-$u^*$ suffix of the $a$-to-$u^*$ path must be less than $\beta-\eps\alpha$, a contradiction to $u \in P_\beta$.
\end{proof}

We are now ready to prove \cref{lem:atopflabel}.
\atopflabellem*
\begin{proof}
For every $a\in P'$ the label of $a$ stores: \hlgray{$\dfirst{G}{a}{P}{\alpha}$ and for every $\beta\in \Gamma$, the set of (at most two) intervals obtained by \mbox{\cref{claim:Az}}.}
For every $f\in P$ the label of $f$ stores: \hlgray{for every $\beta\in \Gamma$, the successor of $f$ in $P_\beta$.}
(Every vertex of $P$ is stored with its index on $P$.)

\paragraph{Size.}
It is clear that the size of each label is $O(|\Gamma|)=\Opoly$.

\paragraph{Decoding.}
Given the labels of $a\in P'$ and $f\in P$, we obtain $b_1$ and $b_2$ as follows.
First, if $\dfirst{G}{a}{P}{\alpha}<_P f$ then $b_1=\dfirst{G}{a}{P}{\alpha}$, otherwise $b_1=null$.
To compute $b_2$, we iterate over all $\beta\in\Gamma$.
If $f$ is in one of the intervals obtained by \cref{claim:Az}, then $b^\beta_2$ is the succesor of $f$ in $P_\beta$.
Otherwise, $b^\beta_2$ is the first vertex on $P_2$ which is an endpoint of an interval, or $null$ if there is no such endpoint.
Finally, we set $b_2=\min_{\le_P}\{b_2^\beta\mid \beta\in\Gamma\}$.

\paragraph{Correctness.}
It is straightforward that if $\dfirst{G}{a}{P}{\alpha}\ge_P f$ then $\dfirst{\Gf}{a}{P_1}{\alpha}=null$ and therefore $b_1=null$ is a valid answer.
If $\dfirst{G}{a}{P}{\alpha}<_P f$ then clearly since $P$ does not have outgoing edges $\dfirst{G}{a}{P}{\alpha}=\dfirst{\Gf}{a}{P_1}{\alpha}$ is an $\ddfirst{\Gf}{a}{P_1}{\alpha}{\eps\alpha}$.

For every $\beta\in\Gamma$, in all cases $b_2^\beta$ is in one of the intervals obtained by \cref{claim:Az} and so $b^\beta_2\in P_\beta$.
Thus, by \cref{claim:Az} (ii), $\dist_G(a,b^\beta_2)\le \dist_F(a,b_2^\beta)\le (1+\eps)\alpha$.

Let $b^*=\dfirst{\Gf}{a}{P_2}{\alpha}$ and let $\beta=q\in\Gamma\mid q\ge\dist_F(z,b^*)\}$.
It remains to prove $b_2\le_P b^*$.
Notice that $b^*\in P_\beta$.
Moreover $b^*\in P_\beta(a)\subseteq P_\beta$.
Let $I=P[u,v]$ be the endpoints of the interval obtained by \cref{claim:Az} containing $b^*$ (such an interval exists by (i)).
If $f\in P[u,v]$ then $b_2^\beta$ is the successor $w$ of $f$ in $P_\beta$ which implies $w\le_P b^*$.
If $f<_P u$, then $b_2^\beta\le_P u\le_P b^*$.
To conclude, $b^\beta_2\le_P b^*$ and by the minimality of $b_2$ among all values of $\beta$ we have $b_2\le_P b^*$. \qedhere

\end{proof}

\bibliography{ref}

\newcommand{\etalchar}[1]{$^{#1}$}
\begin{thebibliography}{GMWWN18}

\bibitem[ACC{\etalchar{+}}96]{ArikatiCCDSZ96}
Srinivasa~Rao Arikati, Danny~Z. Chen, L.~Paul Chew, Gautam Das, Michiel H.~M.
  Smid, and Christos~D. Zaroliagis.
\newblock Planar spanners and approximate shortest path queries among obstacles
  in the plane.
\newblock In {\em 4th {ESA}}, volume 1136, pages 514--528, 1996.

\bibitem[ACG12]{AbrahamCG12}
Ittai Abraham, Shiri Chechik, and Cyril Gavoille.
\newblock Fully dynamic approximate distance oracles for planar graphs via
  forbidden-set distance labels.
\newblock In {\em 44th {STOC}}, pages 1199--1218, 2012.

\bibitem[ACGP16]{AbrahamCGP16}
Ittai Abraham, Shiri Chechik, Cyril Gavoille, and David Peleg.
\newblock Forbidden-set distance labels for graphs of bounded doubling
  dimension.
\newblock {\em {ACM} Trans. Algorithms}, 12(2):22:1--22:17, 2016.

\bibitem[ADK17]{alstrup2015optimal}
Stephen Alstrup, S{\o}ren Dahlgaard, and Mathias B{\ae}k~Tejs Knudsen.
\newblock Optimal induced universal graphs and adjacency labeling for trees.
\newblock {\em J. {ACM}}, 64(4):27:1--27:22, 2017.

\bibitem[AKTZ19]{alstrup2015adjacency}
Stephen Alstrup, Haim Kaplan, Mikkel Thorup, and Uri Zwick.
\newblock Adjacency labeling schemes and induced-universal graphs.
\newblock {\em {SIAM} J. Discret. Math.}, 33(1):116--137, 2019.

\bibitem[AN17]{AlonN17}
Noga Alon and Rajko Nenadov.
\newblock Optimal induced universal graphs for bounded-degree graphs.
\newblock In {\em 28th SODA}, pages 1149--1157, 2017.

\bibitem[BCG{\etalchar{+}}22]{Bar-NatanCGMW22}
Aviv Bar{-}Natan, Panagiotis Charalampopoulos, Pawel Gawrychowski, Shay Mozes,
  and Oren Weimann.
\newblock Fault-tolerant distance labeling for planar graphs.
\newblock {\em Theor. Comput. Sci.}, 918:48--59, 2022.

\bibitem[BCR18]{BaswanaCR18}
Surender Baswana, Keerti Choudhary, and Liam Roditty.
\newblock Fault-tolerant subgraph for single-source reachability: General and
  optimal.
\newblock {\em {SIAM} J. Comput.}, 47(1):80--95, 2018.

\bibitem[BGL07]{bonichon2007short}
Nicolas Bonichon, Cyril Gavoille, and Arnaud Labourel.
\newblock Short labels by traversal and jumping.
\newblock {\em Electronic Notes in Discrete Mathematics}, 28:153--160, 2007.

\bibitem[BLM12]{Baswana}
Surender Baswana, Utkarsh Lath, and Anuradha~S. Mehta.
\newblock Single source distance oracle for planar digraphs avoiding a failed
  node or link.
\newblock In {\em 23rd {SODA}}, pages 223--232, 2012.

\bibitem[Cab12]{Cabello12}
Sergio Cabello.
\newblock Many distances in planar graphs.
\newblock {\em Algorithmica}, 62(1-2):361--381, 2012.

\bibitem[CDW17]{Cohen-AddadDW17}
Vincent Cohen{-}Addad, S{\o}ren Dahlgaard, and Christian Wulff{-}Nilsen.
\newblock Fast and compact exact distance oracle for planar graphs.
\newblock In {\em 58th {FOCS}}, pages 962--973, 2017.

\bibitem[CGK09]{Courcelle09}
Bruno Courcelle, Cyril Gavoille, and Mamadou~M. Kant{\'{e}}.
\newblock Compact labelings for efficient first-order model-checking.
\newblock {\em Journal of Combinatorial Optimization}, 21(1):19--46, 2009.

\bibitem[CGL{\etalchar{+}}23]{CharalampopoulosGLMPWW23}
Panagiotis Charalampopoulos, Pawel Gawrychowski, Yaowei Long, Shay Mozes, Seth
  Pettie, Oren Weimann, and Christian Wulff{-}Nilsen.
\newblock Almost optimal exact distance oracles for planar graphs.
\newblock {\em J. {ACM}}, 70(2):12:1--12:50, 2023.

\bibitem[CGMW19]{Charalampopoulos19}
Panagiotis Charalampopoulos, Pawel Gawrychowski, Shay Mozes, and Oren Weimann.
\newblock Almost optimal distance oracles for planar graphs.
\newblock In {\em 51st {STOC}}, pages 138--151, 2019.

\bibitem[Cho16]{choudhary2016DualFaultTolerant}
Keerti Choudhary.
\newblock An optimal dual fault tolerant reachability oracle.
\newblock In {\em 43rd {ICALP}}, pages 130:1--130:13, 2016.

\bibitem[CMT19]{faultyOracle}
Panagiotis Charalampopoulos, Shay Mozes, and Benjamin Tebeka.
\newblock Exact distance oracles for planar graphs with failing vertices.
\newblock In {\em 30th {SODA}}, pages 2110--2123, 2019.

\bibitem[CT10]{CourcelleT07}
Bruno Courcelle and Andrew Twigg.
\newblock Constrained-path labellings on graphs of bounded clique-width.
\newblock {\em Theory of Computing Systems}, 47(2):531--567, 2010.

\bibitem[CX00]{ChenX00}
Danny~Z. Chen and Jinhui Xu.
\newblock Shortest path queries in planar graphs.
\newblock In {\em 32nd {STOC}}, pages 469--478, 2000.

\bibitem[Dji96]{Djidjev96}
Hristo Djidjev.
\newblock Efficient algorithms for shortest path queries in planar digraphs.
\newblock In {\em 22nd {WG}}, volume 1197, pages 151--165, 1996.

\bibitem[ES35]{ES35}
P.~Erd\"os and G.~Szekeres.
\newblock A combinatorial problem in geometry.
\newblock {\em Compositio Mathematica}, 2:463--470, 1935.

\bibitem[FKMS07]{FeigenbaumKMS07}
Joan Feigenbaum, David~R. Karger, Vahab~S. Mirrokni, and Rahul Sami.
\newblock Subjective-cost policy routing.
\newblock {\em Theor. Comput. Sci.}, 378(2):175--189, 2007.

\bibitem[FR06]{FR06}
J.~Fakcharoenphol and S.~Rao.
\newblock Planar graphs, negative weight edges, shortest paths, and near linear
  time.
\newblock {\em J. Comput. Syst. Sci.}, 72(5):868--889, 2006.

\bibitem[GGI{\etalchar{+}}17]{GeorgiadisGIPU17}
Loukas Georgiadis, Daniel Graf, Giuseppe~F. Italiano, Nikos Parotsidis, and
  Przemyslaw Uznanski.
\newblock All-pairs 2-reachability in ${O}(n^\omega \log n)$ time.
\newblock In {\em 44th {ICALP}}, pages 74:1--74:14, 2017.

\bibitem[GIP17]{GeorgiadisIP17}
Loukas Georgiadis, Giuseppe~F. Italiano, and Nikos Parotsidis.
\newblock Strong connectivity in directed graphs under failures, with
  applications.
\newblock In {\em 28th {SODA}}, pages 1880--1899, 2017.

\bibitem[GKK{\etalchar{+}}01]{GavoilleKKPP01}
Cyril Gavoille, Michal Katz, Nir~A. Katz, Christophe Paul, and David Peleg.
\newblock Approximate distance labeling schemes.
\newblock In {\em 9th {ESA}}, pages 476--487, 2001.

\bibitem[GKR01]{GuptaKR01}
Anupam Gupta, Amit Kumar, and Rajeev Rastogi.
\newblock Traveling with a pez dispenser (or, routing issues in {MPLS)}.
\newblock In {\em 42nd {FOCS}}, pages 148--157, 2001.

\bibitem[GMWWN18]{ourSODA2018}
Pawel Gawrychowski, Shay Mozes, Oren Weimann, and Christian Wulff-Nilsen.
\newblock Better tradeoffs for exact distance oracles in planar graphs.
\newblock In {\em SODA}, 2018.

\bibitem[GPPR04]{GPPR04}
Cyril Gavoille, David Peleg, St{\'e}phane P{\'e}rennes, and Ran Raz.
\newblock Distance labeling in graphs.
\newblock {\em Journal of Algorithms}, 53(1):85--112, 2004.

\bibitem[GU23]{GawrychowskiU23}
Pawel Gawrychowski and Przemyslaw Uznanski.
\newblock Better distance labeling for unweighted planar graphs.
\newblock {\em Algorithmica}, 85(6):1805--1823, 2023.

\bibitem[HL09]{HsuL09}
Tai{-}Hsin Hsu and Hsueh{-}I Lu.
\newblock An optimal labeling for node connectivity.
\newblock In {\em 20th {ISAAC}}, 2009.

\bibitem[HLNW17]{Henzinger2017}
Monika Henzinger, Andrea Lincoln, Stefan Neumann, and Virginia~Vassilevska
  Williams.
\newblock Conditional hardness for sensitivity problems.
\newblock In {\em 8th {ITCS}}, pages 26:1--26:31, 2017.

\bibitem[HRT15]{HolmRT15}
Jacob Holm, Eva Rotenberg, and Mikkel Thorup.
\newblock Planar reachability in linear space and constant time.
\newblock In {\em 56th {FOCS}}, pages 370--389, 2015.

\bibitem[IKP21]{Reachability}
Giuseppe~F. Italiano, Adam Karczmarz, and Nikos Parotsidis.
\newblock Planar reachability under single vertex or edge failures.
\newblock In {\em 32nd {SODA}}, pages 2739--2758, 2021.

\bibitem[KKKP04]{KatzKKP04}
Michal Katz, Nir~A. Katz, Amos Korman, and David Peleg.
\newblock Labeling schemes for flow and connectivity.
\newblock {\em {SIAM} J. Comput.}, 34(1):23--40, 2004.

\bibitem[Kle05]{Klein05}
Philip~N Klein.
\newblock Multiple-source shortest paths in planar graphs.
\newblock In {\em SODA}, pages 146--155, 2005.

\bibitem[KM]{planarbook}
P.N. Klein and S.~Mozes.
\newblock Optimization algorithms for planar graphs.
\newblock \url{http://planarity.org}.
\newblock Book draft.

\bibitem[KNR92]{Kannan}
Sampath Kannan, Moni Naor, and Steven Rudich.
\newblock Implicit representation of graphs.
\newblock {\em SIAM Journal on Discrete Mathematics}, 5(4):596--603, 1992.

\bibitem[Kor10]{Korman10}
Amos Korman.
\newblock Labeling schemes for vertex connectivity.
\newblock {\em {ACM} Trans. Algorithms}, 6(2):39:1--39:10, 2010.

\bibitem[KS99]{king2002fully}
Valerie King and Garry Sagert.
\newblock A fully dynamic algorithm for maintaining the transitive closure.
\newblock In {\em 31st {STOC}}, pages 492--498, 1999.

\bibitem[LT79]{LTsep}
Richard~J. Lipton and Robert~Endre Tarjan.
\newblock A separator theorem for planar graphs.
\newblock {\em SIAM J. Appl. Math.}, 36(2):177--189, 1979.

\bibitem[MS12]{MozesS2012}
Shay Mozes and Christian Sommer.
\newblock Exact distance oracles for planar graphs.
\newblock In {\em 23rd {SODA}}, pages 209--222, 2012.

\bibitem[Nus11]{Nussbaum11}
Yahav Nussbaum.
\newblock Improved distance queries in planar graphs.
\newblock In {\em 12th {WADS}}, pages 642--653, 2011.

\bibitem[Pel05]{Peleg05}
David Peleg.
\newblock Informative labeling schemes for graphs.
\newblock {\em Theor. Comput. Sci.}, 340(3):577--593, 2005.

\bibitem[PRSWN16]{petersen2015near}
Casper Petersen, Noy Rotbart, Jakob~Grue Simonsen, and Christian Wulff-Nilsen.
\newblock Near-optimal adjacency labeling scheme for power-law graphs.
\newblock In {\em 43rd ICALP}, pages 133:1--133:15, 2016.

\bibitem[Rot16]{rotbart2016new}
Noy~Galil Rotbart.
\newblock {\em New Ideas on Labeling Schemes}.
\newblock PhD thesis, University of Copenhagen, 2016.

\bibitem[Tho04]{Thorup04}
Mikkel Thorup.
\newblock Compact oracles for reachability and approximate distances in planar
  digraphs.
\newblock {\em J. {ACM}}, 51(6):993--1024, 2004.

\bibitem[Twi06]{TwiggPhD}
Andrew~D. Twigg.
\newblock {Compact forbidden-set routing}.
\newblock Technical Report UCAM-CL-TR-678, University of Cambridge, Computer
  Laboratory, December 2006.

\bibitem[vdBS19]{brand2019sensitivity}
Jan van~den Brand and Thatchaphol Saranurak.
\newblock Sensitive distance and reachability oracles for large batch updates.
\newblock In {\em 60th {FOCS}}, pages 424--435, 2019.

\bibitem[WN10]{Wulff-Nilsen10}
Christian Wulff-Nilsen.
\newblock {\em Algorithms for planar graphs and graphs in metric spaces}.
\newblock PhD thesis, University of Copenhagen, 2010.

\end{thebibliography}
\bibliographystyle{alpha}
\end{document}